\definecolor{ACMBlue}{rgb}{0,0,1}
\definecolor{ACMOrange}{rgb}{1,0.5,0}
\definecolor{ACMRed}{rgb}{1,0,0}
\definecolor{ACMPurple}{rgb}{1,0,1}
\newtheorem{definition}{Definition}
\newtheorem{proposition}{Proposition}
\newtheorem{theorem}{Theorem}
\newtheorem{lemma}{Lemma}
\newtheorem{corollary}{Corollary}
\newtheorem{claim}{Claim}
\newcommand{\NDFAS}{\textsc{Negative DFAS}}
\newcommand{\Whard}[1]{$\mathsf{W}[#1]$-hard}
\newcommand{\MulticoloredClique}{\textsc{Multicolored Clique}}
\newcommand{\td}{\ensuremath\operatorname{\mathsf{td}}}
\newcommand{\pw}{\ensuremath\operatorname{\mathsf{pw}}}
\newcommand{\tw}{\ensuremath\operatorname{\mathsf{tw}}}
\newcommand{\problemdef}[3]{%
	\vspace{1em}
	%\ovalbox { \parbox { .9\textwidth} {
  \fbox { \parbox { .9\textwidth} {
    \textsc{\large {\hspace{1em} #1}}
    \begin{description}
        \item[Input: ] #2
        \item[Task:]\hspace{1mm} #3
    \end{description}}}\vspace{1em}
}
\newenvironment{claimproof}{%
  \begin{proof}[Proof of Claim~\theclaim]%
}{%
	\end{proof}%
}
  \providecommand\BibTeX{{%
    \normalfont B\kern-0.5em{\scshape i\kern-0.25em b}\kern-0.8em\TeX}}}
\Crefname{claim}{Claim}{Claims}
\title{Resolving Infeasibility of Linear Systems:\\ A Parameterized Approach\thanks{{Preliminary versions of the results of the paper appeared 2019 and 2020 in \cite{gokePaper2019,GokeThesis}.}}}
\author{Krist{\'o}f B{\'e}rczi\thanks{MTA-ELTE Momentum Matroid Optimization Research Group and MTA-ELTE Egerv\'ary Research Group, Department of Operations Research, E\"otv\"os Lor\'and University, Budapest, Hungary. \texttt{kristof.berczi@ttk.elte.hu}}
   \and Alexander G{\"o}ke\thanks{Hamburg University of Technology, Institute for Algorithms and Complexity, Hamburg, Germany. \texttt{alexander.goeke@tuhh.de}}
   \and Lydia Mirabel Mendoza-Cadena\thanks{MTA-ELTE Momentum Matroid Optimization Research Group, Department of Operations Research, E\"otv\"os Lor\'and University, Budapest, Hungary, \texttt{lyd21@student.elte.hu}}
   \and Matthias Mnich\thanks{Hamburg University of Technology, Institute for Algorithms and Complexity, Hamburg, Germany. \texttt{matthias.mnich@tuhh.de}}}
\begin{document}

\date{~}

\maketitle

%%
%% The abstract is a short summary of the work to be presented in the
%% article.
\begin{abstract}
  Deciding feasibility of large systems of linear equations and inequalities is one of the most fundamental algorithmic tasks.
  However, due to data inaccuracies or modeling errors, in practical applications one often faces linear systems that are \emph{infeasible}.
  Extensive theoretical and practical methods have been proposed for post-infeasibility analysis of linear systems.
  This generally amounts to detecting a feasibility blocker of small size $k$, which is a set of equations and inequalities whose removal or perturbation from the large system of size $m$ yields a feasible system.
  This motivates a parameterized approach towards post-infeasibility analysis, where we aim to find a feasibility blocker of size at most $k$ in fixed-parameter time $f(k)\cdot m^{\mathcal O(1)}$.
  
  We establish parameterized intractability ($\mathsf{W}[1]$- and $\mathsf{NP}$-hardness) results already in very restricted settings for different choices of the parameters maximum size of a deletion set, number of positive/negative right-hand sides, treewidth, pathwidth and treedepth.
  Additionally, we rule out a polynomial compression for {\sc MinFB} parameterized by the size of a deletion set and the number of negative right-hand sides.
  
  Furthermore, we develop fixed-parameter algorithms parameterized by various combinations of these parameters when every row of the system corresponds to a difference constraint. 
  Our algorithms capture the case of {\sc Directed Feedback Arc Set}, a fundamental parameterized problem whose fixed-parameter tractability was shown by Chen et al. (STOC 2008).
\end{abstract}

%%
%% The code below is generated by the tool at http://dl.acm.org/ccs.cfm.
%% Please copy and paste the code instead of the example below.
%%
%\begin{CCSXML}
%<ccs2012>
%<concept>
%<concept_id>10003752.10003809.10010052</concept_id>
%<concept_desc>Theory of computation~Parameterized complexity and exact algorithms</concept_desc>
%<concept_significance>500</concept_significance>
%</concept>
%</ccs2012>
%\end{CCSXML}
%
%\ccsdesc[500]{Theory of computation~Parameterized complexity and exact algorithms}

%%
%% Keywords. The author(s) should pick words that accurately describe
%% the work being presented. Separate the keywords with commas.
%\keywords{Directed feedback arc set, Fixed-parameter algorithms, Infeasible systems, Negative cycles}

%%
%% This command processes the author and affiliation and title
%% information and builds the first part of the formatted document.

\section{Introduction}
\label{sec:introduction}
Linear programming is without doubt one of the most powerful tools of optimization theory.
However, the data that is used in these systems may be subject to inaccuracies and uncertainties, and therefore may lead to systems which are infeasible.
Infeasibility may also be the result of modelling errors, or simply incompatibility of constraints.
However, infeasibility itself allows for little conclusions; for a large system of millions of inequalities, infeasibility may stem from a very small subset of data.
A natural question is therefore to detect the smallest number of changes which must be made to a given system in order to make it feasible.
The analysis of infeasible linear systems has been extensively investigated~\cite{AmaldiKann1998,AroraEtAl1997,BermanKarpinski2002,Chinneck1996,Chinneck1997,Chinneck2008,LeithauserEtAl2012,Pfetsch2008}; we refer to the book by Chinneck~\cite{Chinneck2008} for an overview. 

Formally, the {\sc Minimum Feasibility Blocker (MinFB)} problem takes as input a system~$\mathcal S$ of linear inequalities $Ax\leq b$ and asks for a smallest subset $\mathcal I$ such that $\mathcal S\setminus \mathcal I$ is \emph{feasible}.
As~$\mathcal I$ ``blocks'' the feasibility of $\mathcal S$, we refer to $\mathcal I$ as a \emph{feasibility blocker}; we avoid calling $\mathcal I$ a ``solution'', to avoid confusion with the solution of the linear system $\mathcal S\setminus\mathcal I$.
Instead of removing the set $\mathcal I$ of inequalities, we can equivalently perturb the right-hand sides~$b_{\mathcal I}$ of the inequalities in~$\mathcal I$; that is, we can increase the $b$-values of inequalities in $\mathcal I$ to a value such that the perturbed system becomes feasible.

When talking about feasibility, we have to specify over which field, and our choice here is the field $\mathbb Q$.
Over this field, {\sc MinFB} is $\mathsf{NP}$-hard~\cite{Sankaran1993}; as the feasibility of a linear system can be tested in polynomial time (e.g., by the ellipsoid method~\cite{GrotschelEtAl1993}) the {\sc MinFB} problem is $\mathsf{NP}$-complete.
Due to its importance, the {\sc MinFB} problem has been thoroughly investigated from several different viewpoints, including approximation algorithms~\cite{AmaldiKann1998,LeithauserEtAl2012}, polyhedral combinatorics~\cite{Pfetsch2008}, heuristics~\cite{Chinneck1996}, mixed-integer programming~\cite{codato2006}, and hardness of approximation~\cite{AroraEtAl1997}.

Here we take a new perspective on the {\sc MinFB} problem, based on parameterized complexity.
In parameterized complexity, the problem input of size $n$ is additionally equipped with one or more integer parameters $k$ and one measures the problem complexity in terms of both $n$ and~$k$.
The goal is to solve such instances by \emph{fixed-parameter algorithms}, which run in time $f(k)\cdot n^{\mathcal O(1)}$ for some computable function~$f$.
The motivation is that fixed-parameter algorithms can be practical for small parameter values $k$ even for inputs of large size $n$, provided that the function~$f$ exhibits moderate growth.
This is in contrast with algorithms that require time~$n^{f(k)}$, which cannot presumed to be practical for large input sizes $n$.
To show that such impractical run times are best possible, a common approach is to show the problem to be $\mathsf{W}[1]$-hard; a standard hypothesis in parameterized complexity is that no $\mathsf{W}[1]$-hard problem admits a fixed-parameter algorithm.
For background on parameterized complexity, we refer to the monograph by Cygan et al.~\cite{CyganEtAl2015}.

For the {\sc MinFB} problem, arguably the most natural parameter is the minimum size $k$ of a feasibility blocker $\mathcal I$.
The motivation for this choice is that in applications, we are interested in \emph{small} feasibility blockers $\mathcal I$; e.g., Chakravarti~\cite{Chakravarti1994} argues that a feasibility blocker ``with too large cardinality may be hard to comprehend and may not be very useful for post-infeasibility analysis.''
Guillemot~\cite{Guillemot2011} explicitly posed the question of resolving the parameterized complexity of {\sc MinFB}; he conjectured that the problem is fixed-parameter tractable parameterized by the size of a minimum feasibility blocker for matrices with at most 2 non-zero entries per row.

A motivation for our approach comes from the fact that {\sc MinFB} captures one of the most important problems in parameterized complexity, namely {\sc Directed Feedback Arc Set (DFAS)}: given a directed graph $G$, decide if $G$ admits an arc set $F$ of size at most $k$ such that $G - F$ is an acyclic directed graph (DAG). 
It was a long-standing open question whether DFAS admits a fixed-parameter algorithm parameterized by the size~$k$ of the smallest directed feedback arc set, until Chen et al.~\cite{ChenEtAl2008} gave an algorithm with run time $4^kk!n^{\mathcal O(1)}$.
The currently known fastest algorithm for DFAS runs in time $4^kk!k\cdot \mathcal O(n + m)$, and is due to Lokshtanov et al.~\cite{LokshtanovEtAl2018}.
It is not difficult to give a parameter-preserving reduction from DFAS to {\sc MinFB}: for every arc $(u,v)$ of the digraph $G$ that serves as an input to DFAS, we add the inequality $x_u - x_v \leq -1$ to the linear system.
Directed feedback arc sets~$F$ of $G$ are then mapped to feasibility blockers of the same size by removing the constraints corresponding to arcs in $F$, and vice-versa; see \Cref{sec:preliminaries} for details.
Note that the constraint matrix $A$ arising this way is totally unimodular, and each row has exactly two non-zero entries, one $+1$ and one~$-1$.
Such constraints are called \emph{difference constraints}; testing feasibility of systems of difference constraints has been investigated extensively~\cite{RamalingamEtAl1999,SubramaniWojciechowski2017} due to their practical relevance most notably in temporal reasoning.
It is therefore interesting to know whether the more general {\sc MinFB} problem also admits a fixed-parameter algorithm for parameter~$k$, even for special cases like totally unimodular matrices $A$ (where testing feasibility is easy).

Another case of interest for {\sc MinFB} is when the constraint matrix $A$ has bounded treewidth, where the treewidth of~$A$ is defined as the treewidth of the bipartite graph that originates from assigning one vertex to every row and every column of~$A$ and connecting any two vertices by an edge whose corresponding entry in~$A$ is non-zero.
Fomin et al.~\cite{FominEtAl2018} gave a fast algorithm for {\sc MinFB} with constraint matrices of bounded treewidth for the setting $k = 0$.
At the same time, Bonamy et al.~\cite{BonamyEtAl2018} showed that DFAS---a special case of {\sc MinFB}---is fixed-parameter tractable parameterized by the treewidth of the underlying undirected graph of the input digraph\footnote{The algorithm of Bonamy et al.~\cite{BonamyEtAl2018} is stated for the vertex deletion problem, but it can be modified to work for DFAS as well. Note that the standard reduction from DFAS to the vertex deletion problem which preserves the solution size does not necessarily result in a digraph whose underlying undirected graph has bounded treewidth even if the DFAS instance has this property.}.
So the questions arise whether Fomin et al.'s algorithm can be extended to arbitrary values of~$k$, or whether Bonamy et al.'s algorithm can be extended from DFAS to {\sc MinFB}.

Besides treewidth, another parameter in the context of the structure of the constraint matrix~$A$ is the treedepth.
This parameter measures how many recursive vertex deletions are necessary to delete the whole graph. Chan et al.~\cite{ChanEtAl2020}  used the treedepth to solve (integer) linear program formulations.
Also, with the help of this parameter, Iwata et al.~\cite{IwataOO2018} obtained an algorithm to find a negative cycle.
Recently, Ganian et al. \cite{ganian2020fixed} made use of the treedepth to obtain a fixed-parameter algorithm that evaluates dependency-quantified-Boolean formulas.

One of the main unresolved questions around DFAS is whether it admits a \emph{polynomial compression}.
That is, one seeks an algorithm that, given any directed graph~$G$ and integer $k$, in polynomial time computes an instance $I$ of a decision problem $\Pi$ whose size is bounded by some polynomial~$p(k)$, such that $G$ admits a feedback arc set of size at most $k$ if and only if $I$ is a ``yes''-instance of $\Pi$.
The question for a polynomial compression has been stated numerous times as an open problem~\cite{Worker2010,Worker2013,Bedlewo2014,MnichvanLeeuwen2017}; from the algorithms by Chen et al.~\cite{ChenEtAl2008} and Lokshtanov et al.~\cite{LokshtanovEtAl2018} only an \emph{exponential} bound on the size of $I$ follows.
On the other hand, parameterized complexity provides tools such as cross-composition to rule out the existence of such polynomials $p(k)$ modulo the non-collapse of the polynomial hierarchy; we refer to Bodlaender et al.~\cite{BodlaenderEtAl2014} for background.
Given the elusiveness of this problem, we approach the (non-)existence of polynomial compression for DFAS from the angle of the more general {\sc MinFB} problem.

\subsection{Related Work}
\label{sec:relatedwork}
In their fundamental work, Arora, Babai, Stern and Sweedyk~\cite{AroraEtAl1997} considered the problem of removing a smallest set of \emph{equations} to make a given system of linear equations feasible over~$\mathbb Q$.
They gave strong inapproximability results, showing that finding \emph{any} constant-factor approximation is $\mathsf{NP}$-hard.
Berman and Karpinski~\cite{BermanKarpinski2002} gave the first (randomized) polynomial-time algorithm with sublinear approximation ratio for this problem.
The parameterized complexity of the setting with equations was recently investigated by Dabrowski et al.~\cite{DabrowskiJOOW2022}.
Note that for equations, e.g., $x_1+\hdots+x_r = b$, once the values of $r-1$ of the variables have been fixed then the value of the $r^{\textnormal{th}}$ variable can be inferred to satisfy the equation; such an inferral is, however, not possible in the setting of inequalities which we study in this paper.
Therefore, the problems that we study in this paper generalize other known combinatorial optimization problems (e.g., {\sc Directed Feedback Arc Set}) compared to the work of Dabrowski et al.~\cite{DabrowskiJOOW2022} (e.g., {\sc Bipartization}, {\sc Multiway Cut}).

Giannopolous, Knauer and Rote~\cite{GiannopoulosEtAl2009} considered the ``dual'' of {\sc MinFB} from a parameterized point of view: namely, in {\sc MaxFS} we ask for a largest subsystem of an $n$-dimensional linear system $\mathcal S$ which is feasible over~$\mathbb Q$.
They showed that deciding whether a feasible subsystem of at least $\ell$ inequalities in $\mathcal S$ exists is $\mathsf{W}[1]$-hard parameterized by $n + \ell$, even when $\mathcal S$ consists of equations only.

For systems of equations over \emph{finite} fields, finding minimum feasibility blockers has been considered from a parameterized perspective.
In particular, over the binary field $\mathbb F_2$, Crowston et al.~\cite{CrowstonEtAl2013} proved $\mathsf{W}[1]$-hardness even if each equation has exactly three variables and every variable appears in exactly three equations; they further gave a fixed-parameter algorithm for the case where each equation has at most two variables.

\begin{table}[t!]
  \renewcommand*{\arraystretch}{1.5}
  \newcommand{\tableentrybase}[2]{\colorbox{#1}{\parbox[c][2.2em][c]{18mm}{\centering\small #2}}}
  \caption{Complexity landscape of \textsc{MinFB} for difference constraints. ``FPT'' denotes the existence of a fixed-parameter algorithm. Columns of $\pw(G)$ and $\tw(G)$ are merged as they share the same algorithmic and hardness results.
  That is, algorithmic results hold already when the parameter~$\tw(G)$ is used, and hardness results hold already when the parameter~$\pw(G)$ is used.
  Complexities of cells with light shade are implied by cells with darker shade. \label{table:summary}}
  \begin{tabular}{clcccccccc}					
    \multicolumn{2}{c}{  }			& \multicolumn{2}{c}{$-$}	&	\multicolumn{2}{c}{$\pw(G)$ or $\tw(G)$}			&	\multicolumn{2}{c}{$\td(G)$}				\\ \cmidrule(lr){3-4} \cmidrule(lr){5-6} \cmidrule(lr){7-8} 
    \multicolumn{2}{c}{ }			&	$-$	&	$w_+$	&	$\pw/\tw$	&	$\pw/\tw+w_+$	&	$\td$	&	$\td + w_+$		\\ \addlinespace[1pt]
    \multirow[c]{3}{*}{\rotatebox[origin=c]{90}{ $b \in \mathbb{Z} $}} &	$-$	&	 \tableentrybase{ACMBlue!20}{$\mathsf{NP}$-hard}  	&	\tableentrybase{ACMBlue!20}{$\mathsf{NP}$-hard } 	&	\tableentrybase{ACMBlue!20}{$\mathsf{NP}$-hard } 	&	\tableentrybase{ACMBlue!50}{$\mathsf{NP}$-hard Thm.~\ref{thm:NP_hardness_for_pathwidth}}	&	\tableentrybase{ACMBlue!20}{\Whard{1}}  	&	\tableentrybase{ACMBlue!20}{\Whard{1}}  		\\	\addlinespace[1pt] 
		&	$k$	&	\tableentrybase{ACMBlue!20}{\Whard{1}} 	&	\tableentrybase{ACMBlue!20}{\Whard{1}} 	&	\tableentrybase{ACMBlue!20}{\Whard{1}}	&	\tableentrybase{ACMBlue!50}{\Whard{1} Thm.~\ref{thm:W1_hardness_for_pathwidth_and_deletion_size}}	&	\tableentrybase{ACMOrange!50}{FPT \linebreak Thm.~\ref{thm:algorithm_for_td_plus_k}}	&	\tableentrybase{ACMOrange!30}{FPT}		\\ \addlinespace[1pt]
		&	$w_-$	&	\tableentrybase{ACMBlue!20}{\Whard{1}} 	&	\tableentrybase{ACMOrange!50}{FPT \linebreak Thm.~\ref{thm:algorithm_for_bounded_number_of_non_zero_arcs} }  	&	\tableentrybase{ACMBlue!50}{\Whard{1} Thm.~\ref{thm:W1_hardness_for_pathwidth_number_negative_arcs_and_deletion_size}}	&	\tableentrybase{ACMOrange!30}{FPT}	&	\tableentrybase{ACMOrange!30}{FPT}	&	\tableentrybase{ACMOrange!30}{FPT}	\\ \addlinespace[1em] 
		\multirow[c]{3}{*}{\rotatebox[origin=c]{90}{$b \in \{ \pm 1 , 0\}$}} &	$-$	&	\tableentrybase{ACMBlue!20}{$\mathsf{NP}$-hard \cite{Sankaran1993} } 	&	\tableentrybase{ACMBlue!50}{$\mathsf{NP}$-hard \linebreak Thm.~\ref{thm:NP_hardness_for_positive_arcs}} 	&	\tableentrybase{ACMBlue!20}{\Whard{1}}	&	\tableentrybase{ACMOrange!50}{FPT Thm.~\ref{thm:algorithm_for_treewidth_normalized_weights_and_bounded_number_of_positive_arcs}}	&	\tableentrybase{ACMOrange!50}{FPT \hfil Thm.~\ref{thm:algorithm_for_treedepth_and_normalized_weights}}	&	\tableentrybase{ACMOrange!30}{FPT}		\\ \addlinespace[1pt]		
		&	$k$	&	\tableentrybase{ACMBlue!20}{\Whard{1}} 	&	Open	&	\tableentrybase{ACMBlue!50}{\Whard{1} Thm.~\ref{thm:W1_hardness_for_pathwidth_deletion_size_and_normalized_weights}}	&	\tableentrybase{ACMOrange!30}{FPT}	&	\tableentrybase{ACMOrange!30}{FPT}	&	\tableentrybase{ACMOrange!30}{FPT}		\\ \addlinespace[1pt]
		&	$w_-$	&	Open	&	\tableentrybase{ACMOrange!30}{FPT }  	&	\tableentrybase{ACMOrange!50}{FPT Thm.~\ref{thm:algorithm_for_treewidth_normalized_weights_and_bounded_number_of_negative_arcs}}  	&	\tableentrybase{ACMOrange!30}{FPT}	&	\tableentrybase{ACMOrange!30}{FPT}	&	\tableentrybase{ACMOrange!30}{FPT}
  \end{tabular}
\end{table}

\subsection{Our Results}
\label{sec:results}
The main contribution of the paper is settling the complexity of {\sc MinFB} in systems \mbox{$Ax\leq b$} consisting of difference constraints for various choices and combinations of parameters, see \Cref{table:summary} for an overview.
Throughout the paper, we denote the number of positive and negative coordinates of $b$ by $w_+$ and $w_-$, respectively.
For non-trivial instances, $k<w_-$ holds as otherwise we can delete all constraints with negative right hand side to get a feasible system.
A constraint matrix $A$ with difference constraints can be naturally identified with a directed graph $G$, whose treewidth, pathwidth and treedepth is denoted by $\tw(G)$, $\pw(G)$ and $\td(G)$, respectively. 

The results in \Cref{table:summary} are split horizontally by the type of the right hand side ($b\in\mathbb{Z}$ vs. $b\in\{-1,0,1\}$).
In both cases, the rows are indexed by parameters $\{-,k,w_-\}$, where $-$ means that no parameter of this set is taken.
Meanwhile, the columns are doubly indexed by parameters in $\lbrace -, \tw(G), \pw(G), \td(G)\rbrace$ and $\lbrace -, w_+\rbrace$.
As $\tw(G) \leq \pw(G) \leq \td(G)$ holds and $k\leq w_-$ for non-trivial instances, it is sufficient to obtain algorithmic or hardness results for certain combinations of the parameters.
For example, a hardness result for $\pw(G)+w_++k$ immediately translates into a hardness result for $w_-$.
\bigskip

The paper is organized as follows.
In \Cref{sec:preliminaries}, we introduce basic definitions and discuss the connection between the {\sc MinFB} and DFAS problems.
Algorithms for the different combinations of parameters are presented in \Cref{sec:alg}. While two of the problems remain open when the right-hand side of the system has the form $b \in \lbrace -1, 0, 1\rbrace$, we obtain fixed-parameter algorithms for these cases under the stronger assumption $b \in \lbrace -1, 1\rbrace$.
Hardness results are discussed in \Cref{sec:hardness}, where we also rule out a polynomial compression for {\sc MinFB} parameterized by~$k + w_-$.

\section{Preliminaries}
\label{sec:preliminaries}
\paragraph{Directed graphs and feasible potentials.}
Throughout, we work with finite and loopless directed graphs $G$ with vertex set $V(G)$ and arc set $A(G)$.
The sets of edges \emph{leaving} and \emph{entering} a vertex $v\in V(G)$ are denoted by $\delta^+(v)$ and $\delta^-(v)$, respectively.
A \emph{walk} $W$ in $G$ is a sequence of vertices $W = (v_0, v_1, \dots, v_\ell)$ such that $(v_i,v_{i+1})\in A(G)$ for $i = 0,\dots,\ell - 1$.
We call $\ell$ the \emph{length} of the walk.
A walk is \emph{closed} if $v_0 = v_\ell$.
We call the walk a \emph{path} if all of its vertices are distinct.
If all of its vertices are distinct except $v_0$ and $v_\ell$, we call it a \emph{cycle}.
For two walks $W, R$, where the last vertex of $W$ equals the first vertex of $R$, we denote by $W \circ R$ the \emph{concatenation} of~$W$ and~$R$, which is the sequence of all vertices in $W$ followed by all vertices in $R$ except the first.
Our directed graphs $G$ are often \emph{arc-weighted}, which means that they  come with arc weights $w: A(G)\rightarrow \mathbb{Q}$.
The sets of arcs with negative, zero, and positive weights are denoted $A_-$, $A_0$ and $A_+$, respectively. 
Then the set of non-zero arcs is $A_{\neq 0}=A_-\cup A_+$.
The numbers of negative and positive arcs are denoted by~$w_-$ and $w_+$, respectively.
The \emph{weight} of a cycle $C$ in $G$ is then equal to the sum of its arc weights.
In that spirit, we call a cycle \emph{negative} (\emph{non-negative}, \emph{positive}) if its weight is \emph{negative} (\emph{non-negative}, \emph{positive}).
A \emph{shortest} path or cycle is a path or cycle of minimum length.
Note that ``shortest'' does not refer to the weight of a cycle.

We define the arc-weighted variant of the DFAS problem.

\problemdef{Negative Directed Feedback Arc Set (Negative DFAS)}
  {A directed graph $G$, a weight function $w:A(G) \rightarrow \mathbb{Q}$ and an integer $k$.}
  {Find a set $X \subseteq A(G)$ of size at most $k$ such that $G-X$ has no negative cycles.}

A solution $X$ to the \NDFAS{} problem is called a \emph{negative directed feedback arc set} of $(G,w)$.
We will often use the notion of feasible potentials.
\begin{definition}
  Let $(G, w)$ be an arc-weighted directed graph.
  A vertex function $\pi: V(G) \to \mathbb{R}$ is called \emph{feasible potential} if for all arcs $a = (u,v) \in A(G)$ we have $\pi(u) - \pi(v) + w(a) \geq 0$.
\end{definition}

The fundamental result of Gallai~\cite{gallai1958maximum} provides a connection between feasible potentials and negative cycles.
\begin{proposition}[Gallai~\cite{gallai1958maximum}]
\label{thm:gallai}
  An arc-weighted directed graph $(G,w)$ contains no cycle of negative total weight if and only if it has a feasible potential.
\end{proposition}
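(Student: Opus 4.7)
The statement splits naturally into two directions, and I would handle them separately. The reverse implication is essentially a telescoping identity: assuming a feasible potential $\pi$ exists and letting $C=(v_0,v_1,\dots,v_\ell)$ with $v_\ell=v_0$ be any cycle, I would sum the inequality $\pi(v_i)-\pi(v_{i+1})+w(v_i,v_{i+1})\geq 0$ over $i=0,\dots,\ell-1$; the $\pi$-terms cancel and what remains is $w(C)\geq 0$. Thus no cycle has negative weight. This direction requires no cleverness.

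For the forward implication I would build $\pi$ as a shortest-path function, using an auxiliary construction to deal with reachability. Concretely, I would form a new digraph $G'$ from $G$ by adding one fresh vertex $s$ and arcs $(s,v)$ of weight $0$ for every $v\in V(G)$, keeping all original arcs and weights. A key observation is that $s$ has no incoming arcs, so every directed cycle of $G'$ lies entirely in $G$; therefore $G'$ still contains no negative cycle. This guarantees that for each $v\in V(G)$ the quantity
\[
d(v) \;:=\; \min\bigl\{\, w'(W) \;:\; W \text{ is an } s\text{-to-}v \text{ walk in } G'\,\bigr\}
\]
is finite, because the minimum over walks of unbounded length cannot drop below the minimum over simple paths without introducing a negative cycle.

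With $d$ in hand I would simply define $\pi(v):=d(v)$ and check the potential inequality. For any arc $a=(u,v)\in A(G)$, concatenating a shortest $s$-to-$u$ walk with $a$ yields an $s$-to-$v$ walk in $G'$ of weight $d(u)+w(a)$; minimality of $d(v)$ gives $d(v)\leq d(u)+w(a)$, which rearranges to $\pi(u)-\pi(v)+w(a)\geq 0$. Restricting $\pi$ back to $V(G)$ produces the feasible potential we want.

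The only delicate point, and the place where the hypothesis is actually used, is the well-definedness of $d(v)$. Without the auxiliary source $s$ we could have $d(v)=+\infty$ for unreachable $v$; without the absence of negative cycles we could have $d(v)=-\infty$ by looping a negative cycle arbitrarily often. The construction of $G'$ handles the first concern for free (the arc $(s,v)$ of weight $0$ is always available), while the second concern is exactly what the ``no negative cycle'' hypothesis rules out in $G$ and hence, by the observation above, in $G'$. Once these two finiteness issues are disposed of, the rest is a one-line shortest-path inequality.
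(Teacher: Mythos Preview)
Your proof is correct and follows the standard argument. The paper does not actually prove this proposition---it is stated as a classical result and attributed to Gallai without proof. That said, the forward-direction construction you use (adjoin an auxiliary source $s$ with zero-weight arcs to all vertices, then take shortest-path distances from $s$ as the potential) is precisely the construction the paper later records as \Cref{lem:constructing_a_feasible_potential}, so your approach is fully in line with the paper's methodology.
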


By \Cref{thm:gallai}, for every solution $S$ of a \NDFAS{} instance $(G, w, k)$ there is a feasible potential $\pi_S$ on $G - S$.
We often call a feasible potential of $G - S$ a \emph{feasible potential for $S$}.
Such a feasible potential also certifies that $S$ is indeed a solution.

Let us now formally define the {\sc MinFB} problem.

\problemdef{Minimum Feasibility Blocker (MinFB)}
  {A coefficient matrix $A \in \mathbb{R}^{m \times n}$, a right-hand side vector $b \in \mathbb{R}^m$ and an integer~$k$.}
  {Find a set $\mathcal{I} \subseteq \{1,\dots,m\}$ of size at most $k$ such that $\left(a_{i,\bullet} \cdot x \leq b_i\right)_{i \in \{1,\dots,m\} \setminus \mathcal{I}}$ is feasible.}

By multiplying rows of $A$ and the corresponding entries of $b$ with $(-1)$, the {\sc MinFB} problem also covers, in a parameter-equivalent way, those cases where some inequalities are of the type $a_{i,\bullet} x \geq b_i$. Furthermore, equations of the form $a_{i,\bullet} \cdot x = b_i$ can be written as two inequalities $a_{i,\bullet} \cdot x \le b_i$ and $-a_{i,\bullet} \cdot x \le -b_i$ (equivalent to $a_{i,\bullet} \cdot x \ge b_i$).
In a feasible solution $x^*$ violating such an equation, at most one of the above inequalities is violated.
Thus, the {\sc MinFB} problem with equations can be reduced to the formulation of the {\sc MinFB} as formulated above without changing the parameter.

Throughout the paper, we concentrate on instances of {\sc MinFB} consisting of difference constraints without explicitly mentioning it, that is, each row of the matrix $A$ contains exactly two non-zero entries, one $+1$ and one $-1$. The next theorem shows that in such cases {\sc MinFB} can be reformulated as a {\sc Negative DFAS} instance.

\begin{theorem}
\label{thm:NDFASisequivalenttospecialMIS}
  The {\sc Negative DFAS} problem and the {\sc MinFB} problem for difference constraints are parameter-equivalent.
%  Additionally, there is a one-to-one correspondence between constraints and arc weights with $b_a = w(a)$.
\end{theorem}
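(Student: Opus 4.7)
The plan is to exhibit two parameter-preserving reductions, one in each direction, with the parameter being the size $k$ of the deletion set, and to use Gallai's theorem (\Cref{thm:gallai}) to translate feasibility of the linear system into the non-existence of negative cycles in the associated digraph.

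First I would describe the construction from \textsc{MinFB} to \textsc{Negative DFAS}. Given an instance $(A,b,k)$ where each row of $A$ has exactly one $+1$ and one $-1$ entry, rewrite the $i$-th constraint as $x_u - x_v \le b_i$. I would then build an arc-weighted digraph $G$ on vertex set $\{1,\dots,n\}$ by adding, for each such constraint, an arc $a_i=(v,u)$ with weight $w(a_i)=b_i$. The crucial observation is that a vertex function $\pi$ is a feasible potential of $G$ exactly when $\pi(v)-\pi(u)+b_i \ge 0$ for every $i$, which is the same as $\pi$ being a feasible assignment $x=\pi$ for the linear system. By \Cref{thm:gallai}, the subsystem indexed by $\{1,\dots,m\}\setminus \mathcal I$ is feasible if and only if the subgraph $G - \{a_i \mid i\in \mathcal I\}$ contains no negative cycle. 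Hence feasibility blockers of size at most $k$ in $(A,b)$ are in one-to-one correspondence with negative feedback arc sets of size at most $k$ in $(G,w)$.

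Next I would describe the reverse reduction, from \textsc{Negative DFAS} to \textsc{MinFB} with difference constraints. Given an instance $(G,w,k)$, introduce a variable $x_v$ for every $v\in V(G)$, and for every arc $a=(u,v)\in A(G)$ add the inequality $x_v - x_u \le w(a)$, yielding a matrix $A$ with exactly one $+1$ and one $-1$ per row. Again by \Cref{thm:gallai}, removing an arc set $X\subseteq A(G)$ so that $G-X$ has no negative cycle corresponds precisely to removing the same-sized set of inequalities so that the remainder is feasible, and vice versa. Both directions preserve the parameter $k$ exactly, which is what is required for parameter-equivalence.

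The argument is essentially a transcription of Gallai's theorem, so no real obstacle arises; the only bookkeeping I would be careful about is orienting the arcs in the direction consistent with the potential inequality $\pi(u)-\pi(v)+w(a)\ge 0$ used in \Cref{thm:gallai} (so the constraint $x_u - x_v \le b_i$ corresponds to an arc \emph{from $v$ to $u$} with weight $b_i$, not the other way around), and checking that both reductions run in linear time so that the equivalence is effective as well as parameter-preserving.
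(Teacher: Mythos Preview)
Your proposal is correct and follows essentially the same approach as the paper: both directions set up the natural bijection between difference constraints $x_u-x_v\le b_i$ and weighted arcs, and then invoke Gallai's theorem (\Cref{thm:gallai}) to equate feasibility of the remaining subsystem with the absence of negative cycles in the corresponding subgraph, so that feasibility blockers and negative directed feedback arc sets correspond one-to-one with the same parameter~$k$. Your explicit care with arc orientation is appropriate and matches the paper's use of the incidence matrix; there is nothing missing.
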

\begin{proof}
  Let $(G, w, k)$ be a {\sc Negative DFAS} instance, and let $n = |V(G)|$ and $m = |A(G)|$.
  Fix an arbitrary order $v_1, \hdots, v_n$ of the vertices and $e_1,\dots,e_m$ of the arcs.
  Let $A = (a_{i,j})\in\mathbb R^{m\times n}$ denote the incidence matrix of $G$ with entries $a_{i,j} = +1$ for $e_i\in\delta^-(v_j)$, $a_{i,j} = -1$ for $e_i\in\delta^+(v_j)$, and $a_{i,j} = 0$ otherwise.
  Furthermore, let $b_i = w(\alpha_i)$.
  The resulting tuple $(A, b, k)$ is an instance of the {\sc MinFB} problem with $A$ being a matrix of difference constraints.
	
  The construction is bijective in the following sense.
  Given an instance $(A,b,k)$ of {\sc MinFB} with $A = (a_{i,j})\in\mathbb R^{m\times n}$ consisting of difference constraints,
  define a directed graph on~$n$ vertices $v_1, \hdots, v_n$ as follows.
  For every constraint $a_{i,\bullet} \cdot x \leq b_i$, we add an arc $e = (v_{j^+}, v_{j^-})$, where $j^-$ is the unique index with $a_{i,j^-} = -1$ and $j^+$ is the unique index with $a_{i,j^+} = +1$.
  Furthermore, set the weight of $e$ to be $w(e) = b_i$.
  Let $G$ denote the resulting digraph.
  Then $(G, w, k)$ is a {\sc Negative DFAS} instance.

  Our goal is to compare the solutions of the two problems.
  Intuitively, deleted constraints and arcs are in a one to one correspondence. 
  Formally, for each $X \subseteq A(G)$ denote by $I_X$ the corresponding indices of the constraints.
  Then we get the following chain of equivalent statements:
  \begin{itemize}
    \item[] $(G - X, w)$ contains no negative cycles with respect to $w$.
    \item[$\Leftrightarrow$] $(G - X, w)$ has a feasible potential $\pi: V(G) \rightarrow \mathbb{R}$.
    \item[$\Leftrightarrow$] There is some $\pi: V(G) \rightarrow \mathbb{R}$ such that $\pi(u) \leq \pi(v) + w(e)$ for all $e=(u,v) \in A(G)\setminus X$.
    \item[$\Leftrightarrow$] There is some $x \in \mathbb{R}^{V(G)}$ such that $x_u - x_v \leq w(e)$ for all $e = (u,v) \in A(G) \setminus X$.
    \item[$\Leftrightarrow$] There is some $x \in \mathbb{R}^n$ such that $a_{i,\bullet} \cdot x \leq b_i$ for all $i \in \{1,\dots,m\} \setminus I_X$.
  \end{itemize}
  Furthermore, as $X$ and $X_\mathcal{I}$ have the same cardinality, we get that~$X$ is a solution to $(G, w, k)$ if and only if $X_\mathcal{I}$ is a solution to $(A, b, k)$.
\end{proof}

Observe that all instances of \NDFAS{} with number of negative arcs smaller or equal to $k$ are solvable in polynomial time.
\begin{theorem}
\label{thm:solve_w_minus_bounded_by_k}
  Let $(G, w, k)$ be a {\sc Negative DFAS} instance with $w_- \leq k$.
  Then $(G, w, k)$ has a solution.
  Moreover, we can check for this condition and compute a solution in linear time.
\end{theorem}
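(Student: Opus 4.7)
The plan is to observe that the set $A_-$ of all negative-weight arcs is itself a feasibility blocker of size $w_-$, and that when $w_- \le k$ this set already fits within the budget. Specifically, I would set $X := A_-$ and argue that $G - X$ contains only arcs with weight in $\mathbb{Q}_{\ge 0}$, so every cycle in $G - X$ has non-negative total weight and hence is not a negative cycle. Since $|X| = w_- \le k$, this $X$ is a valid solution in the sense of \NDFAS{}.

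For the algorithmic claim, the entire procedure is: scan the arc list once, collect $A_-$ and compute $w_- = |A_-|$, then compare $w_-$ with $k$. If $w_- \le k$, output $X = A_-$; otherwise report that the theorem's hypothesis does not apply. All of this is clearly doable in $\mathcal{O}(|V(G)| + |A(G)|)$ time, matching the linear time bound in the statement.

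I do not anticipate a genuine obstacle: the argument is essentially immediate once one notices that removing every negative arc trivially destroys every negative cycle. The only thing worth stating explicitly in the write-up is the direction of the implication used --- namely that a cycle whose arcs all have non-negative weight cannot have negative total weight --- which one may also phrase via \Cref{thm:gallai} by defining the constant potential $\pi \equiv 0$ on $G - A_-$: for every arc $a = (u,v) \in A(G) \setminus A_-$, we have $\pi(u) - \pi(v) + w(a) = w(a) \ge 0$, so $\pi$ is a feasible potential for $G - A_-$, certifying the absence of negative cycles.
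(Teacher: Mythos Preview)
Your proposal is correct and matches the paper's own proof essentially verbatim: both take $X = A_-$, note that $G - A_-$ has only non-negative arcs and hence no negative cycle, and observe that a single scan of the arc list suffices for the linear-time claim. The extra remark about certifying via the constant potential $\pi \equiv 0$ and \Cref{thm:gallai} is a nice touch but not needed.
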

\begin{proof}
  Iterate over $A(G)$ and collect the set $A_-$ of all arcs with negative weight.
  Then $G - A_-$ contains no negative cycles, as it contains no arcs of negative weight.
  Moreover, we can check whether $|A_-| = w_- \leq k$ and possibly return~$A_-$ as a solution.
\end{proof}

\paragraph{Treewidth, pathwidth and treedepth.}
For an undirected graph $H$, a \emph{tree decomposition} (\emph{path decompostion}) is a pair~$(T,\mathcal B)$ where $T$ is a tree (path) and $\mathcal B$ a collection of bags $B_v\subseteq V(H)$, each bag corresponding to some node $v\in V(T)$.
The bags have the property that for any edge of $H$, there is a bag in $\mathcal B$ that contains both of its endpoints. 
Furthermore, the bags containing $v$ form a subtree (subpath) of $T$ for each vertex $v\in V(H)$.
The \emph{width} of $(T,\mathcal B)$ is defined as the largest bag size of $\mathcal B$ minus one.
The \emph{treewidth} (\emph{pathwidth}) of $H$ is the minimum width $\tw(G)$ over all tree (path) decompositions of~$H$. 
For a directed graph $G$, its treewidth (pathwidth) is defined as the treewidth (pathwidth) of the underlying undirected graph $\langle G\rangle$ of~$G$.

For our usage, it is enough to have a tree decomposition that is within a constant factor of minimum width.
This allows us to use the following algorithm by Korhonen~\cite{Korhonen2022}.
\begin{proposition}[\cite{Korhonen2022}]
	\label{thm:compute_approximate_tree_decomposition}
	There is an algorithm that, given a graph $G$ on $n$ vertices, constructs a tree decomposition of $G$ of width at most $2\tw(G)$ in time~$2^{\mathcal{O}(\tw(G))}n$.
\end{proposition}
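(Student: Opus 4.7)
The plan is to reproduce, at a high level, Korhonen's iterative improvement scheme. The overall strategy is not to build a good tree decomposition from scratch, but to start from a bad (or trivial) decomposition and progressively shrink its width using a dynamic programming subroutine that runs on the \emph{current} decomposition. To avoid knowing $\tw(G)$ in advance, I would run the algorithm inside an outer doubling loop: try successive guesses $k = 1, 2, 4, 8, \ldots$ for an upper bound on $\tw(G)$ and stop as soon as the subroutine produces a decomposition of width at most $2k+1$; the additive time across iterations telescopes to $2^{\mathcal{O}(\tw(G))} n$.

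The heart of the construction is the following subroutine: given the graph $G$, a parameter $k$, and a tree decomposition $(T,\mathcal{B})$ of $G$ of width $w \le c \cdot k$ for some constant $c$, either output a new tree decomposition of $G$ of width at most $2k+1$, or correctly certify $\tw(G) > k$. I would implement this as a bottom-up DP on $(T,\mathcal{B})$. At each node $t \in V(T)$, with current bag $B_t$ of size $\mathcal{O}(k)$, the DP maintains, for every subset $S \subseteq B_t$ of size at most $2k+2$ that is intended to play the role of the ``interface'' to the parent in the new (yet unknown) decomposition, a Boolean indicating whether the subgraph of $G$ induced by the vertices appearing in the subtree rooted at $t$ admits a tree decomposition of width at most $2k+1$ whose root bag is exactly $S$. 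The transitions combine the tables of the children with the local change in the bag when moving from a child to the parent; since the bag has size $\mathcal{O}(k)$, the number of candidate subsets $S$ is $2^{\mathcal{O}(k)}$, and similarly the join step between a constant number of children tables is $2^{\mathcal{O}(k)}$. Standard traceback then recovers an actual new decomposition.

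The reason this gives a $2$-approximation rather than an exact one is the interaction between the guess $k$ and the width $2k+1$ in the DP state: the algorithm only checks the existence of a decomposition whose bags have size at most $2k+2$, and it does so using an interface of size at most $2k+2$, which is strictly larger than any optimal bag when $\tw(G) \le k$. This slack both ensures the DP is still correct whenever $\tw(G) \le k$ (one can show that any optimal decomposition can be ``aligned'' with the current decomposition so that every pairwise intersection of old and new bags is witnessed by a valid interface), and keeps the state space single-exponential. Once we obtain a decomposition of width $2k+1$, we feed it back into the outer loop to attempt further shrinking with a smaller $k$.

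The main obstacle is the correctness of the DP, in particular proving that if $\tw(G) \le k$ then one of the subsets $S \subseteq B_t$ enumerated at each node really does correspond to a root bag of an optimal-width decomposition of the relevant subgraph. This requires an ``alignment lemma'': starting from an optimal tree decomposition $T^\star$ and the given $T$, one shows that $T^\star$ can be restructured so that each edge of $T$ corresponds to a separator of size at most $2k+2$ in $T^\star$, and that this separator is a subset of the current bag in $T$. Without such a lemma the DP could miss optimal decompositions, so everything hinges on combinatorial properties of adhesions and the interaction of two tree decompositions of the same graph. Apart from this core lemma, the remaining steps (bookkeeping for traceback, ensuring the outer doubling terminates, and balancing the time per iteration) are comparatively routine.
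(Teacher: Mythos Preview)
The paper does not prove this proposition; it is quoted as a black box from Korhonen's work, so there is no in-paper argument to compare your sketch against.

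On its own merits: your outer doubling loop and the idea of iteratively compressing a too-wide decomposition are indeed the high-level shape of Korhonen's algorithm (and of the earlier Bodlaender and Bodlaender--Drange--Dregi--Fomin--Lokshtanov--Pilipczuk schemes). But the compression subroutine you describe is not Korhonen's, and as written it does not work. Recording only a single subset $S \subseteq B_t$ as ``the root bag of a width-$(2k+1)$ decomposition of $G_t$'' is not enough state to recombine correctly at a join: two partial decompositions with root bags $S_1$ and $S_2$ do not in general glue into a decomposition of $G_t$ with some root bag $S \subseteq B_t$ of size at most $2k+2$, because vertices of $B_t \setminus S$ that lie in both subtrees would violate the connectivity condition of the new decomposition. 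The standard fix is the Bodlaender--Kloks \emph{characteristic} machinery, which tracks far more than one subset per node and costs $2^{\Theta(k\log k)}$ or $2^{\Theta(k^3)}$ per node --- exactly the barrier Korhonen's paper breaks. Your ``alignment lemma'' does not rescue this: even if such a lemma held, the DP it is meant to justify is under-specified.

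Korhonen's actual compression step is not a global DP over candidate new bags at all. It is a \emph{local refinement}: whenever a bag $B$ has size larger than $2k+2$, one computes a balanced separator of $B$ in $G$ of size at most $k+1$ (which exists if $\tw(G)\le k$) and uses it to split the bag, in the spirit of Robertson--Seymour and Bellenbaum--Diestel. The difficulty --- and the reason the $2^{\mathcal O(k)}n$ bound was open for decades --- is that a single split can enlarge neighbouring bags, so naive iteration need not terminate quickly. Korhonen controls this with a carefully designed potential function on the decomposition that strictly decreases at every refinement step, giving an amortised $2^{\mathcal O(k)}n$ bound over the whole process. That amortisation argument, not a DP over subsets, is the heart of the result.
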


Nice tree decompositions are useful structures for dynamic programs in graphs of bounded treewidth. 
\begin{definition}
\label{def:nice_tree_decomposition}
  A \emph{nice tree decomposition} is a rooted tree decomposition $(T, \{B_i\}_{i\in V(T)})$ of a graph $H$ with the following properties:
  \begin{itemize}
    \item For the root $r$ and for every leaf $\ell$, $B_r = \emptyset = B_\ell$.
    \item Each non-leaf node $x$ of $T$ belongs to one of the following categories:
    \begin{itemize}
      \item \emph{Introduce node.} The node $x$ has exactly one child $y$ so that $B_x=B_y \cup \{ v \}$ for some vertex $v \in V(H)$.
      \item \emph{Forget node.} The node $x$ has exactly one child $y$ so that  $B_x=B_y \setminus \{ v \}$ for some vertex $v \in V(H)$.
      \item \emph{Join node.} The node $x$ has exactly two children $y$ and $y'$ so that $B_x=B_y=B_{y'}$.
    \end{itemize}
  \end{itemize}
  For $x \in V(G)$, we denote by $T_x$ the subtree of $T$ rooted at $x$.
  Moreover, we define~$G_x$ to be the graph $G[\bigcup_{y \in T_x} B_y]$.
\end{definition}

Given a tree decomposition of width $w$, one can obtain a nice tree decomposition with width~$w$ in time $\mathcal{O}(n)$, see \cite[Lemma 13.1.3]{kloks1994treewidth}.
%Thus, by a result of Bodlaender~\cite{Bodlaender1996}, such a nice tree decomposition can be obtained efficiently. 
%\begin{proposition}[Bodlaender~\cite{Bodlaender1996}]
%\label{thm:compute_nice_tree_decomposition}
%  Given a graph $G$ with $n$ vertices, there is an algorithm that constructs a nice tree decomposition of width $\tw(G)$ with $\mathcal{O}(n)$ nodes in~$2^{\tw(G)^3}n$ time.
%\end{proposition}
Thus, from \Cref{thm:compute_approximate_tree_decomposition} we obtain
\begin{corollary}
\label{thm:compute_nice_tree_decomposition}
  There is an algorithm that, given a graph $G$ on $n$ vertices, constructs a nice tree decomposition of $G$ of width at most $2\tw(G)$ and on $\mathcal O(n)$ nodes in time~$2^{\mathcal{O}(\tw(G))}n$.
\end{corollary}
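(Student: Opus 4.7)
The plan is to obtain the nice tree decomposition by chaining the two ingredients already introduced: \Cref{thm:compute_approximate_tree_decomposition} (Korhonen's approximation algorithm) and the standard conversion of an arbitrary tree decomposition into a nice one (Lemma 13.1.3 of \cite{kloks1994treewidth}, cited just above the corollary). First I would invoke Korhonen's algorithm on~$G$ to obtain, in time $2^{\mathcal{O}(\tw(G))}n$, a tree decomposition $(T,\{B_v\}_{v\in V(T)})$ of width at most $2\tw(G)$. Since the entire output is produced within this running time, the tree $T$ itself has size $\mathcal{O}(n)$ (where the hidden constant may depend on $\tw(G)$), which is important for bounding the second phase.

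Next, I would feed $(T,\{B_v\}_{v\in V(T)})$ into the textbook transformation: root the tree at an arbitrary node; subdivide each tree edge whose adjacent bags differ by more than one vertex so as to introduce a chain of \emph{introduce} and \emph{forget} nodes realizing the symmetric difference one element at a time; turn nodes of degree larger than two into binary sequences of \emph{join} nodes by duplicating bags; and finally pad the root and each leaf with chains of forget/introduce nodes so that $B_r=\emptyset$ and $B_\ell=\emptyset$ everywhere required. Each step operates locally on an edge or a node of $T$ and adds at most $\mathcal{O}(\tw(G))$ new nodes per original bag, so the resulting tree has $\mathcal{O}(n)$ nodes (with $\tw(G)$-dependent constant) and the transformation runs in $\mathcal{O}(n)$ time by the lemma cited above. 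Crucially, this conversion never enlarges any bag beyond the original width, so the nice decomposition inherits width at most $2\tw(G)$.

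Putting the two phases together yields a nice tree decomposition of $G$ of width at most $2\tw(G)$ on $\mathcal{O}(n)$ nodes, with total running time $2^{\mathcal{O}(\tw(G))}n+\mathcal{O}(n)=2^{\mathcal{O}(\tw(G))}n$, which is precisely the statement of the corollary. There is no substantive obstacle: the result is really a packaging statement. The only point one has to be mindful of is the accounting in the second phase---checking that the charging of the newly added introduce/forget/join nodes either to vertices of $G$ or to edges of the original $T$ keeps the total node count linear in $n$ with a constant that is absorbed into the $2^{\mathcal{O}(\tw(G))}$ factor of the running time bound.
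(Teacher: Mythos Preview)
Your proposal is correct and follows exactly the approach the paper takes: the paper's ``proof'' is simply the sentence ``Thus, from \Cref{thm:compute_approximate_tree_decomposition} we obtain'' together with the already-cited Lemma~13.1.3 of Kloks, i.e., chain Korhonen's approximation with the standard conversion to a nice tree decomposition. Your write-up is in fact more detailed than the paper's, and your care about the $\tw(G)$-dependent constant in the node count being absorbed by the $2^{\mathcal{O}(\tw(G))}$ factor is the only non-automatic point.
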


The \textit{treedepth} of an undirected graph can be defined in different ways, usually taking a certain decomposition of the graph.
We follow the definition given by Ne{\v{s}}et{\v{r}}il and Ossona de Mendez~\cite{NevsetvrilDeMendez2006}.
\begin{definition}
\label{def:treedepth}
  Let $G$ be an undirected graph with connected components $G_1,\hdots, G_q$.
  The \emph{treedepth} $\td(G)$ of $G$ is
  \begin{align*}
    \td(G) = \begin{cases} 
               1, &\mbox{if } |V(G)|=1; \\ 
               1 + \min\limits_{v \in V(G)} \td (G-v), &\mbox{if } q = 1 \mbox{ and } |V(G)| > 1; \\
               \max \{ \td(G_i) : G_i \mbox{ is a connected component of } G\},  & \mbox{otherwise.} 
             \end{cases}
  \end{align*}
  The \emph{treedepth} of a directed graph is the treedepth of its underlying undirected graph.
\end{definition}

Bodlaender et al. have shown relations between treewidth, pathwidth and treedepth.
\begin{proposition}[Bodlaender et al.~\cite{BodlaenderEtAl1995}]
  It holds $\tw(G) \leq \pw(G) \leq \td(G) - 1$ for any graph $G$.
\end{proposition}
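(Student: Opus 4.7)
The first inequality $\tw(G) \leq \pw(G)$ is immediate from the definitions: every path decomposition is in particular a tree decomposition, since a path is a tree. Hence the minimum width attained over the restricted family of path decompositions cannot be smaller than the minimum width attained over the richer family of all tree decompositions, i.e.\ $\tw(G) \leq \pw(G)$.

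For the second inequality $\pw(G) \leq \td(G) - 1$, I would proceed by structural induction following the three cases of \Cref{def:treedepth}, proving the stronger statement that one can explicitly construct a path decomposition of $G$ of width at most $\td(G) - 1$. In the base case $|V(G)| = 1$, a single bag containing the unique vertex is a path decomposition of width $0 = \td(G) - 1$. In the connected case with $|V(G)| > 1$, pick a vertex $v$ achieving the minimum in $\td(G) = 1 + \min_{v \in V(G)} \td(G - v)$. By the inductive hypothesis, $G - v$ admits a path decomposition $(P, \{B_i\}_{i \in V(P)})$ of width at most $\td(G - v) - 1 = \td(G) - 2$. Adding $v$ to every bag yields a path decomposition of $G$: the subpath condition for $v$ holds trivially (since $v$ sits in every bag), while every edge $vu \in E(G)$ is covered because the endpoint $u \in V(G - v)$ appears in some bag, which now also contains $v$. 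The resulting width increases by exactly one, giving the required bound $\td(G) - 1$. Finally, in the disconnected case with components $G_1, \ldots, G_q$, induction provides for each $G_i$ a path decomposition of width at most $\td(G_i) - 1 \leq \td(G) - 1$; concatenating these along a single path (valid because components share no edges and no vertices) yields a path decomposition of $G$ of width at most $\max_i (\td(G_i) - 1) \leq \td(G) - 1$.

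I do not anticipate a serious obstacle: the entire argument is bookkeeping once the recursion on $\td$ is matched to a constructive recipe for path decompositions. The only point that merits care is the connected step, where one must verify that blanket-inserting $v$ into every bag preserves both axioms of a path decomposition; this follows directly because $v$'s neighbours in $G$ all lie in $V(G - v)$ and thus are already covered by the inductively supplied decomposition.
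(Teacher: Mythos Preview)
Your argument is correct and is the standard proof of this fact. Note, however, that the paper does not actually give a proof of this proposition: it is stated as a known result and attributed to Bodlaender et al.~\cite{BodlaenderEtAl1995}, with no proof provided in the paper itself. So there is no ``paper's own proof'' to compare against; your inductive construction along the recursive definition of treedepth is exactly the classical argument one finds in the literature for $\pw(G) \leq \td(G) - 1$, and the first inequality is, as you say, immediate from the definitions.
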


Thus, the class of bounded treewidth graphs contains the class of bounded pathwidth graphs, which in turn contains the class of bounded treedepth graphs.

\paragraph{Parameterized complexity.}
For our hardness results we will use two different kinds of hardness.
The first one is $\mathsf{W}[1]$-hardness which, under the standard assumption $\mathsf{W}[1] \not = \mathsf{FPT}$, implies that there is no fixed-parameter  algorithm for problems of this type.
The other hardness considers compressions.
A \emph{polynomial compression} of a language $L$ into a language $Q$ is a polynomial-time computable mapping $\Phi: \Sigma^* \times \mathbb{N} \rightarrow \Sigma^*,$ $\Phi((x,k)) \mapsto y$ such that $((x,k) \in L \Leftrightarrow y \in Q)$ and  $|y| \leq k^{\mathcal O(1)}$ for all $(x,k) \in  \Sigma^* \times \mathbb{N}$.
Many natural parameterized problems do not admit polynomial compressions under the hypothesis $\mathsf{NP}\subsetneq \mathsf{coNP}/poly$.

Both types of hardness can be transferred to other problems by ``polynomial parameter transformations'', which were first proposed by Bodlaender et al.~\cite{BodlaenderEtAl2011}.
\begin{definition}
  Let $\Sigma$ be an alphabet.
  A \emph{polynomial parameter transformation (PPT)} from a parameterized problem $\Pi\subseteq \Sigma^*\times\mathbb N$ to a parameterized problem $\Pi'\subseteq \Sigma^*\times \mathbb N$ is a polynomial-time computable mapping $\Phi: \Sigma^* \times \mathbb{N} \rightarrow \Sigma^* \times \mathbb{N}, (x,k) \mapsto (x', k')$, such that $k' = k^{\mathcal O(1)}$, and $(x,k) \in \Pi \Leftrightarrow (x',k') \in \Pi'$ for all $(x,k)\in \Sigma^*\times\mathbb N$.
  Two parameterized problems are \emph{parameter-equivalent} if there are PPTs in both directions and the transformations additionally fulfill that $k' = k$.
\end{definition}

Note that polynomial parameter transformations are transitive.
Further, a PPT from $\Pi$ to~$\Pi'$ together with a polynomial compression for $\Pi'$ yields a polynomial compression for $\Pi$.
This can be used to rule out polynomial compressions.
\begin{proposition}[Kratsch and Wahlstr{\"o}m~\cite{KratschWahlstrom2013}]
\label{thm:PPTsShowW1andPolComp}
  Let $\Pi,\Pi'$ be parameterized problems.
  If there is a polynomial parameter transformation from $\Pi$ to $\Pi'$ and $\Pi$ admits no polynomial compression, then neither does $\Pi'$.
%	If there is a polynomial parameter transformation of $L$ into $Q$, then:
%	\begin{enumerate}[label=(\roman*)]
%		\item If $L$ is $\mathsf{W}[1]$-hard so is $Q$ and
%		\item If $L$ does not admit a polynomial compression neither does $Q$.
%	\end{enumerate}
\end{proposition}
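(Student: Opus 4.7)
The plan is to prove the contrapositive: assuming that $\Pi'$ admits a polynomial compression into some language $Q$, I will construct a polynomial compression of $\Pi$ into the same language $Q$ by composing the two given maps. Let $\Phi: \Sigma^* \times \mathbb N \to \Sigma^* \times \mathbb N$ denote the PPT from $\Pi$ to $\Pi'$, producing $(x',k') = \Phi(x,k)$ with $k' \leq k^{\mathcal O(1)}$, and let $\Psi: \Sigma^* \times \mathbb N \to \Sigma^*$ denote the polynomial compression of $\Pi'$ into $Q$, so that $\Psi(x',k') \in Q$ iff $(x',k') \in \Pi'$, and $|\Psi(x',k')| \leq (k')^{\mathcal O(1)}$.

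I would then define $\Theta(x,k) := \Psi(\Phi(x,k))$ and verify the three properties of a polynomial compression. First, $\Theta$ is polynomial-time computable, because $\Phi$ runs in time polynomial in $|x|+k$ (so $|x'|+k'$ is polynomially bounded in $|x|+k$) and $\Psi$ runs in time polynomial in $|x'|+k'$. Second, correctness follows from chaining equivalences: $(x,k) \in \Pi \iff \Phi(x,k) \in \Pi' \iff \Psi(\Phi(x,k)) \in Q$, where the first equivalence uses that $\Phi$ is a PPT and the second uses that $\Psi$ is a compression. Third, the size bound is obtained by composing the polynomial bounds: $|\Theta(x,k)| = |\Psi(x',k')| \leq (k')^{\mathcal O(1)} \leq (k^{\mathcal O(1)})^{\mathcal O(1)} = k^{\mathcal O(1)}$.

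Putting these three observations together yields a polynomial compression of $\Pi$ into $Q$, contradicting the hypothesis that $\Pi$ admits no polynomial compression. Hence $\Pi'$ cannot admit a polynomial compression either.

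I do not expect any real obstacle here; the argument is essentially a bookkeeping exercise once the definitions of PPT and polynomial compression are lined up side by side. The only point requiring a tiny bit of care is ensuring that $\Psi$ is applied to an instance whose size is polynomially bounded in $|x|+k$ (so that $\Psi$'s polynomial running time translates into polynomial running time of the composition in the original input size), which is immediate from the polynomial-time computability of the PPT $\Phi$.
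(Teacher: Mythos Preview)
Your proposal is correct and is exactly the standard contrapositive argument for this folklore fact. Note that the paper does not actually give its own proof of this proposition; it is stated with attribution to Kratsch and Wahlstr{\"o}m and used as a black box, so there is nothing further to compare against.
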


As mentioned in the introduction, the complexity class containing all fixed-parameter tractable problems is called $\mathsf{FPT}$.
Thus, one the one hand, if a parameterized problem $(P,k)$ is in  $\mathsf{FPT}$, then the parameterized problem $(P,k+k')$ is also in $\mathsf{FPT}$ for any parameter $k'$.
Conversely, on the other hand, if a parameterized problem $(P,k+k')$ is $\mathsf{W}[1]$-hard, then the parameterized problem $(P,k)$ is also $\mathsf{W}[1]$-hard. 

\section{Fixed-Parameter Algorithms for Finding Minimum Feasibility Blockers}
\label{sec:alg}
In this section, we develop fixed-parameter algorithms for different combinations of the parameters maximum size $k$ of a deletion set~($k$), number of positive/negative right-hand sides \mbox{($w_+/w_-$)}, treewidth/patwidth ($\tw(G)$, $\pw(G)$) and treedepth ($\td(G)$).
Recall that we always assume that the constraint matrix is of difference constraints.
By \Cref{thm:NDFASisequivalenttospecialMIS}, it suffices to consider the corresponding \NDFAS{} problem instead, therefore we use the directed graph terminology.

Before stating the algorithms, we give an overview how the feasibility of a set $S\subseteq A(G)$ can be checked, that is, how to decide if $G-S$ has indeed no negative cycles.
For this, we use the Floyd–Warshall algorithm \cite{floyd1962algorithm,warshall1962theorem,roy1959transitivite} that finds shortest paths between all pairs of vertices in a weighted digraph with no negative cycles.  
The algorithm can additionally detect whether there is a negative cycle in the directed graph and outputs a shortest one if exists.

\begin{proposition}[Floyd~\cite{floyd1962algorithm}, Roy~\cite{roy1959transitivite}, Warshall~\cite{warshall1962theorem}]
\label{thm:Moore_Bellman_Ford}
  Given an arc-weighted directed graph $(G, w)$ we can detect in time $\mathcal{O}(n^3)$ whether~$G$ contains a negative cycle.
  Moreover, we can recover in the same time a shortest (i.e. length-minimal) negative cycle if one exists.
\end{proposition}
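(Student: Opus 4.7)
The plan is to run the classical Floyd--Warshall dynamic program for all-pairs shortest walks, adapted with two extra bookkeeping fields so that a length-minimal negative cycle can be extracted. Fix any ordering $v_1,\ldots,v_n$ of $V(G)$. For $k=0,\ldots,n$ and indices $i,j$, let $d^{(k)}_{ij}$ denote the minimum weight of a walk from $v_i$ to $v_j$ whose internal vertices all lie in $\{v_1,\ldots,v_k\}$, with the usual initialization $d^{(0)}_{ij}=w(v_i,v_j)$ on arcs of $G$, $d^{(0)}_{ii}=0$, and $d^{(0)}_{ij}=+\infty$ otherwise. The standard recurrence
\[
d^{(k)}_{ij}=\min\!\bigl(d^{(k-1)}_{ij},\; d^{(k-1)}_{ik}+d^{(k-1)}_{kj}\bigr)
\]
updates all $n^2$ cells in $\mathcal{O}(n^2)$ time per $k$, totaling $\mathcal{O}(n^3)$. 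Alongside $d^{(k)}_{ij}$ I would store a midpoint $m^{(k)}_{ij}$ to allow reconstruction of the corresponding walk, and the arc count $\ell^{(k)}_{ij}$ of that walk.

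For detection, the claim is that $G$ has a negative cycle iff $d^{(n)}_{ii}<0$ for some $i$. One direction is a standard induction on $k$: any negative cycle through $v_i$ whose highest-indexed internal vertex is $v_k$ decomposes at $v_k$ into two walks with intermediates of index strictly less than $k$, both captured by $d^{(k-1)}$ entries; hence the negative weight propagates to $d^{(n)}_{ii}$. The converse I would obtain via Gallai's theorem (\Cref{thm:gallai}): if every $d^{(n)}_{ii}\geq 0$, then after attaching a virtual source $s$ with zero-weight arcs to every vertex the values $\pi(v):=d^{(n)}_{sv}$ form a feasible potential, which rules out negative cycles. Scanning the diagonal takes $\mathcal{O}(n)$.

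To output a length-minimal negative cycle, I would perform a final sweep over all arcs $(v_j,v_i)\in A(G)$ and consider those with $d^{(n)}_{ij}+w(v_j,v_i)<0$. Each such arc together with the $v_i$-$v_j$ walk recorded by the midpoint pointers yields a closed negative walk from $v_i$ of length $\ell^{(n)}_{ij}+1$. Taking the minimum of $\ell^{(n)}_{ij}+1$ over all qualifying arcs, unrolling the midpoints of the winning pair in $\mathcal{O}(n)$, and shortcutting at the first repeated vertex produces a simple negative cycle of length no larger. The scan is $\mathcal{O}(n^2)$, so the total remains $\mathcal{O}(n^3)$.

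The main obstacle is reconciling the two optimization objectives: Floyd--Warshall natively minimizes only the weight of a walk, so the walks stored in the DP need not be length-minimal. I would address this by breaking ties in the recurrence in favor of smaller $\ell^{(k)}_{ij}$, and by arguing inductively that every simple negative cycle $C$ of length $\ell$ is matched, at the iteration $k=\max\{r:v_r\in V(C)\}$, by some tracked closed walk of length at most $\ell$. The delicate point is to ensure that the shortcutting step does not spuriously produce a \emph{shorter} negative closed walk that is not actually a cycle of $G$; this is handled by the invariant that midpoint pointers only ever trace walks composed of arcs in $A(G)$, so any shortcut of a tracked walk is still a walk in $G$ and hence contains a simple cycle.
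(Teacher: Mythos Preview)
The paper does not prove this proposition; it is cited as a classical result. So there is nothing in the paper to compare against, but your proposal for recovering a \emph{length-minimal} negative cycle has a genuine gap.

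Detection via the diagonal of the Floyd--Warshall table is fine. The problem is that $\ell^{(n)}_{ij}$ is the length of a minimum-\emph{weight} walk (with ties broken by length), which is unrelated to the length $\ell^\star$ of a shortest negative cycle. Take two vertex-disjoint negative cycles: $C^\star=(a,b,a)$ with $w(a,b)=-1$, $w(b,a)=0$, and $C'=(c,d,e,c)$ with all three arcs of weight $-100$; order the vertices $a,b,c,d,e$. Here $\ell^\star=2$. For the two arcs of $C^\star$ the table already wraps once around $C^\star$: $d^{(1)}_{bb}=-1$ forces $d^{(2)}_{ba}=-1$ with $\ell^{(2)}_{ba}=3$, and similarly $\ell^{(2)}_{ab}=3$, so both candidates are~$4$. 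Meanwhile the arc $(c,d)$ on $C'$ yields $d^{(5)}_{dc}=-200$ with $\ell^{(5)}_{dc}=2$, hence candidate~$3$. Your sweep therefore selects the closed walk $d\to e\to c\to d$, which is already the simple cycle $C'$; shortcutting cannot recover $C^\star$ since the selected walk never visits $a$ or~$b$. Your sketched invariant (``every simple negative cycle of length $\ell$ is matched by a tracked closed walk of length at most $\ell$'') is exactly what fails.

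The standard remedy is to layer the dynamic program by number of arcs rather than by highest intermediate vertex: set $D^{(\ell)}_{ij}$ to the minimum weight of an $i\to j$ walk with exactly $\ell$ arcs, update via $D^{(\ell)}_{ij}=\min_{(h,j)\in A(G)}\bigl(D^{(\ell-1)}_{ih}+w(h,j)\bigr)$, and return the smallest $\ell$ with some $D^{(\ell)}_{ii}<0$. A negative closed walk of that minimal length is necessarily a simple cycle (otherwise it splits into two shorter closed walks, one of which is still negative). Each layer costs $\mathcal O(nm)$, giving $\mathcal O(n^2m)$ overall; the paper in fact states $\mathcal O(n^2m)$ rather than $\mathcal O(n^3)$ when it invokes this subroutine in \Cref{thm:algorithm_for_td_plus_k} and \Cref{thm:algorithm_for_w_minus_and_special_arc_weights}.
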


Running this algorithm on $(G - S, w|_{A(G) \setminus S})$ and additionally checking whether $|S| \leq k$, allows us to verify a solution to some \NDFAS{} instance $(G, w, k)$.

\subsection{Algorithm for Bounded Treedepth and Solution Size}
The following result, showed by Ne\v{s}et\v{r}il and Ossona de Mendez \cite{NevsetvrilDeMendez2006}, exhibits that all cycles of a graph with bounded treedepth have bounded length. 
\begin{lemma}
\label{lem:path_and_cycle_length_bounded_by_treedepth}
  Let $G$ be a graph, $P$ be a path in $G$, and $C$ be a cycle of $G$.
  Then $|P| \leq 2^{\td(G)} - 1$ and $|C| \leq 2^{\td(G) - 1}$ holds.
\end{lemma}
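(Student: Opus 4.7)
The plan is to prove both inequalities simultaneously by induction on $d := \td(G)$, following the recursive definition of treedepth given in \Cref{def:treedepth}. Because any path and any cycle is contained in a single connected component, and $\td$ of a graph is the maximum of $\td$ over its components, it suffices to treat connected $G$. The base case $d = 1$ forces $G$ to be a single vertex: the only path has $|P| = 1 = 2^{1} - 1$ and there is no cycle, so the cycle bound holds vacuously.

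For the inductive step, take $d \geq 2$ and assume both bounds for every graph of treedepth at most $d-1$. The recursive clause for connected graphs in \Cref{def:treedepth} guarantees a vertex $v \in V(G)$ with $\td(G - v) \leq d-1$; I would fix one such $v$ and then split each argument according to whether $v$ lies on the path/cycle under consideration.

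For a path $P$ in $G$: if $v \notin V(P)$, then $P$ is a path in $G - v$ and induction directly yields $|P| \leq 2^{d-1} - 1 \leq 2^{d} - 1$. If $v \in V(P)$, deleting $v$ breaks $P$ into at most two subpaths $P_1, P_2$ in $G - v$, each of size at most $2^{d-1} - 1$ by induction; re-adding the removed vertex gives $|P| \leq 2\bigl(2^{d-1} - 1\bigr) + 1 = 2^{d} - 1$. The factor of $2$ from the two halves is precisely what produces the doubling in the exponent, so the bound is tight with respect to this argument.

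For a cycle $C$ in $G$: if $v \notin V(C)$, induction applied in $G - v$ gives $|C| \leq 2^{d-2} \leq 2^{d-1}$. Otherwise $v \in V(C)$, and deleting $v$ leaves a single path $C - v$ in $G - v$; invoking the path bound already established at this level of the induction, $|C - v| \leq 2^{d-1} - 1$, hence $|C| \leq 2^{d-1}$. The main obstacle is only a bookkeeping one: the cycle bound draws on the path bound, so the two inequalities must be carried together in the same induction (or the path bound proved first for each $d$). Once that is arranged, the argument is clean and the arithmetic matches the claimed bounds exactly.
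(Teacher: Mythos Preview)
Your proof is correct and is the standard inductive argument for this classical fact. Note, however, that the paper does not actually supply a proof of this lemma: it is stated as a known result due to Ne\v{s}et\v{r}il and Ossona de Mendez~\cite{NevsetvrilDeMendez2006} and used as a black box, so there is no ``paper's own proof'' to compare against. Your write-up would serve perfectly well as a self-contained justification if one were desired.
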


The high-level idea is then to detect negative cycles iteratively in the directed graph, and to branch on the arcs of the cycle.
We now present the main algorithm of this section.
The run-time depends on the maximum length of a negative cycle in the given graph.
We will revisit it later on, when we derive other bounds on the length of negative cycles in $G$ depending on the parameter choices.
\begin{lemma}
\label{lem:algorithm_for_bounded_negative_cycle_length_plus_k}
  There is an algorithm solving any {\sc Negative Directed Feedback Arc Set} instance $(G, w, k)$ in time $\mathcal{O}( L^k n^3)$, where $L$ is an upper bound on the length of any negative cycle in $G$.
\end{lemma}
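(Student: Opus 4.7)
The plan is to use the standard bounded search tree (branching) technique. Given the instance $(G,w,k)$, the algorithm first invokes \Cref{thm:Moore_Bellman_Ford} to test in time $\mathcal O(n^3)$ whether $G$ contains a negative cycle. If it does not, the empty set is returned as a feasibility certificate (together with any previously chosen deletions in the recursion). If it does contain a negative cycle but $k=0$, the branch fails. Otherwise, \Cref{thm:Moore_Bellman_Ford} actually returns a \emph{shortest} negative cycle $C$, and we branch on the choice of which arc of $C$ to delete: for each $a\in A(C)$ we recursively call the algorithm on $(G-a,\,w|_{A(G)\setminus\{a\}},\,k-1)$, adding $a$ to the running deletion set. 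The algorithm returns a deletion set of size at most $k$ as soon as one recursive branch succeeds, and reports infeasibility if every branch fails.

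Correctness follows from a simple exchange argument: any solution $S$ of the instance must intersect every negative cycle of $G$, in particular the cycle $C$ returned by \Cref{thm:Moore_Bellman_Ford}. Hence some arc $a\in A(C)\cap S$ exists, and the branch that deletes this $a$ reduces to the subinstance $(G-a,\,w,\,k-1)$, for which $S\setminus\{a\}$ is a valid solution; by induction on $k$, that branch will find a solution of size at most $k-1$, and together with $a$ we obtain a solution of size at most $k$ for the original instance. The base case $k=0$ is correct because absence of a negative cycle is exactly the termination condition, again certified by \Cref{thm:Moore_Bellman_Ford}.

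For the running time, the key observation is that because \Cref{thm:Moore_Bellman_Ford} gives us a \emph{length-minimal} negative cycle $C$, we have $|A(C)|\le L$ by the hypothesis on the maximum length of negative cycles in $G$. Therefore the branching has fan-out at most $L$, and the recursion has depth at most $k$, giving a search tree with at most $L^k$ leaves and $\mathcal O(L^k)$ nodes in total. At each node we spend $\mathcal O(n^3)$ time for the single application of \Cref{thm:Moore_Bellman_Ford}, plus $\mathcal O(L)$ bookkeeping to iterate over $A(C)$, which is absorbed by $\mathcal O(n^3)$. The overall running time is thus $\mathcal O(L^k n^3)$ as claimed.

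There is no real obstacle here; the only subtlety worth highlighting is the insistence on a \emph{shortest} negative cycle, since an arbitrary negative cycle could have $n$ arcs and ruin the branching bound. This is precisely why \Cref{thm:Moore_Bellman_Ford} is stated to return a length-minimal witness. In the remainder of the section, the same routine will be applied with different upper bounds $L$ depending on the parameter regime (e.g.\ $L\le 2^{\td(G)-1}$ via \Cref{lem:path_and_cycle_length_bounded_by_treedepth}).
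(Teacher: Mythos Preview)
Your proof is correct and follows essentially the same bounded search tree approach as the paper: find a negative cycle via \Cref{thm:Moore_Bellman_Ford}, branch on which arc of the cycle to delete, and recurse with $k-1$. One small remark: your final paragraph overstates the role of ``shortest''. Since the hypothesis is that $L$ bounds the length of \emph{every} negative cycle in $G$ (and hence in every subgraph $G-S$), any negative cycle returned would already have at most $L$ arcs; insisting on a length-minimal one is harmless but not actually needed for the branching bound.
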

\begin{proof}
  We recursively call the following procedure with some potential partial solution $S \subseteq V(G)$ with $|S| \leq k$.
  The initial call is done with $S = \emptyset$.
  First, we check in time $\mathcal{O}(n^3)$ whether there is a negative cycle $C$ in $G-S$ and recover it via \Cref{thm:Moore_Bellman_Ford}.
  If there is no negative cycle and $|S| \leq k$, we return $S$ as a solution.
  If there is a negative cycle~$C$ and $|S| = k$, we give up on this branch.
  Otherwise, there is a negative cycle $C$ and $|S| < k$.
  Then for every $v \in V(C)$ we make a subroutine call with $S_v = S \cup \lbrace v \rbrace$.
  If our initial call with $S = \emptyset$ did not return a solution for any branch, we report that the instance has no solution.
  This finishes the description of our algorithm.
	
  As we only make subroutine calls in the case of $|S| < k$, we add only one vertex to~$S$, and we start with $|\emptyset| = 0 \leq k$, one can show by induction that all our calls indeed fulfill $|S| \leq k$.
  Now we argue for the correctness.
  To this end, we have to show that if there is a solution, we do indeed return a solution.
  As we only return sets $S$ such that $|S| \leq k$ and $G - S$ has no negative cycles, we just have to make sure that we return a set.
  We consider the variant of our algorithm where we do not return a solution early, but rather save it and return it at the end of the algorithm.
  Let $S^\star$ be an inclusionwise minimal solution to $(G, w, k)$.
  We are going to reconstruct a possible subroutine call sequence $\emptyset = S_0 \subsetneq S_1 \subsetneq \hdots \subsetneq S_{|S^\star|} = S^\star$ that is called by our algorithm.

  We prove by induction on $i$ that $S_i$ appears in one branch of our algorithm.
  We start with $S_0 = \emptyset$ which is our initial subroutine call.
  As long as $i < |S^\star|$, we have that $S_i \subsetneq S^\star$ and thus, by inclusion-wise minimality of $S^\star$, $G - S_i$ contains at least one negative cycle $C_i$.
  Now $S^\star$ is a solution, so we know that there is a $v_i \in S^\star \cap V(C_i)$.
  By $C_i$ being disjoint from~$S_i$, we have that $S_{i+1} = S_i \cup \lbrace v_i \rbrace$ is a strict superset of $S_i$.
  Thus we make a subroutine call with this $S_{i+1}$.
  This shows by induction that our modified algorithm considers the set $S^\star$ and our original algorithm returns a solution.
	
  For the run-time, note that at each sub-routine call we branch into $|C| \leq L$ many branches.
  We start with $|S| = 0$ and in each recursive subroutine call $|S|$ increases by one.
  As we stop once $|S| = k$, this means our recursion nests at most $k$ levels deep.
  This results in $L^k$ subroutine calls.
  In each call, we need $\mathcal{O}(n^3)$ time to check for negative cycles.
\end{proof}

The above observations enables us to give a fixed-parameter algorithm for combined parameter the treedepth and the solution size.
\begin{theorem}
\label{thm:algorithm_for_td_plus_k}
  There is an algorithm solving {\sc MinFB} in time $\mathcal{O}( 2^{k \td(G)} n^2m)$.
\end{theorem}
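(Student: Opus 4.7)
The plan is to invoke Lemma~\ref{lem:algorithm_for_bounded_negative_cycle_length_plus_k} on the \textsc{Negative DFAS} instance equivalent to the given \textsc{MinFB} instance, feeding it the structural cycle-length bound from Lemma~\ref{lem:path_and_cycle_length_bounded_by_treedepth} as its branching parameter~$L$.

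First I would apply Theorem~\ref{thm:NDFASisequivalenttospecialMIS} to translate $(A,b,k)$ in linear time into an equivalent arc-weighted digraph $(G,w,k)$; by construction $V(G)$ has one vertex per column of $A$ and $A(G)$ has one arc per row, and the underlying undirected graph $\langle G\rangle$ is exactly the graph whose treedepth equals $\td(G)$. Next I would bound the length of every negative cycle of $G$. Any directed cycle is, by definition, a closed walk whose internal vertices are pairwise distinct; since $G$ has no self-loops, its vertex set induces a cycle of the same length in $\langle G\rangle$, and length-$2$ cycles arising from anti-parallel arcs are trivially within any bound of the form $2^{\td(G)-1}\ge 2$. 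Hence by Lemma~\ref{lem:path_and_cycle_length_bounded_by_treedepth} every negative cycle of $G$ has at most $L:=2^{\td(G)-1}$ arcs.

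Plugging this bound into Lemma~\ref{lem:algorithm_for_bounded_negative_cycle_length_plus_k} yields total running time
\[
\mathcal O\!\left(L^{k} n^{3}\right) \;=\; \mathcal O\!\left(2^{k(\td(G)-1)}\, n^{3}\right) \;\subseteq\; \mathcal O\!\left(2^{k\td(G)}\, n^{2} m\right),
\]
where the last inclusion uses $n^{3}\le 2n^{2} m$: after discarding isolated vertices (which never lie on any cycle and hence never on any negative cycle) every remaining vertex is incident to at least one arc, so $2m\ge n$. The only mild obstacle is to verify that the cycle-length bound transfers cleanly from the undirected Lemma~\ref{lem:path_and_cycle_length_bounded_by_treedepth} to directed cycles of $G$; as indicated, this is immediate from the definition of a directed cycle together with the absence of self-loops, so no genuine difficulty arises.
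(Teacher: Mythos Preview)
Your proof is correct and follows the same approach as the paper: reduce to \textsc{Negative DFAS} via Theorem~\ref{thm:NDFASisequivalenttospecialMIS}, then combine the cycle-length bound of Lemma~\ref{lem:path_and_cycle_length_bounded_by_treedepth} with the branching algorithm of Lemma~\ref{lem:algorithm_for_bounded_negative_cycle_length_plus_k}. You even supply details the paper omits, namely the $n^3\le 2n^2m$ step (via removal of isolated vertices) and the observation that the undirected cycle bound applies to directed cycles.
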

\begin{proof}
  By \Cref{thm:NDFASisequivalenttospecialMIS}, we can construct an instance of \NDFAS{} in polynomial time that has the same parameters $\td + k$ such that the original instances is a ``yes''-instance if and only if the constructed instance is.
  The theorem then follows by combining \Cref{lem:path_and_cycle_length_bounded_by_treedepth} and \Cref{lem:algorithm_for_bounded_negative_cycle_length_plus_k}. 
\end{proof}

\subsection{Algorithm for Bounded Number of Positive and Negative Arcs}
The goal of this section is to derive an algorithm for \NDFAS{} parameterized in $w_+$ and $w_-$.
In order to do so, we build on skew cuts in directed graphs.
\begin{definition}
  Let $G$ be a directed graph, let $p \in \mathbb{Z}_{\geq 0}$ be a non-negative integer, and let\linebreak $X_1, \hdots, X_p, Y_1, \hdots Y_p \subseteq V(G)$ be pairwise disjoint vertex sets of $G$.
  An \emph{$(X_1, \hdots, X_p) \to (Y_1, \hdots, Y_p)$-skew cut} is an arc set $S \subseteq A(G)$ such that there is no $X_i \to Y_j$-path in $G - S$ for any $(i,j)$ with $1 \leq j \leq i \leq p$.
\end{definition}

Then the \textsc{Skew Separator} problem is defined as follows.

\problemdef{Skew Separator}
  {A graph $G$, vertex sets $X_1, \hdots, X_p, Y_1, \hdots Y_p \subseteq V(G)$ and an integer $k \in \mathbb{Z}_{\geq 0}$.}
  {Find an $(X_1, \hdots, X_p) \to (Y_1, \hdots, Y_p)$-skew cut of size at most~$k$ or decide that no such skew cut exists.}

The \textsc{Skew Separator} problem was introduced by Chen et al.~\cite{ChenEtAl2008}, who devised an algorithm for it as a subroutine for their \textsc{Directed Feedback Vertex Set} algorithm.
In fact, they solve the vertex deletion variant, but it can be applied to the arc deletion variant by subdividing arcs.

\begin{proposition}[Chen et al.~\cite{ChenEtAl2008}]
\label{thm:solving_skew_cut}
  Any instance $(G, (X_1, \hdots, X_p), (Y_1, \hdots, Y_p), k)$ of {\sc Skew Separator} can be solved in time $\mathcal{O}(4^kkn^3)$.
\end{proposition}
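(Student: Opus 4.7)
Since this is a classical result of Chen, Liu, Lu, O'Sullivan, and Razgon, I would follow their framework of \emph{important separators}. An important $(X,Y)$-separator is a minimal $X$-to-$Y$ arc cut $S$ such that there is no other $X$-to-$Y$ cut $S'$ with $|S'|\leq |S|$ whose $X$-reachable set in $G-S'$ strictly contains the $X$-reachable set in $G-S$. The first ingredient I would establish is the combinatorial lemma that the number of important $(X,Y)$-separators of size at most $k$ is bounded by $4^k$, and that all of them can be enumerated in time $\mathcal O(4^k \cdot k \cdot (n+m))$. This is proved by a standard branching scheme: find a minimum $(X,Y)$-cut $C$, pick an arc $e \in C$, and branch on whether $e$ is contained in the important separator or not; on the ``in'' branch the budget decreases by $1$, on the ``out'' branch the maximum flow strictly increases, and in both cases a potential function of the form $2k-\mu$ (with $\mu$ the current max-flow) strictly decreases, yielding the $4^k$ bound.

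For \textsc{Skew Separator} itself, the plan is iterative branching over the pairs. Observe that for $j=1$ the constraints require that \emph{every} $X_i$ with $i \geq 1$ be separated from $Y_1$, so any feasible solution must contain an $(X_1 \cup \cdots \cup X_p, Y_1)$-separator. I would argue (this is the exchange/pushing lemma, which is the main obstacle below) that one can restrict to solutions in which this ``outermost'' separator $S_1$ is an \emph{important} $(X_1\cup\cdots\cup X_p, Y_1)$-separator. Having guessed $S_1$ from among the at most $4^{|S_1|}$ candidates, I would delete $S_1$ from $G$, decrease the budget by $|S_1|$, discard the pair $(X_1,Y_1)$, and recurse on the instance $(G-S_1,(X_2,\dots,X_p),(Y_2,\dots,Y_p), k-|S_1|)$. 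The recursion depth is at most $p$, and since the sizes $\ell_1,\ell_2,\dots$ of the important separators chosen along any root-to-leaf path sum to at most $k$, the total number of leaves is bounded by $\prod_j 4^{\ell_j}\leq 4^k$. At each node, constructing a minimum $(X_j\cup\cdots\cup X_p, Y_j)$-cut and enumerating the relevant important separators costs $\mathcal O(kn^3)$ via max-flow, so the overall running time is $\mathcal O(4^k k n^3)$.

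The main technical obstacle is the pushing lemma that justifies restricting attention to important separators for the first pair. Given an arbitrary optimal skew cut $S^\star$, let $S^\star_1 \subseteq S^\star$ be the subset of arcs that actually participate in blocking some $X_i \to Y_1$ path. I would show that one can replace $S^\star_1$ by an important $(X_1\cup\cdots\cup X_p, Y_1)$-separator $T_1$ of size $|T_1|\leq |S^\star_1|$ whose $X$-reachable set contains that of $S^\star_1$, so that the residual problem after removing $T_1$ is at least as easy as the one after removing $S^\star_1$: any path between pairs $(X_i,Y_j)$ with $i,j\geq 2$ that survives $T_1$ must also have survived $S^\star_1$ (because $T_1$ is pushed toward $Y_1$ and thus cuts more $X$-reachability), so $(S^\star \setminus S^\star_1) \cup T_1$ is again a valid skew cut of size at most $|S^\star|$. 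This uncrossing step is the delicate part; once it is in place, correctness of the iterative branching and the claimed running time follow immediately from the important-separator enumeration bound.
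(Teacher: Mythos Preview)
This proposition is quoted from Chen et al.\ without proof in the paper. Your outline is the right framework, but the pushing lemma is oriented incorrectly and fails. You claim that replacing the $(\bigcup_i X_i,Y_1)$-part of an optimal solution by an important separator $T_1$ pushed toward $Y_1$ can only make the residual instance easier, because ``$T_1$ is pushed toward $Y_1$ and thus cuts more $X$-reachability.'' This is backward: pushing toward $Y_1$ \emph{enlarges} the $X$-reachable set, so $T_1$ cuts \emph{fewer} $X_i\to Y_j$ paths for $j\ge 2$, not more. Concretely, take $X_1=\{x_1\}$, $X_2=\{x_2\}$, $Y_1=\{y_1\}$, $Y_2=\{y_2\}$ with arcs $(x_1,y_1)$, $(x_2,a)$, $(a,y_1)$, $(a,y_2)$. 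The optimum skew cut is $\{(x_1,y_1),(x_2,a)\}$ of size~$2$, but the only important $(\{x_1,x_2\},\{y_1\})$-separator of size at most $2$ is $\{(x_1,y_1),(a,y_1)\}$; after deleting it the budget is $0$ while $x_2\to a\to y_2$ survives, so your branching wrongly rejects the instance.

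The correct orientation (as in Chen et al.) is to process the single source that must be separated from \emph{all} targets---in the paper's indexing this is $X_p$---and branch over important $(X_p,\bigcup_j Y_j)$-separators, which are pushed toward the target side. The uncrossing then goes through: if a surviving $X_i\to Y_j$ path uses an arc $(u,v)$ of the replaced minimal separator, then $u$ was $X_p$-reachable before replacement and hence still is after pushing (the reachable set only grew), so one obtains an $X_p\to Y_j$ path avoiding $T_1$, contradicting that $T_1$ separates $X_p$ from all $Y_j$. Equivalently, you may keep your processing order but take important separators in the reversed digraph. A second, smaller issue: the leaf bound ``$\prod_j 4^{\ell_j}\le 4^k$'' is not how the analysis works when the $\ell_j$ are not fixed in advance; one needs the finer inequality $\sum_S 4^{-|S|}\le 1$ over all important separators of size at most $k$ to conclude that the total number of leaves in the recursion is at most $4^k$.
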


The high-level description of our algorithm is the following.
We first guess the intersection between a solution and the set of non-zero arcs.
Afterwards, we focus on the directed graph induced by zero-weight arcs only.
For any solution~$S$, the graph $G-S$ has a feasible potential, that in turn defines an ordering of the end-points of the non-zero arcs.
We use this observation to show that solutions can be obtained by solving a \textsc{Skew Separator} problem in the zero-arc graph.
As we do not know the feasible potential or the exact  \textsc{Skew Separator} instance, our algorithm guesses among all potential orderings of the endpoints of the non-zero arcs.

The main graph we work with is constructed from $G$ by removing the arcs of $A_{\neq 0}$ and splitting their endpoints.
\begin{definition}
  Let $(G, w)$ be an arc-weighted directed graph and denote by $Z$ the set of vertices $z \in V(G)$ with $(\delta^+(z) \cup \delta^-(z)) \cap A_{\neq 0}(G) \neq \emptyset$.
  The \emph{zero-weight propagation graph} of $G$ is the graph $\overrightarrow{G}_0$ obtained from $G$ by deleting~$A_{\neq 0}$ and splitting every $z \in Z$ into two vertices $z^+$ and~$z^-$, where~$z^+$ inherits the outgoing arcs and $z^-$ inherits the incoming arcs of $z$.
  In this context we denote for every subset $Y \subseteq Z$ by $Y^+$ and $Y^-$ the set of all $z^+$'s and $z^-$'s with $z \in Y$, respectively.
\end{definition}

By abuse of notation, we will identify the arcs in $\overrightarrow{G}_0$ with those of $A_0$.
We are now ready state the main observation of this chapter.
\begin{lemma}
\label{lem:zero_arc_NDFAS_to_Skew_Separator}
  Let $(G, w)$ be an arc-weighted directed graph and let $S \subseteq A_0(G)$ be a negative directed feedback arc set of $(G,w)$.
  Let $Z$ be the endpoints of arcs in $A_{\neq 0}(G)$ and let $\overrightarrow{G}_0$ be the zero-weight propagation graph of $G$.
  Then there is an ordered partition $(Z_1, \hdots, Z_p)$ of $Z$ such that
  \begin{enumerate}
    \item $S$ is a $(Z_1^+, \hdots, Z_p^+, \emptyset) \to (\emptyset, Z_1^-, \hdots, Z_p^-)$-skew cut in $\overrightarrow{G}_0$, and
    \item any $(Z_1^+, \hdots, Z_p^+, \emptyset) \to (\emptyset, Z_1^-, \hdots, Z_p^-)$-skew cut in $\overrightarrow{G}_0$ is a negative directed feedback arc set of~$(G,w)$.
  \end{enumerate}
\end{lemma}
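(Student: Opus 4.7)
The plan is to obtain both directions through a single feasible potential provided by \Cref{thm:gallai}: use it to produce the ordered partition in part~(1), and to certify any skew cut as a feedback arc set in part~(2).

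For part~(1), I would invoke \Cref{thm:gallai} on $G - S$ to obtain a feasible potential $\pi\colon V(G)\to\mathbb{R}$, and then define $(Z_1,\ldots,Z_p)$ as the level sets of $\pi|_{Z}$, indexed in strictly increasing order of the common $\pi$-value. The verification is a short contradiction argument: a hypothetical $Z_i^+ \to Z_\ell^-$ path in $\overrightarrow{G}_0 - S$ with $\ell < i$ lifts to a walk in $G - S$ using only arcs of $A_0$, and feasibility of $\pi$ along such a walk forces $\pi(Z_i) \le \pi(Z_\ell)$, contradicting the chosen ordering.

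For part~(2), given an arbitrary skew cut $S' \subseteq A_0$, the goal is to show that $G - S'$ has no negative cycle. I would reuse the same $\pi$, which remains feasible on every arc of $A_{\neq 0}$ because neither $S$ nor $S'$ touches those arcs. Following an arbitrary cycle $C$ in $G - S'$, list its $Z$-visits in cyclic order as $z^{(1)},\ldots,z^{(t)}$ (with $z^{(t+1)}=z^{(1)}$). Each segment between consecutive $Z$-visits is either (a)~a single arc of $A_{\neq 0}$, or (b)~a maximal run of zero-arcs internally disjoint from $Z$; by construction of $\overrightarrow{G}_0$, such a run corresponds exactly to a path from $z^{(j)+}$ to $z^{(j+1)-}$ in $\overrightarrow{G}_0 - S'$. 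Case~(a) yields $w(z^{(j)},z^{(j+1)}) \ge \pi(z^{(j)}) - \pi(z^{(j+1)})$ by potential feasibility on $A_{\neq 0}$, and case~(b) yields $0 \ge \pi(z^{(j)}) - \pi(z^{(j+1)})$, since the skew cut property forbids $i_j > i_{j+1}$ whenever $z^{(j)} \in Z_{i_j}$ and $z^{(j+1)} \in Z_{i_{j+1}}$. Summing the per-segment inequalities telescopically around $C$ gives $w(C) \ge 0$, so $S'$ is a negative directed feedback arc set by another application of \Cref{thm:gallai}.

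The main obstacle I anticipate is the bijective correspondence between zero-arc sub-walks of $G$ and paths in $\overrightarrow{G}_0$ between split endpoints: the vertex-splitting construction was precisely designed so that, at any $z \in Z$, incoming and outgoing zero-arcs are decoupled, guaranteeing that a segment of $C$ internally disjoint from $Z$ becomes a genuine $z^{(j)+} \to z^{(j+1)-}$ path in $\overrightarrow{G}_0$ (and conversely, that any such $\overrightarrow{G}_0$-path cannot sneak through another $Z$-vertex). A secondary subtlety is the degenerate case where $C$ avoids $Z$ entirely, or visits it only once; both cases fit inside the same telescoping argument, and the first is trivially non-negative because all of its arcs have weight $0$.
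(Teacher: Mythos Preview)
Your overall strategy is sound and, for part~(1), matches the paper's proof: partition $Z$ by the level sets of a feasible potential $\pi$ of $G-S$. However, you have a consistent sign error that breaks the argument as stated. Along a zero-weight arc $(u,v)$, feasibility reads $\pi(u)-\pi(v)\ge 0$, so a zero-weight walk from $x\in Z_i$ to $y\in Z_\ell$ in $G-S$ yields $\pi(Z_i)\ge\pi(Z_\ell)$, not $\le$. With your \emph{increasing} ordering, $\ell<i$ gives $\pi(Z_\ell)<\pi(Z_i)$, which is perfectly compatible with $\pi(Z_i)\ge\pi(Z_\ell)$---no contradiction arises. The fix is to index the level sets in \emph{decreasing} order of $\pi$-value (as the paper does): then $\ell<i$ forces $\pi(Z_\ell)>\pi(Z_i)$, contradicting $\pi(Z_i)\ge\pi(Z_\ell)$. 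The same sign slip appears in case~(a) of part~(2): feasibility on an arc $a=(z^{(j)},z^{(j+1)})$ gives $w(a)\ge\pi(z^{(j+1)})-\pi(z^{(j)})$, not the reverse. With the decreasing ordering, the skew-cut conclusion $i_j\le i_{j+1}$ in case~(b) gives $\pi(z^{(j)})\ge\pi(z^{(j+1)})$, i.e.\ $0\ge\pi(z^{(j+1)})-\pi(z^{(j)})$, and now both cases telescope in the same direction to $w(C)\ge 0$.

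Once these signs are corrected, your part~(2) argument is a pleasant alternative to the paper's. The paper does not analyze cycles directly; instead it constructs an explicit feasible potential $\pi^\star$ for $G-S^\star$ by setting $\pi^\star=\pi$ on $Z$ and, for $v\notin Z$, taking $\pi^\star(v)=\min\{\pi(z): \text{there is a } z^+\to v \text{ path in } \overrightarrow{G}_0-S^\star\}$, and then verifies feasibility arc by arc with a somewhat lengthy case analysis. Your telescoping argument is shorter and more conceptual, at the cost of not exhibiting the witnessing potential.
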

\begin{proof}
  As $S$ is a negative directed feedback arc set of $(G,w)$, the graph $G' = G - S$ has a feasible potential $\pi$ with respect to $w|_{A(G')}$.
  We define an ordered partition $(Z_1, \hdots, Z_p)$ of $Z$ by using the potential $\pi$.
  Two vertices $z$ and $z'$ belong to the same partition class if and only if $\pi(z) = \pi(z')$.
  This way each partition class admits a unique value that is possessed by all of its elements as a potential. 
  The partition classes are then ordered according to these values in a decreasing order.
	
  We now verify the statements of the theorem.
  First we check, whether $S$ is a $(Z_1^+, \hdots, Z_p^+, \emptyset) \to (\emptyset, Z_1^-, \hdots, Z_p^-)$-skew cut in $\overrightarrow{G}_0$.
  Suppose, for sake of contradiction, that this does not hold.
  Then there is an $x^+ \to y^-$-path in $\overrightarrow{G}_0$ with $x^+ \in Z_i^+$ and $y^- \in Z_j^-$ such that $j < i$.
  By the ordering of the partition classes, we get $\pi(y) > \pi(x)$.
  Now consider the corresponding $x \to y$-path $P$ in $G - S$, where $x$ and $y$ are the original vertices of $x^+$ and $y^-$ before they got split.
  By construction of~$\overrightarrow{G}_0$, $P$ contains arcs of weight zero only.
  As $\pi$ is a feasible potential, $\pi(u) - \pi(v) = \pi(u) - \pi(v) + w(a) \geq 0$ for each arc $a = (u,v)$ of $P$.
  Summing this up over all arcs in $P$, we get $\pi(x) - \pi(y) = \sum_{a = (u,v) \in A(P)} \pi(u) - \pi(v) \geq 0$.
  In other words $\pi(y) \leq \pi(x)$, a contradiction.
	
  It remains to verify that any $(Z_1^+, \hdots, Z_p^+, \emptyset) \to (\emptyset, Z_1^-, \hdots, Z_p^-)$-skew cut in~$\overrightarrow{G}_0$ is a negative directed feedback arc set of $(G, w)$.
  Let $S^\star$ be such a skew cut.
  We try to prove that~$S^\star$ is a negative directed feedback arc set of $(G,w)$ by constructing a feasible potential $\pi^\star$ of $G^\star = G - S^\star$ with respect to $w|_{A(G^\star)}$.
  We may assume that $\max_{v \in V(G)} \pi(v) \leq 0$ as the potentials can be shifted by the same value on every vertex.
  We define $\pi^\star$ as
  \begin{equation*}
    \pi^\star(v) = \begin{cases}
                     \pi(v) &\text{, if $v \in Z$}\\
					        \min \lbrace \pi(z) \mid \text{ there is a $z^+ \to v$-path in $\overrightarrow{G}_0 - S^\star$} \rbrace	&\text{, otherwise}
                   \end{cases}
  \end{equation*}
  where we define the minimum to be $0$ if there is no $z^+ \to v$-path in $\overrightarrow{G}_0 - S^\star$ for any $z^+ \in Z^+$.
	
  We show that $\pi^\star$ is a feasible potential in $G^\star$.
  As $A_{\neq 0}(G^\star) = A_{\neq 0}(G)$, $u,v\in Z$ holds for each arc $a = (u,v) \in A_{\neq 0}(G^\star)$.
  This implies $\pi^\star(u) - \pi^\star(v) + w(a) = \pi(u) - \pi(v) + w(a) \geq 0$.
  Hence, it remains to check arcs $a = (u,v) \in A_0(G^\star) = A_0(G) \setminus S^\star$.
  If $u \in Z$, then the arc $(u^+, v^-)$ or $(u^+, v)$ exists in $\overrightarrow{G}_0 - S^\star$.
  In the former case we again have $u,v \in Z$ and the potential is feasible for this arc.
  Otherwise, we have that there is an $u^+ \to v$-path in $\overrightarrow{G}_0 - S^\star$ and thus $\pi^\star(v) \leq \pi(u) = \pi^\star(u)$.
  As $w(a) = 0$, this implies $\pi^\star(u) - \pi^\star(v) + w(a) \geq 0$.
  So we can assume that $u \notin Z$.
  If additionally $u$ does not have a $z^+ \to u$-path in $\overrightarrow{G}_0 - S^\star$, then $\pi^\star(u) = 0$ and by $\max_{v \in V(G)} \pi(v) \leq 0$, we have that $\pi^\star(v) \leq 0 = \pi^\star(u)$, implying $\pi^\star(u) - \pi^\star(v) + w(a) \geq 0$.
  Now we know that $u \notin Z$ and there is a $z^+ \to u$-path in $\overrightarrow{G}_0 - S^\star$ for some $z \in Z$.
  We distinguish between the two cases $v \notin Z$ and $v \in Z$.
  If $v \notin Z$, then the arc $(u, v)$ exists in $\overrightarrow{G}_0 - S^\star$.
  Hence, any $z^+ \to u$-path can be prolonged to a $z^+ \to v$-path by adding $(u,v)$ at the end.
  Thus $\pi^\star(v) \leq \pi^\star(u)$, and $\pi^\star(u) - \pi^\star(v) + w(a) \geq 0$.
  If $v \in Z$, we have that the arc $(u, v^-)$ exists in $\overrightarrow{G}_0 - S^\star$.
  Any $z^+ \to u$-path can be prolonged to a $z^+ \to v^-$-path.
  By definition of our \textsc{Skew Separator} instance, we have that for all such paths in $\overrightarrow{G}_0 - S^\star$, we have $\pi(z) \geq \pi(v)$.
  This implies $\pi^\star(v) \leq \pi^\star(u)$, and so $\pi^\star(u) - \pi^\star(v) + w(a) \geq 0$.
  Hence, $\pi^\star$ is a feasible potential for $G - S^\star$ and thus $S^\star$ is a negative directed feedback arc set of $(G,w)$.
\end{proof}

We are now going to use this result to give an algorithm solving the NDFAS problem parameterized by $w_+$ and $w_-$, and thus we solve our original problem.
\begin{theorem}
\label{thm:algorithm_for_bounded_number_of_non_zero_arcs}
  There is an algorithm that solves any {\sc MinFB} instance $(G, w, k)$ in time\linebreak $2^{\mathcal{O}\left( (w_+ + w_-) \log (w_+ + w_-)\right)}n^{\mathcal O(1)}$.
\end{theorem}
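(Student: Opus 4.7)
The plan is to enumerate the intersection of a hypothetical solution with the non-zero arcs, and then reduce the remaining problem on zero arcs to \textsc{Skew Separator} by guessing an ordered partition so that Lemma~\ref{lem:zero_arc_NDFAS_to_Skew_Separator} applies. First I would invoke Theorem~\ref{thm:NDFASisequivalenttospecialMIS} to reduce to a parameter-equivalent \NDFAS{} instance $(G,w,k)$. If $w_- \leq k$, Theorem~\ref{thm:solve_w_minus_bounded_by_k} finishes in linear time, so I may assume $k < w_-$, giving $k + w_+ + w_- = \mathcal{O}(w_+ + w_-)$.

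The main loop iterates over all subsets $S_{\neq 0} \subseteq A_{\neq 0}(G)$ of size at most $k$; there are at most $2^{w_+ + w_-}$ of them. For each choice, consider $G' = G - S_{\neq 0}$ and let $Z' \subseteq V(G')$ be the set of endpoints of arcs in $A_{\neq 0}(G')$, so $|Z'| \leq 2(w_+ + w_-)$. For every ordered partition $(Z'_1, \ldots, Z'_p)$ of $Z'$, I would build the \textsc{Skew Separator} instance $(\overrightarrow{G'}_0, ({Z'_1}^+, \ldots, {Z'_p}^+, \emptyset), (\emptyset, {Z'_1}^-, \ldots, {Z'_p}^-), k - |S_{\neq 0}|)$ and solve it with Proposition~\ref{thm:solving_skew_cut}. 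If any call returns a skew cut $S_0$, item~2 of Lemma~\ref{lem:zero_arc_NDFAS_to_Skew_Separator} certifies that $S_{\neq 0} \cup S_0$ is a negative directed feedback arc set of $(G,w)$ of size at most $k$, which we output.

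For correctness in the other direction, suppose an optimal solution $S^\star$ of size at most $k$ exists. Splitting it into $S^\star_{\neq 0} = S^\star \cap A_{\neq 0}$ and $S^\star_0 = S^\star \cap A_0$, the set $S^\star_0$ is a negative directed feedback arc set of $G - S^\star_{\neq 0}$ contained in $A_0$, so item~1 of Lemma~\ref{lem:zero_arc_NDFAS_to_Skew_Separator}, applied to $G - S^\star_{\neq 0}$, provides an ordered partition of $Z'$ on which the corresponding \textsc{Skew Separator} call succeeds within the budget $k - |S^\star_{\neq 0}|$; this partition is among those we enumerate, so the algorithm does find a solution.

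For the running time, the number of ordered set partitions (Fubini numbers) of a ground set of size $|Z'| \leq 2(w_+ + w_-)$ is $2^{\mathcal{O}((w_+ + w_-)\log(w_+ + w_-))}$. Proposition~\ref{thm:solving_skew_cut} runs in time $\mathcal{O}(4^{k'} k' n^3)$ with $k' \leq k$, contributing a factor $2^{\mathcal{O}(w_+ + w_-)} n^{\mathcal{O}(1)}$ in the non-trivial regime $k < w_-$. Combined with the $2^{w_+ + w_-}$ outer choices of $S_{\neq 0}$, this yields the claimed bound $2^{\mathcal{O}((w_+ + w_-)\log(w_+ + w_-))} n^{\mathcal{O}(1)}$. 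The main technical point, already resolved by Lemma~\ref{lem:zero_arc_NDFAS_to_Skew_Separator}, is that one must enumerate \emph{ordered} partitions of $Z'$ (since no canonical ordering is available a priori), and the Fubini-number count is precisely what makes this step both necessary and affordable within the target exponent.
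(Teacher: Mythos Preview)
Your proposal is correct and follows essentially the same approach as the paper: reduce to \NDFAS{}, handle the trivial case $w_- \le k$, then branch over all subsets of $A_{\neq 0}$ of size at most $k$ together with all ordered partitions of the endpoints of the remaining non-zero arcs, and solve each resulting \textsc{Skew Separator} instance via Proposition~\ref{thm:solving_skew_cut}, with correctness supplied by both directions of Lemma~\ref{lem:zero_arc_NDFAS_to_Skew_Separator}. The only cosmetic difference is that the paper bounds the number of ordered partitions by $(2w_+ + 2w_-)! \cdot 4^{w_+ + w_-}$ rather than invoking Fubini numbers, and it adds an explicit verification step before returning $A' \cup S'$.
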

\begin{proof}
  By \Cref{thm:NDFASisequivalenttospecialMIS}, we can construct an instance of \NDFAS{} in polynomial time that has the same parameters $w_+ + w_-$ such that the original instances is a ``yes''-instance if and only if the constructed instance is. 

  If $w_- \leq k$, return the set $A_-(G)$ of negative arcs.
  Otherwise, for every subset $A' \subseteq A_{\neq 0 }(G)$ of size at most $k$ do the following.
  For the graph $G' = G - A'$ with $Z'$ being endpoints of non-zero arcs $A_{\neq 0}(G')$ and for every ordered partition $(Z_1, \hdots, Z_p)$ of $Z'$, we call the algorithm for  \textsc{Skew Separator} (see \Cref{thm:solving_skew_cut}) on
  \begin{equation*}
    (\overrightarrow{G'}_0, (Z_1^+, \hdots, Z_p^+, \emptyset), (\emptyset, Z_1^-, \hdots, Z_p^-), k - |A'|).
  \end{equation*}
  If the algorithm returns a set $S'$, we check whether $A' \cup S'$ is a negative directed feedback arc set for $G$ and if it is, we return it.
  If the algorithm does not find a negative directed feedback arc set for any $A'$ and ordered partition $(Z_1, \hdots, Z_p)$ of $Z'$, we return that the problem has no negative directed feedback arc set of size at most $k$.
	
  As we check any solution we return for correctness, we just have to verify that we always return a negative directed feedback arc set, if there is any.
  Let $S$ be a negative directed feedback arc set for $(G, w)$.
  For $A' = S \cap A_{\neq 0}(G)$ we know that $S' = S \setminus A'$ is a negative directed feedback arc set for $G' = G - A'$, as $(G - A') - S' = G - S$ which contains no negative cycles.
  Now $S' \cap A_{\neq 0}(G') = \emptyset$, and thus by \Cref{lem:zero_arc_NDFAS_to_Skew_Separator} there is an ordered partition $(Z_1, \hdots, Z_p)$ of~$Z'$ such that $S'$ is a $(Z_1^+, \hdots, Z_p^+, \emptyset) \to (\emptyset, Z_1^-, \hdots, Z_p^-)$-skew cut in $\overrightarrow{G'}_0$.
  Thus, we know that $(\overrightarrow{G'}_0, (Z_1^+, \hdots, Z_p^+, \emptyset), (\emptyset, Z_1^-, \hdots, Z_p^-), k - |A'|)$ is a ``yes''-instance.
  Moreover, \Cref{lem:zero_arc_NDFAS_to_Skew_Separator} tells us that any solution $S^\star$ to this \textsc{Skew Separator} instance our algorithm call finds, is a negative directed feedback arc set for $G'$.
  Thus, $A' \cup S^\star$ is a negative directed feedback arc set of $G$, as $G - (A' \cup S^\star) = G' - S^\star$.
  Also, $A' \cup S^\star$ has size at most $k$.
  Hence, our algorithm finds a negative directed feedback arc set of the given size if one exists.
	
  For the run-time note that by returning the trivial solution $A_-$ if $w_- \leq k$, we have either a linear run-time for constructing $A_-$ or $k < w_-$.
  Now we have at most $2^{w_+ + w_-}$ possible subsets~$A'$ of $A_{\neq 0}$.
  The set $Z'$ contains at most $2 w_+ + 2w_-$ vertices and the number of its ordered partitions can be bounded by $(2w_+ + 2w_-)! \cdot 4^{w_+ + w_-}$ (orders of $Z'$ times the choice for every element whether to start a new subset there).
  Thus, the overall run-time is
  \begin{equation*}
    \mathcal{O}(2^{w_+ + w_-} \cdot (2w_+ + 2w_-)! \cdot 4^{w_+ + w_-} \cdot (4^kkn^3 + n^{\mathcal O(1)})),
  \end{equation*}
  which, using $k < w_-$, can be rewritten as $2^{\mathcal{O}\left( (w_+ + w_-) \log (w_+ + w_-)\right)}n^{\mathcal O(1)}$.
\end{proof}

\subsection{Number of Positive or Negative Arcs and Feasible Potentials}
This section is dedicated to \NDFAS{} with arc weights in $\lbrace -1, 0, 1\rbrace$ and how the number of positive or negative arcs influences the feasible potential of the solution. 
We use this results for the algorithms in the following section.
We show that for both parameters~$w_-$ and~$w_+$, the solution must have an integral feasible potential in $[0, w_-]$ and $[0, w_+]$, respectively.
Unfortunately, the latter only holds for graphs that are strongly connected after removing the solution.
Moreover, we also prove that for the parameter treedepth $\td(G)$, the solution has a feasible potential in $[0, 2^{\td(G)}]$

We first state a folklore technique for constructing a feasible potential of a graph.
\begin{lemma}
\label{lem:constructing_a_feasible_potential}
  Let $(G, w)$ be an arc-weighted directed graph that contains no negative cycles.
  Let~$G^\star$ be the directed graph obtained from $G$ by introducing a new vertex $s^\star$ that is connected to every original vertex $v \in V(G)$ by an arc $(s^\star, v)$ of weight~$0$.
  For every $v \in V(G^\star)$, let $\pi^\star(v)$ be the weight of a minimum-weight $s^\star \to v$-path $P_v$.
  Then $\pi = \pi^\star|_{V(G)}$ is a feasible potential for~$G$.
\end{lemma}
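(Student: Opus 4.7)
The plan is to show, for every arc $a=(u,v) \in A(G)$, the defining inequality $\pi(u) - \pi(v) + w(a) \geq 0$, i.e., $\pi^\star(v) \leq \pi^\star(u) + w(a)$. Before doing this, I would first check that $\pi^\star$ is well-defined. Since $s^\star$ has only outgoing arcs, it lies on no cycle, so every cycle in $G^\star$ is already a cycle in $G$; hence $G^\star$ has no negative cycles. Moreover, each vertex $v \in V(G)$ is reachable from $s^\star$ via the direct arc $(s^\star, v)$ of weight $0$, so the set of $s^\star \to v$-paths is non-empty. Because any cycle has non-negative weight, the minimum weight of a walk from $s^\star$ to $v$ is attained on a path, so $\pi^\star(v)$ is a well-defined real number.

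Next, I would fix an arc $a = (u,v) \in A(G)$ and a minimum-weight $s^\star \to u$-path $P_u$ of weight $\pi^\star(u)$. Concatenating $P_u$ with the arc $a$ yields a walk $W = P_u \circ (u,v)$ from $s^\star$ to $v$ of total weight $\pi^\star(u) + w(a)$. The key step is to convert this walk into a path without increasing its weight, which then gives the desired bound $\pi^\star(v) \leq \pi^\star(u) + w(a)$.

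The main obstacle is precisely this walk-to-path reduction, since $W$ may traverse a vertex of $P_u$ twice (namely, $v$ itself, if $v \in V(P_u)$). To handle this, I would iteratively identify a repeated vertex $x$ on $W$ and excise the closed subwalk between its two occurrences, which is a cycle $C$ in $G^\star$. By the no-negative-cycle assumption, $w(C) \geq 0$, so removing $C$ does not increase the total weight. Iterating this process terminates at a genuine $s^\star \to v$-path $P'$ of weight at most $\pi^\star(u) + w(a)$, and by minimality of $\pi^\star(v)$ we obtain $\pi^\star(v) \leq w(P') \leq \pi^\star(u) + w(a)$.

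Restricting $\pi^\star$ to $V(G)$ preserves this inequality for every $a \in A(G)$, so $\pi = \pi^\star|_{V(G)}$ satisfies $\pi(u) - \pi(v) + w(a) \geq 0$ for all $a=(u,v) \in A(G)$, establishing that $\pi$ is a feasible potential for $G$.
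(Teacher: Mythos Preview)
Your proof is correct and follows essentially the same approach as the paper: both arguments hinge on the fact that the walk $P_u \circ (u,v)$ can be shortened to an $s^\star \to v$-path by excising cycles, and that any such cycle lies in $G$ (since $s^\star$ has only outgoing arcs) and hence has non-negative weight. The paper phrases this by contradiction---assuming $w(P_u \circ a) < w(P_v)$ and deducing the existence of a negative cycle---whereas you argue directly and are more explicit about well-definedness, but these are stylistic rather than substantive differences.
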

\begin{proof}
  We claim that $\pi^\star$ is a feasible potential for $G^\star$.
  Suppose, for sake of contradiction, that this is not the case.
  Then there is an arc $a = (u, v)$ for which $\pi^\star(u) - \pi^\star(v) + w(a) < 0$.
  In other words, we have that $w(P_u \circ a) < w(P_v)$.
  As $P_v$ is a minimum weight $s^\star \to v$-path, we have that $P_u \circ a$ is only an $s^\star \to v$-walk.
  Moreover, $P_u \circ a$ has to contain a negative closed walk, as otherwise it would contain an $s^\star \to v$-path of weight less than $P_v$.
  As $P_u$ is a path, this closed walk is indeed a cycle $O$ containing the arc $a$.
  Now $s^\star$ has only outgoing arcs, thus the cycle lies in $G^\star - s^\star = G$, a contradiction to $G$ having no negative cycles.
  So~$\pi^\star$ is a feasible potential for~$G^\star$.
  By $G = G^\star - s^\star$ being a subgraph of~$G^\star$, we get that the function $\pi = \pi^\star|_{V(G)}$ is a feasible potential for $G$.
\end{proof}

With this construction in place, we can make our observations about integral feasible potentials with few distinct values.
\begin{lemma}
\label{lem:few_normalized_negative_arc_weights_have_bounded_potential}
  Let $G$ be a directed graph, and $w: A(G) \to \lbrace -1, 0, 1\rbrace$ be a weight function.
  If $G$ contains no negative cycles, then there is a feasible potential $\pi: V(G) \to [0, w_-]$.
\end{lemma}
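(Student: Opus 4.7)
The plan is to apply the construction of Lemma~\ref{lem:constructing_a_feasible_potential} and then analyze the range of the resulting potential carefully, exploiting that negative arc weights are exactly $-1$ and that there are only $w_-$ of them. Finally, I will shift by a constant to move the range into $[0, w_-]$.

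More concretely, I would first form the auxiliary graph $G^\star$ by adding a source $s^\star$ with zero-weight arcs $(s^\star, v)$ for each $v \in V(G)$. Since $G$ has no negative cycles and $s^\star$ has no incoming arcs, $G^\star$ also has no negative cycles, so Lemma~\ref{lem:constructing_a_feasible_potential} applies: setting $\pi^\star(v)$ to be the weight of a minimum-weight $s^\star \to v$-path in $G^\star$ yields a feasible potential of $G$ when restricted to $V(G)$.

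Next, I would bound $\pi^\star$ from above and below. For the upper bound, the single-arc path $(s^\star, v)$ has weight $0$, so $\pi^\star(v) \leq 0$ for every $v \in V(G)$. For the lower bound, the minimizing $s^\star \to v$-walk is in fact a simple path (as argued inside the proof of Lemma~\ref{lem:constructing_a_feasible_potential}, since any closed sub-walk could only be a non-negative cycle and hence can be removed without increasing the weight). Because the path is simple, it uses each arc of $G^\star$ at most once; its arcs from $s^\star$ contribute $0$, its non-negative arcs of $G$ contribute $\geq 0$, and the only arcs that can decrease its total weight are the at most $w_-$ negative arcs, each contributing exactly $-1$. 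Thus $\pi^\star(v) \geq -w_-$, so $\pi^\star(V(G)) \subseteq [-w_-, 0]$.

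Finally, defining $\pi(v) = \pi^\star(v) + w_-$ for all $v \in V(G)$ shifts the range into $[0, w_-]$ without affecting feasibility, since
\[
\pi(u) - \pi(v) + w(a) = \pi^\star(u) - \pi^\star(v) + w(a) \geq 0
\]
for every arc $a = (u,v) \in A(G)$. Integrality of $\pi$ is immediate from the fact that all arc weights are integers. No step looks particularly delicate; the only point that requires care is the reduction of ``min-weight walk'' to ``min-weight simple path'' used to obtain the lower bound $-w_-$, but this is already handled by the argument underlying Lemma~\ref{lem:constructing_a_feasible_potential}.
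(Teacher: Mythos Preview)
Your proposal is correct and essentially identical to the paper's proof: both apply Lemma~\ref{lem:constructing_a_feasible_potential}, observe that a shortest $s^\star \to v$-path has weight in $[-w_-,0]$ because it uses at most $w_-$ arcs of weight~$-1$ and the single arc $(s^\star,v)$ witnesses weight~$0$, and then shift by $w_-$. The only difference is that you spell out the ``simple path'' justification for the lower bound slightly more explicitly than the paper does.
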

\begin{proof}
  First, we use \Cref{lem:constructing_a_feasible_potential} to construct the auxiliary graph $G^\star$ with its function~$\pi^\star$.
  Now note that any $s^\star \to v$-path can contain at most $w_-$-many negative arcs and each of those has weight $-1$.
  Thus, we have that $\pi^\star(v) = w(P_v) \geq -w_-$ for all $v \in V(G^\star)$.
  Moreover, the arc $(s^\star, v)$ always forms an $s^\star \to v$-path of weight $0$.
  By all arc-weights being integral, we have that~$\pi^\star$ is an integer function with values in $[-w_-, 0]$.
  Now, by \Cref{lem:constructing_a_feasible_potential}, $\pi = \pi^\star|_{V(G)}$ is a feasible potential for $G$, which has only integer values in $[-w_-, 0]$.
  Shifting all values by $w_-$ shows the statement.
\end{proof}

\begin{lemma}
\label{lem:few_normalized_positive_arc_weights_in_strong_graphs_have_bounded_potential}
  Let $G$ be a strongly connected directed graph, and $w: A(G) \to \lbrace -1, 0, 1\rbrace$ be a weight function.
  If $G$ contains no negative cycles, then there is a feasible potential $\pi: V(G) \to [0, w_+]$.
\end{lemma}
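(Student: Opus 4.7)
The plan is to build a feasible potential via shortest-path distances from a fixed vertex and then exploit strong connectivity to bound its range by $w_+$. Let $d(u,v)$ denote the minimum weight of a $u \to v$ walk in $(G, w)$. Since $G$ has no negative cycles, any walk can be shortened (without increasing its weight) by removing closed sub-walks, so the minimum is attained by a simple $u \to v$ path; in particular $d(u,v)$ is a well-defined integer. By strong connectivity, $d(r,v)$ is finite for every $v$ once a fixed root $r \in V(G)$ is chosen.

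First I would define $\pi^\star(v) := d(r,v)$ and observe that $\pi^\star$ is a feasible potential: for every arc $a = (u,v)$ the triangle inequality yields $d(r,v) \leq d(r,u) + w(a)$, which rearranges to $\pi^\star(u) - \pi^\star(v) + w(a) \geq 0$. This is the same idea used in Lemma~\ref{lem:constructing_a_feasible_potential}, just without introducing an auxiliary source.

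The main step, and the one I expect to be the actual content of the lemma, is to bound the range of $\pi^\star$. Let $v_{\max}$ and $v_{\min}$ achieve the maximum and minimum of $\pi^\star$, respectively. By strong connectivity there is a $v_{\min} \to v_{\max}$ walk, and the argument above says its minimum weight $d(v_{\min}, v_{\max})$ is realised by a \emph{simple} path $P$. Because $w$ takes values in $\{-1,0,1\}$, only the positive arcs on $P$ contribute positively to its weight; since $P$ is simple it uses each positive arc at most once, so $w(P) \leq w_+$. Thus $d(v_{\min}, v_{\max}) \leq w_+$. Applying the triangle inequality one more time,
\begin{equation*}
  \pi^\star(v_{\max}) - \pi^\star(v_{\min}) = d(r, v_{\max}) - d(r, v_{\min}) \leq d(v_{\min}, v_{\max}) \leq w_+.
\end{equation*}

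Finally, I would set $\pi(v) := \pi^\star(v) - \pi^\star(v_{\min})$. Adding a constant preserves feasibility, so $\pi$ is still a feasible potential, and by construction $\pi(v) \in [0, w_+]$ for every $v \in V(G)$. The only ingredient that does not generalise verbatim from the previous lemma is the range bound, which is precisely where strong connectivity is used: without it we could not guarantee the existence of a $v_{\min} \to v_{\max}$ path whose weight can be controlled by $w_+$.
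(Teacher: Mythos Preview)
Your proof is correct and follows essentially the same idea as the paper: define a shortest-path potential and bound its range using that any simple path contains at most $w_+$ positive arcs, hence has weight at most $w_+$. The paper invokes the auxiliary-source construction of Lemma~\ref{lem:constructing_a_feasible_potential} and argues by contradiction (a path of weight below $-w_+$ together with a return path of weight at most $w_+$ would yield a negative closed walk), whereas you work from an internal root and bound the spread $\pi^\star(v_{\max})-\pi^\star(v_{\min})$ directly via the triangle inequality; this is a slightly more streamlined packaging of the same argument.
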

\begin{proof}
  First, we use \Cref{lem:constructing_a_feasible_potential} to construct the auxiliary graph $G^\star$ with its function $\pi^\star$.
  Suppose, for sake of contradiction, that an $s^\star \to v$-path $P_v$ has weight less than $-w_+$.
  Let $P$ be the $u \to v$-subpath of $P_v$ that contains everything but the first arc $a = (s^\star, u)$ of $P_v$.
  As $w(a) = 0$ we know that $w(P) < -w_+$ and moreover $P$ exists in $G$.
  Now,~$G$ is strongly connected, and therefore there is a $v \to u$-path $Q$ in $G$.
  As $G$ contains only $w_+$-many positive arcs and all of them have weight~$1$, we get $w(Q) \leq w_+$.
  Hence, $P \circ Q$ is a closed walk of weight $w(P) + w(Q) < w_+ - w_+ = 0$.
  This negative closed walk contains a negative cycle, in contradiction to $G$ having none of those.
  Thus, we have that $\pi^\star(v) = w(P_v) \geq -w_+$ for all $v \in V(G^\star)$.
  Moreover, the arc $(s^\star, v)$ always forms an $s^\star \to v$-path of weight $0$.
  By all arcs weights being integral, we have that $\pi^\star$ is an integer function with values in $[-w_+, 0]$.
  Now, by \Cref{lem:constructing_a_feasible_potential}, $\pi = \pi^\star|_{V(G)}$ is a feasible potential for $G$, which has only integer values in $[-w_+, 0]$.
  Shifting all values by $w_+$ shows the theorem.
\end{proof}

\begin{lemma}
\label{lem:bounded_treedepth_graphs_have_bounded_potential}
  Let $G$ be a directed graph, and $w: A(G) \to \lbrace -1, 0, 1\rbrace$ be a weight function.
  If~$G$ contains no negative cycles, then there is a feasible potential $\pi: V(G) \to [0, 2^{\td(G)}]$.
\end{lemma}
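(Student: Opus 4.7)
The plan is to mimic the proofs of the two preceding normalized-weight lemmas, replacing the ad-hoc bounds $w_-$ and $w_+$ by a bound on the length of a path in $G$. More precisely, I would invoke the auxiliary construction of \Cref{lem:constructing_a_feasible_potential}: build $G^\star$ from $G$ by adding a source $s^\star$ with a zero-weight arc $(s^\star, v)$ to every $v \in V(G)$, and set $\pi^\star(v)$ equal to the weight of a minimum-weight $s^\star \to v$-path $P_v$. Since $G$ has no negative cycle and $s^\star$ has only outgoing arcs, $G^\star$ has no negative cycle either, so $P_v$ is well-defined and \Cref{lem:constructing_a_feasible_potential} gives that $\pi = \pi^\star|_{V(G)}$ is a feasible potential for $G$.

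To bound the range of $\pi$, I would write any such path as $P_v = (s^\star, u) \circ Q$, where $Q$ is a $u \to v$-path lying entirely in $G$. The crucial input is \Cref{lem:path_and_cycle_length_bounded_by_treedepth}, which bounds the length of any path in $G$ by $2^{\td(G)} - 1$. Combined with $w \colon A(G) \to \lbrace -1, 0, 1\rbrace$, this gives $w(Q) \geq -(2^{\td(G)} - 1)$, and since $w(s^\star, u) = 0$, we conclude $\pi^\star(v) = w(P_v) \geq -(2^{\td(G)} - 1)$. Also, the single-arc path $(s^\star, v)$ has weight $0$, so $\pi^\star(v) \leq 0$.

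Finally, since all arc weights are integral, $\pi^\star$ is integer-valued and takes values in $[-(2^{\td(G)} - 1), 0]$. Adding the constant $2^{\td(G)} - 1$ to $\pi$ on all of $V(G)$ keeps the potential feasible (differences are unchanged) and shifts the range into $[0, 2^{\td(G)} - 1] \subseteq [0, 2^{\td(G)}]$, as required.

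There is no real obstacle here: the argument is parallel to \Cref{lem:few_normalized_negative_arc_weights_have_bounded_potential} and \Cref{lem:few_normalized_positive_arc_weights_in_strong_graphs_have_bounded_potential}, with the only substantive change being to replace the counting bound (``at most $w_-$ negative arcs on the path'' or ``strong connectivity plus at most $w_+$ positive arcs on a return path'') by the treedepth-based path-length bound from \Cref{lem:path_and_cycle_length_bounded_by_treedepth}. The only point to be careful about is that the path-length bound applies to $G$ and not to $G^\star$; splitting $P_v$ into its first arc and its tail $Q \subseteq G$ handles this cleanly, and no strong-connectivity assumption is needed.
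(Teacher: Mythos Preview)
Your proposal is correct and follows essentially the same approach as the paper's proof: both invoke \Cref{lem:constructing_a_feasible_potential} to build $G^\star$ and $\pi^\star$, split $P_v$ into the initial arc $(s^\star,u)$ and a tail path in $G$, bound the tail's length via \Cref{lem:path_and_cycle_length_bounded_by_treedepth}, and then shift the resulting integral potential into the nonnegative range. Your version is even slightly tighter (you use $2^{\td(G)}-1$ rather than $2^{\td(G)}$ for the shift), but this is immaterial to the lemma as stated.
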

\begin{proof}
  First, we use \Cref{lem:constructing_a_feasible_potential} to construct the auxiliary graph $G^\star$ with its function~$\pi^\star$.
  By \Cref{lem:path_and_cycle_length_bounded_by_treedepth} we have that any path in $G$ contains at most $2^{\td(G)}$ many arcs.
  Moreover, any $s^\star \to v$-path consists of an arc $(s^\star, u)$ of weight $0$, followed by an $u \to v$-path in~$G$.
  Thus, any $s^\star \to v$-path can contain at most $2^{\td(G)}$ many negative arcs and each of those has weight~$-1$.
  Thus, we have that $\pi^\star(v) = w(P_v) \geq -2^{\td(G)}$ for all $v \in V(G^\star)$.
  Moreover, the arc $(s^\star, v)$ always forms an $s^\star \to v$-path of weight $0$.
  By all arcs weights being integral, we have that $\pi^\star$ is an integer function with values in $[-2^{\td(G)}, 0]$.
  Now, by \Cref{lem:constructing_a_feasible_potential}, $\pi = \pi^\star|_{V(G)}$ is a feasible potential for $G$, which has only integer values in $[-2^{\td(G)}, 0]$.
  Shifting all values by $2^{\td(G)}$ shows the theorem.
\end{proof}

\subsection{Dynamic Program for Treewidth and Bounded Feasible Potentials}
The aim of this section is to apply our findings on feasible potentials with few different values to a dynamic program utilizing the treewidth.
The overall approach is computing a nice tree decomposition of our graph (see \Cref{def:nice_tree_decomposition} and \Cref{thm:compute_nice_tree_decomposition}) and then guessing via a dynamic program the feasible potential of some solution on each bag of the tree decomposition.
The deleted arcs are then exactly those that violate the guessed potential.
However, when we parameterize in the number of positive arcs, we are only guaranteed a feasible potential with few different values for every strong component of $G - S$.
Therefore we also have to guess a topological order of the strong components of $G - S$ when restricted to the bags of the tree decomposition.
To handle both cases simultaneously, we introduce a set $\mathcal{C}$ which contains ordered partitions of $V(G)$ that represent components for which the guessed potential could be feasible.
For the parameter $w_-$ the single partition consisting of all vertices suffices as set $\mathcal{C}$.
In the $w_+$ case, $\mathcal{C}$ has to contain a topological order of the strong components of $G - S$.
In this case we choose~$\mathcal{C}$ as all ordered partitions of $V(G)$.
We generalize the properties we need for $\mathcal{C}$ and our feasible potential, to unify both choices.

\begin{definition}
  Let $G$ be a directed graph and $w:A(G) \to \mathbb{Z}$ be a weight function.
  We call an arc set $S \subseteq A(G)$ \emph{$((C_1, \hdots, C_t), \pi)$-feasible} for some ordered partition $(C_1, \hdots, C_t)$ of $V(G)$ and some $\pi: V(G) \to \mathbb{Z}$, if for all arcs $a = (p,q) \in A(G) \setminus S$ with $p \in C_i$ and $q \in C_j$ we have either $i < j$ or  $i = j$ and $\pi(p) - \pi(q) + w(a) \geq  0$.
	
  For an ordered partition $(C_1, \hdots, C_t)$ of $V(G)$ and some $U \subseteq V(G)$, we call an ordered partition $(C'_1, \hdots, C'_{t'})$ of $U$ the \emph{projection of $(C_1, \hdots, C_t)$ on $U$}, denoted by $(C_1, \hdots, C_t)|_U$, if for all $u \in C_i \cap C'_p$ and $v \in C_j \cap C'_q$ we have $i < j$ if and only if $p < q$.
  For a set $\mathcal{C}$ of ordered partitions of $V(G)$, we denote by $\mathcal{C}|_U$ the set~$\lbrace C\mid_U \mid C \in \mathcal{C}\rbrace$.
	
  Similarly, for some $U \subseteq V(G)$ and a pair $(C, \pi)$ consisting of an ordered partition~$C$ of $V(G)$ and $\pi: V(G) \to \mathbb{Z}$, we call $(C', \pi')$ the \emph{projection of $(C, \pi)$ on $U$} for $C'$ being the projection of~$C$ on $U$ and $\pi' = \pi|_U$.
	
  Let $(T,\mathcal B)$ be a tree decomposition of $\langle G\rangle $ with vertex bags $(B_x)_{x \in V(T)}$.
  We say that a set $\mathcal{C}$ of ordered partitions of $V(G)$ is \emph{$(T,\mathcal B)$-compatible} if for every $C, C' \in \mathcal{C}$ with $C|_{B_x}= C'|_{B_x}$, we have that there is a $C^\star \in \mathcal{C}$, with $C^\star|_{V(G_x)} = C|_{V(G_x)}$ and $C^\star|_{V(G) \setminus (V(G_x) \setminus B_x)} = C'|_{V(G) \setminus (V(G_x) \setminus B_x)}$.
\end{definition}

We are now able to state our general algorithm.
\begin{lemma}
\label{lem:dynamic_program_for_treewidth_normalized_weights_and_bounded_number_of_positive_or_negative_arcs}
  Let $(G, w, k)$ be a {\sc Negative DFAS} instance with weights $w: A(G) \to \mathbb{Z}$.
  Given a nice tree decomposition~$(T,\mathcal B)$ of $G$ with vertex bags $(B_x)_{x \in V(T)}$ of width $\mathcal{O}(\tw(G))$, a set $\mathcal{C}$ of ordered partition of $V(G)$ compatible with $(T,\mathcal B)$, and two integers $a \leq b$, there is an algorithm that in time $f(\mathcal{C}, T) 2^{\mathcal{O}(\tw(G) \log (b- a))}\cdot (n + m)$  computes a minimum-size negative directed feedback arc set that is $(C, \pi)$-feasible for some $C \in \mathcal{C}$ and $\pi: V(G) \to \mathbb{Z} \cap [a,b]$.
  Here, $f(\mathcal{C}, T)$ is the maximum time needed to enumerate~$\mathcal{C}|_{B_x}$ for any $x \in V(T)$.
\end{lemma}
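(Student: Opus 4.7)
The plan is a standard bottom-up dynamic program over the given nice tree decomposition $(T,\mathcal B)$. For each node $x$, each projected ordered partition $C_x\in\mathcal C|_{B_x}$, and each potential $\pi_x\colon B_x\to\mathbb Z\cap[a,b]$, I maintain $\mathrm{opt}(x,C_x,\pi_x)$, the minimum number of arcs in $A(G_x)$ whose removal leaves the surviving arcs of $G_x$ $((C',\pi'))$-feasible for some extension $(C',\pi')$ of $(C_x,\pi_x)$ to $V(G_x)$, where $C'$ is the restriction to $V(G_x)$ of some element of $\mathcal C$ and $\pi'\colon V(G_x)\to\mathbb Z\cap[a,b]$. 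At the root, whose bag is empty, the only state is trivial and its value is the answer.

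Charging each arc to the later introduce-vertex node of its two endpoints (the other endpoint then necessarily lies in the child's bag by the tree-decomposition axiom combined with each vertex being introduced exactly once), the transitions are as follows. At an \emph{introduce node} with $B_x=B_y\cup\{v\}$, set $\mathrm{opt}(x,C_x,\pi_x)=\mathrm{opt}(y,C_x|_{B_y},\pi_x|_{B_y})+\Delta$, where $\Delta$ counts the new arcs between $v$ and $B_y$ that violate the $((C_x,\pi_x))$-feasibility test (cross partition classes in the wrong direction, or stay within a class while decreasing the potential by more than the arc weight). At a \emph{forget node} dropping $v$ from $B_y$, set $\mathrm{opt}(x,C_x,\pi_x)$ to the minimum of $\mathrm{opt}(y,C_y,\pi_y)$ over the $O(\tw(G)\cdot(b-a+1))$ extensions $(C_y,\pi_y)$ that reinsert $v$ into some class of $C_x$ (or as a new singleton class at some position) and assign it one of the $b-a+1$ potential values. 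At a \emph{join node} with children $y$ and $y'$ sharing the bag, simply set $\mathrm{opt}(x,C_x,\pi_x)=\mathrm{opt}(y,C_x,\pi_x)+\mathrm{opt}(y',C_x,\pi_x)$.

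Correctness is proved by induction on the subtree rooted at $x$. The main conceptual obstacle is the join-node case: one must glue optimal partial solutions from the two subtrees into a single valid partial solution on $V(G_x)$. Inductively, each child yields a partition $C_i\in\mathcal C$ whose restriction to $V(G_{y_i})$ realizes the respective subproblem and which restricts to $C_x$ on $B_x$. A standard tree-decomposition argument (the bags containing any vertex form a connected subtree) gives $V(G_y)\cap V(G_{y'})=B_x=B_y=B_{y'}$, hence $V(G_{y'})\subseteq V(G)\setminus(V(G_y)\setminus B_y)$; the $(T,\mathcal B)$-compatibility of $\mathcal C$ then delivers a single $C^\star\in\mathcal C$ agreeing with $C_1$ on $V(G_y)$ and with $C_2$ on $V(G_{y'})$, and its restriction to $V(G_x)$ is the desired joint extension. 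The potentials coincide on $B_x$ and simply combine. The forward direction is immediate since each arc is charged at a unique introduce node, so any global blocker decomposes additively over the tree. Standard backpointers produce the explicit minimum blocker, not merely its size.

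For the running time, the number of DP states per node is bounded by $|\mathcal C|_{B_x}|\cdot(b-a+1)^{|B_x|}\le f(\mathcal C,T)\cdot 2^{O(\tw(G)\log(b-a))}$, the partition projections are enumerable in time $f(\mathcal C,T)$ by assumption, and each transition takes time polynomial in the bag size for introduce and forget nodes, and $O(1)$ per state at joins. Summing over the $O(n)$ nodes of the nice tree decomposition, and noting that each of the $m$ arcs is inspected at exactly one introduce node, yields the claimed bound $f(\mathcal C,T)\cdot 2^{O(\tw(G)\log(b-a))}\cdot(n+m)$.
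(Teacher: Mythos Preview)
Your approach is the same dynamic program as the paper's, but there is one genuine bug in your join-node rule. In a nice tree decomposition, a vertex that lies in the bag $B_x$ of a join node is present in \emph{both} children's bags $B_{y}=B_{y'}=B_x$, and since leaves have empty bags it must be introduced once somewhere below $y$ \emph{and} once somewhere below $y'$. Your premise ``each vertex being introduced exactly once'' is therefore false, and any arc with both endpoints in $B_x$ gets charged in each subtree. Consequently the additive rule $\mathrm{opt}(x,C_x,\pi_x)=\mathrm{opt}(y,C_x,\pi_x)+\mathrm{opt}(y',C_x,\pi_x)$ double-counts exactly those arcs of $G[B_x]$ that violate the $(C_x,\pi_x)$-feasibility test.

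The fix is immediate and is what the paper does implicitly by storing arc \emph{sets} and taking their union at joins: since the violating arcs inside $G[B_x]$ are determined by $(C_x,\pi_x)$ alone, both children's optimal sets contain precisely the same such arcs, so the union counts them once. If you want to keep integer-valued tables, replace your join rule by
\[
\mathrm{opt}(x,C_x,\pi_x)=\mathrm{opt}(y,C_x,\pi_x)+\mathrm{opt}(y',C_x,\pi_x)-\Delta_{B_x}(C_x,\pi_x),
\]
where $\Delta_{B_x}(C_x,\pi_x)$ is the number of arcs of $G[B_x]$ violating the feasibility test; alternatively, use a nice tree decomposition with explicit introduce-edge nodes so that every arc really is handled exactly once. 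With either correction your argument goes through, and the rest of your write-up (including the use of $(T,\mathcal B)$-compatibility to glue partitions at joins) matches the paper. One minor point: at forget nodes you should enumerate only those extensions $C_y$ that lie in $\mathcal C|_{B_y}$, not all set-theoretic insertions of $v$; this is what justifies the $f(\mathcal C,T)$ factor in the running time.
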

\begin{proof}
  We compute a dynamic programming table $D$ via dynamic program on $V(T)$ from the leaves upwards.
  The table has an entry $D[x, \kappa]$ for every $x \in V(T)$ and $\kappa$, where $\kappa$ consists of an ordered partition $C \in \mathcal{C}|_{B_x}$ and an integer-valued function $\pi: B_x \to \mathbb{Z} \cap [a, b]$.
  The entry $D[x, \kappa]$ contains an arc set $S_{x, \kappa} \subseteq A(G_x)$ with $G_x$ being the graph induced by the subtree decomposition of $T$ rooted at $x$.
  Our dynamic program is motivated by the following observation:	
  \begin{claim}
  \label{claim:minimal_kappa_feasible_solutions_inside_bags}
	Let $S \subseteq A(G_x)$ be a minimum-size set that is $(C, \pi)$-feasible for some \mbox{$C \in \mathcal{C}|_{V(G_x)}$} and $\pi: V(G_x) \to \mathbb{Z} \cap [a,b]$.
	Let $( (C^x_1, \hdots, C^x_{t^x}), \pi^x)$ be the projection of $(C, \pi)$ to~$B_x$.
	Then $S$  contains exactly those arcs $a = (p,q) \in A(G[B_x])$ with $p \in C^x_i$ and $q \in C^x_j$ such that either $j < i$ or  $i = j$ and $\pi^x(p) - \pi^x(q) + w(a) < 0$.
  \end{claim}
  \begin{claimproof}
	The set $S$ necessarily contains these arcs as otherwise it would not be $(C, \pi)$-feasible.
	Suppose, for sake of contradiction, that $S$ contains further arcs. Removing such an arc from $S$ does not violate $(C, \pi)$-feasibility and thus results in a smaller $S$, contradicting the minimality of $S$.
	This completes the proof of \Cref{claim:minimal_kappa_feasible_solutions_inside_bags}.
  \end{claimproof}
	
  The entries are computed for each $x \in V(T)$ depending on its node type.
  \begin{description}
	\item[Leaf nodes.] As $B_x = \emptyset$, the only choice for $\kappa$ are the empty partition $()$ and $\pi$ being the empty function $\varepsilon$ and we set $D[x, ((), \varepsilon)] = \emptyset$.
	\item[Introduce nodes.] Let $v$ be the newly introduced vertex to the bag $B_x$ and let $y$ be the child of $x$ in $T$.
	  Let $\kappa$ consist of an ordered partition $(C_1, \hdots C_t) \in \mathcal{C}|_{B_x}$ and an integer-valued function $\pi: B_x \to \mathbb{Z} \cap [a, b]$.
    Define $S_v$ to be the arcs $a = (p,q) \in \delta(v)$ with $p \in C_i$ and $q \in C_j$ such that either $j < i$ or  $i = j$ and $\pi(p) - \pi(q) + w(a) < 0$.
    Then we set
    \begin{equation*}
      D[x, \kappa] = S_v \cup D[y, \kappa'],
    \end{equation*}
    where $\kappa'$ is the projection of $\kappa$ to $B_y$.
    \item[Forget nodes.] Let $y$ be the child of $x$ in~$T$.
      Let $\kappa$ consist of an ordered partition $C \in \mathcal{C}|_{B_x}$ and an integer-valued function $\pi: B_x \to \mathbb{Z} \cap [a, b]$.
      Denote by $K(y, \kappa)$ the set of all $\kappa^\star$ consisting of ordered partitions $C^\star \in \mathcal{C}|_{B_y}$ and $\pi^\star: B_y \to \mathbb{Z} \cap [a, b]$, whose projection to $B_x$ is~$\kappa$.
      Then we set
      \begin{equation*}
        D[x, \pi] = \operatorname*{arg min}\limits_{\substack{S = D[y, \kappa^\star] \\ \kappa^\star \in K(y, \kappa)}} |S| \enspace .
      \end{equation*}
			
    \item[Merge nodes.] Let $y_1$ and $y_2$ the two children of $x$.
      Then for every possible choice of~$\kappa$, we set
      \begin{equation*}
        D[x, \kappa] = D[y_1, \kappa] \cup D[y_2, \kappa] \enspace .
      \end{equation*}
	\end{description}
	
	We claim that at the root $r$ of $T$, the unique entry $D[r, ((), \varepsilon)]$, where $\varepsilon$ denotes the empty function, contains a negative directed feedback arc set for $G$ of minimum size.
	First note that this is indeed the unique entry as $B_r = \emptyset$.	
	We are going to prove the stronger statement that the entry $D[x, \kappa]$ contains a minimum-size set $S \subseteq A(G_x)$ that is $(C, \pi)$-feasible for some $C \in \mathcal{C}|_{V(G_x)}$, and $\pi: V(G_x) \to \mathbb{Z} \cap [a,b]$ with the projection of $(C, \pi)$ to $B_x$ being $\kappa$.
	That is, we have to prove, that the $D[x, \kappa]$'s indeed contain a set that is $(C, \pi)$-feasible as above and that it has  minimum size among those.
	We first focus on the $(C, \pi)$-feasibility, which we prove by induction from leaf vertices to the root.
	Let $x \in V(T)$ be a node, such that the feasibility property holds for all vertices $y$ that are below $x$ in $T$.
    \begin{description}
      \item[Leaf nodes.] As $G_x$ is the empty graph, the statement holds trivially.
	
      \item[Introduce nodes.] For the keys $\kappa$, we enumerate all ordered partitions $C \in \mathcal{C}|_{B_x}$.
      Let $C^\uparrow \in \mathcal{C}$ be an ordered partition that projects to $C$ in $B_x$.
      The set $D[y, \kappa']$ we chose is, by induction, $(C'|_{V(G_y)}, \pi')$-feasible on $G_y$.
      By $T$-compatibility of $\mathcal{C}$ we have that there is a $C^\star \in \mathcal{C}$, with $C^\star|_{V(G_y)} = C'|_{V(G_y)}$ and $C^\star|_{V(G) \setminus (V(G_y) \setminus B_y)} = C^\uparrow|_{V(G) \setminus (V(G_y) \setminus B_y)}$.
      Thus, by construction, our set is $(C^\star|_{V(G_x)}, \pi^\star|_{V(G_x)})$-feasible, where $\pi^\star(v) = \pi(v)$ and $\pi^\star(u) = \pi'(u)$ for all $u \in V(G_x) \setminus \lbrace v \rbrace$.
			
      \item[Forget nodes.] As $G_x = G_y$ the statement holds trivially.
		 
      \item[Merge nodes.] Let $y_1$ and $y_2$ be the two children of $x$. By induction, let $D[y_i, \kappa]$ be $(C_i|_{V(G_{y_i})}, \pi_i)$-feasible.
      By $T$-compatibility of $\mathcal{C}$ we have that there is a $C^\star \in \mathcal{C}$ such that $C^\star|_{V(G_x)} = C_1|_{V(G_x)} \textsf{ and } C^\star|_{V(G) \setminus (V(G_x) \setminus B_x)} = C_2|_{V(G) \setminus (V(G_x) \setminus B_x)}$.
      With $\pi^\star(u) = \pi_1(u)$ for all $u \in V(G_{y_1})$ and $\pi^\star(u) = \pi_2(u)$ otherwise, we get that our chosen set is $(C^\star|_{V(G_x)}, \pi^\star|_{V(G_x)})$-feasible.
	\end{description}
	
	Now assume for contradiction that our set is not the minimum choice among the $(C, \pi)$-feasible ones.
	Then there is a node $x \in V(T)$ such that our statement holds for all nodes~$y$ in the subtree of $T$ rooted at $x$ but not for $x$ (with $x$ possibly being a leaf node and the set of other nodes being empty).
	In particular, we made the minimum choice for all these nodes~$y$ and all $\kappa$'s.
	We make a case distinction based on the type of $x$.
    \begin{description}
      \item[Leaf nodes.] $G_x$ is empty and thus the empty set is the right choice of~$D[x, ( (), \varepsilon)]$.
		
      \item[Introduce nodes.] Assume there is a set of smaller size $S^\star$ that is $(C^\star, \pi^\star)$-feasible for some $C^\star = (C^\star_1, \hdots C^\star_t) \in \mathcal{C}|_{V(G_x)}$ and $\pi^\star: V(G_x) \to \mathbb{Z} \cap [a,b]$ with $(C^\star, \pi^\star)$'s projection to $B_x$ being $\kappa$.
      Choose $S^\star$ smallest possible among all such choices.
      By \Cref{claim:minimal_kappa_feasible_solutions_inside_bags}, we have that $S^\star \cap A(G[B_x])$ contains exactly those arcs $a = (p,q) \in A(G[B_x])$ with $p \in C^\star_i$ and $q \in C^\star_j$ such that either $j < i$ or  $i = j$ and $\pi^\star(p) - \pi^\star(q) + w(a) < 0$.
      In especially, $S^\star \cap \delta(v)$ and $D[x, \kappa] \cap \delta(v)$ are identical by $(C^\star, \pi^\star)$ projecting to $\kappa$.
      Thus, $S^\star \setminus \delta(v)$ is $(C^\star|_{V(G_y)}, \pi^\star|_{V(G_y)})$-feasible by $G_y - (S^\star \setminus \delta(v)) \subseteq G_x - S^\star$ and projects down to the same $\kappa'$ as $\kappa$ in $B_y$.
      But $S^\star \setminus \delta(v)$ is of size smaller than $D[y, \kappa|_{B_y}]$, a contradiction to $D[y, \kappa|_{B_y}]$ containing the smallest such arc set.
			
      \item[Forget nodes.] We have that $G_x = G_y$ and thus any candidate for $D[x, \kappa]$ is also a candidate for $D[y, \kappa^\star]$ by extending the $\kappa^\star$ in a way that matches the candidate.
      By taking the minimum over the possible choices of $\kappa^\star$ and the claim holding for~$y$, we get that $D[x, \pi]$ is indeed such a set of minimum size.
			
      \item[Merge nodes.]	Assume there is a set of smaller size $S^\star$ that is $(C^\star, \pi^\star)$-feasible for some $C^\star = (C^\star_1, \hdots C^\star_{t^\star}) \in \mathcal{C}|_{V(G_x)}$ and $\pi^\star: V(G_x) \to \mathbb{Z} \cap [a,b]$ with $(C^\star, \pi^\star)$'s projection to $B_x$ being $\kappa$.
      Choose $S^\star$ smallest possible among all such choices.
      By \Cref{claim:minimal_kappa_feasible_solutions_inside_bags}, we have that $S^\star \cap A(G[B_x])$ contains exactly those arcs $a = (p,q) \in A(G[B_x])$ with $p \in C^\star_i$ and $q \in C^\star_j$ such that either $j < i$ or  $i = j$ and $\pi^\star(p) - \pi^\star(q) + w(a) < 0$.
      In especially, we have that $S^\star \cap A(G[B_x])$, $D[y_1, \kappa]  \cap A(G[B_x])$ and $D[y_2, \kappa]  \cap A(G[B_x])$ are identical (as $B_x = B_{y_1} = B_{y_2}$).
      Thus, from $|S^\star| < |D[x, \kappa]|$, we have that $|S^\star \cap A(G_{y_i})| < |D[y_i, \kappa]|$ for some $i \in \lbrace 1, 2 \rbrace$.
      As $S^\star \cap A(G_{y_i})$ is a candidate for $D[y_i, \kappa]$, this is a contradiction to the minimality of $D[y_i, \kappa]$.
  \end{description}
	
  Thus, our algorithm computes a minimum-size negative directed feedback arc set of~$(G,w)$.
  For the run-time, note that for every of the $\mathcal{O}(n)$ nodes of the tree decomposition $T$, we make a number of computation that is bounded by the number of $\kappa$'s for the node itself and it's up to two children.
  Any of those computations is polynomial in the size of the arc sets of the bag.
  Now, the number of $\kappa$'s is bounded by the number of possible functions $\pi$ and the time needed to enumerate $\mathcal{C}|_{B_x}$ for any $x \in V(T)$.
  The number of functions $\pi$ is bounded by $(b - a + 1)^{\mathcal{O}(\tw(G))} = 2^{\mathcal{O}(\tw(G) \log (b - a))}$, showing the overall run-time.
\end{proof}

We can deduce three of our main results from this general algorithm.
\begin{theorem}
\label{thm:algorithm_for_treewidth_normalized_weights_and_bounded_number_of_negative_arcs}
  Let $(G, w, k)$ be a {\sc MinFB} instance with weights $w: A(G) \to \lbrace -1, 0, 1\rbrace$.
  Then in time $2^{\mathcal{O}(\tw(G) \log w_-)}(n + m)$, we can decide the existence of a solution to $(G, w, k)$.
\end{theorem}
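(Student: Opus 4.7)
The idea is to directly invoke the general dynamic program of \Cref{lem:dynamic_program_for_treewidth_normalized_weights_and_bounded_number_of_positive_or_negative_arcs} with a suitable choice of $\mathcal{C}$, together with the bound on feasible potentials of \Cref{lem:few_normalized_negative_arc_weights_have_bounded_potential}. First, I translate the given \textsc{MinFB} instance $(A,b,k)$ into a parameter-equivalent \textsc{Negative DFAS} instance $(G,w,k)$ via \Cref{thm:NDFASisequivalenttospecialMIS}, so that $w: A(G) \to \{-1,0,1\}$ and the number of arcs of negative weight equals $w_-$. The underlying undirected graph has the same treewidth, so by \Cref{thm:compute_nice_tree_decomposition} I can compute a nice tree decomposition $(T,\mathcal{B})$ of width $\mathcal{O}(\tw(G))$ on $\mathcal{O}(n)$ nodes in time $2^{\mathcal{O}(\tw(G))} n$.

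Next I choose the partition family $\mathcal{C} := \{(V(G))\}$, i.e.\ the single ordered partition consisting of one class containing every vertex. This family is trivially $(T,\mathcal{B})$-compatible, since the only projection to any bag $B_x$ is the trivial one-class partition $(B_x)$, and the compatibility condition is immediate. Moreover, for $C = (V(G))$, a set $S \subseteq A(G)$ is $(C,\pi)$-feasible exactly when every arc $a = (p,q) \in A(G) \setminus S$ satisfies $\pi(p) - \pi(q) + w(a) \geq 0$; this is precisely the condition that $\pi$ is a feasible potential of $G - S$. Thus by \Cref{thm:gallai}, being $((V(G)),\pi)$-feasible for some $\pi$ is equivalent to $G-S$ having no negative cycles, i.e.\ $S$ being a negative directed feedback arc set of $(G,w)$.

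The final ingredient is to restrict the range of $\pi$. By \Cref{lem:few_normalized_negative_arc_weights_have_bounded_potential}, if $G-S$ contains no negative cycles, then it admits an integral feasible potential with values in $[0, w_-]$. Hence the minimum-size negative directed feedback arc set of $(G,w)$ coincides with the minimum-size set that is $((V(G)),\pi)$-feasible for some $\pi: V(G) \to \mathbb{Z} \cap [0, w_-]$. I then invoke \Cref{lem:dynamic_program_for_treewidth_normalized_weights_and_bounded_number_of_positive_or_negative_arcs} with $a=0$, $b=w_-$, and the family $\mathcal{C}$ above, obtaining a minimum negative directed feedback arc set $S^\star$, and compare $|S^\star|$ with $k$ to decide the instance.

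For the running time: with $\mathcal{C}|_{B_x}$ containing a single partition, the enumeration cost $f(\mathcal{C},T)$ is constant (or at worst linear in $|B_x|$), which is absorbed into the $2^{\mathcal{O}(\tw(G) \log w_-)}$ factor coming from enumerating all potentials $\pi: B_x \to \mathbb{Z} \cap [0, w_-]$. Together with the linear-time decomposition computation, this yields the claimed bound $2^{\mathcal{O}(\tw(G) \log w_-)}(n+m)$. The only potential subtlety is checking that the trivial choice of $\mathcal{C}$ indeed lets us capture every minimum solution, which is precisely guaranteed by \Cref{lem:few_normalized_negative_arc_weights_have_bounded_potential} applied to the graph $G - S^\star$; no strong-connectivity assumption is needed when the parameter is $w_-$, in contrast to the positive-arc variant that requires the more elaborate ordered partition family.
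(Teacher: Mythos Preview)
Your proposal is correct and follows essentially the same approach as the paper's own proof: translate to \textsc{Negative DFAS}, take the trivial partition family $\mathcal{C}=\{(V(G))\}$, use \Cref{lem:few_normalized_negative_arc_weights_have_bounded_potential} to bound the potential range by $[0,w_-]$, and invoke \Cref{lem:dynamic_program_for_treewidth_normalized_weights_and_bounded_number_of_positive_or_negative_arcs} on a nice tree decomposition. Your additional remarks (the explicit appeal to \Cref{thm:gallai} and the contrast with the $w_+$ case) are accurate and match the paper's reasoning.
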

\begin{proof}
  By \Cref{thm:NDFASisequivalenttospecialMIS}, we can construct an instance of \NDFAS{} in polynomial time that has the same parameters $\tw + w_-$ such that the original instances is a ``yes''-instance if and only if the constructed instance is. 

  Let $\mathcal{C}$ consist of the single partition $(V(G))$ of $V(G)$.
  Compute a nice tree decomposition $(T,\mathcal B)$ of $G$.
  Note that $\mathcal{C}$ is $(T,\mathcal B)$-compatible.
  We know that any $(C, \pi)$-feasible set $S$ for $C \in \mathcal{C}$ is a negative directed feedback arc set with $\pi$ being a feasible potential of $G - S$.
  By \Cref{lem:few_normalized_negative_arc_weights_have_bounded_potential}, we know that for any negative directed feedback arc set $S$ of~$(G, k)$, the graph $G - S$ has a feasible potential $\pi: V(G) \to \mathbb{Z} \cap [0, w_-]$.
  Thus, we call \Cref{lem:dynamic_program_for_treewidth_normalized_weights_and_bounded_number_of_positive_or_negative_arcs} with $(T,\mathcal B)$, $\mathcal{C}$, $a = 0$ and $b = w_-$ and get a negative directed feedback arc set $S'$ of minimum size.
  We then check whether $|S'| \leq k$ or not, and return the corresponding answer.
	
  For the run-time, we use \Cref{lem:dynamic_program_for_treewidth_normalized_weights_and_bounded_number_of_positive_or_negative_arcs} and \Cref{thm:compute_nice_tree_decomposition}.
  Regarding \Cref{lem:dynamic_program_for_treewidth_normalized_weights_and_bounded_number_of_positive_or_negative_arcs}, observe that the sets of the form $C|_{B_x}$ for any $x \in V(T)$ are exactly the sets $B_x$.
\end{proof}

\begin{theorem}
\label{thm:algorithm_for_treewidth_normalized_weights_and_bounded_number_of_positive_arcs}
  Given any {\sc MinFB} instance $(G, w, k)$ with weights $w: A(G) \to \lbrace -1, 0, 1\rbrace$, in time\linebreak $2^{\mathcal{O}(\tw(G) (\log \tw(G) + \log w_+))}(n + m)$ we can decide the existence of a solution to $(G, w, k)$.
\end{theorem}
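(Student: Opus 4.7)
The plan is to mirror the approach of Theorem~\ref{thm:algorithm_for_treewidth_normalized_weights_and_bounded_number_of_negative_arcs}, but replacing the trivial partition $(V(G))$ used there by a choice of $\mathcal{C}$ that reflects the topological order of the strongly connected components of $G-S$. This detour is needed because \Cref{lem:few_normalized_positive_arc_weights_in_strong_graphs_have_bounded_potential} only yields a feasible potential in $[0,w_+]$ for strongly connected graphs, not for arbitrary ones.

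First, using \Cref{thm:NDFASisequivalenttospecialMIS} we reduce to a \NDFAS{} instance with the same graph parameters $\tw(G)+w_+$. We then compute a nice tree decomposition $(T,\mathcal{B})$ of $G$ of width $\mathcal{O}(\tw(G))$ in time $2^{\mathcal{O}(\tw(G))}n$ via \Cref{thm:compute_nice_tree_decomposition}. As $\mathcal{C}$ we take the collection of \emph{all} ordered partitions of $V(G)$. Checking $(T,\mathcal{B})$-compatibility is routine: if $C,C'\in\mathcal{C}$ agree on $B_x$, then their orderings on $V(G_x)$ and on $V(G)\setminus(V(G_x)\setminus B_x)$ can be glued along $B_x$ into a single ordered partition of $V(G)$, which lies in $\mathcal{C}$.

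Next I would show that any negative directed feedback arc set $S$ of $(G,w)$ is $(C,\pi)$-feasible for some $C\in\mathcal{C}$ and some $\pi:V(G)\to\mathbb{Z}\cap[0,w_+]$. For this, let $D_1,\dots,D_t$ be a topological ordering of the strongly connected components of $G-S$ (no arc of $G-S$ goes from $D_j$ to $D_i$ with $i<j$), and set $C=(D_1,\dots,D_t)$. Every induced subgraph $G[D_i]\setminus S$ is strongly connected and contains no negative cycle, so by \Cref{lem:few_normalized_positive_arc_weights_in_strong_graphs_have_bounded_potential} it has a feasible potential $\pi_i:D_i\to\mathbb{Z}\cap[0,w_+]$. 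Setting $\pi(v)=\pi_i(v)$ for $v\in D_i$ produces the desired integer-valued function, and for every arc $a=(p,q)\in A(G)\setminus S$ with $p\in D_i$, $q\in D_j$ we either have $i<j$ (by the topological ordering) or $i=j$ and $\pi(p)-\pi(q)+w(a)\geq 0$ (by feasibility of $\pi_i$), which is exactly $(C,\pi)$-feasibility.

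Conversely, any $(C,\pi)$-feasible set $S$ is a negative directed feedback arc set: refining $\pi$ to the global potential $\tilde\pi(v)=(w_++1)(t-i)+\pi(v)$ for $v\in C_i$ gives a feasible potential of $G-S$, so by \Cref{thm:gallai} no negative cycle survives. Applying \Cref{lem:dynamic_program_for_treewidth_normalized_weights_and_bounded_number_of_positive_or_negative_arcs} with this $\mathcal{C}$ and the interval $[0,w_+]$ therefore yields a minimum-size negative directed feedback arc set of $(G,w)$, which we compare against $k$.

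The main obstacle is controlling the factor $f(\mathcal{C},T)$, since $\mathcal{C}$ is now huge. However, the algorithm only needs to enumerate $\mathcal{C}|_{B_x}$, whose elements are the ordered partitions of a single bag $B_x$ of size $\mathcal{O}(\tw(G))$. The number of ordered partitions of a set of size $s$ is the ordered Bell number, bounded by $s!\cdot 2^{s}=2^{\mathcal{O}(s\log s)}$, and each can be produced in polynomial time. Hence $f(\mathcal{C},T)=2^{\mathcal{O}(\tw(G)\log \tw(G))}$. Combined with the $2^{\mathcal{O}(\tw(G)\log w_+)}$ factor coming from the choices of $\pi$, \Cref{lem:dynamic_program_for_treewidth_normalized_weights_and_bounded_number_of_positive_or_negative_arcs} runs in the claimed time $2^{\mathcal{O}(\tw(G)(\log \tw(G)+\log w_+))}(n+m)$.
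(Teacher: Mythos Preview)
Your proposal is correct and follows essentially the same route as the paper's proof: reduce to \NDFAS{}, take $\mathcal{C}$ to be all ordered partitions of $V(G)$, verify $(T,\mathcal{B})$-compatibility, and invoke \Cref{lem:dynamic_program_for_treewidth_normalized_weights_and_bounded_number_of_positive_or_negative_arcs} with $[a,b]=[0,w_+]$ justified by \Cref{lem:few_normalized_positive_arc_weights_in_strong_graphs_have_bounded_potential} on each strong component of $G-S$. The paper handles the converse direction slightly more tersely (observing that $(C,\pi)$-feasibility forces any cycle of $G-S$ to lie inside a single $C_i$, where $\pi$ certifies non-negativity), whereas you give the equivalent argument via an explicit global potential; both are fine.
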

\begin{proof}
  By \Cref{thm:NDFASisequivalenttospecialMIS}, we can construct an instance of \NDFAS{} in polynomial time that has the same parameters $\tw + w_+$ such that the original instances is a ``yes''-instance if and only if the constructed instance is.

  Let $\mathcal{C}$ consist of all ordered partitions of $V(G)$.
  Compute a nice tree decomposition $(T,\mathcal B)$ of~$G$.
  By properties of tree decompositions, we have that for any $x \in V(T)$ the sets $V(G_x)$ and $V(G) \setminus (V(G_x) \setminus B_x)$ only intersect in $B_x$.
  Thus, $\mathcal{C}$ is $(T,\mathcal B)$-compatible as we can recombine any two ordered partitions that match on a subset $B_x$.
	
  We know that any $(C, \pi)$-feasible set $S$ for $C = (C_1, \hdots, C_t) \in \mathcal{C}$ is a negative directed feedback arc set with $\pi$ being a feasible potential for every $G[C_i] - S$ and $(C_1, \hdots, C_t)$ is a \mbox{(super-)partition} of the strong components of $G - S$.
  By \Cref{lem:few_normalized_positive_arc_weights_in_strong_graphs_have_bounded_potential}, we know that for any negative directed feedback arc set  $S$ of $(G, k)$, the strong components of $G - S$ have a feasible potential $\pi: V(G) \to \mathbb{Z} \cap [0, w_+]$.
  Thus, we call \Cref{lem:dynamic_program_for_treewidth_normalized_weights_and_bounded_number_of_positive_or_negative_arcs} with $T$, $\mathcal{C}$, $a = 0$ and $b = w_+$ and get a negative directed feedback arc set $S'$ of minimum size.
  We then check whether $|S'| \leq k$ or not and return the corresponding answer.
	
  For the run-time, we use \Cref{lem:dynamic_program_for_treewidth_normalized_weights_and_bounded_number_of_positive_or_negative_arcs} and \Cref{thm:compute_nice_tree_decomposition}. In \Cref{lem:dynamic_program_for_treewidth_normalized_weights_and_bounded_number_of_positive_or_negative_arcs}, the sets of the form $C|_{B_x}$ for any $x \in V(T)$ can be enumerated in time $2^{\mathcal{O}(\tw(G) \log \tw(G))}$ by taking every ordered partition of~$B_x$.
\end{proof}

\begin{theorem}
\label{thm:algorithm_for_treedepth_and_normalized_weights}
  Let $(G, w, k)$ be a {\sc MinFB}  instance with weights $w: A(G) \to \lbrace -1, 0, 1\rbrace$.
  Then, in time $2^{\mathcal{O}((\td(G))^2)}(n + m)$, we can decide the existence of a solution to $(G, w, k)$.
\end{theorem}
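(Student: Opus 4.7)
The plan is to combine the machinery already set up: reduce to \textsc{Negative DFAS} via Theorem~\ref{thm:NDFASisequivalenttospecialMIS}, then invoke the generic dynamic program of Lemma~\ref{lem:dynamic_program_for_treewidth_normalized_weights_and_bounded_number_of_positive_or_negative_arcs} with the potential range dictated by Lemma~\ref{lem:bounded_treedepth_graphs_have_bounded_potential}. This mirrors the proof of Theorem~\ref{thm:algorithm_for_treewidth_normalized_weights_and_bounded_number_of_negative_arcs}: the only essential change is swapping the interval $[0,w_-]$ for $[0,2^{\td(G)}]$.

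First I would take the trivial ordered partition $\mathcal{C}=\{(V(G))\}$. Because this set contains only one partition, it is $(T,\mathcal{B})$-compatible for any tree decomposition $(T,\mathcal{B})$ of $G$, and any $((V(G)),\pi)$-feasible set $S$ is exactly a negative directed feedback arc set for which $\pi$ is a feasible potential of $G-S$. Next I would compute a nice tree decomposition of $G$ via Corollary~\ref{thm:compute_nice_tree_decomposition}; since $\tw(G)\leq\pw(G)\leq \td(G)-1$, its width is $\mathcal{O}(\td(G))$ and the number of nodes is $\mathcal{O}(n)$, constructed in time $2^{\mathcal{O}(\td(G))}n$.

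The key observation to close the argument is that by Lemma~\ref{lem:bounded_treedepth_graphs_have_bounded_potential}, any negative directed feedback arc set $S$ of $(G,w)$ admits a feasible potential $\pi\colon V(G)\to\mathbb{Z}\cap[0,2^{\td(G)}]$ on $G-S$. Hence a minimum negative directed feedback arc set is realized as a $((V(G)),\pi)$-feasible set with $\pi$ in this range, and Lemma~\ref{lem:dynamic_program_for_treewidth_normalized_weights_and_bounded_number_of_positive_or_negative_arcs}, invoked with $a=0$, $b=2^{\td(G)}$, and the tree decomposition above, returns it. Comparing its size to $k$ yields the decision.

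For the running time, the enumeration factor $f(\mathcal{C},T)$ is trivially $\mathcal{O}(1)$ since $|\mathcal{C}|=1$. The dominating factor from Lemma~\ref{lem:dynamic_program_for_treewidth_normalized_weights_and_bounded_number_of_positive_or_negative_arcs} is
\[
2^{\mathcal{O}(\tw(G)\log(b-a))}\;=\;2^{\mathcal{O}(\td(G)\cdot\td(G))}\;=\;2^{\mathcal{O}((\td(G))^{2})},
\]
multiplied by $n+m$, which matches the claimed bound. There is no real obstacle: essentially all the work has been done in the preceding lemmas, and the proof is a one-line reparameterization of the $w_-$ case with the treedepth-based potential bound plugged in.
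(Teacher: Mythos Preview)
Your proposal is correct and follows essentially the same approach as the paper: reduce to \textsc{Negative DFAS}, take the trivial partition $\mathcal{C}=\{(V(G))\}$, compute a nice tree decomposition, invoke Lemma~\ref{lem:bounded_treedepth_graphs_have_bounded_potential} to bound the potential range by $[0,2^{\td(G)}]$, and run the dynamic program of Lemma~\ref{lem:dynamic_program_for_treewidth_normalized_weights_and_bounded_number_of_positive_or_negative_arcs} with $\tw(G)\le\td(G)$ to obtain the stated running time. The paper's own proof is verbatim the same sequence of steps.
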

\begin{proof}
  By \Cref{thm:NDFASisequivalenttospecialMIS}, we can construct, in polynomial time, an instance of \NDFAS{} with the same parameter $\td$, such that the original instances is a ``yes''-instance if and only if the constructed instance is.

  Let $\mathcal{C}$ consist of the single partition $(V(G))$ of $V(G)$.
  Compute a nice tree decomposition $(T,\mathcal B)$ of~$G$.
  Note that $\mathcal{C}$ is $(T,\mathcal B)$-compatible.
  We know that any $(C, \pi)$-feasible set $S$ for $C \in \mathcal{C}$ is a negative directed feedback arc set of $(G,w)$ with $\pi$ being a feasible potential of $G - S$.
  By \Cref{lem:bounded_treedepth_graphs_have_bounded_potential}, we know that for any negative directed feedback arc set $S$ of $(G,w)$, $G - S$ has a feasible potential $\pi: V(G) \to \mathbb{Z} \cap [0, 2^{\td(G)}]$.
  Thus, we call \Cref{lem:dynamic_program_for_treewidth_normalized_weights_and_bounded_number_of_positive_or_negative_arcs} with $T$, $\mathcal{C}$, $a = 0$ and $b = 2^{\td(G)}$ and get a negative directed feedback arc set $S'$ of minimum size.
  We then check whether $|S'| \leq k$ or not and return the corresponding answer.
	
  For the run-time, we use \Cref{lem:dynamic_program_for_treewidth_normalized_weights_and_bounded_number_of_positive_or_negative_arcs} and \Cref{thm:compute_approximate_nice_tree_decomposition}. In \Cref{lem:dynamic_program_for_treewidth_normalized_weights_and_bounded_number_of_positive_or_negative_arcs}, the sets of the form $C|_{B_x}$ for any $x \in V(T)$ can be enumerated in constant time as they are exactly the set $(B_x)$.
  Last but not least, we have \mbox{$\tw(G) \leq \td(G)$}, and thus an overall run-time of $2^{\mathcal{O}(\td(G) \log (2^{\td(G)}))}(n + m) = 2^{\mathcal{O}((\td(G))^2)}(n + m)$.
\end{proof}

\subsection{Algorithm for \texorpdfstring{$\lbrace -1, 1 \rbrace$}{\{-1,1\}} entries with Few Negative Arcs}
In this section we give an algorithm when parameterized by $w_-$.
However, this algorithm works only for weights of the form $w: A(G) \to \lbrace -1, 1\rbrace$.
As stated in the results overview (see \Cref{sec:results}), the parameterized complexity of {\sc MinFB} with respect to parameter is open for right-hand side $b \in \lbrace -1, 0, 1\rbrace$, and $\mathsf{W}[1]$-hard for general integral weights.
In the case of $\lbrace -1, +1 \rbrace$ values, finding a negative directed feedback arc set however becomes easy by the following observation.

\begin{lemma}
\label{lem:negative_cycle_length_bounded_by_w_minus_for_special_arc_weights}
  Let $G$ be a directed graph with arc weights $w: A(G) \to \lbrace -1, 1\rbrace$.
  Then any negative cycle of $G$ has length at most~$2w_-$.
\end{lemma}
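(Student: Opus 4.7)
The plan is to do a straightforward arc count on the cycle, using that every arc weight is either $+1$ or $-1$ together with the negativity assumption. Let $C$ be a negative cycle, and let $a$ denote the number of its arcs of weight $-1$ and $b$ the number of its arcs of weight $+1$; then the length of $C$ is $a + b$ and the total weight of $C$ equals $b - a$.

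From the negativity of $C$, I would conclude $b - a < 0$, hence $b \leq a - 1$ (since $a$ and $b$ are integers). Substituting gives
\[
|C| = a + b \leq 2a - 1.
\]
Finally, since the $a$ negative arcs appearing in $C$ are all distinct elements of the set $A_-(G)$ of negative arcs of $G$, we have $a \leq w_-$, and so $|C| \leq 2w_- - 1 \leq 2w_-$, as required.

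The argument is essentially a one-line parity/counting observation, so there is no real obstacle. The only subtle point is the strict inequality $b < a$: this is exactly where the hypothesis that the arc weights lie in $\{-1,+1\}$ rather than in $\{-1,0,+1\}$ is used, because an arc of weight $0$ would contribute to the length without contributing to the weight, and the bound would no longer hold.
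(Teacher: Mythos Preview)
Your proof is correct and follows essentially the same counting argument as the paper; the paper phrases it contrapositively (a cycle of length more than $2w_-$ must have at least $w_-+1$ positive arcs and at most $w_-$ negative arcs, hence non-negative weight), but this is the same idea.
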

\begin{proof}
  Any cycle $C$ of length more than $2w_-$ contains at least $(w_- + 1)$-many arcs of weight $+1$ and at most $w_-$-many arcs of weight $-1$, therefore its total weight is non-negative.
\end{proof}

Combining \Cref{lem:negative_cycle_length_bounded_by_w_minus_for_special_arc_weights} and \Cref{lem:algorithm_for_bounded_negative_cycle_length_plus_k} leads to the following theorem.
\begin{theorem}
\label{thm:algorithm_for_w_minus_and_special_arc_weights}
  There is an algorithm solving \textsc{MinFB} with right-hand sides in $\lbrace -1, 1\rbrace$ in time $\mathcal{O}( (2w_-)^{k} n^2m)$.
\end{theorem}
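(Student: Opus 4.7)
The plan is to compose three results already established in the paper. First, apply Theorem~\ref{thm:NDFASisequivalenttospecialMIS} to convert the given \textsc{MinFB} instance with difference constraints and right-hand sides in $\lbrace -1, 1\rbrace$ into an equivalent \textsc{Negative DFAS} instance $(G, w, k)$ with the same parameters, where the arc weights $w : A(G) \to \lbrace -1, 1\rbrace$ are inherited verbatim from the right-hand sides. In particular, $k$ and $w_-$ are preserved and the weight set $\lbrace -1,1\rbrace$ carries over.

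Second, invoke Lemma~\ref{lem:negative_cycle_length_bounded_by_w_minus_for_special_arc_weights} to conclude that every negative cycle of $(G, w)$ has length at most $L := 2 w_-$: any longer cycle would contain strictly more $(+1)$-arcs than the $w_-$ available $(-1)$-arcs, forcing its total weight to be non-negative.

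Third, feed this upper bound $L = 2 w_-$ into the branching algorithm of Lemma~\ref{lem:algorithm_for_bounded_negative_cycle_length_plus_k}, which solves \textsc{Negative DFAS} in time $\mathcal{O}(L^k\, n^{\mathcal O(1)})$ whenever $L$ bounds the length of any negative cycle. Substituting $L = 2w_-$ yields the claimed parameter dependence $\mathcal{O}((2 w_-)^k\, n^{\mathcal O(1)})$, and the stated polynomial factor $n^2 m$ follows by using an $\mathcal{O}(nm)$ Bellman--Ford-style negative-cycle detection at each leaf of the branching tree in place of the $\mathcal{O}(n^3)$ Floyd--Warshall routine.

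There is no substantive obstacle: the theorem is essentially a direct corollary of the cycle-length lemma and the generic bounded-cycle-length branching algorithm, coupled through the \textsc{MinFB}-to-\textsc{Negative DFAS} equivalence. The only items worth explicitly checking are that the reduction of Theorem~\ref{thm:NDFASisequivalenttospecialMIS} preserves the $\lbrace -1, 1\rbrace$ structure of the weights (which it does by construction, since arc weights are set equal to the $b_i$), and that the value $w_-$ on the \textsc{Negative DFAS} side equals the number of negative right-hand sides on the \textsc{MinFB} side (which is also immediate from the reduction).
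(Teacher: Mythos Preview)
Your proposal is correct and matches the paper's proof essentially verbatim: the paper too reduces to \textsc{Negative DFAS} via Theorem~\ref{thm:NDFASisequivalenttospecialMIS} and then combines Lemma~\ref{lem:negative_cycle_length_bounded_by_w_minus_for_special_arc_weights} with Lemma~\ref{lem:algorithm_for_bounded_negative_cycle_length_plus_k}. Your remark that an $\mathcal{O}(nm)$ negative-cycle detection suffices (since every negative cycle already has length at most $2w_-$, so one need not find a \emph{shortest} one) is a valid justification for the $n^2m$ factor, which the paper leaves implicit.
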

\begin{proof}
  By \Cref{thm:NDFASisequivalenttospecialMIS}, we can construct an instance of \NDFAS{} in polynomial time that has the same parameters $k+w_-$ such that the original instances is a ``yes''-instance if and only if the constructed instance is.
  The theorem follows by combining \Cref{lem:algorithm_for_bounded_negative_cycle_length_plus_k} and \Cref{lem:negative_cycle_length_bounded_by_w_minus_for_special_arc_weights}.
\end{proof}

Note that the run-time can be improved to $\mathcal{O}((k+1)w_-^k n^2m)$ by guessing the number of $-1$ arcs in our solution first and enumerating all sets of that size.
Then the algorithm of \Cref{lem:algorithm_for_bounded_negative_cycle_length_plus_k} can be modified to only make subroutine calls on arcs of weight $+1$, leading to an improved run-time.
Moreover, note that we can assume $k \leq w_-$ by \Cref{thm:solve_w_minus_bounded_by_k}, which shows that \NDFAS{} with arc weights $w: A(G) \to \lbrace -1, 1\rbrace$ is fixed-parameter tractable in~$w_-$.

\subsection{Algorithm for \texorpdfstring{$\lbrace -1, 1 \rbrace$}{\{-1,1\}} Weights with Few Positive Arcs}
\label{sec:fewpositiverighthandsides}
We will now study weight functions $w: A(G) \rightarrow \{-1, +1\}$ when parameterized by $k + w_+$.
As stated in the results overview in \Cref{sec:results} this parameter is open for weights of the form $w: A(G) \to \lbrace -1, 0, 1\rbrace$ and $\mathsf{W}[1]$-hard for general integral weights.

The main observation for our algorithm is made in the following lemma:
\begin{lemma}
  \label{thm:shortcycleorrepresentativeedge}
  Let $G$ be a directed graph with arc weights $w:A(G) \rightarrow \{\pm 1\}$.
  Then either~$G$ has a negative cycle of length at most $2(w_+)^2 + 2w_+$, or every negative cycle $C$ has some arc $a \in A(C)$ that lies only on negative cycles of $G$.
\end{lemma}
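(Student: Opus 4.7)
I would prove the contrapositive via a minimum counterexample. Let $C$ be a shortest (by length) negative cycle of $G$, and suppose both that $|C| > 2 w_+(w_+ + 1)$ and that every arc $a \in A(C)$ lies on some non-negative cycle $D_a$ of $G$; the goal is to construct a strictly shorter negative cycle of $G$, contradicting the choice of $C$.

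The first ingredient is a \emph{short non-negative cycle} bound: every non-negative cycle $D$ in $G$ has $|D| \le 2 w_+$. Indeed, $w(D) \ge 0$ together with weights in $\{\pm 1\}$ forces at least $|D|/2$ arcs of $D$ to be positive, and $G$ contains only $w_+$ positive arcs. Hence for each arc $a$ of $C$ one may choose $D_a$ with $|D_a| \le 2 w_+$, and such a $D_a$ contains at least one positive arc of $G$.

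The second ingredient is a double pigeonhole. The assignment $a_i \mapsto (\text{some positive arc of } D_{a_i})$ sends the $|C|$ arcs of $C$ into $A_+(G)$, a set of size $w_+$. Since $|C| > 2 w_+(w_+ + 1)$, some positive arc $b = (x,y)$ lies in $D_{a_i}$ for at least $\lceil |C|/w_+ \rceil \ge 2 w_+ + 3$ indices~$i$. Moreover, $C$ itself contains at most $w_+$ positive arcs, so its longest run of consecutive negative arcs has length at least $\lceil (|C| - w_+)/w_+ \rceil \ge 2 w_+ + 2$. Combining these two counts, one can locate two indices $i_1 < i_2$ that both lie inside a single all-negative run of $C$ and both belong to the pre-image of the common positive arc $b$, with separation $i_2 - i_1 \ge 2 w_+ - 1$.

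Given such $i_1, i_2$, write $D_{a_{i_j}} = a_{i_j} \circ P_{i_j} \circ b \circ R_{i_j}$ for $j \in \{1,2\}$, where $P_{i_j} \colon v_{i_j} \to x$, $R_{i_j}\colon y \to v_{i_j-1}$, and $|P_{i_j}| + |R_{i_j}| \le 2 w_+ - 2$. Form the closed walk
\[
E := C[v_{i_1-1} \to v_{i_2-1}] \circ a_{i_2} \circ P_{i_2} \circ b \circ R_{i_1}.
\]
Because $C[v_{i_1-1}, v_{i_2-1}]$ lies in an all-negative run, $E$ has weight at most $-(i_2 - i_1) + w(P_{i_2}) + w(R_{i_1}) \le -(i_2 - i_1) + (2 w_+ - 2) < 0$, and length at most $(i_2 - i_1) + 2 + |P_{i_2}| + |R_{i_1}| \le (i_2 - i_1) + 2 w_+ < |C|$. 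Decomposing $E$ into simple cycles if necessary, at least one of them is both negative and strictly shorter than $C$, contradicting the minimality of $|C|$.

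The hard part will be the combinatorial arrangement in the second step: one must ensure that two indices with a common $b$ can simultaneously be placed inside a single long all-negative run \emph{and} be separated by at least $\sim 2 w_+$ along $C$. This requires balancing the shortness of each $D_{a_i}$ (which forces positive arcs to repeat heavily across the $D_a$'s) against the fact that most arcs of $C$ are negative (so the long all-negative run exists), and the precise threshold $2 w_+(w_+ + 1)$ is what makes both pigeonhole conditions hold at once.
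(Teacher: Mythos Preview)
Your proposal has two genuine gaps.

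First, the bound $|P_{i_2}| + |R_{i_1}| \le 2w_+ - 2$ is not justified: you only know $|P_{i_j}| + |R_{i_j}| \le 2w_+ - 2$ for the \emph{same} index $j$, because those two pieces together with $a_{i_j}$ and $b$ make up the short cycle $D_{a_{i_j}}$. The mixed sum $|P_{i_2}| + |R_{i_1}|$ can be as large as $4w_+ - 4$, which breaks both your length estimate and your weight estimate on $E$: the separation $i_2 - i_1 \ge 2w_+ - 1$ no longer forces $w(E)<0$.

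Second, the double-pigeonhole step is asserted but not proved. Knowing that some positive arc $b$ is hit by at least $2w_+ + 3$ of the $D_{a_i}$, and that \emph{some} all-negative run of $C$ has length at least $2w_+ + 2$, does not place two of those indices inside that run at the required separation. The preimages of $b$ could be scattered across many short runs, or clustered close together inside the long one. You correctly flag this as ``the hard part,'' but the threshold $2w_+(w_+ + 1)$ does not obviously reconcile the two pigeonhole conditions, and I do not see how to close the gap without changing the construction.

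The paper avoids both difficulties with a much more direct argument. It uses only your first ingredient (every non-negative cycle has length at most $2w_+$) together with the existence of a run $P$ of $w_+ + 1$ consecutive negative arcs in $C$, which follows already from $|C| > w_+(w_+ + 1)$. For each arc $a \in A(P)$, the remainder $R_a$ of $D_a$ is a reverse path of length at most $2w_+ - 1$. Concatenating all the $R_a$'s gives a walk from the end of $P$ back to its start; inside this walk lies a simple path $R$, and a simple path uses each of the $w_+$ positive arcs of $G$ at most once, so $w(R) \le w_+$. Hence $P \circ R$ is a closed walk of weight at most $-(w_+ + 1) + w_+ < 0$ and length at most $(w_+ + 1) + (w_+ + 1)(2w_+ - 1) = 2w_+(w_+ + 1)$, yielding a strictly shorter negative cycle. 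No common positive arc and no placement of indices is needed; the weight bound comes from the simplicity of $R$, not from any structural coincidence between two $D_a$'s.
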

\begin{proof}
  First, if $G$ has no negative cycles then we are done.
  We are also done if there is a negative cycle of length at most $2(w_+)^2 + 2w_+$.
  So, assume that $G$ has a negative cycle and all negative cycles have length at least $2(w_+)^2 + 2w_+ + 1$.
    
  Suppose, for sake of contradiction, that each arc $a$ in a negative cycle lies on a cycle $C_a$ of non-negative weight.
  Let~$C$ be a shortest negative cycle in $G$.
  By the argumentation above,~$C$ has length at least $2(w_+)^2 + 2w_+ + 1$.
  In particular, $C$ has length at least $w_+(w_+ + 1) + 1$ and contains at most $w_+$ arcs of weight~$+1$; all other arcs of $C$ have weight $-1$.
  By pigeonhole principle, there must be a segment $P$ of~$C$ consisting of $w_+ + 1$ arcs of weight $-1$.
    
  By assumption, for every arc $a = (v,w) \in A(P)$ we have a $w \to v$-path $R_a$ of length at most $|C_a| - 1$.
  As $w(C_a) \geq 0$ and the $C_a$'s contain at most $w_+$ arcs of weight $+1$ and all other arcs have weight $-1$ the cycles have length at most~$2w_+$.
  Thus, the reverse paths $R_a$ have length $|R_a| \leq |C_a| - 1 \leq 2w_+ - 1$.
  Say $P$ is an $s \to t$-path, then by concatenating the $R_a$'s with $a \in P$ we get an $t \to s$-walk $R'$ which contains a $t \to s$-path $R$.
  The path $R$ contains at most $w_+$ arcs of weight~$+1$, thus $w(R) \leq w_+$.
    
  Consider the closed walk $O = P \circ R$.
  Then $w(O) = w(P) + w(R) = -(w_++1) + w(R) \leq -1$.
  Thus,~$O$ contains a negative cycle $C'$.
  Eventually,
  \begin{align*}
    |\ell(C')| & \leq |O|\\
               & = |P| + |R|\\
               & \leq |P| + \sum_{e \in E(P)} |R_e|\\
               & \leq 2w_+(w_++1)\\
               & < 2(w_+)^2 + 2w_+ + 1 \leq |\ell(C)|
  \end{align*}
  yields a contradiction to the fact that $C$ was a shortest negative cycle.
\end{proof}

This lemma forms the basis of our algorithm, \Cref{Alg:PositiveParameter}.
First, the algorithm checks for negative cycles with up to $2(w_+)^2+2w_+$ arcs.
It then guesses the arc contained in a solution like the algorithm alluded to in \Cref{lem:algorithm_for_bounded_negative_cycle_length_plus_k}.
Afterwards, we are left with a digraph without short negative cycles.
We now identify the set $U$ of arcs which are not part of a non-negative cycle.
Then, we know that $G - S$ may not contain a cycle on which an arc of $U$ lies, as this cycle would be negative by definition of $U$.
Likewise, any negative cycle in $G$ has some arc in $U$ by the previous lemma.
For general sets $U$, this is exactly the \textsc{Directed Subset Feedback Arc Set} problem.

\problemdef{Directed Subset Feedback Arc Set}
  {A graph $G$, an arc set $U \subseteq A(G)$ and an integer $k \in \mathbb{Z}_{\geq 0}$.}
  {Find an arc set $X \subseteq A(G)$ of size at most~$k$ such that every cycle of $G - X$ is disjoint from $U$ or decide that no such set exists.}

The \textsc{Directed Subset Feedback Arc Set} problem was shown to be fixed-parameter tractable for parameter $k$ by Chitnis et al.~\cite{ChitnisEtAl2015}.
\begin{proposition}[\cite{ChitnisEtAl2015}]
\label{thm:subsetdfasfpt}
  {\sc Directed Subset Feedback Arc Set} is solvable in time $2^{\mathcal{O}(k^3)}\cdot n^{\mathcal O(1)}$.
\end{proposition}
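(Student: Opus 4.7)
The plan is to follow the general strategy introduced by Chitnis, Cygan, Hajiaghayi, Pilipczuk and Pilipczuk for directed cut problems, combining iterative compression with a shadow-removal preprocessing step. First I would reduce \textsc{Directed Subset Feedback Arc Set} to the vertex version \textsc{Directed Subset Feedback Vertex Set} by subdividing every arc $a=(u,v) \in U$ with a new vertex $z_a$, moving the deletion budget onto these subdivision vertices, and letting the designated set $T$ consist of the $z_a$'s. A cycle in the original graph uses an arc of $U$ if and only if the corresponding cycle in the subdivided graph passes through $T$, so any FPT algorithm for the vertex version immediately yields one for the arc version with the same parameter $k$.

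Next I would apply iterative compression on the vertex version. Order the vertices arbitrarily, add them one at a time and maintain a feasible deletion set $W$ of size at most $k+1$; whenever $|W|$ grows to $k+1$ one invokes a compression routine that, given a solution of size $k+1$, decides whether a solution of size $k$ exists. Standard branching over which vertices of the old solution are kept reduces compression to the following disjoint version: find a new solution $S$ of size at most $k - |W'|$ that is disjoint from a fixed vertex set $W\setminus W'$, under the additional guarantee that $W\setminus W'$ itself has no $T$-cycle through it.

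The main technical step, and in my opinion the main obstacle, is shadow removal. Define the \emph{shadow} of a putative solution $S$ to be those vertices that in $G-S$ cannot both reach $T$ and be reached from $T$; intuitively these vertices are irrelevant, so one wants to preprocess $G$ so that some optimal solution has empty shadow. I would follow the random-sampling approach: using important separators one shows that, for some solution $S^\star$, marking each vertex of $V(G)\setminus T$ independently with an appropriately chosen probability puts exactly the shadow of $S^\star$ into the marked set with probability $2^{-\mathcal{O}(k^2)}$. Contracting $T$ together and treating the marked set as deletable lets us assume the solution has no shadow. The hard part is proving this sampling lemma (using the submodularity of cuts and the bound on the number of important separators) and then derandomizing it with a splitter-like family of hash functions so that the overall randomness vanishes.

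Once shadow removal is done, the compression instance becomes essentially a skew-type directed cut: one needs to pick at most $k - |W'|$ vertices so that no vertex of $T$ reaches itself through the remaining graph, subject to an acyclic ordering on the residual strongly connected components. This is the same skew-separator sub-problem solved by Chen et al.\ in time $4^k k\cdot n^{\mathcal{O}(1)}$ (see \Cref{thm:solving_skew_cut}). Combining the $2^{\mathcal{O}(k^2)}$ shadow-removal overhead, the $2^k$ factor from iterative compression and branching, and the $4^k$ factor from the skew separator algorithm yields the announced $2^{\mathcal{O}(k^3)}\cdot n^{\mathcal{O}(1)}$ running time.
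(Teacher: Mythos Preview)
The paper does not prove this proposition at all; it is stated as a known result and attributed to Chitnis et al.~\cite{ChitnisEtAl2015}, then used as a black box inside \Cref{Alg:PositiveParameter}. So there is no ``paper's own proof'' to compare against---your proposal is not competing with anything in this paper, but rather sketching the cited result.

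As an outline of the Chitnis et al.\ method your sketch is broadly faithful: reduction to the vertex version, iterative compression, shadow removal via random sampling of important separators (later derandomized), and finishing with a skew-separator-type cut. One arithmetic point: the factors you list ($2^{\mathcal{O}(k^2)}$ from shadow removal, $2^k$ from compression branching, $4^k$ from skew separator) multiply to $2^{\mathcal{O}(k^2)}$, not $2^{\mathcal{O}(k^3)}$. The cubic exponent in \cite{ChitnisEtAl2015} arises because the shadow-removal step is invoked inside an additional layer of branching (over how the solution interacts with the guessed structure), so the $2^{\mathcal{O}(k^2)}$ overhead is itself multiplied across roughly $2^{\mathcal{O}(k)}$ branches; your bookkeeping misses this nesting. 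This is a minor accounting issue, not a conceptual gap, but if you were writing a self-contained proof you would need to track it.
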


\begin{algorithm}
  \caption{NegativeCycleDeletion\label{Alg:PositiveParameter}}
  \SetKwFunction{SDFAS}{DirectedSubsetFeedbackArcSet}
  \DontPrintSemicolon
  \SetKwInOut{Input}{Input}\SetKwInOut{Output}{Output}
  \SetKwFunction{False}{false}
	
  \Input{A directed graph $G$ with arc weights $w:A(G) \rightarrow \mathbb{Z}$ and $k \in \mathbb{Z}_{\geq 0}$.}
  \Output{A set $S \subseteq A(G)$ of at most $k$ arcs such that $G - S$ has no negative cycle, or \False if no such set exists.}
	
  \If{$k < 0$}{
    \Return \False.\;
  }
  \eIf{there is some negative cycle $C$ of length at most $2(w_+)^2 + 2w_+$ in $G$}{
    Branch on deleting an arc of $C$ and try to solve with $k-1$.\;
  }
  {
  Identify the set $U$ of all arcs which do not lie on a non-negative cycle.\;
    \Return \SDFAS $(G, U, k)$.\;
  }
\end{algorithm}

Before we prove correctness and run-time, we show how to detect the set~$U$ of all arcs which lie only on negative cycles.
We first argue that this problem is $\mathsf{NP}$-hard even for weights $w:A(G) \rightarrow \{ -1, +1\}$.
To this end, we provide a reduction from the \textsc{Hamiltonian $s$-$t$-Path} problem, which for a directed graph $H$ and vertices $s,t\in V(H)$ asks for an $s \to t$-path in~$H$ visiting each vertex of~$H$ exactly once.
Its $\mathsf{NP}$-hardness was shown by Karp~\cite{Karp1972}.
The reduction works as follows:
Take the original directed graph $H$ and two vertices $s,t \in V(H)$ which we want to test for the existence of a Hamiltonian path starting in $s$ and ending in $t$.
Add a path~$P$ of length $n-1$ from $t$ to $s$ to the graph.
Assign weight $+1$ to each arc of $H$, and weight $-1$ to each arc of $P$.
Then an arc of $P$ lies on a cycle of non-negative length if and only if there is a Hamiltonian $s \to t$-path in $H$.

However, for this construction of weights $w$ we have $w_+ \in \Omega(n)$.
We will now show that the task is indeed fixed-parameter tractable when parameterized by $w_+$.
For that, the main observation is that every non-negative cycle has length at most $2w_+$.
We now consider the {\sc Weighted Longest Path} problem: given a directed graph $G$ with arc weights $w: A(G) \rightarrow \mathbb{R}$ and numbers $W \in \mathbb{R}, \ell \in \mathbb{Z}_{\geq 0}$, the task is to find a path of length exactly $\ell$ and weight at least $W$ in $G$.
Zehavi~\cite{Zehavi2015} gave a fast algorithm for {\sc Weighted Longest Path}, based on color coding-related techniques and representative sets.
\begin{proposition}[\cite{Zehavi2015}]
  {\sc Weighted Longest Path} can be solved in time $2^{\mathcal{O}(\ell)} \cdot \mathcal{O}(m \log n)$.
\end{proposition}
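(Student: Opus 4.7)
The plan is to prove this via the color-coding framework of Alon, Yuster, and Zwick combined with derandomization by the perfect hash families of Naor, Schulman, and Srinivasan. Since a path of length exactly $\ell$ has $\ell + 1$ vertices, a uniformly random coloring $c: V(G) \to [\ell+1]$ makes the vertex set of any fixed candidate path \emph{colorful} (injectively colored) with probability at least $(\ell+1)!/(\ell+1)^{\ell+1} \ge e^{-(\ell+1)}$. Thus it suffices to solve, for each coloring in a suitable derandomizing family, the problem of finding a maximum-weight colorful path of length $\ell$.

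The core subroutine is a dynamic program on subsets of colors. For each $v \in V(G)$ and each $S \subseteq [\ell+1]$ with $c(v) \in S$, let $D[v, S]$ be the maximum weight of a length-$(|S|-1)$ path ending at $v$ whose vertex set is colored exactly by $S$. The base case is $D[v, \{c(v)\}] = 0$, and the recurrence is
\begin{equation*}
  D[v, S] \;=\; \max_{(u,v) \in A(G):\, c(u) \in S \setminus \{c(v)\}} \bigl( D[u, S \setminus \{c(v)\}] + w(u, v) \bigr).
\end{equation*}
Charging the computation of $D[v, S]$ to the incoming arcs of $v$ yields total work $\mathcal O(2^{\ell} \cdot m)$ per coloring. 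The answer for this coloring is the maximum of $D[v, S]$ over $v \in V(G)$ and $|S| = \ell + 1$, which exceeds $W$ if and only if that coloring exposes a weighted long path.

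For derandomization I would invoke the $(n, \ell+1)$-perfect hash family of Naor, Schulman, and Srinivasan, which has size $2^{O(\ell)} \log n$ and is constructible in time essentially linear in its size. Running the dynamic program on each hash function in the family and taking the overall maximum produces a correct answer for the original instance, because any optimal length-$\ell$ path becomes colorful under at least one of the hash functions. The total running time is
\begin{equation*}
  2^{O(\ell)} \log n \cdot \mathcal O(2^{\ell} m) \;=\; 2^{O(\ell)} \cdot \mathcal O(m \log n),
\end{equation*}
as claimed.

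The main obstacle, and the reason the bound is sharper than what one would get from a naive color-coding write-up, is to absorb every polynomial-in-$\ell$ factor into the $2^{O(\ell)}$ term while avoiding an extra factor of $n$ (as opposed to $\log n$) in the final running time. Concretely, one must charge the dynamic-programming updates to arcs rather than to (vertex, subset) pairs, and one must use the optimal-size $2^{O(\ell)} \log n$ perfect hash family rather than coarser splitter constructions that would yield $\ell^{O(\ell)} \log n$; both optimizations are essential to match the stated bound. An alternative (and the route Zehavi takes to push the exponential base below $2$) would be to replace the naive DP by maintaining a bounded-size representative family of color-sets at each vertex with respect to a uniform matroid, preserving the maximum-weight representative; this is not needed for the bound $2^{O(\ell)} \cdot \mathcal O(m \log n)$ but would be the natural next step.
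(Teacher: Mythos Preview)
The paper does not prove this proposition; it is quoted verbatim as a black-box result from Zehavi~\cite{Zehavi2015}, with the surrounding text noting only that Zehavi's algorithm is ``based on color coding-related techniques and representative sets.'' There is therefore no paper proof to compare against.

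Your color-coding argument is correct and self-contained, and it does achieve the stated $2^{\mathcal O(\ell)} \cdot \mathcal O(m \log n)$ bound: the per-coloring subset DP over arcs costs $\mathcal O(2^{\ell} m)$, and multiplying by the $2^{\mathcal O(\ell)} \log n$ colorings from the Naor--Schulman--Srinivasan perfect hash family gives the claim. You are also right that Zehavi's own route is different---she works with representative families rather than plain subset DP, which is how she pushes the exponential base below~$2$---but for the bound as stated in the paper your simpler argument suffices. Since the paper only needs the proposition as a subroutine inside \Cref{thm:partofposcycledetection}, either derivation is adequate.
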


So given an arc $a = (s, t)$, one can enumerate all path sizes $\ell$ from $1$ to $2w_+ - 1$ and ask whether there is a $t \to s$-path of length $\ell$ of weight at least $-w(a)$.
This way one can detect a non-negative cycle containing $a$.

\begin{corollary}
\label{thm:partofposcycledetection}
  Let $G$ be a directed graph, let $w: A(G) \rightarrow \{\pm 1\}$ be a function and $(s,t) \in A(G)$ be an arc.
  Then one can detect in time $2^{\mathcal{O}(w_+)} \cdot m\log n$ if $a = (s,t)$ is part of some non-negative cycle~$C$.
\end{corollary}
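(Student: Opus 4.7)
The plan is to combine the structural observation that a non-negative cycle in the $\{\pm 1\}$-weighted setting must be short with the quoted algorithm for \textsc{Weighted Longest Path}. If a cycle $C$ uses $p$ arcs of weight $+1$ and $q$ arcs of weight $-1$, then $w(C)=p-q\ge 0$ forces $p\ge q$, so $|A(C)|=p+q\le 2p\le 2w_+$. Hence, if the arc $a=(s,t)$ lies on any non-negative cycle, it lies on one of length at most $2w_+$, which corresponds exactly to a simple $t\to s$ path $P$ in $G':=G-\{a\}$ of length $\ell:=|A(P)|\in\{1,\dots,2w_+-1\}$ and weight $w(P)\ge -w(a)$.

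Given this reduction, I would enumerate the candidate lengths $\ell=1,2,\dots,2w_+-1$ and, for each $\ell$, invoke the \textsc{Weighted Longest Path} algorithm of Zehavi on the input $(G',\,t,\,s,\,\ell,\,W)$ with threshold $W=-w(a)$, asking for a simple $t\to s$ path of length exactly $\ell$ and weight at least $W$. We answer ``yes'' iff at least one of these calls succeeds, in which case the returned path together with $a$ forms a non-negative (simple) cycle through $a$.

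Correctness follows from the bijection in the first paragraph together with the fact that \textsc{Weighted Longest Path} operates on simple paths (so the cycle we recover is genuinely a simple cycle, not a closed walk). For the running time, each of the $\mathcal{O}(w_+)$ calls takes $2^{\mathcal{O}(\ell)}\cdot\mathcal{O}(m\log n)=2^{\mathcal{O}(w_+)}\cdot\mathcal{O}(m\log n)$ time, giving the claimed bound $2^{\mathcal{O}(w_+)}\cdot m\log n$ in total. The only real subtlety is ensuring that the ``length $\ell$, weight $\ge W$'' formulation of \textsc{Weighted Longest Path} used here matches the simple-path semantics we rely on; this is standard for the algorithm cited, so no additional work should be needed.
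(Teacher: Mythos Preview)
Your proposal is correct and matches the paper's argument essentially line for line: the paper also bounds the length of any non-negative cycle by $2w_+$, then enumerates $\ell$ from $1$ to $2w_+-1$ and invokes Zehavi's \textsc{Weighted Longest Path} algorithm to search for a $t\to s$-path of length exactly $\ell$ and weight at least $-w(a)$. The only cosmetic difference is that you explicitly remove $a$ before searching, which is harmless but unnecessary since $a=(s,t)$ cannot appear on a simple $t\to s$-path anyway.
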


Finally, we argue the correctness and run-time of \Cref{Alg:PositiveParameter}, proving the following:
\begin{theorem}
\label{thm:positiveparameter}
  \Cref{Alg:PositiveParameter} is correct and solves an instance of {\sc MinFB} with right-hand sides in $ \lbrace -1, 1\rbrace$  in time $2^{\mathcal{O}(k^3+w_+ + k\log w_+)}\cdot n^{\mathcal O(1)}$.
\end{theorem}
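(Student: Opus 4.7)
The plan is to prove the two parts, correctness and running time, separately, and to rely on \Cref{thm:NDFASisequivalenttospecialMIS} so that I may work with \NDFAS{} rather than \textsc{MinFB} directly. For correctness I would argue by induction on $k$. The case $k < 0$ is immediate, and the inductive step splits according to the branch the algorithm enters.

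In the branch where a negative cycle $C$ of length at most $2(w_+)^2 + 2w_+$ is detected (which can be done using \Cref{thm:Moore_Bellman_Ford} followed by a length check on the shortest negative cycle returned), a standard branching-correctness argument applies: any negative directed feedback arc set must contain an arc of $C$, so exhausting the at most $2(w_+)^2 + 2w_+$ choices of which arc of $C$ to delete and recursing with budget $k-1$ returns a solution whenever one exists.

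In the other branch, the key step is the following equivalence: for every $S \subseteq A(G)$, the graph $G - S$ is free of negative cycles if and only if $G - S$ has no cycle meeting $U$. The $(\Rightarrow)$ direction uses \Cref{thm:shortcycleorrepresentativeedge}: since the algorithm has already certified in this branch that $G$ has no negative cycle of length at most $2(w_+)^2 + 2w_+$, every negative cycle of $G$, and in particular any negative cycle remaining in $G - S$, must contain some arc in $U$. The $(\Leftarrow)$ direction is the definition of $U$: any cycle of $G - S$ through an arc $a \in U$ is also a cycle of $G$ through $a$ and is therefore negative. This equivalence identifies the solutions we seek with solutions of the \textsc{Directed Subset Feedback Arc Set} instance $(G, U, k)$, so calling the algorithm of \Cref{thm:subsetdfasfpt} on that instance is correct.

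For the running time, the search tree has depth at most $k$ and branching factor at most $2(w_+)^2 + 2w_+$, giving at most $2^{\mathcal O(k \log w_+)}$ leaves. At each node, \Cref{thm:Moore_Bellman_Ford} detects a shortest negative cycle in polynomial time; if none is short enough, one computes $U$ by invoking \Cref{thm:partofposcycledetection} on every arc in total time $2^{\mathcal O(w_+)} n^{\mathcal O(1)}$, and then runs the algorithm of \Cref{thm:subsetdfasfpt} in time $2^{\mathcal O(k^3)} n^{\mathcal O(1)}$. Multiplying the branching factor by the per-node cost yields $2^{\mathcal O(k^3 + w_+ + k \log w_+)} n^{\mathcal O(1)}$. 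The main technical obstacle I expect is to argue the equivalence above cleanly, in particular to ensure that the precondition of \Cref{thm:shortcycleorrepresentativeedge} is maintained throughout the else branch and that $U$ is always computed with respect to the graph currently handled by the recursive call, not with respect to the input graph.
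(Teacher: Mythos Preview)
Your proposal is correct and follows essentially the same approach as the paper: reduce to \NDFAS{} via \Cref{thm:NDFASisequivalenttospecialMIS}, branch on a short negative cycle if one exists, and otherwise reduce to \textsc{Directed Subset Feedback Arc Set} on the set~$U$ using \Cref{thm:shortcycleorrepresentativeedge} and \Cref{thm:partofposcycledetection}. Your direct equivalence ``$G-S$ has no negative cycle $\Leftrightarrow$ $G-S$ has no cycle through~$U$'' is in fact a bit cleaner than the paper's detour through an auxiliary set $U_{<0}$.

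One small correction: you have the two directions of the equivalence swapped. Your argument labelled $(\Rightarrow)$---that every negative cycle of $G-S$ contains an arc of $U$, by \Cref{thm:shortcycleorrepresentativeedge}---is really the contrapositive of $(\Leftarrow)$: it shows that if $G-S$ has no cycle meeting $U$ then it has no negative cycle. Conversely, your argument labelled $(\Leftarrow)$---that any cycle through $a\in U$ is negative by the definition of $U$---is the contrapositive of $(\Rightarrow)$: it shows that if $G-S$ is free of negative cycles then it has no cycle meeting $U$. Both arguments are valid; just exchange the labels.
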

\begin{proof}
  By \Cref{thm:NDFASisequivalenttospecialMIS}, we can construct an instance of \NDFAS{} in polynomial time that has the same parameters $k+w_+$ such that the original instances is a ``yes''-instance if and only if the constructed instance is.

  We detect with help of \Cref{thm:Moore_Bellman_Ford} in time $\mathcal{O}\left( n^2m\right)$ whether there is a negative cycle in~$G$, and if the minimum-length negative cycle $C$ has length at most $2(w_+)^2+2(w_+)$.
  If such a cycle~$C$ exists, we split into $|\ell(C)|$ instances, one for each $a \in A(C)$, calling our algorithm recursively with parameter~$k$ decreased by one on $G - a$.
  This is correct, as one of the arcs of $C$ needs to be deleted, and we just search for every arc if there is a solution containing this arc.
    
  Otherwise, all negative cycles have length at least $2(w_-)^2+2(w_-)+1$, and by \Cref{thm:shortcycleorrepresentativeedge} every negative cycle must have an arc not contained in a non-negative cycle.
  Choose one arc of each such cycle and gather them in the set $U_{<0}$.
  By definition of the set $U$, in our algorithm we have $U_{<0} \subseteq U$.
  {\sc Subset DFAS} for $(G,W,k)$ now asks for a set $S$ of size at most $k$ such that $G - S$ has no cycle containing an arc of $W$.
  As $U_{<0} \subseteq U$, we get that a solution for $(G, U, k)$ is also a solution for $(G, U_{<0},k)$.
  We now show that the reverse direction also holds, thus solving the original problem.
  For this, let $S_{<0}$ be a solution for $(G, U_{<0},k)$ and $C$ be a cycle in $G - S_{<0}$.
  If there is no such cycle, we are done.
  Otherwise, we know that $C$ cannot be a cycle of negative weight as every cycle of negative weight has an arc in~$U_{<0}$.
  But as $w(C) \geq 0$, our cycle~$C$ cannot contain an arc of $U$ as those are not contained in non-negative cycles.
  Thus, $S_{<0}$ is a solution for $(G, U, k)$.
  Also, all cycles in $G - S_{<0}$ are non-negative, and therefore $S_{<0}$ is also a solution to our {\sc Negative DFAS} instance $(G, k)$.
    
  This shows the correctness of \Cref{Alg:PositiveParameter}.
  The run-time can be bounded as follows.
  Detecting cycles in line 4 can be done in time $\mathcal{O}((2(w_+)^2 + 2w_+)nm)$.
  The branching step then creates up to $2(w_+)^2 + 2w_+$ instances with the parameter~$k$ decreased by one.
  As there is no other recursive call to this algorithm, we have at most $(2(w_+)^2+2w_++1)^k$ instances for which this algorithm is called.
  In the end, we call the algorithm alluded to in \Cref{thm:partofposcycledetection} to compute the set~$U$ which takes time $2^{\mathcal{O}(w_+)} \cdot m^2\log n$, as it is called for every arc.
  By \Cref{thm:subsetdfasfpt}, the final call to the \textsc{Directed Subset Feedback Arc Set} oracle takes time $2^{\mathcal{O}(k^3)}\cdot n^{\mathcal O(1)}$.
    
  Thus, we obtain an overall run-time of
  \begin{displaymath}
    (2w_+^2+2w_++1)^k \cdot \left[(w_+^2 + 2w_+)\cdot \mathcal{O}(nm) + 2^{\mathcal{O}(w_+)} \cdot \mathcal{O}(m^2\log n) + 2^{\mathcal{O}(k^3)}\cdot n^{\mathcal O(1)}\right],
  \end{displaymath}
  which simplifies to $(w_+)^{\mathcal{O}(k)} \cdot 2^{\mathcal{O}(k^3 + w_+)} \cdot n^{\mathcal O(1)} = 2^{\mathcal{O}(k^3+w_+ + k\log w_+)} \cdot n^{\mathcal O(1)}$.
\end{proof}

\section{Parameterized Intractability Results}
\label{sec:hardness}

\subsection{\texorpdfstring{$\mathsf{NP}$-Hardness}{NP-Hardness} for Number of Positive Arcs}
For completeness of the hardness results, this section contains a short observation about the equivalence of DFAS and \NDFAS{} instances where all arc weights are $-1$.
This implies that \NDFAS{} is $\mathsf{NP}$-hard even in the case where all arc weights are~$-1$.
This implies the hardness result for \textsc{MinFB} with parameter $w_+$ and arc weights $w: A(G) \to \lbrace -1, 0, 1\rbrace$.

\begin{theorem}
\label{thm:NP_hardness_for_positive_arcs}
  {\sc Negative DFAS} is $\mathsf{NP}$-hard, even if all arc weights are $-1$.
\end{theorem}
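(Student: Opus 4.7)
The plan is to give a trivial parameter-preserving reduction from the classical \textsc{Directed Feedback Arc Set} (DFAS) problem to \textsc{Negative DFAS} restricted to arc weights $-1$. DFAS takes as input a directed graph $G$ and an integer $k$ and asks whether there is a set $F\subseteq A(G)$ of size at most $k$ such that $G-F$ is acyclic; its $\mathsf{NP}$-hardness is classical (Karp 1972).

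First, I would take an arbitrary DFAS instance $(G,k)$ and construct a \textsc{Negative DFAS} instance $(G,w,k)$ by setting $w(a)=-1$ for every $a\in A(G)$. The key observation is that, under this weight function, every cycle $C$ of $G$ has total weight $w(C)=-|A(C)|<0$, since every cycle has at least one arc. Conversely, any subgraph without cycles has no negative cycles. Hence for any arc set $S\subseteq A(G)$ we have that $G-S$ is acyclic if and only if $G-S$ contains no negative cycle with respect to $w$.

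It follows that $S$ is a directed feedback arc set of $G$ of size at most $k$ if and only if $S$ is a negative directed feedback arc set of $(G,w)$ of size at most $k$. The reduction is computable in linear time, and since DFAS is $\mathsf{NP}$-hard, so is \textsc{Negative DFAS} restricted to weight function $w\equiv -1$. There is essentially no obstacle here: the statement is an immediate corollary of the fact that \textsc{Negative DFAS} with uniform weight $-1$ coincides syntactically with the unweighted DFAS problem.
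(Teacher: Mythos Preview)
Your proof is correct and follows exactly the same approach as the paper: set $w\equiv -1$ and observe that a set $S$ leaves $G-S$ acyclic if and only if $G-S$ has no negative cycles, so DFAS reduces trivially to \textsc{Negative DFAS} with all weights $-1$. The paper states this equivalence in a single sentence; your version just spells out the cycle-weight computation more explicitly.
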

\begin{proof}
  We show the theorem by a reduction from DFAS, which is $\mathsf{NP}$-hard.
  Let $(G, k)$ be an instance of DFAS.
  We claim that $(G, w, k)$ with $w \equiv -1$ is an equivalent instance of \NDFAS{}.
  Indeed, for every $S \subseteq A(G)$ there are no cycles in $G - S$ if and only if $G - S$ with weights~$-1$ contains no negative cycles.
\end{proof}

% Applying Theorem~\ref{thm:NDFASisequivalenttospecialMIS}, then we have the following result.

% \begin{corollary}
% 	\label{cor:NP_hardness_for_positive_arcs_MinFB}
% 	\textsc{MinFB}  is $\mathsf{NP}$-hard when parameterized by $w_+$ even for matrices of difference constraints.
% \end{corollary}

\subsection{\texorpdfstring{$\mathsf{NP}$-Hardness}{NP-Hardness} for Constant Pathwidth \label{sec:Patwidth_NPHard}}

In this section we show that {\sc MinFB} is $\mathsf{NP}$-hard even for constraint matrices $A$ whose pathwidth is bounded by 6.
To this end, we reduce \textsc{Partition} to {\sc Negative DFAS} in directed graphs whose underlying undirected graph has pathwidth at most 6.

\problemdef{Partition}
  {A set $\mathcal{A} = \lbrace a_1,\hdots,a_n\rbrace$ of positive integers.}
  {Find a subset~$\mathcal{A}'$ such that	$\sum_{a_i \in \mathcal{A}'} a_i = \sum_{a_i \in \mathcal{A} \setminus \mathcal{A}'}a_i$ or decide that no such subset exists.}

Using $A = \sum_{i = 1}^n a_i$, we can reformulate {\sc Partition} as the problem of finding a subset $\mathcal{A}'$ such that $\mathcal{A}'$ and $\mathcal{A} \setminus \mathcal{A}'$ both sum up to $\frac{A}{2}$ (or no such subset exists).
Karp~\cite{Karp1972} showed that {\sc Partition} is $\mathsf{NP}$-complete.

\begin{proposition}[Karp~\cite{Karp1972}]
  {\sc Partition} is $\mathsf{NP}$-complete.
\end{proposition}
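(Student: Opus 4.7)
The plan is a standard $\mathsf{NP}$-completeness argument: establish membership in $\mathsf{NP}$ by certificate verification, then reduce a known $\mathsf{NP}$-hard problem to \textsc{Partition}. Membership in $\mathsf{NP}$ is immediate, since a subset $\mathcal{A}' \subseteq \mathcal{A}$ serves as a polynomial-size certificate and one computes $\sum_{a \in \mathcal{A}'} a$ and $\sum_{a \in \mathcal{A} \setminus \mathcal{A}'} a$ in time polynomial in the input description to verify equality.

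For $\mathsf{NP}$-hardness, I would reduce from \textsc{Subset Sum}: given positive integers $a_1, \dots, a_n$ with total sum $S$ and a target $T \leq S$, decide whether some subset sums to exactly $T$. By the symmetry $T \leftrightarrow S - T$, we may assume $T \leq S/2$. Construct the \textsc{Partition} instance $\mathcal{A} := \{a_1, \dots, a_n, b\}$ with a single extra element $b := S - 2T \geq 0$. The new total is $2(S - T)$, so any equipartition must split $\mathcal{A}$ into two halves of sum $S - T$ each. If a subset $A' \subseteq \{a_1, \dots, a_n\}$ sums to $T$, then placing $A' \cup \{b\}$ on one side (sum $T + (S - 2T) = S - T$) and $\{a_1, \dots, a_n\} \setminus A'$ on the other (sum $S - T$) yields a valid partition. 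Conversely, any equipartition of $\mathcal{A}$ puts $b$ on one side, whose remaining elements, all drawn from $\{a_1, \dots, a_n\}$, must then sum to $(S - T) - b = T$, producing the desired subset.

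The main obstacle is not this padding argument, which is elementary, but rather establishing the $\mathsf{NP}$-hardness of the problem one reduces from. \textsc{Subset Sum} itself is not obviously hard, and its standard reduction from a combinatorial problem such as \textsc{3-SAT} (or, along Karp's original route, from \textsc{Exact Cover by 3-Sets} via \textsc{Knapsack}) relies on a positional-digit encoding in a sufficiently large base that digit-wise sums never produce carries. The digits then record which variable assignment or covering set is chosen, and global sum equalities translate into satisfaction of every clause or coverage of every element. That encoding is the real technical content of the theorem; my reduction above is merely a short adjustment that eliminates the explicit target $T$ and turns a subset-sum question into a balanced partition question.
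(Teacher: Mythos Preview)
The paper does not actually prove this proposition; it merely states it with a citation to Karp~\cite{Karp1972}. Your argument therefore supplies more than the paper does, and it is essentially correct. Two minor technicalities worth noting: the paper's definition of \textsc{Partition} requires a set of \emph{positive} integers, so the boundary case $b = S - 2T = 0$ (i.e., $T = S/2$) should be handled separately---there the original \textsc{Subset Sum} instance already is a \textsc{Partition} instance---and if $b$ happens to coincide with some $a_i$ you are forming a multiset rather than a set, which is easily repaired by tagging elements with indices or by a uniform scaling. Your closing remark correctly identifies that the substantive content of the theorem lies in the hardness of \textsc{Subset Sum} itself, established by Karp via the positional-encoding reduction you sketch.
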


Now we are ready to state our hardness result.
\begin{theorem}
\label{thm:NP_hardness_for_pathwidth}
  {\sc Negative DFAS} is $\mathsf{NP}$-hard, even for graphs of pathwidth $6$ and one arc of positive weight.
\end{theorem}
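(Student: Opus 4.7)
The plan is to reduce \textsc{Partition} to \textsc{Negative DFAS} on digraphs of pathwidth at most 6 with exactly one positive arc. Given an instance $(a_1,\dots,a_n)$ of \textsc{Partition} with $A=\sum_i a_i$, I build a digraph with a backbone $v_0,v_1,\dots,v_n$ and, for each element $a_i$, a small \emph{choice gadget} $H_i$ of constantly many auxiliary vertices attached at $v_{i-1}$ and $v_i$. All internal weights of each $H_i$ lie in $\{0,-a_i\}$, and the only positive arc is a single ``return'' arc from $v_n$ to $v_0$ of weight $A/2$. The budget $k$ will equal the total number of local negative cycles planted across the gadgets.

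Each gadget $H_i$ is constructed so that it contains a local negative cycle of weight $-a_i$ that can be killed only by deleting one of two distinguished arcs $e_i^{\mathrm{in}}$ or $e_i^{\mathrm{out}}$; deleting $e_i^{\mathrm{in}}$ forces every surviving $v_{i-1}\to v_i$ traversal to have weight $0$ (``$a_i\notin\mathcal A'$''), while deleting $e_i^{\mathrm{out}}$ forces it to have weight $-a_i$ (``$a_i\in\mathcal A'$''). With $k$ set so that any feasibility blocker must pick \emph{exactly} one of $\{e_i^{\mathrm{in}},e_i^{\mathrm{out}}\}$ per gadget, the corresponding set $I=\{i:e_i^{\mathrm{out}}\in S\}$ then determines the weight of the unique surviving global cycle through the positive arc as $A/2-\sum_{i\in I}a_i$. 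To obtain equality rather than only $\sum_I a_i\le A/2$, I arrange a second ``mirror'' global cycle (traversing the same positive arc but routed through the complementary side of every gadget) whose non-negativity reads $A/2-\sum_{i\notin I}a_i\ge 0$; together with $\sum_I a_i+\sum_{I^c}a_i=A$, this forces $\sum_{I}a_i=A/2$ and yields a balanced partition. Conversely, any balanced partition $(\mathcal A',\mathcal A\setminus\mathcal A')$ directly induces a feasibility blocker of size exactly $k$ by the canonical $e_i^{\mathrm{in}}/e_i^{\mathrm{out}}$ deletions. The pathwidth bound of $6$ is achieved by a path decomposition whose bags slide along the backbone, each containing at most $v_{i-1}$, $v_i$, and the constantly many internal vertices of $H_i$; a direct counting shows that bags can be made of size at most $7$.

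The main obstacle will be designing the gadgets so that the ``mirror'' global cycle exists using the \emph{same} single positive arc — naively one would want two positive arcs, one per direction of the balance constraint, and enforcing both $\sum_I a_i\le A/2$ and $\sum_{I^c}a_i\le A/2$ from a single positive arc requires two distinct simple cycles that share exactly that one arc. A second subtle point is ruling out ``cheap'' feasibility blockers that destroy the positive arc itself or disconnect the backbone; this is handled by planting enough local negative cycles so that any blocker of size $k$ must allocate one deletion per gadget, leaving no slack to remove the positive arc or to sever the backbone.
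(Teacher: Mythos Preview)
Your reduction source (\textsc{Partition}), the single positive return arc of weight $A/2$, and the budget forcing exactly one deletion per gadget are all as in the paper. You also correctly identify the crux: one positive arc must yield \emph{both} inequalities $\sum_{i\in I}a_i\le A/2$ and $\sum_{i\notin I}a_i\le A/2$. However, the construction you describe cannot achieve this. With a single backbone $v_0,\dots,v_n$, every global cycle through the return arc must pass through each $v_i$; you stipulate that after deleting in gadget $i$ ``every surviving $v_{i-1}\to v_i$ traversal'' has a fixed weight ($0$ or $-a_i$). Then your ``mirror'' cycle, which must also traverse $v_{i-1}\to v_i$, picks up the \emph{same} segment weight and yields the same inequality, not the complementary one. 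The ``complementary side'' has nowhere to live on a shared backbone: the minimum-weight $v_0\to v_n$ path is a single number, so the non-negativity of the global cycle is a single constraint.

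The paper fixes this by using two parallel, internally vertex-disjoint chains (levels $j=1,2$) that meet only at the global terminals $s,t$. Gadget $G_i$ contains one arc $(x_i^{(1)},y_i^{(1)})$ of weight $-a_i$ on level~$1$ and one arc $(x_i^{(2)},y_i^{(2)})$ of weight $-a_i$ on level~$2$, joined by weight-$0$ cross arcs into a single local negative cycle; the budget $k=n$ forces exactly one deletion per gadget. Deleting the level-$1$ arc makes the level-$1$ $s_i\to t_i$ segment cost $0$ while level~$2$ keeps $-a_i$, and symmetrically. The two $s\to t$ chains then give two \emph{independent} cycles through the single positive arc $(t,s)$, one encoding $\sum_{i\notin I}a_i\le A/2$ and the other $\sum_{i\in I}a_i\le A/2$. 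The pathwidth-$6$ decomposition puts $s$ in every bag together with six vertices of the current gadget. In short, the repair to your plan is to replace the single backbone by two disjoint chains and let each gadget straddle them.
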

\begin{proof}
  Let $\mathcal{A}$ be an instance of \textsc{Partition} and $A = \sum_{a_i \in \mathcal{A}} a_i$.
  For every number $a_i \in \mathcal{A}$ we construct a gadget $G_i$ as follows (see \Cref{fig:GadgetPartition} for an illustration).
  Let $V^{(i)} = \{ s_i^{(j)}, t_i^{(j)}, x_ i^{(j)}, y_ i^{(j)} \mid j=1,2 \}$ be the set of vertices in $G_i$.
  There are three different types of arcs.
  The first arc set $A_1^{i}=\{ (x_i^{(j)},y_i^{(j)}) \mid j=1,2 \}$ with arc weight $-a_i$, contains the arcs we will consider for deletion later.
  The second arc set $A_2^{i}=\{ (y_i^{(j)},x_i^{(j+1)}) , (y_i^{(j)},x_i^{(j-1)}) \mid j=1,2 \}$ with arc weight $0$, contains arcs which enforce the deletion of arcs from the first arc set by inducing negative cycles.
  The last arc set $A_3^{i}=\{ (s_i^{(j)},t_i^{(j)}), (s_i^{(j)},x_i^{(j)}) , (y_i^{(j)},t_i^{(j)}) \mid j=1,2 \}$ with arc weight $0$, contains arcs that connect the vertices $s_i^{(j)}$ and~$t_i^{(j)}$ to the rest of the gadget.

  \begin{figure}[h]
	\centering
	\quad
		\begin{tikzpicture}[scale=0.8, state/.style={circle,  minimum size=.7cm}]
		\tikzset{>={Latex[width=2mm, length=2mm]}}	
		\node (1) at (2,1) [state, draw, inner sep=-.05mm] {$x_i^{(2)}$};
		\node (2) at (4,1) [state, draw, inner sep=-.05mm] {$y_i^{(2)}$};
		\node (3) at (0,0) [state, draw, inner sep=-.05mm] {$s_i^{(2)}$};
		\node (4) at (6,0) [state, draw, inner sep=-.05mm] {$t_i^{(2)}$};
		\node (5) at (2,3) [state, draw, inner sep=-.05mm] {$x_i^{(1)}$};
		\node (6) at (4,3) [state, draw, inner sep=-.05mm] {$y_i^{(1)}$};
		\node (7) at (0,4) [state, draw, inner sep=-.05mm] {$s_i^{(1)}$};
		\node (8) at (6,4) [state, draw, inner sep=-.05mm] {$t_i^{(1)}$};
		\foreach \a / \b / \c / \pos  in  %a=node1, b=node2, c=cost of edge,pos=0=above
		{7/8/$0$  /above,
			7/5/$0$/below,
			3/4/$0$  /below,
			3/1/$0$/above,
			5/6/$-a_i$/above,	
			1/2/$-a_i$/below,	
			6/8/$0$/below,	
			2/4/$0$/above%
		}
		{ \draw (\a) edge[->] node[\pos]{ \scriptsize \c}  (\b);
		}; 
		\draw (2) edge[->] node[above,pos=0.1]{ \scriptsize $0$ }  (5);
		\draw (6) edge[->] node[below,pos=0.1]{ \scriptsize $0$ }  (1);
		;
		\end{tikzpicture}
	\caption{The gadget graph $G_i$. \label{fig:GadgetPartition}}
  \end{figure}
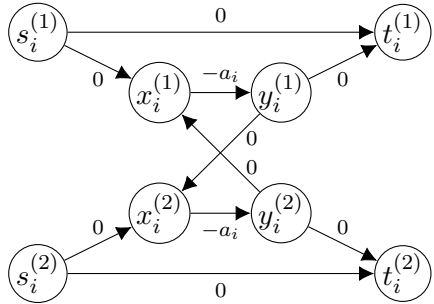

  The whole gadget $G_i$ is then defined as $(V^{(i)} , A_1^{(i)} \cup A_2^{(i)} \cup A_3^{(i)} )$.
  Out of these gadgets we construct the graph $(G, w)$ of our \NDFAS{} instance by taking the union of all gadgets~$G_i$ for $i = 1,\hdots,n$, where we identify $t_i^{(j)}=s_{i+1}^{(j)}$ for $j=1,2$ and $i \in \lbrace 1, \hdots, n-1\rbrace$.
  Additionally, we add two vertices $s$ and~$t$ with the arcs $(s,s_1^{(j)})$ and $(t_{n}^{(j)},t)$ for $j=1,2$ of weight $0$.
  Finally, we add the arc $(t,s)$ of weight $\frac{A}{2}$.
  We then choose $(G, w, k)$ with $k = n$ as our \NDFAS{} instance.
	
  Next, we show that $\mathcal{A}$ has a solution if and only if $(G, w, n)$ has one.
  \begin{claim}
    If $\mathcal{A}$ has a solution $\mathcal{A}'$, then there is a solution $S^\star$ to $(G, w, n)$.
  \end{claim} 
  \begin{claimproof}
    For every $i \in \lbrace 1, \hdots, n\rbrace$, we define an $s_i \in S^\star$ by
    \begin{equation*}
      s_i = \begin{cases}
              (x_i^{(1)}, y_i^{(1)}),	&\text{if $a_i \in \mathcal{A}'$,}\\
			  (x_i^{(2)}, y_i^{(2)}),	&\text{if $a_i \notin \mathcal{A}'$ \enspace .}
			\end{cases}
    \end{equation*}

    Suppose, for sake of contradiction, that $G - S^\star$ contains a negative cycle $C$.
    Observe that for each $i \in \lbrace 1, \hdots, n\rbrace$ the graph $G_i - s_i$ contains no negative cycle.
    Moreover, any gadget~$G_i$ can only be entered through one $s_i^{(j)}$ and left through $t_i^{(j)}=s_{i+1}^{(j)}$ with the same $j$.
    Thus, $C$ has to start in $s$, pass through the gadgets $G_i$ in order either on the $j=1$ or the $j=2$ side, then go to~$t$ and eventually use the backward arc $(t,s)$.
    As the backward arc has weight $\frac{A}{2}$, $C$ contains an $s \to t$-path $P$ of weight less than $\frac{A}{2}$.
    Furthermore, $P$ uses either only vertices with index $j=1$ or $j=2$.
    The only arcs of negative weight with index $j = 1$ in $G - S^\star$ are those $(x_i^{(1)}, y_i^{(1)})$ with $a_i \notin \mathcal{A}'$.
    For these we have $\sum_{a_i \notin \mathcal{A}'} w((x_i^{(1)}, y_i^{(1)})) = \sum_{a_i \notin \mathcal{A}'} -a_i = -\frac{A}{2}$, as $\mathcal{A}'$ is a solution to $\mathcal{A}$.
    Likewise, the only arcs of negative weight with index $j = 2$ in $G - S^\star$ are those $(x_i^{(2)}, y_i^{(2)})$ with $a_i \in \mathcal{A}'$.
    For these we have $\sum_{a_i \in \mathcal{A}'} w((x_i^{(2)}, y_i^{(2)})) = -\sum_{a_i \in \mathcal{A}'} a_i = -\frac{A}{2}$, as $\mathcal{A}'$ is a solution to $\mathcal{A}$.
    Thus, in any case, $P$ has weight at least $-\frac{A}{2}$, a contradiction.
    \begin{figure}
	  \vspace*{1mm}
      \resizebox{\textwidth}{!}{%
      \begin{tikzpicture}[ state/.style={circle,  minimum size=.7cm}]
			\tikzset{>={Latex[width=2mm, length=2mm]}}	
			\foreach \k in {1,...,4}{
				\node (1*\k) at (2+6*\k,1) [state, draw, inner sep=-.05mm] {$x_{\k}^{(2)}$};
				\node (2*\k) at (4+6*\k,1) [state, draw, inner sep=-.05mm] {$y_{\k}^{(2)}$};
				\node (3*\k) at (0+6*\k,0) [state, draw, inner sep=-.05mm] {$s_{\k}^{(2)}$};
				\node (5*\k) at (2+6*\k,3) [state, draw, inner sep=-.05mm] {$x_{\k}^{(1)}$};
				\node (6*\k) at (4+6*\k,3) [state, draw, inner sep=-.05mm] {$y_{\k}^{(1)}$};
				\node (7*\k) at (0+6*\k,4) [state, draw, inner sep=-.05mm] {$s_{\k}^{(1)}$};
				\draw (2*\k) edge[->,line width=0.5mm] (5*\k);
				\draw (6*\k) edge[->,line width=0.5mm] (1*\k);}% 
			\node (3*5) at (0+6*5,0) [state, draw, inner sep=-.05mm] {$t_{4}^{(2)}$};
			\node (7*5) at (0+6*5,4) [state, draw, inner sep=-.05mm] {$t_{4}^{(1)}$};
			\foreach \k in {1,2}{
				\draw (5*\k) edge[->,line width=0.5mm, ACMBlue] node[above]{\scriptsize{$-a_\k$}}  (6*\k);
				\draw (1*\k) edge[->,densely dotted,line width=0.5mm, gray!50]  (2*\k);
				\draw (7*\k) edge[->,line width=0.5mm, ACMBlue]  (5*\k);
				\draw (3*\k) edge[->,line width=0.5mm]  (1*\k);}
			\foreach \k in {3,4}{
				\draw (5*\k) edge[->,densely dotted,line width=0.5mm, gray!50] (6*\k);
				\draw (1*\k) edge[->,line width=0.5mm, ACMOrange] node[above]{\scriptsize{$-a_\k$}}  (2*\k);
				\draw (7*\k) edge[->,line width=0.5mm]  (5*\k);
				\draw (3*\k) edge[->,line width=0.5mm, ACMOrange] (1*\k);	} 
			\foreach \h \k in{6*1/7*2, 6*2/7*3, 7*4/7*5, 7*3/7*4}
			{ \draw (\h) edge[->,line width=0.5mm,ACMBlue]   (\k);}
			\foreach \h \k in{3*1/3*2,3*2/3*3, 2*4/3*5,2*3/3*4}
			{ \draw (\h) edge[->,line width=0.5mm,ACMOrange]   (\k);}
			\foreach \h \k in{7*1/7*2,2*1/3*2,3*4/3*5, 6*4/7*5, 6*3/7*4,7*2/7*3,3*3/3*4,2*2/3*3}
			{ \draw (\h) edge[->,line width=0.5mm] (\k);}
			;
			\end{tikzpicture} }
		\caption{Union of four gadgets after the deletion of negative cycles.
				Deleted arcs are shown with a dotted gray line.
				The blue path shows the shortest path from \mbox{$s_1^{(1)}$ to $t_4^{(1)}$}, and the orange path shows the shortest path from $s_1^{(2)}$ to~$t_4^{(2)}$. \label{fig:GadgetpartitionGraph}}
	\end{figure}
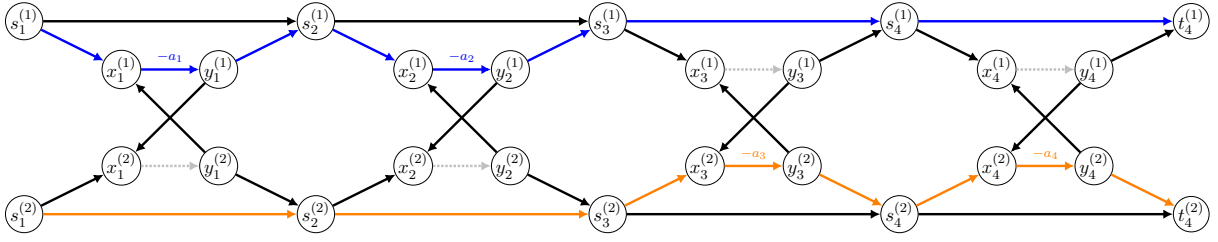
  \end{claimproof}

  \begin{claim}
    If $(G, w, n)$ has a solution $S^\star$ then $\mathcal{A}$ has a solution $\mathcal{A}'$.
  \end{claim}
  \begin{claimproof}
    We choose $\mathcal{A}' = \lbrace a_i \mid (x_i^{(1)}, y_i^{(1)}) \in S^\star\rbrace$.
    Suppose, for sake of contradiction, that $\mathcal{A}'$ is not a solution.
    Then we have that either $\sum_{a_i \in \mathcal{A}'} a_i > \frac{A}{2}$ or $\sum_{a_i \notin \mathcal{A}'} a_i > \frac{A}{2}$.
    We consider now a minimum-weight $s_1^{(j)} \to t_n^{(j)}$-path $P_j$ in $G - S^\star$ for $j = 1, 2$ as depicted in \Cref{fig:GadgetpartitionGraph}.
    Note that for every gadget $G_i$, our solution $S^\star$ has to contain one of the arcs $(x_i^{(1)}, y_i^{(1)})$, $(x_i^{(2)}, y_i^{(2)})$, $(y_i^{(1)}, x_i^{(2)})$ and $(y_i^{(2)}, x_i^{(1)})$.
    As $S^\star$ contains only $n$ elements (one for each gadget), the arcs $(s_i^{(j)}, t_i^{(j)})$ are always available and have weight $0$.
    Moreover, if $(x_i^{(j)}, y_i^{(j)})$ is undeleted, the $s_i^{(j)} \to t_i^{(j)}$-path $(s_i^{(j)}, x_i^{(j)}) \circ (x_i^{(j)}, y_i^{(j)}) \circ (y_i^{(j)}, t_i^{(j)})$ of weight $-a_i$ exists.
    Thus, a minimum-weight $s_1^{(j)} \to t_n^{(j)}$-path~$P_j$ has weight at most $-\sum_{a_i \in \mathcal{A}_j} a_i$ where $\mathcal{A}_j$ is the set of $a_i$'s for which $(x_i^{(j)}, y_i^{(j)})$ is undeleted.
    By choice of $\mathcal{A}'$, we have that $\mathcal{A}_1 = \mathcal{A} \setminus \mathcal{A}'$ and $\mathcal{A}_2 \supseteq \mathcal{A}'$.
    So, if $\sum_{a_i \in \mathcal{A}'} a_i > \frac{A}{2}$, we have that $w(P_2) \leq -\sum_{a_i \in \mathcal{A}_2} a_i \leq -\sum_{a_i \in \mathcal{A}'} a_i < -\frac{A}{2}$.
    If instead $\sum_{a_i \notin \mathcal{A}'} a_i > \frac{A}{2}$, we have that $w(P_1) \leq -\sum_{a_i \in \mathcal{A}_1} a_i \leq -\sum_{a_i \notin \mathcal{A}'} a_i < -\frac{A}{2}$.
    Thus, in any case we have an \mbox{$s_1^{(j)} \to t_n^{(j)}$-path $P_j$} in $G - S^\star$ of weight less than $- \frac{A}{2}$ for some $j \in \lbrace 1, 2 \rbrace$.
	
    Again, our solution $S^\star$ contains exactly one arc of every gadget and thus no arc incident to~$s$ or $t$.
    So, we can complete this path $P_j$ to a cycle $(s, s_1^{(j)}) \circ P_j \circ (t_n^{(j)}, t) \circ (t, s)$ in $G - S^\star$ of weight $w(P_j) + \frac{A}{2} < 0$, a contradiction to $G - S^\star$ containing no negative cycles.
  \end{claimproof}

  This completes the reduction from \textsc{Partition} to \textsc{Negative DFAS} and therefore shows the $\mathsf{NP}$-hardness of {\sc Negative DFAS}.
  It only remains to bound the pathwidth of the generated instances. 

  We show that the underlying graph of $G$ has pathwidth at most 6, by providing a path decomposition $(P, \mathcal{B})$ of $G$ of width 6. Let $P$ be the path on $2n + 1$ vertices, corresponding to bags $B_1 , . . . , B_{2n+1} \in \mathcal{B}$: 
  \begin{itemize}
    \item $B_{2i-1} = \{ s, s_i^{(1)}, s_i^{(2)}, x_i^{(1)}, x_i^{(2)} , y_i^{(1)}, y_i^{(2)} \} $ for $i= 1, \hdots, n$ 
    \item $B_{2i} = \{ s, s_i^{(1)}, s_i^{(2)} , y_i^{(1)}, y_i^{(2)}, t_i^{(1)}, t_i^{(2)} \} $ for $i= 1, \hdots, n$ 
    \item $B_{2n+1} = \{ s, t_n^{(1)}, t_n^{(2)}, t \} $ 
  \end{itemize}
  Also, the arc $(t, s)$ is the only arc of positive weight.
\end{proof}

\subsection{\texorpdfstring{\Whard{1}ness}{W[1]-hardness} for Treedepth and Few Positive Arcs}
In this section we prove \Whard{1}ness for \textsc{MinFB} when parameterized by treedepth and number of positive arcs, by reducing \NDFAS{} to \textsc{Clique}.
\begin{theorem}
  \label{thm:W1_hardness_for_treedepth_and_number_positive_arcs}
  {\sc Negative DFAS} is \Whard{1} when parameterized by the treedepth and number of positive arcs.
\end{theorem}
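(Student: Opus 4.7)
The plan is to reduce from \MulticoloredClique{}, which is \Whard{1} parameterized by the number of colors $k'$. Given an instance $(H,V_1,\dots,V_{k'})$ I would construct a {\sc Negative DFAS} instance $(G,w,k)$ in which both $\td(G)$ and $w_+$ are bounded by a function of $k'$, and such that a feasibility blocker of the prescribed size exists if and only if $H$ contains a multicolored clique. The overall correctness argument will go through feasible potentials via \Cref{thm:gallai}: deleting the arcs associated with a clique leaves a graph that admits a feasible potential, and conversely any surviving feasible potential encodes a choice of one vertex per color together with a compatible edge between every pair of colors.

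The construction will be organised around a small \emph{hub} of $\mathcal O(k'^2)$ central vertices: a pair $(s_i,t_i)$ per color class $i$, together with a constant number of attachment vertices per pair of colors. All remaining vertices lie in gadgets sharing with the hub only their designated attachment points, so that deleting the hub decomposes $G$ into gadgets of constant depth, giving $\td(G)\le f(k')$. For each color class $V_i$ I would add a \emph{selection gadget} between $s_i$ and $t_i$ consisting of one candidate structure per vertex $v\in V_i$, each carrying a weight that encodes $v$ by an integer~$\phi(v)$ taken as a power of a sufficiently large base. The gadget is designed so that the cheapest way to destroy all negative cycles inside it is to delete a single arc, and that choice pins down the potential difference $\pi(s_i)-\pi(t_i)$ to the encoded value of the selected vertex. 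Crucially, because the combinatorial content is carried by large \emph{integer} weights rather than by topology, each selection gadget can be built with only $\mathcal O(1)$ positive arcs.

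For each pair of colors $\{i,j\}$ I would add an \emph{edge-verification gadget} attached only to $s_i,t_i,s_j,t_j$, containing one small sub-gadget per edge of $H$ between $V_i$ and~$V_j$. The weights are chosen so that, after the selection gadgets have fixed $\pi(s_i)-\pi(t_i)$ and $\pi(s_j)-\pi(t_j)$, the edge gadget admits a feasible potential extension if and only if the two selected values correspond to the endpoints of some actual edge of~$H$; again only $\mathcal O(1)$ positive arcs per pair are needed, giving $w_+ = \mathcal O(k'^2)$ in total. The budget $k:=k'+\binom{k'}{2}$ then allows exactly one deletion per selection and per edge gadget, and a standard forward/reverse argument equates solutions of $(G,w,k)$ with multicolored cliques in~$H$.

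The main obstacle I anticipate is ruling out \emph{unintended} negative cycles obtained by combining arcs across different selection and edge gadgets in ways that do not correspond to a clique choice. This is exactly what forces the encoding $\phi$ to be superpolynomial (e.g., $\phi(v) = B^v$ for a base $B$ larger than any sum of $\mathcal O(k'^2)$ gadget weights): superpolynomial spacing guarantees that a signed sum of encodings cancels only in the intended way, so that a spurious negative cycle would force two selections to agree with two incompatible edges, a contradiction. Once this separation property is in place, the parameter bounds $\td(G),w_+\le \mathrm{poly}(k')$ are immediate from the hub-based layout, and the equivalence with multicolored clique follows by the potential argument outlined above.
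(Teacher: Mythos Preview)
Your proposal contains a genuine and fatal gap: you set the deletion budget to $k=k'+\binom{k'}{2}$, which is bounded by a function of~$k'$, while also claiming $\td(G)\le f(k')$. If such a reduction existed, it would establish \Whard{1}ness of {\sc Negative DFAS} for the combined parameter $\td(G)+k$. But this parameter combination is in~$\mathsf{FPT}$: by \Cref{lem:path_and_cycle_length_bounded_by_treedepth} every cycle has length at most $2^{\td(G)-1}$, so the branching algorithm of \Cref{lem:algorithm_for_bounded_negative_cycle_length_plus_k} runs in time $\mathcal O(2^{k\,\td(G)}n^3)$ for \emph{arbitrary} integer weights (this is \Cref{thm:algorithm_for_td_plus_k}). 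Hence, unless $\mathsf{FPT}=\mathsf{W}[1]$, no correct reduction of the shape you describe can exist. The superpolynomial encoding $\phi(v)=B^v$ does not help here, since the FPT algorithm is oblivious to weight magnitudes.

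This also explains why your gadget sketch cannot be completed: a selection gadget that is cleaned up by a \emph{single} deletion yet still distinguishes among $|V_i|$ candidates must either contain a long cycle (treedepth $\Omega(\log|V_i|)$ after removing the hub) or consist of $|V_i|$ short cycles sharing one common arc (in which case deleting that arc selects nothing). The paper resolves this by allowing the deletion budget to grow with the instance: each meta gadget $R^{a,b}$ (with $a\in\{n,m\}$) requires exactly $5(a+3)$ deletions, and the total budget is $d=5k(n+3)+\tfrac{5}{2}k(k+1)(m+3)$. The meta gadget is engineered so that a minimum-size solution is forced into a rigid combinatorial pattern encoding a unique index $p^\star\in\{1,\dots,a\}$, which then controls the weight of the $v^r_{\textsf{in}}\to v^r_{\textsf{out}}$ paths. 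Only $\td(G)$ and $w_+$ are kept bounded in~$k$; the budget is deliberately left unbounded.
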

\begin{proof}
  We prove the theorem by reduction from \textsc{Clique}.
  The input of \textsc{Clique} consists of an undirected graph $G$ and an integer $k$.
  The question is, whether there is a vertex set $X \subseteq V(G)$ of at least $k$ vertices such that $G[X]$ forms a complete graph.
  \textsc{Clique} is $\mathsf{W}[1]$-hard parameterized by $k$, see, e.g., the book by Cygan et al.~\cite[Section 13]{CyganEtAl2015}.
	
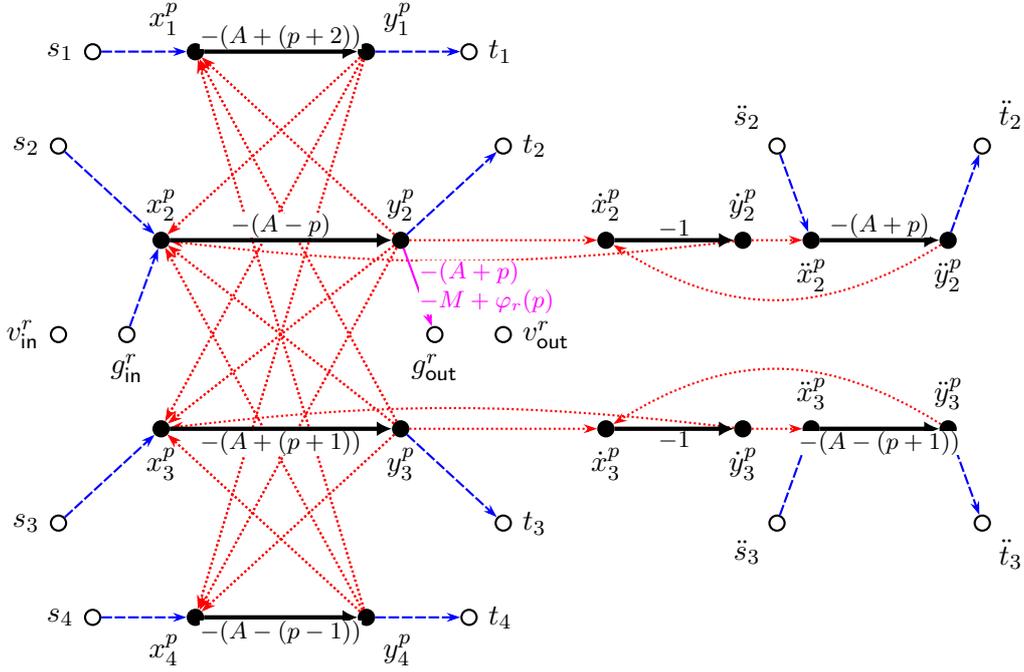
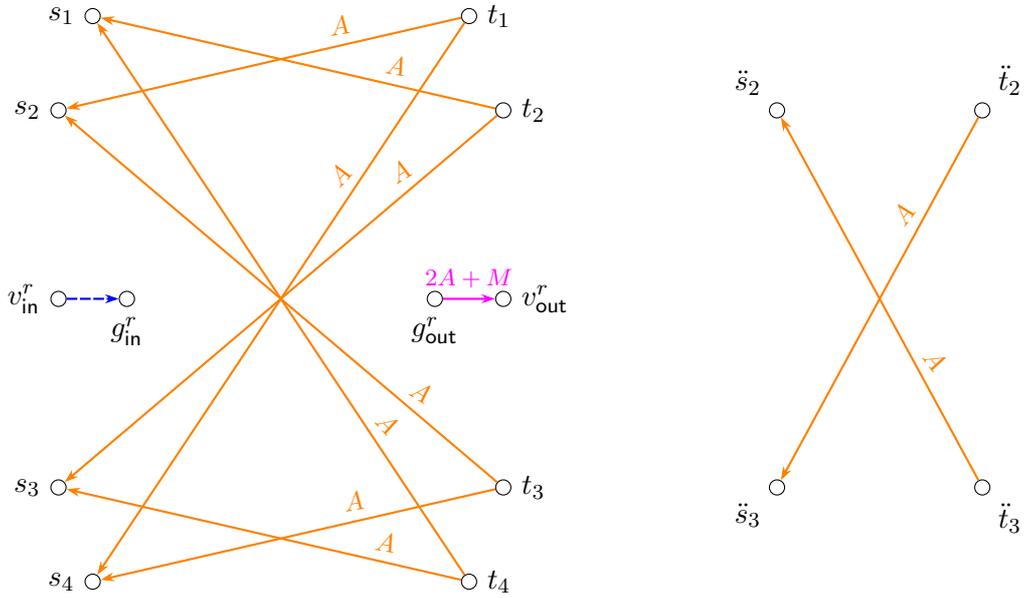
\begin{figure}[p]
		\begin{subfigure}[c]{\textwidth}
		\centering
		\begin{tikzpicture}[xscale=0.9, yscale=1.25, hollow/.style={circle, minimum size=2mm}, filled/.style={fill, hollow}, thick]
			%\draw[help lines] (0,0) grid (10,6);
			%\tikzset{>={Latex[width=2mm, length=2mm]}}
			\tikzset{>={Stealth[width=1.2mm, length=1.8mm]}}
			\tikzset{whitelabel/.style={fill=white, on layer=foreground, inner sep=0}}
			\pgfdeclarelayer{foreground}
			\foreach \vertexname/\vertexlabel/\labelpos/\vertexstyle/\x/\y in {%
				s1/$s_1$/180/hollow/0.5/6,
				s2/$s_2$/180/hollow/0/5,
				s3/$s_3$/180/hollow/0/1,
				s4/$s_4$/180/hollow/0.5/0,
				t1/$t_1$/0/hollow/6.0/6,
				t2/$t_2$/0/hollow/6.5/5,
				t3/$t_3$/0/hollow/6.5/1,
				t4/$t_4$/0/hollow/6.0/0,
				dds2/$\ddot{s}_2$/135/hollow/10.5/5,
				dds3/$\ddot{s}_3$/225/hollow/10.5/1,
				ddt2/$\ddot{t}_2$/45/hollow/13.5/5,
				ddt3/$\ddot{t}_3$/315/hollow/13.5/1,
				gin/$g^r_\textsf{in}$/270/hollow/1.0/3,
				gout/$g^r_\textsf{out}$/270/hollow/5.5/3,
				vin/$v^r_\textsf{in}$/180/hollow/0/3,
				vout/$v^r_\textsf{out}$/0/hollow/6.5/3,
				x1/$x^p_1$/135/filled/2/6,
				x2/$x^p_2$/90/filled/1.5/4,
				x3/$x^p_3$/270/filled/1.5/2,
				x4/$x^p_4$/225/filled/2/0,
				y1/$y^p_1$/45/filled/4.5/6,
				y2/$y^p_2$/90/filled/5.0/4,
				y3/$y^p_3$/270/filled/5.0/2,
				y4/$y^p_4$/315/filled/4.5/0,
				dx2/$\dot{x}^p_2$/90/filled/8/4,
				dx3/$\dot{x}^p_3$/270/filled/8/2,
				dy2/$\dot{y}^p_2$/90/filled/10/4,
				dy3/$\dot{y}^p_3$/270/filled/10/2,
				ddx2/$\ddot{x}^p_2$/270/filled/11/4,
				ddx3/$\ddot{x}^p_3$/90/filled/11/2,
				ddy2/$\ddot{y}^p_2$/270/filled/13/4,
				ddy3/$\ddot{y}^p_3$/90/filled/13/2%
			}{
				\node (\vertexname) [\vertexstyle, label={\labelpos:{\vertexlabel}}, draw, inner sep=-.05mm] at (\x,\y) {};
			}

			\begin{scope}[ACMRed, densely dotted]			
			
			\foreach \tail / \head / \label / \orientation /\pos  in {%
				y1/x2/$-1$/above/0.2,
				y1/x3/$-1$/above/0.2,
				y1/x4/$-1$/above/0.2,
				y2/x1/$-1$/above/0.2,
				y2/x3/$-1$/above/0.2,
				y2/x4/$-1$/above/0.2,
				y3/x1/$-1$/above/0.2,
				y3/x2/$-1$/above/0.2,
				y3/x4/$-1$/above/0.2,
				y4/x1/$-1$/above/0.2,
				y4/x2/$-1$/above/0.2,
				y4/x3/$-1$/above/0.2%
				}{
					\draw (\tail) edge[->]  (\head);
				}
				
			\foreach \tail / \head / \out / \in / \label / \orientation /\pos  in {%
				y2/dx2/0/180/$-1$/above/0.5,
				dy2/x2/185/355/$-1$/below/0.5,
				dy2/ddx2/0/180/$-1$/above/0.5,
				ddy2/dx2/205/335/$-1$/below/0.5,
				y3/dx3/0/180/$-1$/above/0.5,
				dy3/x3/175/5/$-1$/above/0.5,
				dy3/ddx3/0/180/$-1$/above/0.5,
				ddy3/dx3/155/25/$-1$/above/0.5%
				}{
					\draw (\tail) edge[->,out=\out,in=\in] (\head);
				}
				
				\foreach \tail / \head / \label / \orientation /\pos  in {%
				y1/x2/$-1$/above/0.2,
				y1/x3/$-1$/above/0.2,
				y1/x4/$-1$/above/0.2,
				y2/x1/$-1$/above/0.2,
				y2/x3/$-1$/above/0.2,
				y2/x4/$-1$/above/0.2,
				y3/x1/$-1$/above/0.2,
				y3/x2/$-1$/above/0.2,
				y3/x4/$-1$/above/0.2,
				y4/x1/$-1$/above/0.2,
				y4/x2/$-1$/above/0.2,
				y4/x3/$-1$/above/0.2%
				}{
					\draw (\tail) edge[->]  (\head);
				}
			\end{scope}
			
			\begin{scope}[thick, ACMBlue, dash pattern= on 4pt off 1 pt]
				\foreach \tail / \head / \label / \orientation /\pos  in {%
				s1/x1/$0$/above/0.2,
				s2/x2/$0$/above/0.2,
				s3/x3/$0$/above/0.2,
				s4/x4/$0$/above/0.2,
				y1/t1/$0$/above/0.2,
				y2/t2/$0$/above/0.2,
				y3/t3/$0$/above/0.2,
				y4/t4/$0$/above/0.2,
				dds2/ddx2/$0$/above/0.2,
				dds3/ddx3/$0$/above/0.2,
				ddy2/ddt2/$0$/above/0.2,
				ddy3/ddt3/$0$/above/0.2,
				gin/x2/$0$/above/0.2%
				}{
					\draw (\tail) edge[->]  (\head);
				}
			\end{scope}
			
			\begin{scope}[ACMPurple]
				\foreach \tail / \head / \label / \orientation /\pos  in {%
				y2/gout/\parbox{1.9cm}{$- (A + p)$\\$- M + \varphi_r(p)$}/right/0.5%
				}{
					\draw (\tail) edge[->, thick] node[\orientation=0.2, pos=\pos, whitelabel]{ \footnotesize \label}  (\head);
				}
			\end{scope}
			
			\foreach \tail / \head / \label / \orientation /\pos  in {%
				x1/y1/$-(A + (p+2))$/above/0.5,
				x2/y2/$-(A - p)$/above/0.5,
				x3/y3/$-(A + (p+1))$/below/0.5,
				x4/y4/$-(A - (p-1))$/below/0.5,
				dx2/dy2/$-1$/above/0.5,
				dx3/dy3/$-1$/below/0.5,
				ddx2/ddy2/$-(A + p)$/above/0.5,
				ddx3/ddy3/$-(A - (p + 1))$/below/0.5%
				}{
					\draw (\tail) edge[-{Latex[width=2mm, length=2mm]}, ultra thick] node[sloped, \orientation =0.25, pos=\pos, whitelabel]{\footnotesize \label}  (\head);
				}
		\end{tikzpicture}
		\subcaption{Meta gadget (sub-level view).}
		\end{subfigure}
		
		\vspace*{3mm}
		
		\begin{subfigure}[c]{\textwidth}
		\centering
		\begin{tikzpicture}[xscale=0.9, yscale=1.25, hollow/.style={circle, minimum size=2mm}, filled/.style={fill, hollow}]
			%\draw[help lines] (0,0) grid (10,6);
			%\tikzset{>={Latex[width=2mm, length=2mm]}}
			\tikzset{>={Stealth[width=1.2mm, length=1.8mm]}}
			\foreach \vertexname/\vertexlabel/\labelpos/\vertexstyle\x/\y in {%
				s1/$s_1$/180/hollow/0.5/6,
				s2/$s_2$/180/hollow/0/5,
				s3/$s_3$/180/hollow/0/1,
				s4/$s_4$/180/hollow/0.5/0,
				t1/$t_1$/0/hollow/6.0/6,
				t2/$t_2$/0/hollow/6.5/5,
				t3/$t_3$/0/hollow/6.5/1,
				t4/$t_4$/0/hollow/6.0/0,
				dds2/$\ddot{s}_2$/135/hollow/10.5/5,
				dds3/$\ddot{s}_3$/225/hollow/10.5/1,
				ddt2/$\ddot{t}_2$/45/hollow/13.5/5,
				ddt3/$\ddot{t}_3$/315/hollow/13.5/1,
				gin/$g^r_\textsf{in}$/270/hollow/1.0/3,
				gout/$g^r_\textsf{out}$/270/hollow/5.5/3,
				vin/$v^r_\textsf{in}$/180/hollow/0/3,
				vout/$v^r_\textsf{out}$/0/hollow/6.5/3%
			}{
				\node (\vertexname) [\vertexstyle, label distance=4mm, label=\labelpos:{\vertexlabel}, draw, inner sep=-.05mm] at (\x,\y) {};
			}

			\begin{scope}[thick, ACMBlue, dash pattern= on 4pt off 1 pt]
				\foreach \tail / \head / \label / \orientation /\pos  in {%
				vin/gin/$0$/above/0.2%
				}{
					\draw (\tail) edge[->]  (\head);
				}
			\end{scope}
			\begin{scope}[ACMPurple, thick]
				\foreach \tail / \head / \label / \orientation /\pos  in {%
				gout/vout/$2A + M$/above/0.5%
				}{
					\draw (\tail) edge[->] node[\orientation=0.1, pos=\pos]{ \footnotesize \label}  (\head);
				}
			\end{scope}
			
			\begin{scope}[ACMOrange, thick]
				\foreach \tail / \head / \label / \orientation /\pos  in {%
				t1/s2/$A$/above/0.3,
				t1/s4/$A$/above/0.3,
				t2/s1/$A$/above/0.26,
				t2/s3/$A$/above/0.2,
				t3/s2/$A$/above/0.2,
				t3/s4/$A$/above/0.35,
				t4/s1/$A$/above/0.25,
				t4/s3/$A$/above/0.2,
				ddt2/dds3/$A$/above/0.3,
				ddt3/dds2/$A$/above/0.3%
				}{
					\draw (\tail) edge[->] node[sloped, \orientation, pos=\pos]{ \small \label}  (\head);
				}
			\end{scope}
				
		\end{tikzpicture}
		\subcaption{meta Gadget (top-level view).}
		\end{subfigure}
		
		\vspace*{-4pt}

		\caption{Overview of construction of a meta gadget.
				The dotted red arcs have weight~$-1$, the dashed blue arcs have weight $0$, the solid orange arcs have weight $A$.
				The black arcs model the choice we make for each gadget and the purple arcs are used to extract information from this choice.
				For both of these arc types the weight is individual per arc.\label{fig:GadgetTreedepth}}
	\end{figure}
	
	First we introduce a meta gadget that allows us to model different choices as minimum deletion sets influencing the minimum weight of a path between prescribed pairs of vertices.
	See \Cref{fig:GadgetTreedepth} for an illustration.
	Let $a, b \in \mathbb{Z}_{>0}$ be positive integers.
	For every $r \in \lbrace 1, \hdots, b\rbrace$, let $\varphi_r: \lbrace 1, \hdots, a\rbrace \to \mathbb{Z} \cap [-M , M]$ be some function.
	We define the gadget~$R^{a,b}$ in the following way.
	Let $A = a + 4$.
	For every $p \in \lbrace -1, 0, \hdots, a+1\rbrace$ we have a smaller gadget consisting of the vertices $x^p_i, y^p_i$ for $i \in \lbrace 1, 2,3,4\rbrace$ and $\dot{x}^p_i, \dot{y}^p_i, \ddot{x}^p_i, \ddot{x}^p_i$ for $i \in \lbrace 2,3\rbrace$ that are interconnected by the following arcs:
	\begin{itemize}
      \item arc $(x^p_1, y^p_1)$ of weight $-(A + (p + 2))$,
      \item arc $(x^p_2, y^p_2)$ of weight $-(A - p)$,
      \item arc $(x^p_3, y^p_3)$ of weight $-(A + (p + 1))$,
      \item arc $(x^p_4, y^p_4)$ of weight $-(A - (p - 1))$,
      \item arc $(\dot{x}^p_i, \dot{y}^p_i)$ of weight $-1$ for $i \in \lbrace 2,3\rbrace$,
      \item arc $(\ddot{x}^p_2, \ddot{y}^p_2)$ of weight $-(A + p)$,
      \item arc $(\ddot{x}^p_3, \ddot{y}^p_3)$ of weight $-(A - (p + 1))$,
      \item arcs $(y^p_i, x^p_j)$ for any distinct $i,j \in \lbrace 1,2,3,4\rbrace$ all of weight $-1$,
      \item arcs $(y^p_i, \dot{x}^p_i), (\dot{y}^p_i, x^p_i), (\dot{y}^p_i, \ddot{x}^p_i), (\ddot{y}^p_i, \dot{x}^p_i)$ for $i \in \lbrace 2,3\rbrace$ all of weight $-1$.
	\end{itemize}
	For $p = -1$, we duplicate the arc $(x^{-1}_1, y^{-1}_1)$ and for $p = a + 1$, we duplicate the arc $(x^{a+1}_4, y^{a+1}_4)$.
	Moreover, we duplicate the arcs $(\dot{y}^p_i, \ddot{x}^p_i), (\ddot{y}^p_i, \dot{x}^p_i)$ for $i \in \lbrace 2,3\rbrace$.
	
	Shared by all these mini gadgets, there are vertices $s_i, t_i$ for $i \in \lbrace 1,2,3,4\rbrace$.
	These are connected to the previous vertices by the arcs $(s_i, x^p_i)$ and $(y^p_i, t_i)$ for $i \in \lbrace 1,2,3,4\rbrace$, all of weight~$0$.
	The following arcs run between the shared vertices: $(t_i, s_j)$ for $i,j \in \lbrace 1,2,3,4\rbrace$ with $|i - j|$ odd, all of weight $A$.
	Moreover there are vertices $\ddot{s}_i, \ddot{t}_i$ for $i \in \lbrace 2,3\rbrace$.
	These are adjacent to the arcs $(\ddot{s}_i, \ddot{x}^p_i)$ and $(\ddot{y}^p_i, \ddot{t}_i)$ of weight $0$ for $i \in \lbrace 2,3\rbrace$, as well as to the arcs $(\ddot{t}_2, \ddot{s}_3)$ and $(\ddot{t}_3, \ddot{s}_2)$, both of weight $A$.
	
	This defines the functional part of the gadget.
	To extract the information we want, we add for every $r \in \lbrace 1, \hdots, b\rbrace$ the vertices $g^r_\textsf{in}$ and $g^r_\textsf{out}$ to gather the information of the mini gadgets.
	For every $p \in \lbrace 1, \hdots, a\rbrace$ these vertices are connected to the mini-gadgets by an arc $(g^r_\textsf{in}, x^p_2)$ of weight $0$ and an arc $(y^p_2, g^r_\textsf{out})$ of weight $- (A + p) - M + \varphi_r(p)$.
	
	Finally, for every $r \in \lbrace 1, \hdots, b\rbrace$, the gadget $R^{a,b}$ contains the vertices $v^r_\textsf{in}$ and $v^r_\textsf{out}$ that defines the interface to the outside.
	These are connected to $g^r_\textsf{in}$ and $g^r_\textsf{out}$ by an arc $(v^r_\textsf{in}, g^r_\textsf{in})$ of weight $0$ and an arc $(g^r_\textsf{out}, v^r_\textsf{out})$ of weight $2A + M$.
	
  We prove a series of claims about these meta-gadgets.
  \begin{claim}
  \label{claim:structure_of_mini_gadget_solutions}
    Any solution to $R^{a,b}$ as {\sc Negative DFAS} instance has size at least $5(a + 3)$.
    Moreover, any solution of size at most $5(a + 3)$ deletes
    \begin{itemize}
      \item exactly three of the four arcs $(x^p_i, y^p_i)$ with $i \in \lbrace 1,2,3,4\rbrace$,
      \item exactly one of the two arcs $(\dot{x}^p_2, \dot{y}^p_2)$ and $(\ddot{x}^p_2, \ddot{y}^p_2)$, and
      \item exactly one of the two arcs $(\dot{x}^p_3, \dot{y}^p_3)$ and $(\ddot{x}^p_3, \ddot{y}^p_3)$.
    \end{itemize}
  \end{claim}
  \begin{claimproof}
    We prove that any solution to $R^{a,b}$ as \NDFAS{} instance has to delete at least five arcs from every mini-gadget.
    For any $p \in \lbrace -1, 0, \hdots, a+1\rbrace$ and consider the following cycles.
	For distinct $i,j \in \lbrace 1,2,3,4\rbrace$ there is a cycle
	\begin{equation*}
      (x^p_i, y^p_i) \circ (y^p_i, x^p_j) \circ (x^p_j, y^p_j) \circ (y^p_j, x^p_i),
    \end{equation*}
	which has negative weight as all arcs are negative.
	Moreover, there are the cycles
    \begin{equation*}
      (\ddot{x}^p_i, \ddot{y}^p_i) \circ (\ddot{y}^p_i, \dot{x}^p_i) \circ (\dot{x}^p_i, \dot{y}^p_i) \circ (\dot{y}^p_i, \ddot{x}^p_i)
    \end{equation*}
	for $i \in \lbrace 2,3\rbrace$, both of which are negative as all their arcs are negative.
	Now, any set of four arcs is disjoint from at least one of these cycles.
	So every solution deletes at least five arcs from every mini-gadget.
	As there are $a+3$ arc-disjoint mini-gadgets, every solution has size at least $5(a+3)$.
		
	Note, any solution that deletes exactly five arcs from a mini-gadget must use at least two of these arcs to hit cycles involving the dotted vertices.
	Thus, there are only three arcs to intersect the cycles on non-dotted vertices.
	As the cycles for $(i,j)$ and $(j,i)$ are distinct (but not disjoint), these three arcs have the form $(x^p_i, y^p_i)$ as otherwise one of the non-dotted cycles is not hit.
	To hit the two cycles involving dotted vertices, a solution has to spend exactly one arc on every cycle.
	As the arcs $(\dot{y}^p_i, \ddot{x}^p_i)$ and $(\ddot{y}^p_i, \dot{x}^p_i)$ are doubled, the solution must delete either $(\dot{x}^p_i, \dot{y}^p_i)$ or $(\ddot{x}^p_i, \ddot{y}^p_i)$ for $i \in \lbrace 2,3\rbrace$.
  \end{claimproof}
	
  \begin{claim}
  \label{claim:treedepth_hardness_explicit_solution}
    Consider $R^{a,b}$ as {\sc Negative DFAS} instance.
	Then for every $p^\star \in \lbrace 1, \hdots, a\rbrace$ there is a solution $S_{p^\star}$ of size $5(a+3)$ such any  $v^r_\textsf{in} \to v^r_\textsf{out}$-path in $R^{a,b} - S_{p^\star}$ has weight $\varphi^r(p^\star)$ for any $r \in \lbrace 1, \hdots, b\rbrace$.
  \end{claim}
  \begin{claimproof}
    Let $S_{P^\star}$ consist of the following arcs with non-dotted endpoints:
	\begin{itemize}
	  \item for $-1 \leq p < p^\star - 1$, the arcs $(x^p_2, y^p_2)$, $(x^p_3, y^p_3)$, and $(x^p_4, y^p_4)$,
	  \item for $p^\star - 1$ the arcs $(x^{p^\star - 1}_1, y^{p^\star - 1}_1)$, $(x^{p^\star - 1}_2, y^{p^\star - 1}_2)$ and $(x^{p^\star - 1}_4, y^{p^\star + 1}_4)$,
	  \item for $p^\star$ the arcs $(x^{p^\star}_1, y^{p^\star}_1)$, $(x^{p^\star}_3, y^{p^\star}_3)$ and $(x^{p^\star}_4, y^{p^\star}_4)$,
	  \item for $p^\star < p \leq a + 1$, the arcs $(x^p_1, y^p_1)$, $(x^p_2, y^p_2)$, and $(x^p_3, y^p_3)$.
	\end{itemize}
	Additionally, $S_{p^\star}$ contains the following arcs with dotted endpoints:
	\begin{itemize}
	  \item for $-1 \leq p < p^\star - 1$, the arcs $(\ddot{x}^p_2, \ddot{y}^p_2)$ and $(\ddot{x}^p_3, \ddot{y}^p_3)$,
	  \item for $p^\star - 1$ the arcs $(\ddot{x}^{p^\star - 1}_2, \ddot{y}^{p^\star - 1}_2)$ and $(\dot{x}^{p^\star - 1}_3, \dot{y}^{p^\star - 1}_3)$,
	  \item for $p^\star$ the arcs $(\dot{x}^{p^\star}_2, \dot{y}^{p^\star}_2)$ and $(\ddot{x}^{p^\star}_3, \ddot{y}^{p^\star}_3)$,
	  \item for $p^\star < p \leq a + 1$, the arcs $(\ddot{x}^p_2, \ddot{y}^p_2)$ and $(\ddot{x}^p_3, \ddot{y}^p_3)$.
	\end{itemize}
		
	We have to check that $R^{a,b} - S_{p^\star}$ has no cycle of negative weight.
	Note that the vertices $x^p_i$ where $(x^p_i, y^p_i)$ is deleted have out-degree zero and thus are not part of any cycles.
	The same holds for the vertices $y^p_i$ where $(x^p_i, y^p_i)$ is deleted as they have in-degree zero.
	The argument also holds for $\ddot{x}^p_i$ and $\ddot{y}^p_i$ with $(\ddot{x}^p_i, \ddot{y}^p_i)$ deleted.		
	In addition, $\dot{x}^p_i$ and $\dot{y}^p_i$ are never part of any cycle, as for any $p$ either $(\dot{x}^p_i, \dot{y}^p_i)$ is deleted, and they have either in- or out-degree zero, or $(x^p_i, y^p_i)$ and $(\ddot{x}^p_i, \ddot{y}^p_i)$ are deleted, of which at least one is part of any cycle involving $\dot{x}^p_i$ and $\dot{y}^p_i$.
	Moreover, the vertices~$v^r_\textsf{in},v^r_\textsf{out},g^r_\textsf{in}$ and $g^r_\textsf{out}$ are never part of any cycle in $R^{a,b}$.
	So the only vertices on a cycle are
	\begin{itemize}
	  \item the vertices $s_i, t_i$ (non-dotted, top level),
	  \item the vertices $x^p_i, y^p_i$ where $(x^p_i, y^p_i)$ is not deleted, (non-dotted, mini-gadget),
	  \item the vertices $\ddot{s}_i, \ddot{t}_i$ (dotted, top level),
	  \item the vertices $\ddot{x}^p_i, \ddot{y}^p_i$ where $(\ddot{x}^p_i, \ddot{y}^p_i)$ is not deleted, (dotted, mini-gadget).
	\end{itemize}
	Note that as $\dot{x}^p_i$ and $\dot{y}^p_i$ are never part of any cycle, there is no cycle involving both the non-dotted and the dotted part of our gadget.
	Thus, we can analyze them separately.
		
	First we consider the non-dotted part.
	By leaving out the vertices that do not form a cycle (see above), the mini-gadgets form an $s_i \to t_i$-path for certain $i$'s.
	The only remaining arcs (those that are not part of these paths) are the backwards arcs $(t_i, s_j)$ on the top-level.
	We analyze the $s_i \to t_i$-paths first.
	As we are interested only whether there are negative cycles and not their exact weight, we may restrict ourselves to minimum weight paths of this kind.
	Observe that in $R^{a,b} - S_{p^\star}$,
	\begin{itemize}
	  \item any $s_1 \to t_1$-path has weight at least $-(A + p^\star)$,
	  \item the only $s_2 \to t_2$ path has weight $-(A-p^\star)$,
	  \item the only $s_3 \to t_3$ path has weight $-(A+p^\star)$, and
	  \item any $s_4 \to t_4$-path has weight at least $-(A + p^\star)$.
	\end{itemize}
	The backward arcs $(t_i, s_j)$ have weight $A$ and exist only for $|i-j|$ odd.
	Therefore, on every cycle we have that $s_i \to t_i$-paths of weight at least $-(A-p^\star)$ and of weight at least $-(A+p^\star)$ alternate.
	Also, between them lies an arc of weight~$A$.
	To return to the starting point a walk has to consist of only such pairs (as the backward arcs only exist for $|i-j|$ odd.
	Each such pair has weight at least $-(A-p^\star) + A - (A + p^\star) + A = 0$ and thus every closed walk in the non-dotted part of $R^{a,b} - S_{p^\star}$ has non-negative weight.
			
	For the dotted part of the gadget, note that of the arcs of the form $(\ddot{x}^p_i, \ddot{y}^p_i)$, only $(\ddot{x}^{p^\star}_2, \ddot{y}^{p^\star}_2)$ and $(\ddot{x}^{p^\star-1}_3, \ddot{y}^{p^\star-1}_3)$ are undeleted.
	So the only cycle in the dotted part has vertex sequence $(\ddot{s}_2, \ddot{x}^{p^\star}_2, \ddot{y}^{p^\star}_2, \ddot{t}_2, \ddot{s}_3, \ddot{x}^{p^\star-1}_3, \ddot{y}^{p^\star-1}_3, \ddot{t}_3, \ddot{s}_2)$ and is of weight $-(A + p^\star) + A - (A - (p^\star -1+1)) + A = 0$.
	Thus, there are no negative cycles in $R^{a,b} - S_{p^\star}$ and $S_{p^\star}$ is a solution of size $5(a + 3)$.
		
	It remains to show that any $v^r_\textsf{in} \to v^r_\textsf{out}$-path has weight $\varphi^r(p^\star)$ for any \mbox{$r \in \lbrace 1, \hdots, b\rbrace$}.
	Note that any $v^r_\textsf{in} \to v^r_\textsf{out}$-path in $R^{a,b}$ has to start with a subpath $v^r_\textsf{in}, g^r_\textsf{in}, x^p_2, y^p_2$ and to end with a (potentially overlapping) subpath $x^{p'}_2, y^{p'}_2, g^r_\textsf{out}, v^r_\textsf{out}$ for some (not necessarily distinct) \mbox{$p, p' \in \lbrace -1, 0, \hdots, a+1\rbrace$}.
	As $(x^p_2, y^p_2)$ is contained in $S_{p^\star}$ for any $p \neq p^\star$, we have that the only  $v^r_\textsf{in} \to v^r_\textsf{out}$-path is $v^r_\textsf{in}, g^r_\textsf{in}, x^p_2, y^p_2, g^r_\textsf{out}, v^r_\textsf{out}$.
	Its weight is
	\begin{equation*}
      -(A - p^\star) - (A + p^\star) - M + \varphi_r(p^\star) +2A+M = \varphi_r(p^\star)\enspace .\qedhere
	\end{equation*}
  \end{claimproof}
	
  \begin{claim}
  \label{claim:treedepth_hardness_general_solution_path_length}
    Let $S$ be a solution to $R^{a,b}$ as {\sc Negative DFAS} instance of size $5(a + 3)$.
	Then there is a $p^\star \in \lbrace 1, \hdots, a\rbrace$ such that any $v^r_\textsf{in} \to v^r_\textsf{out}$-path in $R^{a,b} - S$ has weight $\varphi^r(p^\star)$ for any $r \in \lbrace 1, \hdots, b\rbrace$.
  \end{claim}
  \begin{claimproof}
	We want to prove that $S$ has the structure of undeleted arcs that is show in \Cref{fig:undeleted_arc_structure}.

    \begin{figure}
		\centering
		\begin{tikzpicture}[scale=0.9, hollow/.style={circle, minimum size=1.8mm}, filled/.style={fill, hollow}]
			%\draw[help lines] (0,0) grid (10,6);
			\tikzset{>={Latex[width=2mm, length=2mm]}}
			
			\foreach \i in {1, 2, 3, 4}{
				\node at (-1.2, 4-\i) {$x^p_{\i}$/$y^p_{\i}$};
			}

			\foreach \x in {1.5, 5.4, 11.25}{
				\node at (\x, 1.5) {$\hdots$};
			}
			
			\foreach \num/\p/\x in {%
				1/-1/0,
				2/\hat{p}/2.25,
				3/\hat{p}+1/4.0,
				4/\tilde{p}/6.25,
				5/p^\star/8.0,
				6/\check{p}/9.75,
				7/a+1/12.0%
			}{
				\node at (\x+0.5, 3.7) {$\p$};
				\foreach \i in {1, 2, 3, 4}{
					\node (x\i\num) [filled, draw, inner sep=-.05mm] at (\x,4-\i) {};
					\node (y\i\num) [filled, draw, inner sep=-.05mm] at (\x+0.75,4-\i) {};
				}
			}
			
			\foreach \num/\i in {1/1, 2/1, 3/3, 4/3, 5/2, 6/4, 7/4}{
				\draw (x\i\num) edge[->, ultra thick]  (y\i\num);
			}
				
		\end{tikzpicture}
		\caption{Structure of undeleted arcs $(x_i^p, y_i^p)$ as proven in \Cref{claim:treedepth_hardness_general_solution_path_length}.\label{fig:undeleted_arc_structure}}
	\end{figure}
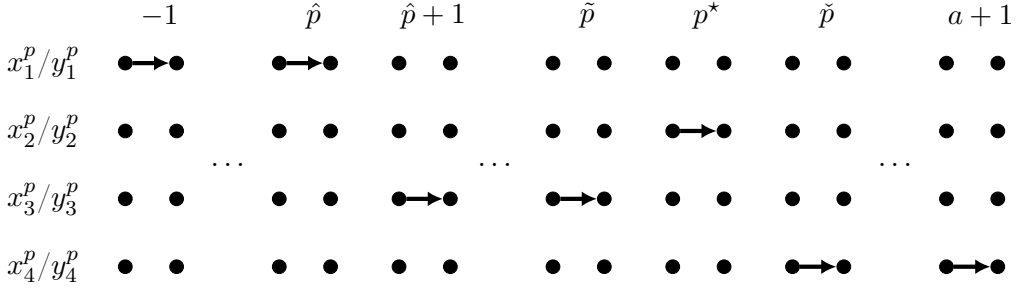	
	
    By \Cref{claim:structure_of_mini_gadget_solutions}, our solution $S$ has to delete exactly three arcs of type $(s^p_i, t^p_i)$ in every mini-gadget, which is equivalent to leaving exactly one arc $(x^p_{i_p}, y^p_{i_p})$ undeleted for every $p \in \lbrace 0, 1, \hdots, a+1\rbrace$ with $i_p \in \lbrace 1,2,3,4\rbrace$.
    Note that by doubling $(x^{-1}_1, y^{-1}_1)$ and $(x^{a+1}_4, y^{a+1}_4)$, these are the undeleted arcs for $p \in \lbrace -1, a+1\rbrace$.
		
    For ease of notation, let $w_i(p)$ denote the weight of the arc $(x^p_i, y^p_i)$ for any $i \in \lbrace 1, 2, 3, 4\rbrace$ and any $p \in \lbrace -1, 0, \hdots, a+1\rbrace$.
    Then for any two distinct indices \mbox{$p, p' \in \lbrace -1, 0, \hdots, a+1\rbrace$} with $|i_p - i_{p'}|$ odd, we have that the cycle
    \begin{equation*}
      (s_{i_p}, x^{p}_{i_p}) \circ (x^{p}_{i_p}, y^{p}_{i_p}) \circ (y^{p}_{i_p}, t_{i_p}) \circ (t_{i_p}, s_{i_{p'}}) \circ (s_{i_{p'}}, x^{p'}_{i_{p'}}) \circ (x^{p'}_{i_{p'}}, y^{p'}_{i_{p'}}) \circ (y^{p'}_{i_{p'}}, t_{i_{p'}}) \circ (t_{i_{p'}}, s_{i_p})
    \end{equation*}
	exists in $R^{a,b} - S$.
	This cycle we denote by $C_{p,p'}$.
	The weight of this cycle is $2A + w_{i_p}(p) + w_{i_{p'}}(p')$.
	As $S$ is a solution to~$R^{a,b}$ as \NDFAS{} instance, these cycles always have non-negative weight.
		
	Let $\hat{p}$ be the maximum $p$ such that $(x^p_1, y^p_1)$ is undeleted, i.e. $\hat{p} = \max \lbrace p \mid i_p = 1\rbrace$.
	Analogously, let $\check{p}$ be the minimum~$p$ such that $(x^p_4, y^p_4)$ is undeleted, i.e. $\check{p} = \min \lbrace p \mid i_p = 4\rbrace$.
	Note that both exist, since we doubled the arcs $(x^{-1}_1, y^{-1}_1)$ and $(x^{p+1}_4, y^{p+1}_4)$.
	So the cycle $C_{\check{p}, \hat{p}}$ exists and has non-negative weight $0 \leq 2A -(A + (\hat{p}+2)) - (A - (\check{p} -1)) = \check{p} - \hat{p} - 3$.
	This is equivalent to $\hat{p} + 3 \leq \check{p}$.
	So for every index $\hat{p} < p < \check{p}$, both arcs $(x^p_1, y^p_1)$ and $(x^p_4, y^p_4)$ are deleted and there are at least two such indices.

  Next we show that $(x^{\hat{p}+1}_2, y^{\hat{p} +1}_2)$ is also deleted.
  Otherwise, we have that $i_{\hat{p} + 1} = 2$ and thus the cycle $C_{\hat{p}, \hat{p} + 1}$ has weight $2A -(A + (\hat{p}+2)) - (A - (\hat{p} +1)) = -1 $, which is a contradiction to $C_{\hat{p}, \hat{p} + 1}$ having non-negative weight.
  So we have $i_{\hat{p} +1} = 3$.
  Now define $\tilde{p}$ to be the maximum~$p$ such that $(x^p_3, y^p_3)$ is undeleted, i.e. $\tilde{p} = \max \lbrace p \mid i_p = 3\rbrace$.
  By the previous argument we know that this maximum exists and $\hat{p} +1 \leq \tilde{p}$.
		
  Consider the cycle $C_{\check{p}, \tilde{p}}$.
	Since this cycle again has non-negative weight, we have $0 \leq 2A -(A + (\check{p} - 1)) - (A - (\tilde{p} +1)) = \check{p} - \tilde{p} - 2$.
  This is equivalent to $\tilde{p} + 2 \leq \check{p}$.
  Note that for the index $\tilde{p} + 1$, we have that $\hat{p}, \tilde{p} < \tilde{p} + 1 < \check{p}$.
  That means the only arc that can be undeleted for $\tilde{p} + 1$ is $(x^{\tilde{p}+1}_2, y^{\tilde{p}+1}_2)$, i.e. $i_{\tilde{p} + 1} = 2$.
  We want to show that $p^\star = \tilde{p} + 1$ is indeed the only index $p$ with $i_p = 2$.
  For this let $\overleftarrow{p}^\star = \min \lbrace p \mid i_p = 2\rbrace$ and $\overrightarrow{p}^\star = \max \lbrace p \mid i_p = 2\rbrace$.
  By the weight of $C_{\overleftarrow{p}^\star, \tilde{p}}$ being non-negative, we get $0 \leq 2A -(A - \overleftarrow{p}^\star) - (A + (\tilde{p} +1)) = \overleftarrow{p}^\star - \tilde{p} - 1$, i.e. $\tilde{p} + 1 \leq \overleftarrow{p}^\star$.
		
  Now consider any $i \in \lbrace 2,3\rbrace$ and any $p \in \lbrace -1, 0, \hdots, a+1 \rbrace$.
  By \Cref{claim:structure_of_mini_gadget_solutions}, we have that $S$ contains either $(\dot{x}^p_i, \dot{y}^p_i)$ or $(\ddot{x}^p_i, \ddot{y}^p_i)$, but none of the arcs $(y^p_i, \dot{x}^p_i)$ and $(\dot{y}^p_i, x^p_i)$.
  Thus, if $(x^p_i, y^p_i)$ is undeleted, by the negative cycle $(x^p_i, y^p_i) \circ (y^p_i, \dot{x}^p_i) \circ (\dot{x}^p_i, \dot{y}^p_i) \circ (\dot{y}^p_i, x^p_i)$, we have that $(\dot{x}^p_i, \dot{y}^p_i)$ is deleted and thus $(\ddot{x}^p_i, \ddot{y}^p_i)$ is undeleted.
  So for $\tilde{p}$, the arc $(\ddot{x}^{\tilde{p}}_3, \ddot{y}^{\tilde{p}}_3)$ is undeleted, and for $\overrightarrow{p}^\star$ the arc $(\ddot{x}^{\overrightarrow{p}^\star}_2, \ddot{y}^{\overrightarrow{p}^\star}_2)$ is undeleted.
  By \Cref{claim:structure_of_mini_gadget_solutions}, we also get that none of the other arcs of the following cycle $\ddot{C}_{\overrightarrow{p}^\star, \tilde{p}}$ are deleted:
  \begin{equation*}
    (\ddot{s}_2,  \ddot{x}^{\overrightarrow{p}^\star}_2) \circ (\ddot{x}^{\overrightarrow{p}^\star}_2, \ddot{y}^{\overrightarrow{p}^\star}_2) \circ (\ddot{y}^{\overrightarrow{p}^\star}_2, \ddot{t}_2) \circ (\ddot{t}_2, \ddot{s}_3) \circ (\ddot{s}_3, \ddot{x}^{\tilde{p}}_3) \circ (\ddot{x}^{\tilde{p}}_3, \ddot{y}^{\tilde{p}}_3) \circ (\ddot{y}^{\tilde{p}}_3, \ddot{t}_3) \circ (\ddot{t}_3, \ddot{s}_2)\enspace .
  \end{equation*}
  As this cycle exists in $R^{a,b} - S$, it has non-negative weight, which implies
  \begin{equation*}
	  0 \leq -(A + \overrightarrow{p}^\star) + A - (A - (\tilde{p} + 1)) + A = \tilde{p} - \overrightarrow{p}^\star + 1\enspace .
  \end{equation*}
  This is equivalent to $\overrightarrow{p}^\star \leq \tilde{p} + 1$.
	Overall we have $\tilde{p} + 1 \leq \overleftarrow{p}^\star \leq \overrightarrow{p}^\star \leq \tilde{p} + 1$.
	So indeed the only index $p$ with $(x^p_2, y^p_2)$ undeleted is $p^\star = \tilde{p} + 1$.
		
	Finally, consider the $v^r_\textsf{in} \to v^r_\textsf{out}$-paths in $R^{a,b} - S$ for $r \in \lbrace 1, \hdots, b\rbrace$.
	Note that each such path starts with the subpath $v^r_\textsf{in}, g^r_\textsf{in}, x^p_2, y^p_2$ and ends with the (potentially overlapping) subpath $x^{p'}_2, y^{p'}_2, g^r_\textsf{out}, v^r_\textsf{out}$ for $p, p' \in \lbrace -1, 0, \hdots, a+1\rbrace$.
	As the only index with $(x^p_2, y^p_2)$ undeleted is $p^\star = \tilde{p} + 1$, we have that the only $v^r_\textsf{in} \to v^r_\textsf{out}$-path in $R^{a,b} - S$ has the form $(v^r_\textsf{in}, g^r_\textsf{in}, x^{p^\star}_2, y^{p^\star}_2, g^r_\textsf{out}, v^r_\textsf{out})$.
	This path has weight
	\begin{equation*}
      -(A - p^\star) - (A + p^\star) - M + \varphi_r(p^\star) +2A+M = \varphi_r(p^\star)\enspace .
    \end{equation*}
    It only remains to show that $p^\star \in \lbrace 1, \hdots, a\rbrace$.
	For this note that $-1 \leq \hat{p} \leq \tilde{p} - 1$ which is equivalent to $1 \leq \tilde{p} + 1 = p^\star$, and that $p^\star = \tilde{p} + 1 \leq \check{p} - 1 \leq a$.
  \end{claimproof}
	
  After proving what minimum solutions look like for our meta gadget, we can now give the overall reduction from \textsc{Clique}.
  Let $v_1, \hdots, v_n$ be an arbitrary ordering of $V(G)$ and $e_1, \hdots, e_m$ be an arbitrary ordering of $E(G)$.
  For every $i \in \lbrace 1,\hdots, k\rbrace$, we introduce a vertex gadget $H^i$ which is a copy of the meta gadget $R^{a,b}$ with $a = n$, $b=2$, $M = n$ and the functions $\varphi_1(p) = p$ and $\varphi_2(p) = -p$.
  Moreover, we rename the vertices $v^1_\textsf{in}$ and $v^1_\textsf{out}$ of $H_i$ to $s$ and $z^+_i$, and the vertices~$v^2_\textsf{in}$ and $v^2_\textsf{out}$ of $H_i$ to $s$ and $z^-_i$.
  For every $i,j \in \lbrace 1,\hdots, k\rbrace$ with $i < j$, we introduce an edge gadget $H^{i,j}$ which is a copy of the meta gadget $R^{a,b}$ with $a = m$, $b=4$, $M = n$ and the functions defined as follows.
  For every $e_\ell = \lbrace v_p, v_q\rbrace \in E(G)$ with $p \leq q$ we let $\varphi_1(\ell) = -p$, $\varphi_2(\ell) = p$, $\varphi_3(\ell) = -q$ and $\varphi_4(\ell) = q$.
  Moreover, we rename the vertices in the following way:
  \begin{itemize}
    \item $v^1_\textsf{in}$ and $v^1_\textsf{out}$ of $H_{i,j}$ become $z^+_i$ and $t$,
	\item $v^2_\textsf{in}$ and $v^2_\textsf{out}$ of $H_{i,j}$ become $z^-_i$ and $t$,
	\item $v^3_\textsf{in}$ and $v^3_\textsf{out}$ of $H_{i,j}$ become $z^+_j$ and $t$,
	\item $v^4_\textsf{in}$ and $v^4_\textsf{out}$ of $H_{i,j}$ become $z^+_j$ and $t$,
  \end{itemize}
	
  Any renamed vertices sharing the same name are identified with each other.
  Finally, we add an arc $(t,s)$ of weight $0$.
  Call the resulting graph $H$.
  We claim that $H$ as \NDFAS{} instance has a solution of size at most $d = k \cdot 5(n+3) + \frac{1}{2}k(k+1) \cdot 5(m+3)$ if and only if $G$ has a clique of size $k$.
	
  For the forward direction let $S$ be any solution to $H$ as \NDFAS{} instance of size at most $d$.
  By \Cref{claim:structure_of_mini_gadget_solutions}, we know that $S$ restricted to any $H^i$ has size at least $5(n+3)$ and restricted to any $H^{i,j}$ has size at least $5(m+3)$.
  By $d$ matching exactly the total of these numbers, we know that $S$ has exactly these sizes in the gadgets.
  Moreover, $(t,s)$ is not contained in $S$.
	
  Using \Cref{claim:treedepth_hardness_general_solution_path_length}, we get that for every $H^i$ there is a $p_i \in \lbrace 1, \hdots, n\rbrace$ such that any $s \to z^+_i$-path has weight $p_i$ and any $s \to z^-_i$-path has weight $-p_i$.
  Also, we get that for every $H^{i,j}$ there is an~$\ell_{i,j}$ such that for $e_{\ell_{i,j}} = \lbrace v_p, v_q\rbrace$ we have that any $z^+_i \to t$-path has weight $-p$, any $z^-_i \to t$-path has weight $p$, any $z^+_j \to t$-path has weight $-q$ and any $z^-_j \to t$-path has weight $q$.	
  Let now
  \begin{equation*}
    V' = \lbrace v_{p_i} \mid i \in \lbrace 1, \hdots, k\rbrace \rbrace \subseteq V(G) \textsf{ and } E' = \lbrace e_{\ell_{i,j}} \mid i,j \in \lbrace 1, \hdots, k\rbrace, i < j \rbrace \subseteq E(G)\enspace .
  \end{equation*}
  We claim that $(V', E')$ is a clique in $G$ of size $k$.
  As $V'$ and $E'$ have exactly the right cardinality, we only have to proof that for any unordered pair $v_{p_i}, v_{p_j} \in V'$ there is an edge between them.
  Without loss on generality we can assume $i < j$, otherwise swap the indices.
  Moreover, let $e_{\ell_{i,j}} = \lbrace v_p, v_q\rbrace$.
	
  Consider the cycle consisting of the arc $(t,s)$, the $s \to z^+_i$-path in $H^i - S$ and the $z^+_i \to t$-path in $H^{i,j} - S$.
  This cycle is non-negative and has weight $p_i - p$, thus $p \leq p_i$.
  By considering the cycle consisting of the arc $(t,s)$, the $s \to z^-_i$-path in $H^i - S$ and the $z^-_i \to t$-path in $H^{i,j} - S$, we get a non-negative cycle of weight $-p_i + p$.
  Thus, $p\geq p_i$ and combined with the former inequality, $p = p_i$.
  Analogously, by using the paths starting and ending in $z^+_j$ and $z^-_j$ instead, we get that $p_j = q$.
  Thus, we get that indeed the edge $\lbrace v_{p_i}, v_{p_j}\rbrace  = \lbrace v_p, v_q\rbrace = e_{\ell_{i,j}} \in E(G)$.
  So~$G$ contains the clique $(V', E')$ of size $k$.
  \smallskip
	
  For the backward direction consider any clique on vertices $v_{p_1}, \hdots, v_{p_k}$ in $G$.
  Choose by \Cref{claim:treedepth_hardness_explicit_solution} a solution $S^i$ of size $5(n+3)$ for every gadget $H^i$ such that any $s \to z^+_i$-path has weight~$p_i$ and any $s \to z^-_i$-path has weight $-p_i$.
  Analogously, by \Cref{claim:treedepth_hardness_explicit_solution} choose a solution $S^{i,j}$ of size $5(n+3)$ for every gadget $H^{i,j}$ such that any $z^+_i \to t$-path has weight $-p_i$, any $z^-_i \to t$-path has weight $p_i$, any $z^+_j \to t$-path has weight $-p_j$ and any $z^-_j \to t$-path has weight $p_j$.
  Note that the latter solution can be chosen that way, because $\lbrace v_{p_i}, v_{p_j} \rbrace \in E(G)$ and $i < j$.
	
  We claim that $S = \bigcup_{i} S^i \cup \bigcup_{i < j} S^{i,j}$ is a solution to $H$ of size $d$.
  The size bound follows from adding up the sizes of the individual solutions.
  It remains to check that $H - S$ contains no negative cycles.
  As $S$ constraint to a single gadget was chosen as a solution, the only negative cycles that can exists use some high-level vertices.
  There are two different types of such cycles.
  The first type uses the arc $(t,s)$, followed by an $s \to z^+_i$-path in $H^i$ for some $i$ and an $z^+_i \to t$-path in some $H^{i,j}$ or $H^{j,i}$.
  Our partial solutions were chosen such that these cycles have weight $0 + p_i - p_i$.
  The other type uses the arc $(t,s)$, followed by an $s \to z^-_i$-path in $H^i$ for some $i$ and an $z^-_i \to t$-path in some $H^{i,j}$ or $H^{j,i}$.
  But by choice of $S$ these paths have weight $0 - p_i + p_i$.
  So $H - S$ contains no negative cycles.
	
  It remains to show that the treedepth and the number of arcs with positive weight is bounded.
  First we will derive bounds for these parameters on the meta-gadgets $R^{a,b}$.
  For the treedepth we will use the recursive formula of \Cref{def:treedepth}.
  Consider the set
  \begin{equation*}
		D = \lbrace s^1, s^2, s^3, s^4, t^1, t^2, t^3, t^4, \ddot{s}_2, \ddot{s}_3, \ddot{t}_2, \ddot{t}_3 \rbrace \cup \lbrace g^r_\textsf{in}, g^r_\textsf{out}, v^r_\textsf{in}, v^r_\textsf{out}  \mid r \in \lbrace 1, \hdots, b\rbrace \rbrace \enspace .
  \end{equation*}
  Note that $R^{a,b} - D$ consists only of the mini-gadgets, which are connected components of size $16$ each.
  Thus, $\td(R^{a,b}) \leq |D| + 16 = 28 + 4 b$.
	
  For the number of arcs with positive weight note that $A$ and $M$ are chosen such that the only positive arc weights are $A$ and $2A + M$.
  Of the former there are ten and of the later there are $b$ many in every meta gadget.
  Thus, the number of arcs with positive weight in $R^{a,b}$ is bounded by $10 + b$.
	
  For the graph $H$, we can combine the above bounds to bound its treedepth and its number of positive arcs.
  Regarding the treedepth of $H$, note that for $D' = \lbrace s, t \rbrace \cup \lbrace z^+_i, z^-_i \mid i \in \lbrace 1, \hdots,k\rbrace \rbrace$, the graph $H - D'$ consists of connected components that only form subgraphs of the $H^i$ and~$H^{i,j}$.
  By \Cref{def:treedepth}, it holds
  \begin{align*}
	\td(H)	&\leq |D'| + \td(H - D') \leq |D'| + \sum_{i=1}^k \left(\td(H^i) + \sum_{j=i+1}^k \td(H^{i,j})\right)\\
				&\leq |D'| + 36 k + 22k(k+1) \leq 22k^2 + 60k + 2 \enspace .
  \end{align*}
	
  Regarding the number of positive arcs in $H$, note that we did not add any new arc of positive weight.
  Thus, the number of positive arcs in $H$ is bounded by $12k + 7k(k+1)$.
\end{proof}

\subsection{\texorpdfstring{\Whard{1}ness}{W[1]-hardness} for Pathwidth, Deletion Size and Few Positive Arcs}
In this section we prove \Whard{1}ness for \textsc{MinFB} with general integral weights when parameterized in $\pw(G) + k + w_+$.
To this end, we make a reduction from \MulticoloredClique{} to \NDFAS{}.
\MulticoloredClique{} is the problem of deciding whether a graph $G$ with a proper vertex-coloring $\psi:V(G) \rightarrow \{1,\hdots,k\}$ contains a clique of size $k$; the problem is $\mathsf{W}[1]$-hard parameterized by $k$~\cite{FellowsEtAl2009}.
In addition, we will see how this implies \Whard{1}ness for \textsc{MinFB} parameterized in $\pw(G) + k$ for instances with weights of the form $w: A(G) \to \lbrace -1, 0, 1\rbrace$. 

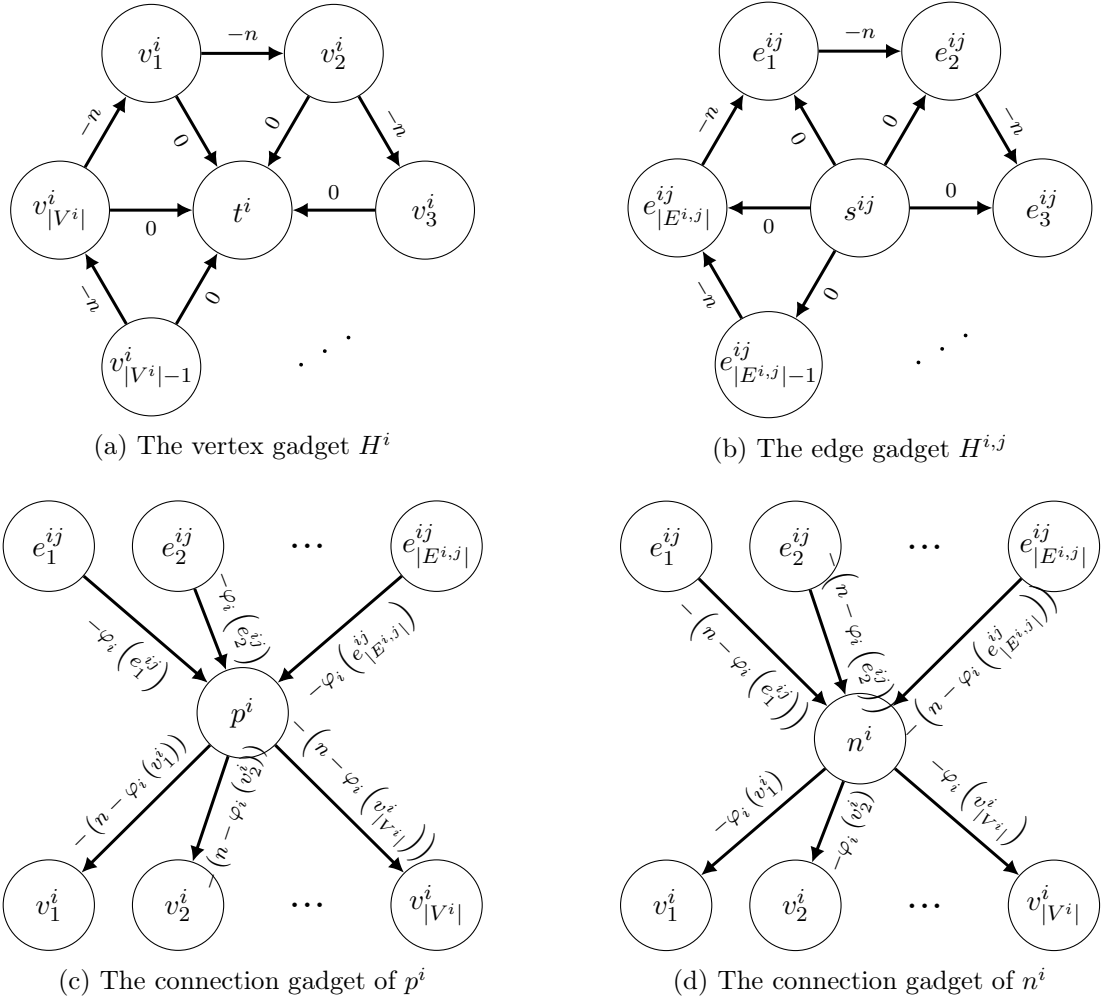
\begin{figure}
  \captionsetup[subfigure]{justification=centering}
\begin{center}

\def \radius {3cm}
\begin{subfigure}[c]{0.49\textwidth}
\centering
\begin{tikzpicture}[scale=0.8, state/.style={circle,  minimum size=1.3cm}]
	\tikzset{>={Latex[width=2mm, length=2mm]}}	
	\node (1)[state, draw, inner sep=-.05mm] at ({360/6 * (6)+120}:\radius) {$v_ 1^i$};
	\node (2)[state, draw, inner sep=-.05mm] at ({360/6 * (5 )+120}:\radius) {$v_ 2^i$};
	\node (3)[state, draw, inner sep=-.05mm] at ({360/6 * (4 )+120}:\radius) {$v_ 3^i$};
	\node (4)[state, inner sep=-.05mm] at ({360/6 * (3 )+120}:.9*\radius) {\begin{turn}{30} \LARGE .\hspace*{.2cm}.\hspace*{.2cm}. \end{turn}};
	\node (5)[state, draw, inner sep=-.05mm] at ({360/6 * (2 )+120}:\radius) {$v_ {|V^i|-1}^i$};
	\node (6)[state, draw, inner sep=-.05mm] at ({360/6 * (1 )+120}:\radius) {$v_ {|V^i|}^i$};
	\node (7)[state, draw, inner sep=-.05mm] at ({0}:0) {$t^i$};
	
	\foreach \a / \b / \label / \orientation /\pos  in {%a=node1, b=node2, c=cost of edge,pos=0=above
		1/2/$-n$/above/0.5,
		2/3/$-n$/above/0.5,
		5/6/$-n$/below/0.5,
		6/1/$-n$/above/0.5,
		1/7/$0$/below/0.5,
		2/7/$0$/above/0.5,
		3/7/$0$/above/0.5,
		5/7/$0$/below/0.5,
		6/7/$0$/below/0.5%
		}{
			\draw (\a) edge[->, line width=0.4mm] node[sloped, \orientation, pos=\pos]{ \scriptsize \label}  (\b);
		}
\end{tikzpicture}
\subcaption{The vertex gadget $H^i$\label{fig:pathwidth_vertex_gadget}}
\end{subfigure}
\hfill
\begin{subfigure}[c]{0.49\textwidth}
\centering

\begin{tikzpicture}[scale=0.8, state/.style={circle,  minimum size=1.3cm}]
%\draw[help lines] (0,0) grid (7,9);
	\tikzset{>={Latex[width=2mm, length=2mm]}}	
	\node (1)[state, draw, inner sep=-.05mm] at ({360/6 * (6)+120}:\radius) {$e_ 1^{ij}$};
	\node (2)[state, draw, inner sep=-.05mm] at ({360/6 * (5 )+120}:\radius) {$e_ 2^{ij}$};
	\node (3)[state, draw, inner sep=-.05mm] at ({360/6 * (4 )+120}:\radius) {$e_ 3^{ij}$};
	\node (4)[state, inner sep=-.05mm] at ({360/6 * (3 )+120}:.9*\radius) {\begin{turn}{30} \LARGE .\hspace*{.2cm}.\hspace*{.2cm}. \end{turn}};
	\node (5)[state, draw, inner sep=-.05mm] at ({360/6 * (2 )+120}:\radius) {$e_ {|E^{i,j}|-1}^{ij}$};
	\node (6)[state, draw, inner sep=-.05mm] at ({360/6 * (1 )+120}:\radius) {$e_ {|E^{i,j}|}^{ij}$};
	\node (7)[state, draw, inner sep=-.05mm] at ({0}:0) {$s^{ij}$};
	
		\foreach \a / \b / \label / \orientation /\pos  in {%a=node1, b=node2, c=cost of edge,pos=0=above
		1/2/$-n$/above/0.5,
		2/3/$-n$/above/0.5,
		5/6/$-n$/below/0.5,
		6/1/$-n$/above/0.5,
		7/1/$0$/below/0.5,
		7/2/$0$/above/0.5,
		7/3/$0$/above/0.5,
		7/5/$0$/below/0.5,
		7/6/$0$/below/0.5%
		}{
			\draw (\a) edge[->, line width=0.4mm] node[sloped, \orientation, pos=\pos]{ \scriptsize \label}  (\b);
		}
\end{tikzpicture}
\subcaption{The edge gadget $H^{i,j}$\label{fig:pathwidth_edge_gadget}}
\end{subfigure}
\par\bigskip
\begin{subfigure}[c]{0.49\textwidth}
\centering

\begin{tikzpicture}[scale=1.7, state/.style={circle,  minimum size=1.2cm}]
%\draw[help lines] (0,0) grid (7,9);
	\tikzset{>={Latex[width=2mm, length=2mm]}}	
	\node (1) at (0,0) [state, draw, inner sep=-.05mm] {$v_1^i$};
	\node (2) at (1,0) [state, draw, inner sep=-.05mm] {$v_2^i$};
	\node (3) at (2,0) [state, inner sep=-.05mm] {\LARGE $...$ };
	\node (4) at (3,0) [state, draw, inner sep=-.05mm] {$v_{|V^i|}^i$};
	\node (5) at (1.5,1.5) [state, draw, inner sep=-.05mm] {$p^i$};
	\node (6) at (0,2.8) [state, draw,inner sep=-.05mm] {$e_{1}^{ij}$};
	\node (7) at (1,2.8) [state, draw, inner sep=-.05mm] {$e_{2}^{ij}$};
	\node (8) at (2,2.8) [state, inner sep=-.05mm] {\LARGE $...$ };
	\node (9) at (3,2.8) [state, draw, inner sep=-.05mm] {$e_{|E^{i,j}|}^{ij}$};
	
	\foreach \a / \b / \label / \orientation / \pos  in {%a=node1, b=node2, c=cost of edge,pos=0=above
		5/1/$-\left(n -{\varphi }_{i} \left( v_1^i \right)\right)$/above/0.5,
		5/2/$ -\left(n -{\varphi }_{i} \left( v_2^i \right)\right)$/below/0.5,
		5/4/$ -\left(n -{\varphi }_{i} \left( v_{|V^i|}^i \right)\right)$/above/0.5,
		6/5/$ {-\varphi }_{i} \left( e_{1}^{ij} \right)$/below/0.5,
		7/5/$ {-\varphi }_{i} \left( e_{2}^{ij} \right)$/above/0.5,
		9/5/$ {-\varphi }_{i} \left( e_{|E^{i,j}|}^{ij} \right)$/below/0.5%
		}{
			\draw (\a) edge[->, line width=0.4mm] node[sloped, \orientation, pos=\pos]{ \scriptsize \label}  (\b);
		}
\end{tikzpicture}
\subcaption{The connection gadget of $p^i$\label{fig:pathwidth_pos_gadget}}
\end{subfigure}
\hfill
\begin{subfigure}[c]{0.49\textwidth}
\centering

\begin{tikzpicture}[scale=1.7, state/.style={circle,  minimum size=1.2cm}]
%\draw[help lines] (0,0) grid (7,9);
	\tikzset{>={Latex[width=2mm, length=2mm]}}	
	\node (1) at (0,0.2) [state, draw, inner sep=-.05mm] {$v_1^i$};
	\node (2) at (1,0.2) [state, draw, inner sep=-.05mm] {$v_2^i$};
	\node (3) at (2,0.2) [state, inner sep=-.05mm] {\LARGE $...$ };
	\node (4) at (3,0.2) [state, draw, inner sep=-.05mm] {$v_{|V^i|}^i$};
	\node (5) at (1.5,1.5) [state, draw, inner sep=-.05mm] {$n^i$};
	\node (6) at (0,3) [state, draw,inner sep=-.05mm] {$e_{1}^{ij}$};
	\node (7) at (1,3) [state, draw, inner sep=-.05mm] {$e_{2}^{ij}$};
	\node (8) at (2,3) [state, inner sep=-.05mm] {\LARGE $...$ };
	\node (9) at (3,3) [state, draw, inner sep=-.05mm] {$e_{|E^{i,j}|}^{ij}$};
	
	\foreach \a / \b / \label / \orientation / \pos  in {%a=node1, b=node2, c=cost of edge,pos=0=above
		5/1/${-\varphi }_{i} \left( v_1^i \right)$/above/0.5,
		5/2/${-\varphi }_{i} \left( v_2^i \right)$/below/0.5,
		5/4/${-\varphi }_{i} \left( v_{|V^i|}^i \right)$/above/0.5,
		6/5/$-\left(n -{\varphi }_{i} \left( e_{1}^{ij} \right)\right)$/below/0.5,
		7/5/$-\left(n -{\varphi }_{i} \left( e_{2}^{ij} \right)\right)$/above/0.5,
		9/5/$-\left(n -{\varphi }_{i} \left( e_{|E^{i,j}|}^{ij} \right)\right)$/below/0.5%
		}{
			\draw (\a) edge[->, line width=0.4mm] node[sloped, \orientation, pos=\pos]{ \scriptsize \label}  (\b);
		}
\end{tikzpicture}
\subcaption{The connection gadget of $n^i$\label{fig:pathwidth_neg_gadget}}
\end{subfigure}

\end{center}
\caption{Gadgets used in the construction of \Cref{thm:W1_hardness_for_pathwidth_and_deletion_size}\label{fig:pathwidth_hardness}}
%\vspace*{-60pt}
\end{figure}

\begin{theorem}
\label{thm:W1_hardness_for_pathwidth_and_deletion_size}
  {\sc Negative DFAS} is \Whard{1} when parameterized by the pathwidth, deletion size and number of positive arcs.
  This holds already for instances $(G, w, k)$, where $w: A(G) \to \mathbb{Z} \cap [-|V(G)|^2, |V(G)|^2]$.
\end{theorem}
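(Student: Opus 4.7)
The plan is to give a polynomial parameter transformation from \MulticoloredClique{} to {\sc Negative DFAS} whose output instance has pathwidth, deletion size, and number of positive-weight arcs all bounded by polynomials in the clique parameter $k$. Let $(G,\psi,k)$ be a \MulticoloredClique{} instance with color classes $V^1,\ldots,V^k$ and bichromatic edge classes $E^{i,j}$ for $i<j$, and fix injective labellings $\varphi_i\colon V^i\to\{1,\ldots,n\}$, extended to $E^{i,j}$ via $\varphi_i(\{u,w\})=\varphi_i(u)$ whenever $u\in V^i$. Using the four families of gadgets sketched in \Cref{fig:pathwidth_hardness}, I would build a digraph $H$ as follows: for each color $i$, the vertex gadget $H^i$ of \Cref{fig:pathwidth_vertex_gadget} contributes a single directed cycle of weight $-n|V^i|$ together with zero-weight arcs to a hub~$t^i$; for each pair $i<j$, the edge gadget $H^{i,j}$ of \Cref{fig:pathwidth_edge_gadget} contributes the symmetric construction around a hub $s^{ij}$; and for each~$i$, the two connection gadgets~$p^i$ and~$n^i$ of \Cref{fig:pathwidth_pos_gadget,fig:pathwidth_neg_gadget} link $H^i$ to every $H^{i,j}$ (with arc weights involving $\varphi_i$ and the complementary quantity $n-\varphi_i$). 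Finally, a small number of positive-weight ``closing'' arcs are added among the hubs (for instance $t^i\to s^{ij}$ and $s^{ij}\to t^j$) whose weights are chosen so that cycles traversing a $p^i$- or $n^i$-hop evaluate to exactly $0$ in the consistent case. The deletion budget is set to $k'=k+\binom{k}{2}$.

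The forward direction is immediate: a multicolored clique $\{v_{\ell_1}^1,\ldots,v_{\ell_k}^k\}\subseteq V(G)$ together with the $\binom{k}{2}$ bichromatic edges $e^{ij}=\{v_{\ell_i}^i,v_{\ell_j}^j\}$ determines one arc per vertex-gadget cycle and one arc per edge-gadget cycle whose deletion leaves no negative cycle. For the backward direction, a deletion set~$S$ of size $k'$ must delete at least one arc of every vertex-gadget cycle (each of weight $-n|V^i|$) and of every edge-gadget cycle (each of weight $-n|E^{i,j}|$), so~$S$ deletes \emph{exactly} one arc per gadget cycle; I would interpret the position of this unique arc as the chosen vertex $v_{\ell_i}^i$ or edge $e^{ij}$. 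The crucial consistency check comes from the connection gadgets: a closed walk entering $p^i$ through some $e_\ell^{ij}$ and leaving through some $v_m^i$ picks up weight $-\varphi_i(e_\ell^{ij})-(n-\varphi_i(v_m^i))=-n+(\varphi_i(v_m^i)-\varphi_i(e_\ell^{ij}))$, while the analogous walk through $n^i$ picks up $-n+(\varphi_i(e_\ell^{ij})-\varphi_i(v_m^i))$. The two deviations from $-n$ have opposite sign, so whenever the chosen edge $e^{ij}$ fails to contain the chosen vertex $v_{\ell_i}^i$, one of these two cycles survives as a negative cycle; only the consistent case avoids this penalty in both directions simultaneously, forcing the chosen vertices and edges to form a multicolored clique in $G$.

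It remains to verify the three parameter bounds and the weight bound. The deletion size $k'=O(k^2)$ is by construction, and since each connection-gadget arc weight lies in $[-n,0]$ and the positive closing arcs can be taken of magnitude $O(kn)\le n^2=|V(G)|^2$, the claimed weight window holds. The number of positive arcs equals the number of closing arcs among the hubs, which is $O(k^2)$. For the pathwidth I would exhibit a path decomposition whose every bag contains the $O(k^2)$ hub vertices $\{t^i,p^i,n^i\}_{i=1}^k\cup\{s^{ij}\}_{i<j}$ together with a sliding window of two consecutive cycle vertices of the gadget currently being swept over; since each cycle vertex in $H^i$ or $H^{i,j}$ is incident only to its two cycle neighbours, its hub, and one of $p^i,n^i$, this gives width $O(k^2)$. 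The main obstacle, and the point where I expect most of the technical work to lie, is ruling out \emph{spurious} long negative cycles that leave a vertex gadget, cross several $p^i/n^i$ hops through unrelated edge gadgets, and return: this requires careful calibration of the positive closing-arc weights so that any cycle using more than one consistency-gadget hop is non-negative, which I would verify by constructing an explicit feasible potential via \Cref{thm:gallai} for the proposed solution and arguing that no cycle beats the potential.
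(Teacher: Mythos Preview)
Your proposal is essentially the paper's construction and the key ideas (vertex/edge selection cycles, the $p^i$/$n^i$ opposite-sign consistency trick, the hub-based pathwidth bound) all match. Two minor corrections: the only positive arcs in the paper are $(t^i,s^{i,j})$ of weight $n(|V^i|+|E^{i,j}|-1)$ (there are no $s^{ij}\to t^j$ arcs), and the weight bound $|V(G)|^2$ in the statement refers to the \emph{constructed} graph, so it follows from $|V(H)|\ge n$ and $|V(H)|\ge |V^i|+|E^{i,j}|$ rather than from an $O(kn)$ estimate. For the forward direction you plan to exhibit a feasible potential, whereas the paper instead proves directly that every $s^{i,j}\to t^{i'}$-path in $H-S$ has weight at least $-n(|V^{i'}|+|E^{i,j}|-1)$, which exactly cancels the unique positive arc on any cycle; either route works and the obstacle you flag is not serious.
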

\begin{proof}
  We prove \Whard{1}ness by showing a parameterized reduction from \MulticoloredClique{}.
  For this purpose, let $(G, k)$ an instance of \MulticoloredClique{} with $V(G) = \biguplus_{i=1}^k V^i$, where each $V^i$ induces an independent set of $G$.
  For ease of notation, define sets $E^{i,j}$ as the set of edges of type $\lbrace v_i, v_j\rbrace$ with $v_i \in V^i$ and $v_j \in V^j$, for $i,j = 1,\hdots,k$ with $i< j$.
  Moreover, we abuse this notation slightly by defining $E^{j,i} = E^{i,j}$ but with the order of vertices switched, i.e. $\lbrace v_i, v_j\rbrace \in E^{i,j}$ but $\lbrace v_j, v_i\rbrace \in E^{j,i}$.
	
  We construct an equivalent instance $(H, w, d)$ of \NDFAS{}, see \Cref{fig:pathwidth_hardness} for an illustration.
  The graph $H$ will have three types of gadgets:
  \begin{enumerate}
    \item a vertex gadget $H^i$ representing $V^i$ for every $i \in \lbrace 1, \hdots, k\rbrace$,
    \item an edge gadget $H^{i,j}$ representing $E^{i,j}$ for every $1 \leq i < j \leq k$, and
    \item a consistency gadget $C^i$ for every $i \in \lbrace 1, \hdots, k\rbrace$.
  \end{enumerate}
	
  The gadgets $H^i$ representing $V^i = \lbrace v^i_1, \hdots, v^i_{|V^i|}\rbrace$ consist of one vertex $t^i$ and the vertices of~$V^i$.
  Each gadget $H^i$ contains two types of arcs.
  First, it has arcs $(v^i_r, t^i)$ of weight $0$ for every $v^i_r \in V^i$, and second, arcs $(v^i_r, v^i_{r+1})$ of weight~$-n$ for every $r \in \lbrace 1, \hdots, |V^i|\rbrace$, where we define $v^i_{|V^i| +1} = v^i_1$.
		
  The gadgets $H^{i,j}$ representing $E^{i,j} = \lbrace e^{i,j}_1, \hdots, e^{i,j}_{|E^{i,j}|}\rbrace$ consist of one vertex $s^{i,j}$ and one vertex for every $e^{i,j}_r \in E^{i,j}$.
  Each gadget $H^{i,j}$ contains two types of arcs.
  First, it has arcs $(s^{i,j}, e^{i,j}_r)$ of weight $0$ for every $e^{i,j}_r \in E^{i,j}$, and second, it has arcs $(e^{i,j}_r, e^{i,j}_{r+1})$ of weight $-n$ for every $r \in \lbrace 1, \hdots, |E^{i,j}|\rbrace$, where we define $e^{i,j}_{|E^{i,j}| +1} = e^{i,j}_1$.
	
  Finally, the consistency gadgets $C_i$ consist of the vertices $n^i$ and $p^i$ and interconnect the existing gadgets as follows.
  First there is an arc $(t^i, s^{i,j})$ of weight $n (|V^i| + |E^{i,j}| - 1)$ for every $j \in \lbrace 1, \hdots, k \rbrace \setminus \lbrace i \rbrace$.
  Then we need a bijective function $\varphi_i: V^i \to \lbrace 1, \hdots, |V^i|\rbrace$ (for example, $\varphi_i(v^i_r) = r$ will do).
  We abuse the notation and that every edge of $E$ has at most one endpoint in every $V^i$ and define for $e^{i,j} = \lbrace v^i, v^j\rbrace \in E^{i,j}$ that $\varphi_i(e^{i,j}) = \varphi_i(v^i)$.
  With this definition in mind, we add the following arcs to $C^i$:
  \begin{itemize}
    \item $(e^{i,j}, n^i)$ of weight $-(n - \varphi_i(e^{i,j}))$ for every $j \in \lbrace 1, \hdots, k \rbrace \setminus \lbrace i \rbrace$ and $e^{i,j} \in E^{i,j}$,
    \item $(e^{i,j}, p^i)$ of weight $-\varphi_i(e^{i,j})$ for every $j \in \lbrace 1, \hdots, k \rbrace \setminus \lbrace i \rbrace$ and $e^{i,j} \in E^{i,j}$,
    \item $(n^i, v^i)$ of weight $-\varphi_i(v^i)$ for every $v^i \in V^i$, and
    \item $(p^i, v^i)$ of weight $-(n - \varphi_i(v^i))$ for every $v^i \in V^i$.
  \end{itemize}
	
  Let $H$ denote the resulting graph consisting of all the gadgets, and let $w$ denote the corresponding weight function.
  By choosing $d = k + \binom{k}{2}$, we obtain our \NDFAS{} instance $(H, w, d)$.
	
  \smallskip
  \noindent
  \textbf{Negative DFAS to Multicolored Clique.}
  Let $S$ be a solution to $(H, w, d)$.
  The gadgets $H^i$ and $H^{i,j}$ each contain a negative cycle consisting of the arcs $(v^i_r, v^i_{r+1})$ or $(e^{i,j}_r, e^{i,j}_{r+1})$, respectively.
  As there are exactly $k + \binom{k}{2}$ such gadgets and their cycles are disjoint, $S$ has size exactly $d$ and contains exactly one arc $(v^i_{r_i}, v^i_{r_i+1})$ for every $i \in \lbrace 1, \hdots, k\rbrace$ and one arc $(e^{i,j}_{r_{i,j}}, e^{i,j}_{r_{i,j}+1})$ for every $(i,j)$ with $1 \leq i < j \leq k$.
  Define $K = (\lbrace v^i_r \mid i \in \lbrace 1, \hdots, k\rbrace, r \in \lbrace 2, \hdots, |V^i|+1\rbrace, (v^i_{r-1}, v^i_{r}) \in S\rbrace, \lbrace e^{i,j}_r \mid 1 \leq i < j \leq k, r \in \lbrace 1, \hdots, |E^{i,j}|\rbrace, (e^{i,j}_r, e^{i,j}_{r+1}) \in S\rbrace)$.
  We claim that $K$ is a solution to $(G, k)$ i.e. a multicolored clique in $G$.
  As~$K$ contains exactly one element of every $V^i$ and $E^{i,j}$, we only have to check that the edges really connect to the vertices of $K$.
	
  Suppose, for sake of contradiction, that this is not the case, i.e. there is a vertex $v^i \in V^i \cap V(K)$ and an edge $e^{i,j} \in E^{i,j} \cap E(K)$ for some $1 \leq i,j \leq k, i \neq j$, such that $e^{i,j} = \lbrace u^i, w^j\rbrace$ with $u^i \neq v^i$.
  For $e^{i,j}$ we have deleted only the outgoing arc in gadget $H^{i,j}$, so there is an $s^{i,j} \to e^{i,j}$-path $P$ of length $|E^{i,j}|$ that uses one arc of weight $0$ and $(|E^{i,j}| -1)$ many arcs of weight $-n$.
  So, $w(P) = -n(|E^{i,j}| -1)$.
	
  Likewise, we deleted only the incoming arc of $v^i$ in the gadget $H^i$.
  Therefore, there is an $v^i \to t^i$-path $Q$ of length~$|V^i|$ that uses one arc of weight $0$ and $(|V^i| - 1)$ many arcs of weight $-n$.
  Thus, $w(Q) = -n(|V^i| - 1)$.
  We link $P$ and $Q$ together with the arc $(t^i, s^{i,j})$ of weight $n (|V^i| + |E^{i,j}| - 1)$ to an overall $v_i \to e^{i,j}$-path $R = Q \circ (t^i, s^{i,j}) \circ P$ of weight $n (|V^i| + |E^{i,j}| - 1) - n(|E^{i,j}| -1) - n(|V^i| - 1) = n$.
	
  Let us now consider the unique $e^{i,j} \to v^i$-paths
  \begin{equation*}
    W^+ = \lbrace (e^{i,j}, p^i), (p^i, v^i)\rbrace \text{ and } W^- = \lbrace (e^{i,j}, n^i), (n^i, v^i)\rbrace
  \end{equation*}
  in the consistency gadget $C_i$.
  These are using only the internal vertices $p^i$ or $n^i$, respectively.
  Their weights are
  \begin{align*}
    w(W^+) &= -(n -\varphi_i(v^i)) -\varphi_i(e^{ij}) = (\varphi_i(v^i) - \varphi_i(u^i)) - n \qquad \text{ and}\\
    w(W^-) &= -(n -\varphi_i(e^{ij})) - \varphi_i(v^i) = (\varphi_i(u^i) - \varphi_i(v^i)) - n \enspace .
  \end{align*}
  If we join these two paths with $R$, we obtain two cycles with weights
  \begin{align*}
    w(W^+ \circ R) &= (\varphi_i(v^i) - \varphi_i(u^i)) - n + n = \varphi_i(v^i) - \varphi_i(u^i) \qquad \text{ and}\\
    w(W^- \circ R) &= (\varphi_i(u^i) - \varphi_i(v^i)) - n + n = \varphi_i(u^i) - \varphi_i(v^i) \enspace .
  \end{align*}
  As $v^i \neq u^i$ and $\varphi_i$ is bijective, we have that either $w(W^+ \circ R)$ or $w(W^- \circ R)$ is negative.
  Thus, $R \circ W^+$ or $R \circ W^-$ is a negative cycle, a contradiction to $H - S$ containing no negative cycles.
  Hence, $K$ must be consistent and a clique in $G$.

  \smallskip
  \noindent
  \textbf{Multicolored Clique to Negative DFAS.}
  Now suppose that $K$ is a multicolored clique on $k$ vertices in $G$, i.e. a solution to $(G, k)$.
  We choose $S$ to be the set $\lbrace (v^i_{r_i - 1}, v^i_{r_i}) \mid i \in \lbrace 1, \hdots, k\rbrace, v^i_{r_i} \in V^i \cap V(K)\rbrace \cup \lbrace (e^{i,j}_{r_{i,j}}, e^{i,j}_{r_{i,j}+1}) \mid 1 \leq i < j \leq k, e^{i,j}_{r_{i,j}} \in E^{i,j} \cap E(K)\rbrace$.
  Note that now every gadget $H^i$ and $H^{i,j}$ is acyclic on its own.
  Furthermore, the whole graph $H$ without our set $S$ and the arcs $(t^i, s^{i,j})$ is acyclic.
  \begin{claim}
    Every $s^{i,j} \to t^{i'}$-path in $H - S$ has weight at least $-n (|V^{i'}| + |E^{i,j}| - 1)$.
  \end{claim}
  \begin{claimproof}
    Suppose, for sake of contradiction, that there is an $s^{i,j} \to t^{i'}$-path in $H - S$ of weight lower than $-n (|V^{i'}| + |E^{i,j}| - 1)$.
    Let $P$ the shortest such path with respect to the number of arcs.
    Then $P$ does not use any arc of type $(t^p, s^{p,q})$, as they have weight $n (|V^p| + |E^{p,q}| - 1)$ and therefore either $w(P[s^{i,j}, t^p]) < -n (|V^p| + |E^{i,j}| - 1)$ or $P[s^{p,q}, t^{i'}] < -n (|V^{i'}| + |E^{p,q}| - 1)$ and those have fewer arcs.
    But as $P$ does not use any arc of type $(t^p, s^{p,q})$ it can only go from the gadget $H^{i,j}$ to $H^{i'}$ once with the help of $C^{i'}$ and does not use other gadgets.
    Moreover, we have that $i' \in \lbrace i, j\rbrace$.
    We can assume, without loss of generality, that $i' = i$.
    Then $P$ leaves~$H^{i,j}$ exactly once and enters $H^i$ exactly once.
    Let $e^{i,j}_a$ be the last vertex of $H^{i,j}$ and $v^i_b$ the first vertex of $H^i$ our path $P$ visits.
    Note that $w(P[s^{i,j}, e^{i,j}_a])$ and $w(P[v^i_b, t^i])$ are always multiples of $n$ and have weight at least $-n(|E^{i,j}| - 1)$ and $-n(|V^{i}| - 1)$, respectively.
    Also, this weight is only attained if and only if $a = r_{i,j}$ or $b = r_i$, respectively.
		
    In any case, $P[e^{i,j}_a, v^i_b]$ consists of three vertices with $n^i$ or $p^i$ being the middle one.
    By choice of weights in $C^i$, the lowest weight $P[e^{i,j}_a, v^i_b]$ can achieve is $1 - |V^i| - n > -2n$.
    Also, if $a = r_{i,j}$ and $b = r_i$ the weight is $-n$.
    So in any case the composition $P = P[s^{i,j}, e^{i,j}_a] \circ P[e^{i,j}_a, v^i_b] \circ P[v^i_b, t^i]$ has weight at least $-n (|V^{i}| + |E^{i,j}| - 1)$---a contradiction to the choice of $P$.
    This completes the proof of the claim.
  \end{claimproof}
  As every cycle in $H - S$ has to use an arc $(t^i, s^{i,j})$ of weight \mbox{$n (|V^{i}| + |E^{i,j}| - 1)$} and every path completing this arc to a cycle has weight at least $-n (|V^{i'}| + |E^{i,j}| - 1)$, the graph $H - S$ contains no negative cycle.
	
  For the size bound note that $S$ contains exactly one arc from every $H^i$ and $H^{i,j}$.
  Thus, it has size $k + \binom{k}{2} = d$.
  So $S$ is a solution to $(H, w, d)$.

  \smallskip
  \noindent
  \textbf{Bounding the pathwidth of $H$.}
  The pathwidth of $H$ is defined as the pathwidth of the underlying undirected graph~$\bar{H}$.
  We will show that \mbox{$\pw(\bar{H}) \in \mathcal{O}(k^2)$}.
	
  Note that, if for some set $U \subseteq V(\bar{H})$ we get a path decomposition of $\bar{H} - U$ with bag size at most $b$, we get a path decomposition for $\bar{H}$ with bag sizes at most $b + |U|$ by adding $U$ to every bag.
  Thus, $\pw(\bar{H}) \leq \pw(\bar{H} - U) + |U|$.
  Let
  \begin{equation*}
    U = \bigcup_{i=1}^k \lbrace t^i, n^i, p^i, v^i_{|V^i|} \rbrace \cup \bigcup_{1 \leq i < j \leq k} \lbrace s^{i,j}, e^{i,j}_{|E^{i,j}|} \rbrace.
  \end{equation*}
  This set $U$ has size $4k + k (k -1) = k^2 + 3k$.
  Observe that the graph $\bar{H} - U$ consist exactly of paths of the following form.
  Each vertex gadget $H^i$ turned into a path consisting of the edges $\lbrace v^i_\ell, v^i_{\ell +1}\rbrace$ for $\ell \in \lbrace 1, \hdots |V^i| -2\rbrace$.
  Meanwhile, each edge gadget $H^{i,j}$ turned into a path consisting of the edges $\lbrace e^{i,j}_\ell, v^{i,j}_{\ell +1}\rbrace$ for $\ell \in \lbrace 1, \hdots |E^{i,j}| -2\rbrace$.
	
  Furthermore, note that if we have a path decomposition for every connected component, then we can concatenate them to one path decomposition.
  Thus, the pathwidth of $\bar{H} - U$ is the maximum pathwidth of its connected components.
  But these are all paths and have pathwidth one.
  Thus, $\pw(\bar{H}) \leq \pw(\bar{H} - U) + |U| = k^2 + 3k + 1 \in \mathcal{O}(k^2)$.
	
  Overall we have proven that our reduction produces an instance whose pathwidth and deletion size are bounded by some function in the parameter of the original \MulticoloredClique{} instance.
  Moreover, by choice of $\varphi$, the only arcs of positive weight are the arcs of type $(t^i, s^{i,j})$ of which only $k(k-1)$ many exist.
  So we have a parameterized reduction from \MulticoloredClique{} with parameter $k$ to \NDFAS{} with parameters pathwidth, deletion size and number of positive arcs.
	
  To see that the weights have the claimed form, first note that all of them are integral.
  Moreover, the highest absolute weight have the arcs $(t^i, s^{i,j})$, which have weight $n (|V^i| + |E^{i,j}| - 1)$.
  As $|V(H)| \geq |V^i| + |E^{i,j}|$ and $|V(H)| \geq n$, we have that indeed $w: A(H) \to \mathbb{Z} \cap [-|V(H)|^2, |V(H)|^2]$, proving the theorem.
\end{proof}

From this theorem we  are also able to infer \Whard{1}ness for \NDFAS{} parameterized in $\pw(G) + k$ on instances with weights restricted to $w: A(G) \to \lbrace -1, 0, 1\rbrace$.

\begin{theorem}
\label{thm:W1_hardness_for_pathwidth_deletion_size_and_normalized_weights}
  {\sc Negative DFAS} is \Whard{1} when parameterized in the pathwidth and deletion size, even on instances with weights of the form $w: A(G) \to \lbrace -1, 0, 1\rbrace$.
\end{theorem}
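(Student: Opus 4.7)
My plan is to reduce from the general-weight version of \NDFAS{} established in \Cref{thm:W1_hardness_for_pathwidth_and_deletion_size}, where weights lie in $\mathbb{Z} \cap [-|V(H)|^2, |V(H)|^2]$ and the parameter is $\pw(H) + k$. Given such an instance $(H, w, k)$, I will construct $(H', w', k)$ by subdividing each nonzero-weight arc into a path of unit-weight arcs: an arc $a = (u,v)$ with $w(a) = c \neq 0$ is replaced by a directed path $u \to x_1^a \to \cdots \to x_{|c|-1}^a \to v$ of $|c|$ arcs, each carrying weight $\mathrm{sign}(c) \in \{-1,+1\}$; arcs of weight zero remain untouched. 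The resulting weight function satisfies $w': A(H') \to \{-1, 0, +1\}$, and since $|w(a)| \leq |V(H)|^2$, the graph $H'$ has polynomial size, so the construction runs in polynomial time.

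Correctness rests on the observation that every newly introduced intermediate vertex has in-degree and out-degree exactly one in $H'$, so any cycle of $H'$ must traverse each subdivided path in full. This induces a weight-preserving bijection between cycles of $H$ and cycles of $H'$, and hence between their negative cycles. From this bijection I obtain both directions of the solution correspondence: given a negative feedback arc set $S \subseteq A(H)$ of size at most $k$, picking any single arc from each subdivided path whose original arc lies in $S$ yields $S' \subseteq A(H')$ with $|S'| = |S|$ and $H' - S'$ free of negative cycles; conversely, given $S' \subseteq A(H')$ with $|S'| \leq k$, the set $S$ of original arcs whose subdivision meets $S'$ satisfies $|S| \leq |S'|$ and is a solution for $H$, since any negative cycle in $H - S$ lifts to a negative cycle in $H' - S'$.

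For the parameter $\pw$, I will apply the standard subdivision trick. Given a path decomposition of $\langle H \rangle$ of width $\pw(H)$, for each subdivided arc with endpoints $u, v$ I locate a bag $B$ containing both (which exists since $\{u,v\} \in E(\langle H \rangle)$) and splice in a short sequence of bags that introduces the new intermediate vertices $x_j^a$ one or two at a time, always keeping $v$ to preserve contiguity of the remaining vertices of $B$. Processing arcs one at a time, and since intermediate vertex sets from different arcs are disjoint, each bag grows by at most one vertex, giving $\pw(H') \leq \pw(H) + 1$. Combined with $\pw(H) \in \mathcal{O}(k^2)$ from the previous theorem and the preservation of $k$, this yields a parameterized reduction for the combined parameter $\pw + k$. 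The main technical point is the splicing here: keeping the second endpoint in every inserted bag is essential to avoid breaking contiguity, and handling arcs sequentially prevents the size overhead from compounding across different subdivisions; the rest of the argument reduces to the mechanical verification of the weight-preserving cycle bijection.
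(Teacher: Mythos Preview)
Your proposal is correct and follows the same approach as the paper: subdivide each nonzero-weight arc into a path of $|w(a)|$ arcs of weight $\mathrm{sign}(w(a))$, argue solution equivalence via the weight-preserving cycle bijection, and show that the pathwidth increases by only a constant. One minor inaccuracy: your claimed bound $\pw(H') \leq \pw(H)+1$ is not justified by the construction you sketch---``keeping $v$'' does not by itself preserve contiguity for the \emph{other} vertices of $B$, and ``one or two at a time'' is inconsistent with ``each bag grows by at most one''; the paper simply inserts bags of the form $B \cup \{x_j, x_{j+1}\}$ between two copies of $B$, yielding $\pw(H') \leq \pw(H)+2$, which is all the reduction needs.
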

\begin{proof}
  We proof the theorem by showing how an instance described by \Cref{thm:W1_hardness_for_pathwidth_and_deletion_size} can be turned into one with its pathwidth increased by two and weights of the form $w: A(G) \to \lbrace -1, 0, 1\rbrace$ while keeping the deletion size the same.
  In the process, it loses the property of having only a bounded number of positive arcs.
  Let $(G, w, k)$ be a \NDFAS{} instance as described in \Cref{thm:W1_hardness_for_pathwidth_and_deletion_size}.
  Especially, we have $w: A(G) \to \mathbb{Z} \cap [-|V(G)|^2, |V(G)|^2]$.
  From this instance we obtain an instance $(G', w', k)$ by replacing every non-zero arc $a \in A(G)$ by a path $P_a$ of $|w(a)|$-many arcs all having weight $1$ if $w(a) > 0$ or $-1$ if $w(a) < 0$.
  We denote by $(G', w')$ the resulting graph with arc weights.
  It remains to show that $(G, w, k)$ and $(G', w', k')$ are solution equivalent and $\pw(G') \leq \pw(G) + 2$.
	
  To see the solution equivalence, note that any cycle in $G'$ uses a newly created path $P_a$ either as a whole or not at all.
  Thus, deleting an arbitrary arc of a path $P_a$ is equivalent to deleting the whole path and this is equivalent to deleting the arc $a$ in $G$.
  For the pathwidth, note that replacing arcs by paths can be incorporated into the path-decomposition as follows.
  Let $B_i$ the bag that contains the arc $a$ in the path decomposition of $G$ that defines $\pw(G)$.
  Let $x_1, \hdots, x_{|w(a)| - 1}$ the internal vertices of~$P_a$.
  Then we modify the path decomposition of $G$ by replacing $B_i$ with the sequence $B_i, B_i \cup \lbrace x_1, x_2 \rbrace, \hdots, B_i \cup \lbrace x_{|w(a)| - 2}, x_{|w(a)| - 1} \rbrace, B_i$.
  This covers all newly created arcs for~$P_a$.
  By doing this replacement for all arcs $a$ (while using only $B_i$ of the original decomposition), this yields a new path decomposition for~$G'$, where the maximum bag size increased by at most two.
  This shows that  $\pw(G') \leq \pw(G) + 2$, concluding the proof of the theorem.
\end{proof}

\subsection{\texorpdfstring{\Whard{1}ness}{W[1]-hardness} for Pathwidth, Deletion Size and Few Negative Arcs}
In this section we are going to prove \Whard{1}ness for \textsc{MinFB} parameterized in $w_-$, $k$ and $\pw(G)$ for general arc weights.
We will do this by doing an intermediate step showing \Whard{1}ness for the \textsc{Bounded Edge Directed $(s,t)$-Cut} problem in DAGs when parameterized in $k$ and $\pw(G)$.

\problemdef{Bounded Edge Directed $(s,t)$-Cut}
  {A graph $G$, vertices $s, t \in V(G)$ and two integers $k, \ell \in \mathbb{Z}_{\geq 0}$.}
  {Find a set $S \subseteq V(G)$ of size at most $k$ such that every $s \to t$-path of $G - S$ has length more than~$\ell$ or decide that no such set exists.}

For the undirected \textsc{Bounded Edge $(s,t)$-Cut} problem, Bentert et al. showed \Whard{1}ness for the parameters maximum vertex degree and pathwidth \cite[Theorem~1]{BentertHK2021}.
By inspecting their reduction closely, one easily sees that their proof actually shows hardness for parameters maximum vertex degree, pathwidth and deletion size.
The hardness for the directed case follows from this by replacing every edge by a forward and backward arc.
For completeness (and bound on the pathwidth) we choose to do a direct reduction to the directed version here.
It has the nice property that it produces acyclic digraphs, a property we cannot achieve by replacing arcs with a \emph{cycle} of length two.
However, note that all important ideas are already present in the paper by Bentert et al. \cite{BentertHK2021}.

\begin{lemma}
\label{lem:BEDC_is_W1_hard_parameterized_in_pathwidth_and_deletion_size}
  The {\sc Bounded Edge Directed $(s,t)$-Cut} problem parameterized in the pathwidth~$\pw(G)$, deletion size~$d$ and maximum degree of the underlying undirected graph $\Delta(G)$ is \mbox{\Whard{1}}, even when the graph is restricted to be acyclic.
\end{lemma}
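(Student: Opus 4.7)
The plan is to give a parameterized reduction from \textsc{Multicolored Clique} (\Whard{1} in the clique size $k$) that follows the high-level strategy of Bentert, Haag and Komusiewicz~\cite{BentertHK2021} for the undirected version, but carefully realizes everything as a directed \emph{acyclic} graph while simultaneously controlling pathwidth and maximum degree. The key idea is to lay the whole construction out in linearly ordered layers between a source $s$ and a sink $t$, so that every arc points from an earlier layer to a later one; this makes the graph automatically acyclic, and the linear layer structure gives a natural path decomposition.

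First I would design a \emph{color-selection block} for each color class $V^i$: a parallel bundle of internally disjoint short $s\to t$ $s$-$t$ paths, one per vertex of color $i$, each threading through a distinguished ``witness'' arc that identifies the vertex. The lengths are tuned so that, within a small deletion budget, one is forced to cut exactly one witness per color. Next I would add an \emph{edge-verification block} for each pair $(i,j)$: for every non-edge between $V^i$ and $V^j$, a short $s\to t$ path that can only be lengthened via arcs ``shared'' with the two relevant color-selection blocks. Setting the global length bound $\ell$ and deletion budget $k'$ as appropriate functions of $k$ and the local gadget lengths, a minimum-size cut will exist if and only if the vertex selections form a multicolored clique, because exactly on cliques do all non-edge verification paths already get cut ``for free'' by the $k$ witness deletions.

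The main obstacle will be satisfying all three parameter bounds at once. Bounded \textbf{acyclicity} forbids the usual forward/backward arc trick, so I must unfold any undirected gadget of Bentert et al.\ into strictly left-to-right paths; this is done by placing each gadget entirely within a consecutive range of layers and connecting its copies by forward arcs only. Bounded \textbf{maximum degree} forces me to expand each would-be hub vertex (in particular $s$ and $t$, but also the entry/exit vertices of each selection block) into a small balanced in- or out-arborescence of constant-degree nodes laid out in a few dedicated layers; since each such expansion only inflates lengths additively, the parameters $\ell$ and $k'$ can be adjusted accordingly. Bounded \textbf{pathwidth} then follows from the layered layout: at any ``moment'' during a left-to-right sweep only the currently active color-selection block and the at most two edge-verification blocks meeting its boundary are alive, giving a path decomposition of width $k^{\mathcal O(1)}$. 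Once correctness and these three structural bounds are in place, the lemma is proven; the verification steps are standard cycle-counting and distance arguments that I would relegate to routine checking.
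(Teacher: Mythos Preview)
Your proposal has a genuine gap in the selection gadget, and the pathwidth argument does not go through as written.

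First, the color-selection block: you describe ``a parallel bundle of internally disjoint short $s\to t$ paths, one per vertex of color $i$'' and then say the budget forces cutting ``exactly one witness per color''. But if the paths are internally disjoint and each is individually short, a single deletion only kills one of them; you would need $|V^i|$ deletions per color, so the deletion size blows up to $n$ rather than a function of $k$. This is not what Bentert et al.\ do. In their construction (and in the paper's directed adaptation) each selection gadget consists of a \emph{constant} number $a$ of parallel paths, the deletion budget forces exactly one deletion on each, and cross-links between neighboring paths force these $a$ deletions to sit at the \emph{same index} $x\in\{1,\dots,b\}$. The chosen index is then read off via detour arcs whose lengths encode $\varphi^q(x)$, so that the minimum $s\to p^q_{\textsf{out}}$ distance carries the value $\varphi^q(x)$. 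This ``index-in-length'' encoding is the mechanism you are missing.

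Second, even granting a working selection gadget, your non-edge verification scheme breaks the pathwidth bound. A verification path for a non-edge between colors $i$ and $j$ must pass through the witness arcs in both blocks $i$ and $j$; in a linear layer layout with blocks in order $1,\dots,k$, this path has a private transit segment living in every layer between blocks $i$ and $j$. At the cut between block $i$ and block $i{+}1$ you therefore see one transit strand for each non-edge between colors $\le i$ and colors $\ge i{+}1$, which is $\Theta(n^2)$ strands, so the path decomposition you sketch has width $\Theta(n^2)$, not $k^{\mathcal O(1)}$. The paper avoids this entirely: rather than one path per non-edge, it uses an \emph{edge-selection} gadget of the same meta-type $R^{4,m}$ that picks an edge $e_{f_{i,j}}$, and checks consistency by summing lengths at a constant number of shared interface vertices $v^i_+,v^i_-$. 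Because every gadget has only $O(1)$ interface vertices and these are the only vertices shared between gadgets, adding all interface vertices to every bag yields pathwidth $O(k^2)$ regardless of $n$ and $m$.
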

\begin{proof}
  Let now $(G, k)$ be a \textsc{Clique} instance.
  Recall that \textsc{Clique} is \mbox{\Whard{1}} when parameterized by $k$.
  We want to construct an equivalent instance $(H,d,\Lambda)$ of \textsc{Bounded Edge Directed $(s,t)$-Cut} whose parameters $\pw(H)$, $d$ and $\Delta(G)$ are bounded by a function of $k$ only.
  For notation, we fix some arbitrary ordering $v_1, \hdots, v_n$ of the vertices $V(G)$.
  For the edges $e \in E(G)$ we define their canonical representation $\lbrace v_i, v_j\rbrace$ to be the order of vertices with $i \leq j$.
  Setting this canonical representation then defines an ordering $e_1 \prec \hdots \prec e_m$ of the edges $E(G)$ by $e = \lbrace v_i, v_j\rbrace \preceq e' = \lbrace v_{i'}, v_{j'}\rbrace$ iff $(i,j) \preceq_ \textsf{lex} (i', j')$, where $\preceq_ \textsf{lex}$ is the lexicographical ordering on vectors of length two.
	
  As usual in reduction from cliques we construct two kinds of gadgets, vertex gadgets~$H^i$ and edge gadgets $H^{i,j}$.
  The vertex gadgets each choose a vertex in the graph and the edge gadgets check whether there is indeed an edge between them.
  For all these constructions we will use paths of different lengths.
  For ease of notation we will replace these paths by arcs with positive integral weights.
  We make sure that these weights are bounded by some function of $n + m$ and that parallel arcs have weight at least two. In this way, replacing an arc of weight $w$ by a directed path of length $w$ gives us a still polynomial sized simple graph for \textsc{Bounded Edge Directed $(s,t)$-Cut}.
  Also, deletion of an arc of a path whose internal vertices have in-degree and out-degree one and are disjoint from $s$ and $t$, means that no arc on this path is part of an $s \to t$-path anymore.
  So a solution does not delete more paths by choosing several arcs on the same replaced path.
  This preserves solution equivalence, between the weighted and the replaced graph.
	
  An important role for the weights plays an integer $M$ which we will choose later.
  Our choice of $\Lambda$ will depend on $M$, so we also defer its choice until later.
  For know think of $M$ as a big integer.

  \smallskip
  \noindent
  \textbf{Gadgets.}
  Before introducing the different vertex and edge gadgets, we discuss some gadget that will be the foundation for both.
  For any positive integers $a,b \in \mathbb{Z}_{>0}$ and any set  $\lbrace \varphi^q: \lbrace 1,\hdots, b\rbrace \to \mathbb{Z} \cap [-M, M] \mid q\in\lbrace1, \hdots, a\rbrace\rbrace$ of functions, we introduce the gadget $R^{a,b}$.
  The gadget $R^{a,b}$ consists of $a$ many disjoint paths $P^1, \hdots, P^a$ of length $2b + 2$.
  For each path $P^q$ with $q \in \lbrace 1,\hdots,a\rbrace$ all arcs of $P^q$ have weight $1$, and we denote the vertices of $P^q$ in order of there appearance by $p^q_\textsf{in}, p^q_1, \hat{p}^q_1, p^q_2, \hat{p}^q_2 \hdots, p^q_{b+1}, p^q_\textsf{out}$.
  We add the following ``detours'' along the paths $P^q$.
  For every $r \in \lbrace 1, \hdots, b\rbrace$ there is an arc $(p^q_r, p^q_{r+1})$ of weight $5M + \varphi^q(r)$.
  Moreover, we interlink the paths $P^q$ and $P^{q+1}$ for every $q \in \lbrace 1, \hdots, a-1\rbrace$ by the following arcs.
  For every $r \in \lbrace 1, \hdots, b\rbrace$ there are arcs $(\hat{p}^q_r, p^{q+1}_{r + 1})$ and $(\hat{p}^{q+1}_r, p^{q}_{r + 1})$ of weight $3M$ each.
  Finally, the gadget~$R^{a,b}$ contains the vertices $s$ and $t$.
  To connect those vertices to the rest of the graph, for every $q \in \lbrace 1,\hdots,a\rbrace$, there are two parallel arcs $(s, p^q_1)$ of weight $2$ and two parallel arcs $(p^q_{b+1}, t)$ of weight $\Lambda - 4M$.
  This concludes the construction of our gadget.
  Note that for $M \geq 1$ the graph has positive arc weights and for $\Lambda \geq 4M + 2$ all parallel arcs are of weight at least~$2$.
	
  We now prove the crucial properties of this gadget:
  \begin{claim}
  \label{claim:BEDC_meta_gadget}
    Let $M \geq  2b + 2$ and $\Lambda \geq 4M + 2$ be integers.
    For every set $S \subseteq A(R^{a,b})$ such that $R^{a,b} - S$ contains no $s \to t$-path of weight at most $\Lambda$, we have that $|S| \geq a$.
		
    Moreover, for every such set $S \subseteq A(R^{a,b})$ of size $|S| = a$, there exists a unique \mbox{$x \in \lbrace 1, \hdots, b\rbrace$} such that for every $q \in \lbrace 1, \hdots, a\rbrace$ any minimum-weight $s \to p^q_\textsf{out}$-path has weight $5M + 2b + 1 + \varphi^q(x)$ and any minimum-weight $p^q_\textsf{in} \to t$-path has weight $\Lambda + M + 2b - 1  + \varphi^q(x)$.
		
    Also, for every $x \in \lbrace 1, \hdots, b\rbrace$, we can choose a set $S \subseteq A(R^{a,b})$ of size $a$ such that $R^{a,b} - S$ contains no $s \to t$-path of weight $\leq \Lambda$, any minimum-weight $s \to p^q_\textsf{out}$-path has weight $5M + 2b + 1  + \varphi^q(x)$ and any minimum-weight $p^q_\textsf{in} \to t$-path has weight $\Lambda + M + 2b - 1  + \varphi^q(x)$.
  \end{claim}
  \pagebreak[1]
  \begin{claimproof}
    Consider for every $q \in \lbrace 1, \hdots, a\rbrace$ the path $s, p^q_1, \hat{p}^q_1, \hdots, p^q_{b+1}, t$.
    This path has weight $2 + 2b + \Lambda - 4M \leq \Lambda$.
    Thus, any set $S$ as in the theorem statement has to contain an arc of all of these $s \to t$-paths.
    As these paths are arc-disjoint, we have that all sets have size at least $a$.

    Now we turn to the sets $S$ of size exactly $a$.
    Consider the paths from above again.
    As there are two parallel arcs for each of $(s, p^q_1)$ and $(p^q_{b+1},t)$, our set $S$ has to intersect each $P^q$ with exactly one edge that is not $(p^q_\textsf{in}, p^q_1)$ or $(p^q_{b+1}, p^q_\textsf{out})$.
		
    We claim that there is an $x \in \lbrace 1, \hdots, b\rbrace$ such that $S = \lbrace (p^q_x, \hat{p}^q_x)\mid q \in \lbrace 1, \hdots, a\rbrace\rbrace$.
    Suppose that not, then for some pair $(p,q)$ with $q \in \lbrace 1, \hdots, a-1\rbrace$ and $y \in \lbrace 1, \hdots, n\rbrace$ we have that either $P^q[p^q_1, \hat{p}^q_y]$ and $P^{q+1}[p^{q+1}_{y+1}, p^{q+1}_{b+1}]$ are disjoint from~$S$ or $P^{q+1}[p^{q+1}_1, \hat{p}^{q+1}_y]$ and $P^q[p^q_{y+1}, p^q_{b+1}]$ are disjoint from~$S$.
    In the former case, the path
    \begin{equation*}
      (s,p^q_1) \circ P^q[p^q_1, \hat{p}^q_y] \circ (\hat{p}^q_y, p^{q+1}_{y+1}) \circ P^{q+1}[p^{q+1}_{y+1}, p^{q+1}_{b+1}] \circ (p^{q+1}_{b+1},t)
    \end{equation*}
    is an $s \to t$-path in $R^{a,b} - S$. 
    In the latter case, the path
	\begin{equation*}
      (s,p^{q+1}_1) \circ P^{q+1}[p^{q+1}_1, \hat{p}^{q+1}_y] \circ (\hat{p}^{q+1}_y, p^q_{y+1}) \circ P^q[p^q_{y+1}, p^q_{b+1}] \circ (p^q_{b+1},t)
    \end{equation*}
    is an $s \to t$-path in $R^{a,b} - S$.
    So one of these $s \to t$-paths exists in $R^{a,b} - S$.
    Note that both paths have weight
    \begin{equation*}
      2 + (2(y- 1) + 1) + 3M + 2(b - y) + (\Lambda - 4M) = 2b + 1 + 3M + \Lambda - 4M \leq \Lambda,
    \end{equation*}
    which is a contradiction to the choice of $S$.
		
    So, if there is a set $S$ such that $R^{a,b} - S$ contains no $s \to t$-path of weight $\leq \Lambda$ that has size $a$, it has the form $\lbrace (p^q_x, \hat{p}^q_x)\mid q \in \lbrace 1, \hdots, a\rbrace\rbrace$.
    Indeed, for such a set $S$, we have that any $s \to t$-path in $R^{a,b} - S$ has to use one of the detour arcs $(p^q_x, p^q_{x+1})$ of weight $5M + \varphi^q(x) \geq 4M$.
    Moreover, any $s \to t$-path has to use an arc incident to $s$ and~$t$ of weight $2$ and $\Lambda - 4M$ respectively.
    So for any $x\in \lbrace 1, \hdots, b\rbrace$, each $s \to t$-path in $R^{a,b} -\lbrace (p^q_x, \hat{p}^q_x)\mid q \in \lbrace 1, \hdots, a\rbrace\rbrace$ has overall weight at least $2 + 4M + \Lambda - 4M > \Lambda$.
		
    It only remains to prove that, for solutions $S = \lbrace (p^q_x, \hat{p}^q_x)\mid q \in \lbrace 1, \hdots, a\rbrace\rbrace$, any minimum-weight $s \to p^q_\textsf{out}$-path has weight $5M + 2b + 1 + \varphi^q(x)$ and any minimum-weight $p^q_\textsf{in} \to t$-path has weight $\Lambda + M + 2b - 1  + \varphi^q(x)$ in $R^{a,b} - S$.
    Any $s \to p^q_\textsf{out}$-path uses exactly one incident edge of $s$ (which has weight $2$) and $p^q_\textsf{out}$ (which has weight $1$).
    So, for these paths it is enough to prove that any $p^h_1 \to p^q_{b+1}$-path in $R^{a,b} - S$ for $h \in \lbrace 1, \hdots, a\rbrace$ has weight at least $5M + 2(b -1) + \varphi^q(x)$ and one path achieves that weight.
    Moreover, any $p^q_\textsf{in} \to t$-path uses exactly one incident edge of $p^q_\textsf{in}$ (which has weight~$1$) and $t$ (which has weight $\Lambda - 4M$).
    Thus, for these paths it is enough to prove that any $p^q_1 \to p^g_{b+1}$-path in $R^{a,b} - S$ for $g \in \lbrace 1, \hdots, a\rbrace$ has weight at least $5M + 2(b -1) + \varphi^q(x)$ and one path achieves that weight.
    Together it suffices to show that any $p^h_1 \to p^g_{b+1}$-path in $R^{a,b} - S$ has weight at least $5M + 2(b -1) + \varphi^q(x)$ and there is a $p^q_1 \to p^q_{b+1}$-path that achieves this weight.
		
    Notice that any $p^h_1 \to p^g_{b+1}$-path contains a detour arc $(p^r_x, p^r_{x+1})$ of weight $5M + \varphi^h(x) \geq 4M$.
    If a $p^h_1 \to p^g_{b+1}$-path contains another detour arc or an interconnecting arc, it occurs an additional weight of at least $3M$.
    So any $p^h_1 \to p^g_{b+1}$-path that has uses arcs other than those of a single $P^q$ and the arc $(p^q_x, p^q_{x+1})$ has weight at least $8M$.
    But $5M + 2b + 1 + \varphi^q(x) \leq 7M < 8M$.
    So it is enough to show that there is a $p^q_1 \to p^q_{b+1}$-path of weight $5M + 2(b -1) + \varphi^q(x)$.
    Notice that $P^q[p^q_1, p^q_x] \circ (p^q_x, p^q_{x+1}) \circ P^q[p^q_{x+1}, p^q_{b+1}]$ has weight $2(x-1) + 5M + \varphi^q(x) + 2(b-x) = 5M + 2(b -1) + \varphi^q(x)$ and thus is exactly such a path.
    This completes the proof of \Cref{claim:BEDC_meta_gadget}.
  \end{claimproof}
	
  We are now ready to define the vertex gadgets and edge gadgets.
  All vertex gadgets will be identical and all edge gadgets will be identical, just the interconnection between them differs.
  For all $i \in \lbrace 1, \hdots, k\rbrace$, we introduce a vertex gadget $H^i$ by setting $a = 2$, $b = n$, $\varphi^1(x) = x$ and $\varphi^2(x) = -x$.
  Note that the image of $\varphi^1$ and $\varphi^2$ lies in $\mathbb{Z} \cap [-M, M]$ for~$M \geq n$.
  Thus, the graph $R^{2, n}$ is well-defined.
  To obtain our gadget $H^i$ from this we forget about the vertices $p^1_\textsf{in}$ and~$p^2_\textsf{in}$.
  Furthermore, we rename the vertices $p^1_\textsf{out}$ and $p^2_\textsf{out}$ to $v^i_+$ and $v^i_-$.
	
  For all $i,j \in \lbrace 1, \hdots, k\rbrace$ with $i < j$, we introduce an edge gadget $H^{i,j}$.
  For this we set $a=4$, $b = m$ and for every $x \in \lbrace 1, \hdots, m\rbrace$ and edge $e_x$ with canonical representation $(v_g, v_h)$, we set $\varphi^1(x) = -g$, $\varphi^2(x) = g$, $\varphi^3(x) = -h$ and $\varphi^4(x) = h$.
  Again the image of $\varphi^1, \hdots, \varphi^4$ lies in $\mathbb{Z} \cap [-M,M]$ for $M \geq m$.
  Thus, the graph $R^{4, m}$ is well-defined.
  By forgetting the vertices $p^q_\textsf{out}$ and renaming $p^1_\textsf{in}$, $p^2_\textsf{in}$, $p^3_\textsf{in}$ and $p^4_\textsf{in}$ to $v^i_+$, $v^i_-$, $v^j_+$ and $v^j_-$.
	
  From these gadgets we obtain our complete graph $H$ by taking one copy of each gadget and identifying all vertices of the same name ($s$, $t$, $v^i_+$ and $v^i_-$).
  Finally, we set $d = 2n + 2n(n-1)$, $M = 2(n+m) + 2$ and $\Lambda = 10M + 2(n -1) + 2(m -1) - 1$.
  This completes our construction.

  \smallskip
  \noindent
  \textbf{Correctness.}
  We now want to prove that $(G, k)$ as instance of \textsc{Clique} is a ``yes''-instance if and only if $(H, d, \Lambda)$ as instance of \textsc{Bounded Edge Directed $(s,t)$-Cut} is a ``yes''-instance.
	
  For the forward direction let $U \subseteq V(G)$ be a clique in $G$ with corresponding edge set $F$.
  Let $u_1, \hdots, u_k$ be the indices of the vertices of $v_1, \hdots, v_n$ corresponding to the vertices of $U$ in increasing order.
  Also, let $f_{i,j}$ be the index of the edge $\lbrace v_{u_i}, v_{u_j}\rbrace$ in the fixed ordering of $E(G)$.
  By \Cref{claim:BEDC_meta_gadget}, there is a solution $S^i$ of size two to each of the $H^i$'s such that each $s \to v^i_+$-path has weight at least $5M + 2(n-1) + u_i$ and each $s \to v^i_-$-path has weight at least $5M + 2(n-1) - u_i$.
  Moreover, again by \Cref{claim:BEDC_meta_gadget}, there is a solution $S^{i,j}$ of size four to each of the $H^{i,j}$'s such that each $v^i_+ \to t$-path has weight at least $5M + 2(m-1) - u_i$, each $v^i_- \to t$-path has weight at least $5M + 2(m-1) + u_i$, each $v^j_+ \to t$-path has weight at least $5M + 2(m-1) - u_j$ and each $v^j_- \to t$-path has weight at least $5M + 2(m-1) + u_j$.
  We define our solution~$S$ as $\bigcup_{i=1}^k S^i \cup \bigcup_{1 \leq i < j \leq k} S^{i,j}$.
  By definition our $S$ is a solution when restricted to a single vertex gadget or a single edge gadget.
  The only $s \to t$-paths that are not contained in a single gadget consist of an $s \to v^i_+$-path in some vertex gadget and a $v^i_+ \to t$-path in some edge gadget or of an $s \to v^i_-$-path in some vertex gadget and a $v^i_- \to t$-path in some edge gadget.
  In each case, as seen above, these paths have length at least $5M + 2(n-1) + u_i +5M + 2(m-1) - u_i = \Lambda +1$.
  It only remains to check the size of $S$ which is exactly $2k + 4 \frac{k(k-1)}{2} = d$.
  \smallskip
		
  For the backwards direction, let $S$ be a solution to $(H, d, \Lambda)$.
  From \Cref{claim:BEDC_meta_gadget}, we get that any solution to a vertex gadget has size at least two and any solution to an edge gadget has size at least four.
  As a solution must be a solution to every single gadget and our gadgets are arc disjoint, we have that any solution must have size at least $2k + 4 \frac{k(k-1)}{2} = d$.
  From this it follows that our solution must have size exactly two in every vertex gadget and size exactly four in every edge gadget.
  \Cref{claim:BEDC_meta_gadget} states that the structure of these solutions in the vertex gadgets must be such that in $H^i - S$ any minimum weight $s \to v^i_+$-path in $V^i - S$ has weight $5M + 2(n-1) + u_i$ and any minimum weight $s \to v^i_-$-path in $V^i$ has weight $5M + 2(n-1) - u_i$ for some $u_i \in \lbrace 1, \hdots, n\rbrace$.
  We define the vertex set $U \subseteq V(G)$ to be the set $\lbrace v_{u_i} \mid i \in \lbrace 1, \hdots, k\rbrace\rbrace$.
	
  We can apply \Cref{claim:BEDC_meta_gadget} also to the structure of $S$ with respect to $H^{i,j} - S$.
  This tells us that for every $1 \leq i < j \leq k$ there is an index $f_{i,j} \in \lbrace 1, \hdots,m\rbrace$ such that for the canonical representation $\lbrace v_g, v_h\rbrace$ of $e_{f_{i,j}}$, in $H^{i,j} - S$ we have that any minimum weight $v^i_+ \to t$-path has weight $5M + 2(m-1) - g$, any minimum weight $v^i_- \to t$-path has weight $5M + 2(m-1) + g$, any minimum weight $v^j_+ \to t$-path has weight $5M + 2(m-1) - h$ and any minimum weight $v^j_- \to t$-path has weight $5M + 2(m-1) + h$.
  If we concatenate a minimum weight $s \to v^i_+$-path in $V^i - S$ with a minimum weight $v^i_+ \to t$-path in $H^{i,j} - S$, we get an $s \to t$-path of weight $10M + 2(n -1) + 2(m -1) + u_i-g$.
  As $S$ is a solution to $(H, d, \Lambda)$, this path has length more than $\Lambda$, which is equivalent to $u_i + 1 > g$ or $u_i \geq g$.
  By applying the same argument to minimum weight paths that concatenate at $v^i_-$ instead of $v^i_+$, we get the inequality $u_i \leq g$.
  Together this yields $u_i = g$.
  If we swap $H^i$ for $H^j$ and $v^i_+$ and $v^i_-$ for $v^j_+$ and $v^j_-$, we get $u_j = h$ in the same way.
  So the edge $\lbrace v_{u_i}, v_{u_j}\rbrace = \lbrace v_g, v_h\rbrace = e_{f_{i,j}}$ exists in $G$.
  Since this is true for all $(i,j)$ with $1 \leq i < j \leq k$, we have that $U$ is indeed the vertex set of a clique.

  \smallskip
  \noindent
  \textbf{Bound on Parameters and Acyclicity.}
  For pathwidth consider the vertex set $U = \lbrace s, t \rbrace \cup \bigcup_{i=1}^k \lbrace v^i_+, v^i_-\rbrace$.
  This is a set of $2(k+1)$ vertices and will be contained in every bag of our path decomposition.
  Note that all vertices of $H - U$ have no neighbors outside the gadget they are contained in.
  So it suffices to find a path decomposition for every gadget (minus vertices in~$U$), concatenate them and add $U$ to every set.
  The pathwidth of this decomposition is then bounded by $|U|$ plus the maximum pathwidth of any gadget path decomposition.
  We do this path decomposition on the meta-gadget~$R^{a,b}$.
  For every $r \in \lbrace 1, \hdots, b\rbrace$ let $X_{2r-1} = \lbrace p^q_r, \hat{p}^q_r \mid q \in \lbrace 1, \hdots, a\rbrace\rbrace$ and for every $r \in \lbrace 1, \hdots, b-1\rbrace$ let $X_{2r} = \lbrace \hat{p}^q_r, p^q_{r+1} \mid q \in \lbrace 1, \hdots, a\rbrace\rbrace$.
  Then every arc of $R^{a,b} - S$ is contained in one of the bags $X_i$ and each vertex in at most two consecutive bags.
  Thus, $X_1, \hdots, X_{2b-1}$ is a path decomposition.
  The maximum size of a bag is $2a$.
  For our specific gadgets $a$ is at most four.
  Thus, we get that the pathwidth of $H$ is at most $8 + |U| = 2k + 10$.
  The maximum vertex degree inside a gadget is $6$ for the $p^q_r$ vertices.
  However, the vertices $v^i_+$, $v^i_-$, $s$ and $t$ have neighbors in several gadgets.
  Nodes $v^i_+$, $v^i_-$ have degree $k + \frac{k(k-1)}{2}$, $s$ has degree $4k$ and $t$ has degree $16 \frac{k(k-1)}{2}$.
  In any case the maximum is bounded by a function of $k$.
  Last but not least the deletion size is $d = 2k + 4 \frac{k(k-1)}{2}$.
	
  For the acyclicity note that the gadgets $R^{a,b}$ are acyclic (as all arcs only run to increasing positions along the paths $P^q$).
  So it only remains to prove that the boundary vertices $s$, $t$, $v^i_+$ and $v^i_-$ are connected in a way that preserves acyclicity.
  Node $s$ has only outgoing arcs and $t$ has only incoming arcs.
  Moreover, all $v^i_+$ and $v^i_-$ the ingoing arcs come exactly from the vertex gadgets and the outgoing arcs come exactly form the edge gadgets and these gadgets are not connected by any other vertices.
\end{proof}

Now we use \Cref{lem:BEDC_is_W1_hard_parameterized_in_pathwidth_and_deletion_size} to show hardness for \NDFAS{} by weighting all arcs of an \textsc{Bounded Edge Directed $(s,t)$-Cut} instance with weight one and introducing $(k+1)$-many $(t,s)$ arcs of weight $-(\ell + 1)$.

\begin{theorem}
\label{thm:W1_hardness_for_pathwidth_number_negative_arcs_and_deletion_size}
  {\sc Negative DFAS} is \Whard{1} when parameterized in the pathwidth, deletion size and number of negative arcs.
\end{theorem}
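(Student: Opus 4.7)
The plan is to derive the theorem by a polynomial parameter transformation from \textsc{Bounded Edge Directed $(s,t)$-Cut} on acyclic digraphs, using \Cref{lem:BEDC_is_W1_hard_parameterized_in_pathwidth_and_deletion_size}. Given an instance $(G,s,t,k,\ell)$ of \textsc{Bounded Edge Directed $(s,t)$-Cut} where $G$ is a DAG, construct a \textsc{Negative DFAS} instance $(H,w,k)$ by weighting every arc of $G$ with $+1$ and adding $k+1$ parallel arcs from $t$ to $s$, each of weight $-(\ell+1)$. This produces an instance with deletion budget $k$, exactly $k+1$ negative arcs, and pathwidth increased by only an additive constant over $\pw(G)$ (since $s,t$ are already in the same bag in the path decomposition exhibited in the proof of \Cref{lem:BEDC_is_W1_hard_parameterized_in_pathwidth_and_deletion_size}, parallel $(t,s)$-arcs require no new bags). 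Hence all three parameters $\pw(H)$, $k$, and $w_-(H)$ are bounded by a function of the original parameter $k$.

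For the forward direction I would argue as follows. Let $S \subseteq A(G)$ be a valid solution to $(G,s,t,k,\ell)$, so every $s \to t$-path in $G - S$ has length at least $\ell+1$. Regarding $S$ as an arc set of $H$, every cycle in $H - S$ must traverse at least one $(t,s)$-arc (since $G$ is acyclic) and so decomposes as such an arc of weight $-(\ell+1)$ together with an $s \to t$-path in $G - S$ of total weight at least $\ell+1$. Hence every cycle in $H - S$ has non-negative weight, so $S$ is a negative directed feedback arc set.

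For the backward direction, let $S$ be a solution to $(H,w,k)$, and split it as $S = S_G \cup S_{ts}$ with $S_G \subseteq A(G)$ and $S_{ts}$ consisting of deleted $(t,s)$-arcs. Because $|S_{ts}| \le k < k+1$, at least one $(t,s)$-arc remains in $H - S$. If some $s \to t$-path $P$ in $G - S_G$ had length at most $\ell$, then $P$ together with any surviving $(t,s)$-arc would form a cycle in $H - S$ of weight at most $\ell - (\ell+1) = -1$, contradicting that $S$ is a solution. Therefore $S_G$ is a valid $(s,t)$-cut of length exceeding $\ell$ and has size at most $k$, certifying a yes-instance of \textsc{Bounded Edge Directed $(s,t)$-Cut}.

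The main obstacle is confirming that $\pw(H)$ stays bounded in terms of $k$: this is handled by reusing the path decomposition from \Cref{lem:BEDC_is_W1_hard_parameterized_in_pathwidth_and_deletion_size} which already places $s$ and $t$ in every bag, so the $(t,s)$-arcs add no width. The secondary subtlety, which must be handled carefully, is the necessity of using \emph{$k+1$} parallel negative arcs rather than one: a single $(t,s)$-arc could be trivially removed to destroy every cycle, so the $(k{+}1)$-fold duplication is what forces any solution to spend its budget on the underlying cut rather than on the negative arcs themselves. Since the transformation is computable in polynomial time and transfers the parameter bound polynomially, the \Whard{1}ness lifts from \Cref{lem:BEDC_is_W1_hard_parameterized_in_pathwidth_and_deletion_size} to \textsc{Negative DFAS} parameterized by $\pw(G) + k + w_-$.
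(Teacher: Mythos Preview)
Your proposal is correct and follows essentially the same approach as the paper: the same reduction from \textsc{Bounded Edge Directed $(s,t)$-Cut} on DAGs, the same construction with $k+1$ parallel $(t,s)$-arcs of weight $-(\ell+1)$, and the same equivalence argument. The only cosmetic difference is in the backward direction, where the paper passes to an inclusion-wise minimal solution to conclude it contains no $(t,s)$-arcs, whereas you directly use $S_G\subseteq S$; both are fine, and your pathwidth bound (reusing that $s,t$ lie in every bag) is a harmless variant of the paper's ``add $\{s,t\}$ to every bag'' argument.
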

\begin{proof}
  We show the theorem by reduction from \textsc{Bounded Edge Directed $(s,t)$-Cut} on DAGs parameterized by pathwidth and deletion size, which is \Whard{1} by \Cref{lem:BEDC_is_W1_hard_parameterized_in_pathwidth_and_deletion_size}.
  Let $(G, s, t, k, \ell)$ be a \textsc{Bounded Edge Directed $(s,t)$-Cut} instance with $G$ being a DAG.
  We construct a solution equivalent instance $(H, w, k)$ of \NDFAS{} as follows.
  To construct~$H$, give all arcs in $G$ weight $1$ and introduce $k+1$ arcs $(t, s)$ of weight $-(\ell + 1)$.
  This completes the construction of $(H, w, k)$.
  We now show the solution equivalence.

  \smallskip
  \noindent
  \textbf{Bounded Edge Directed $(s,t)$-Cut to Negative Directed Feedback Arc Set.}
  Let $S$ be a solution to $(G, s, t, k, \ell)$.
  We claim that $S$ is also a solution to $(H, w, k)$.
  The set~$S$ trivially fulfills the size bound of $k$.
  Suppose, for sake of contradiction, that there is a negative cycle $C$ in $H - S$.
  As the only negative arcs of $H$ are those from $t$ to $s$, $C$ contains exactly one of these.
  So $C[s,t]$ is an $s \to t$-path in $G - S$ and has weight less than $-w((t,s)) = \ell + 1$.
  Now, all weights in $G$ are $1$ and thus $C[s,t]$ is an $s \to t$-path in $G - S$ of length at most $\ell$, a contradiction to $S$ being a solution to $(G, s, t, k, \ell)$.
  Thus,~$S$ is a solution to $(H, w, k)$.

  \smallskip
  \noindent
  \textbf{Negative DFAS to Bounded Edge Directed $(s,t)$-Cut.}
  Let $S$ be an inclusion-wise minimal solution to $(H, w, k)$.
  As there are $(k+1)$-many parallel arcs $(t, s)$, one of them does not lie in~$S$.
  Any cycle in $H$ can use at most one of these arcs and thus as $S$ is inclusion-wise minimal, $S$ contains non of them, i.e. it contains only arcs of $G$.
  We claim that $S$ is a solution to $(G, s, t, k, \ell)$.
  It trivially fulfills the size bound.
  Now assume for contradiction that $G - S$ contains an $s \to t$-path $P$ of length at most $\ell$.
  Then $P \circ (t, s)$ is a cycle of weight $w(P) - (\ell + 1) < 0$ in $H - S$, which is a contradiction to $S$ being a solution to $(H, w, k)$.

  \smallskip
  \noindent
  \textbf{Bounding the parameters.}
  The deletion size of our new instance stays the same and thus is bounded by the deletion size of the old instance.
  The same holds for the number of negative arcs, as we only introduced $k+1$ of them.
  Finally, note that we can turn any pathwidth decomposition of $G$ into one of $H$ by adding $\lbrace s,t \rbrace$ to all bags.
  This shows $\pw(H) \leq \pw(G) + 2$.
\end{proof}

\subsection{No Admission of a Polynomial Compression}\label{sec:noPolynomialCompression}
In this section, we exhibit that there is no polynomial compression for {\sc MinFB} parameterized by~$k + w_-$. We give a reduction from the {\sc Bounded Edge Directed $(s,t)$-Cut} problem.
Recall from the previous section that this problem takes as input a digraph~$G$, vertices $s,t$, and integers $k,\ell\in\mathbb N$, and asks for a set $X \subseteq E(G)$ of size at most $k$ such that $G-X$ contains no $s$-$t$-paths of length at most $\ell$. 
Fluschnik et al.~\cite{FluschnikEtAl2016} showed that {\sc Bounded Edge Directed $(s,t)$-Cut} does not admit a kernel of size polynomial in~$k$ and $\ell$, assuming $\mathsf{NP}\not\subseteq \mathsf{coNP}/poly$, even for acyclic input digraphs.
In fact, their construction allows for a stronger result, ruling out a polynomial compression.

\begin{proposition}[Fluschnik et al.~\cite{FluschnikEtAl2016}]
\label{thm:BEDChasnoPolComp}
  Assuming $\mathsf{NP}\not\subseteq \mathsf{coNP}/poly$, {\sc Bounded Edge Directed $(s,t)$-Cut} does not admit a polynomial compression in $k + \ell$ even when $G$ is a DAG.
\end{proposition}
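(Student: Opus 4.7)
The plan is to prove the statement by an OR-cross-composition, the standard framework due to Bodlaender, Jansen, and Kratsch for ruling out polynomial compressions: if an $\mathsf{NP}$-hard language $L$ admits an OR-cross-composition into a parameterized problem $\Pi$, then $\Pi$ has no polynomial compression unless $\mathsf{NP}\subseteq \mathsf{coNP}/poly$. I will instantiate this framework with a carefully chosen source problem and a ``parallel gadgets between $s$ and $t$'' composition in a DAG.

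First, I would fix a suitable $\mathsf{NP}$-hard source problem $L$ whose instances naturally translate into ``length constraints on $s$-$t$ paths''; a convenient candidate is a restricted variant of SAT (say, Red-Blue Dominating Set or Positive 1-in-3-SAT encoded on a layered DAG), because its solutions can be expressed as choosing one outgoing arc per layer. The polynomial equivalence relation $\mathcal{R}$ groups instances of $L$ by their bit-length, so that all instances in one equivalence class share the same combinatorial parameters $n$ (e.g.\ number of variables and clauses).

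Next, given $q$ equivalent instances $x_1,\dots,x_q$ of $L$, each of size $n$, I would construct a DAG $G$ with two distinguished vertices $s,t$ together with a deletion budget $k$ and length bound $\ell$, both polynomial in $n$ and independent of $q$, such that $(G,s,t,k,\ell)$ is a YES-instance of \textsc{Bounded Edge Directed $(s,t)$-Cut} if and only if some $x_i$ is a YES-instance of $L$. Concretely: (i) for each $x_i$ build an instance gadget $H_i$, with designated entry $s_i$ and exit $t_i$, that is a layered DAG of depth roughly $\ell$, whose short $s_i$-$t_i$ paths correspond bijectively to assignments of $x_i$ and which satisfies ``$x_i$ is YES iff $H_i$ has a cut of size $\leq k$ destroying all paths of length $\leq \ell$''; (ii) attach all $H_i$ in parallel between $s$ and $t$ via short padding paths of lengths chosen so that the length bound $\ell$ is attained exactly on paths through some $H_i$; (iii) duplicate every arc in each padding region sufficiently many times so that cutting in the ``connecting'' part is prohibitively expensive, which forces any feasible cut of size $\leq k$ to reside entirely inside a single $H_i$. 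Acyclicity is maintained because each $H_i$ is a DAG and the whole graph is topologically layered from $s$ to $t$.

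The main obstacle will be engineering the gadgets so that $k+\ell$ remains polynomial in $n$ while faithfully simulating the OR of $q$ instances; in particular, one must rule out ``hybrid'' short $s$-$t$ paths that weave between different $H_i$'s, and one must guarantee that a $k$-edge cut cannot be cheaply distributed across several gadgets. The standard remedy is to place high-multiplicity parallel arcs on every connector and to pad the non-selected gadgets so that their natural $s_i$-$t_i$ paths are already longer than $\ell$; only the selected gadget has paths of length exactly $\ell$, so only it must be cut. Once the construction is complete, correctness of the OR-reduction follows directly by exhibiting, in both directions, how a size-$k$ cut of $G$ translates to a solution of exactly one $x_i$, and the cross-composition framework then yields the claimed non-existence of a polynomial compression, even restricted to DAGs.
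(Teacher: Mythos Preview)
The paper does not prove this proposition at all; it is quoted as a known result of Fluschnik et al.\ and used as a black box. So there is no ``paper's own proof'' to compare your attempt against.

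That said, your proposed construction has a genuine gap. Placing the instance gadgets $H_1,\dots,H_q$ in parallel between $s$ and $t$, and then making the connectors uncuttable, produces the \emph{wrong} Boolean semantics. If any feasible cut $S$ of size at most $k$ must lie entirely inside a single $H_i$, then for every $j\neq i$ the short $s_j$--$t_j$ paths inside $H_j$ survive untouched and yield short $s$--$t$ paths in $G-S$. Hence the composed instance is a YES-instance only if \emph{all} other $H_j$ already have no short paths before any cutting, which has nothing to do with whether $x_j\in L$. Your sentence ``pad the non-selected gadgets so that their natural $s_i$--$t_i$ paths are already longer than $\ell$'' is circular: the construction is built once, before knowing which (if any) $x_i$ is a YES-instance, so it cannot pad ``the others'' differently. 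As written, the reduction simply does not realise an OR.

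The standard repair, and essentially what the Fluschnik et al.\ construction does, is to add an \emph{instance selector}: route $s$ to the $H_i$'s through a binary tree of depth $\lceil\log q\rceil$, so that $\lceil\log q\rceil$ extra deletions suffice to disconnect all but one $H_i$ from $s$. The new budget $k+\lceil\log q\rceil$ is still $\mathrm{poly}(\max_i|x_i|+\log q)$, as the cross-composition framework permits. With a selector, the OR goes through: if some $x_i$ is YES, first select $H_i$ with $\lceil\log q\rceil$ cuts, then solve it with $k$ more; conversely, any size-$(k+\lceil\log q\rceil)$ cut can reach only boundedly many gadgets and must fully solve one of them. Making this backward direction airtight (preventing the cut from being smeared across several gadgets and the tree) is where the actual work lies, and it is precisely the part your outline does not address.
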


To get to the {\sc MinFB} problem, we first consider as an intermediate step the {\sc Directed Small Cycle Transversal}  problem.

\problemdef{Directed Small Cycle Transversal}
  {A directed graph $G$ and integers $k,\ell$.}
  {Find a set $X \subseteq E(G)$ of size at most $k$ such that $G-X$ contains no cycles of length at most $\ell$.}

Fluschnik et al.~\cite{FluschnikEtAl2016} showed that {\sc Directed Small Cycle Transversal} does not admit a kernel of size polynomial in~$k$ and $\ell$, unless $\mathsf{NP}\subseteq \mathsf{coNP}/poly$.
Again their result can be strengthened to not admitting a polynomial compression.
They further observed that {\sc Directed Small Cycle Tranversal} admits a simple branching algorithm that runs in time $\mathcal{O}(\ell^k\cdot n\cdot (n+m))$.
Here we argue that a dependence on both parameters $k$ and $\ell$ is necessary for fixed-parameter tractability:

\begin{lemma}
\label{thm:PPTtoDSCT}
  There is a polynomial parameter transformation from {\sc Bounded Edge Directed $(s,t)$-Cut} in DAGs with parameter $k$ (parameter $\ell$, parameter $k + \ell$) to {\sc Directed Small Cycle Transversal} with parameter $k$ (resp. $\ell$, resp. $k + \ell$).
	
  This even holds if {\sc Directed Small Cycle Transversal} is restricted to instances where there are vertices $s, t \in V(G)$ such that all cycles of $G$ consist of an $s$-$t$-path and an arc of the form $(t,s)$.
\end{lemma}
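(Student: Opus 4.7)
The plan is to give an explicit polynomial parameter transformation. Given a {\sc Bounded Edge Directed $(s,t)$-Cut} instance $(G,s,t,k,\ell)$ with $G$ a DAG, form the digraph $G'$ by taking $G$ and adding $k+1$ parallel copies of the arc $(t,s)$. Keep the vertices $s$ and $t$ as distinguished, and output the {\sc Directed Small Cycle Transversal} instance $(G',k,\ell+1)$. The construction is clearly polynomial-time, yields $k' = k$ and $\ell' = \ell + 1$, and thus gives a PPT in each of the three parameter regimes ($k$, $\ell$, and $k+\ell$).

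Next I would verify the additional structural property. Since $G$ is acyclic and the only new arcs in $G'$ point from $t$ to $s$, every directed cycle in $G'$ must use at least one of the new arcs; because all new arcs share the same orientation $t \to s$, a cycle uses exactly one of them. Hence every cycle of $G'$ is the concatenation of an $s$-$t$-path in $G$ with a single $(t,s)$-arc, matching the restricted form stated in the lemma.

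For equivalence, I would argue both directions. If $X \subseteq E(G)$ is a feasible solution to the BEDC instance, then a cycle of length at most $\ell+1$ in $G' - X$ would, by the structural observation, decompose into an $s$-$t$-path of length at most $\ell$ in $G - X$ plus one $(t,s)$-arc, contradicting that $X$ already hits all short $s$-$t$-paths; hence $X$ is a DSCT solution in $G'$ of size at most $k$. Conversely, suppose $X' \subseteq E(G')$ is a DSCT solution of size at most~$k$. Since we included $k+1$ parallel copies of $(t,s)$ and $|X'| \leq k < k+1$, at least one $(t,s)$-arc survives in $G' - X'$. Let $X := X' \cap E(G)$. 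If $P$ were any $s$-$t$-path of length at most $\ell$ in $G - X$, then $P$ together with a surviving $(t,s)$-copy would be a cycle of length at most $\ell + 1$ in $G' - X'$, contradicting feasibility of $X'$. Thus $X$ is a BEDC solution of size at most $k$.

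There is no substantial obstacle: the only subtlety is forcing a DSCT solution to effectively avoid the $(t,s)$-arcs, which is handled by the padding with $k+1$ parallel copies so that no solution of budget $k$ can destroy all of them. The rest is immediate from the DAG hypothesis on $G$, which prevents any cycle in $G'$ from arising without using an added arc, and which simultaneously establishes the claimed restricted cycle structure.
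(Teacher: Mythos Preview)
Your proposal is correct and follows essentially the same approach as the paper: add $k+1$ parallel $(t,s)$-arcs to the DAG, set $\ell' = \ell+1$, and use the acyclicity of $G$ to conclude that every cycle in $G'$ is an $s$--$t$-path followed by one of the new arcs. The equivalence argument and the use of the $k+1$ copies to force a surviving $(t,s)$-arc are identical to the paper's proof.
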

\begin{proof}
  Let $(G, s, t, k, \ell)$ be a {\sc Bounded Edge Directed $(s,t)$-Cut} instance where $G$ is a DAG.
  As $G$ is a DAG, it admits a topological ordering $v_1, \hdots, v_{|V(G)|}$ of its vertices, so that there are no arcs $(v_i, v_j)$ for $j < i$.
  Without loss of generality, let $v_1 = s$ and $v_{|V(G)|} = t$, as vertices before~$s$ or after $t$ in a topological ordering are never part of any $s$-$t$-path.
	
  We now create a directed graph $G'$ from $G$ by adding $k +1$ parallel arcs $a_1,\dots,a_{k+1}$ from~$t$ to $s$.
  Then every cycle in~$G'$ consists of an $s$-$t$-path and an arc $a_i$, as $G$ was acyclic.
  Set $\ell' = \ell + 1$.
  Then $(G', k, \ell')$ is an instance of {\sc Directed Small Cycle Transversal}.
  The above transformation can be done in polynomial time and the parameter increases by at most one (depending on whether $\ell$ is part of the parameter).
	
  It remains to show that $(G', k, \ell')$ is a ``yes''-instance of {\sc Directed Small Cycle Transversal} if and only if $(G, s, t, k, \ell)$ is a ``yes''-instance of {\sc Bounded Edge Directed $(s,t)$-Cut}.
	
  For the forward direction, let $X'$ be a solution to $(G', k, \ell')$.
  Consider $X = X' \setminus \{a_1,\dots,a_{k+1}\}$.
  For sake of contradiction, suppose that $G - X$ contains an $s$-$t$-path $P$ of length at most $\ell$.
  As $|X'| \leq k$, it can not contain all arcs $a_i$.
  Without loss of generality, $a_1 \not \in X'$.
  Then $P$ followed by $a_1$ is a cycle in $G' - X'$ of length at most $\ell +1 = \ell'$---a contradiction to $X'$ being a solution to $(G', k, \ell')$.
	
  For the reverse direction, let $X$ be a solution to $(G, s, t, k, \ell)$.
  Then $X$ is also a solution to $(G', k, \ell')$ by the following argument.
  Suppose, for sake of contradiction, that $G' - X$ contains a cycle $C$ of length at most $\ell'$.
  By the structure of $G'$, $C$ consists of an $s$-$t$-path $P$ in $G - X$ and an arc~$a_i$.
  Then $|P| = |C| -1 \leq \ell$, contradicting that $X$ is a solution to~$(G, k, \ell)$.
\end{proof}

\begin{lemma}
\label{thm:PPTtoNDFAS}
  There is a PPT from {\sc Directed Small Cycle Transversal} on instances where every cycle uses an arc of type $(t,s)$ when parameterized by $k$ (by $k + \ell$) to {\sc Negative DFAS} parameterized by $k$ (resp. $k + w_-$, where $w_-$ is the number of arcs with negative weight); this even holds in the case where $w: A(G) \rightarrow \{\pm 1\}$.
\end{lemma}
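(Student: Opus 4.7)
The plan is to convert each cycle-closing $(t,s)$-arc into a long directed path of $-1$-weight arcs and assign weight $+1$ to every other arc, so that a cycle has length at most $\ell$ in the source instance if and only if the corresponding cycle has strictly negative weight in the target instance.

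Before the construction itself, I will preprocess the input $(G,k,\ell)$ so that the number $N$ of $(t,s)$-arcs is at most $k+1$. Since all cycles pass through an arc of the form $(t,s)$, all these arcs are parallel arcs from $t$ to $s$; if $N > k+1$, I delete arbitrary $(t,s)$-arcs until exactly $k+1$ remain. The reason this is answer-preserving is that in either instance any deletion set of size at most $k$ must leave at least one $(t,s)$-arc intact, so the cycle-killing condition reduces to breaking every short $s\to t$-path in $G$ with non-$(t,s)$ arcs, and this short-path structure is untouched by removing parallel copies of cycle-closing arcs. I will then build the \textsc{Negative DFAS} instance $(H,w,k)$ as follows. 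Every non-$(t,s)$ arc of $G$ is kept and assigned weight $+1$. For every remaining $(t,s)$-arc $e_i$, I remove $e_i$ and insert an internally disjoint $t\to s$ path $P_i$ of length exactly $\ell$ with all arcs of $P_i$ weighted $-1$.

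Because $G$ minus its $(t,s)$-arcs is acyclic by hypothesis, every cycle of $H$ must traverse some $P_i$ end-to-end, and thus decomposes as a simple $s\to t$-path $P$ in $G$ (using only non-$(t,s)$ arcs) concatenated with $P_i$; its weight is $|P|-\ell$, which is strictly negative iff $|P|+1 \le \ell$, i.e.\ iff the original cycle $P\circ e_i$ in $G$ has length at most $\ell$. Deletion sets transfer across along this correspondence: from $X\subseteq A(G)$ I form $X'\subseteq A(H)$ by replacing each $e_i\in X$ by (say) the first arc of $P_i$ and keeping the non-$(t,s)$ arcs as they are, and from $X'\subseteq A(H)$ I form $X$ by replacing each arc of $X'$ lying on some $P_i$ with $e_i$ and keeping the rest. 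Both maps do not increase cardinality, the equivalence of killing all short cycles in $G$ and killing all negative cycles in $H$ follows directly from the weight computation above, and the deletion-size parameter $k$ is preserved identically. The number of negative arcs in $H$ is exactly $N\cdot\ell \le (k+1)\ell$, which is polynomial in $k$ and in $k+\ell$, so the target parameter $k$ (resp.\ $k+w_-$) is polynomially bounded in the source parameter $k$ (resp.\ $k+\ell$), as required for a PPT. All arc weights produced are in $\{\pm 1\}$.

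The main technical point to get right is the preprocessing step: one needs to check carefully that capping $N$ at $k+1$ does not flip the answer, which reduces to the observation that as soon as $N>k$ a deletion set of size $\le k$ cannot exhaust the parallel $(t,s)$-arcs, so both before and after the reduction the decisive constraint is to break every short $s\to t$-path; everything that follows is the routine verification that a subdivided path can only contribute in its entirety to a cycle of $H$, so solutions map back and forth cleanly.
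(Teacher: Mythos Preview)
Your proposal is correct and follows essentially the same construction as the paper: cap the number of parallel $(t,s)$-arcs at $k+1$, give every other arc weight $+1$, replace each $(t,s)$-arc by an internally disjoint $t\to s$ path of $\ell$ arcs each of weight $-1$, and verify that short cycles correspond exactly to negative cycles while solutions transfer in both directions without increasing size. The paper's proof is organized the same way and uses the same arithmetic $|P|-\ell<0 \iff |P|+1\le\ell$; your justification of the preprocessing step is in fact slightly more explicit than the paper's one-line remark.
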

\begin{proof}
  We start with a {\sc Directed Small Cycle Transversal} instance $(G, k, \ell)$ as described in the lemma.
  Let $a_1,\dots,a_p$ be the arcs of the form $(t, s)$.
  For any $p \geq k+1$ there is always an arc which survives the deletion of some arc set of at most $k$ elements.
  So we can assume $p \leq k + 1$ as deleting superfluous arcs does not change the solution.
  Set $A_{+1} = A(G) \setminus \{a_1,\dots,a_p\}$.
  Now replace the $a_i$ by mutually disjoint (except for $s$ and $t$) paths $P_i$ of length $\ell$.
  Call the resulting directed graph~$G'$ and let $A_{-1} = \cup_{i=1}^pA(P_i)$.
  Finally, define
  $w(a) = 1$ for $a\in A_{+1}$ and $w(a) = -1$ for $a\in A_{-1}$.
  As $A(G') = A_{-1} \uplus A_{+1}$, the function $w:A(G') \rightarrow \{-1,+1\}$ is well defined.
	
  The instance $(G', w, k)$ has the required form.
  Also the transformation can be made in polynomial time.
  As $k$ remains unchanged and $w_- = |A_{-1}| = \ell \cdot p \leq  \ell \cdot (k + 1)$ is bounded by a polynomial in $k + \ell$, the parameter restrictions of PPTs are fulfilled.
  It remains to prove that $(G', w, k)$ is a ``yes''-instance of {\sc Negative DFAS} if and only if $(G, k, \ell)$ is a ``yes''-instance of {\sc Directed Small Cycle Transversal}.
	
  For the forward direction, let $X'$ be a solution of $(G', w, k)$.
  Let $X$ be the set where every arc of~$X'$ which is part of some $P_i$ is replaced by $a_i$ (and duplicates are removed).
  Clearly, $|X| \leq |X'| \leq k$.
  Suppose there is a cycle $C$ of length at most $\ell$ in $G - X$.
  Then $C$ contains a unique arc $a_j$.
  Let $C'$ be the cycle in $G'$ resulting from the replacement of $a_j$ by~$P_j$ in~$C$.
  Then~$C'$ is also in $G' - X'$ by choice of $X$ and contains at most $\ell -1$ arcs in $A_{+1}$ and~$\ell$ arcs in~$A_{-1}$.
  This yields the contradiction $w(C') = |A(C') \cap A_{+1}| - |A(C') \cap A_{-1}| \leq \ell -1 - \ell = -1$.
	
  For the reverse direction, let $X$ be a solution of $(G, k, \ell)$.
  Let $X'$ the set $X$ where every arc~$a_i$ is replaced by the first arc of~$P_i$.
  By definition of $G'$, $X' \subseteq A(G')$ and $|X'| = |X| \leq k$ holds.
  Now suppose there is a cycle $C'$ in $G' - X'$ with $w(C') < 0$.
  As the paths $P_i$ are mutually disjoint (except the end vertices) every inner vertex of each $P_i$ has in-degree and out-degree one.
  Thus, if there is an arc of some~$P_i$ inside $C'$ the whole path $P_i$ is.
  Replace each such $P_i$ by $a_i$ to obtain a cycle $C$.
  By construction of~$G'$ and~$X'$, this cycle $C$ is in $G - X$.
  Each cycle in $G$ and therefore also $C$ contains exactly one arc of type~$a_i$.
  Therefore,~$C'$ contains exactly one path $P_i$ and from $0 > w(C') = |A(C') \cap A_{+1}| - |A(C') \cap A_{-1}| = |A(C') \cap A_{+1}|  - \ell$ we get that $|A(C') \cap A_{+1}| < \ell$.
  As $|A(C') \cap A_{+1}|$ is integral we can sharpen the bound to $|A(C') \cap A_{+1}| \leq \ell - 1$.
  By $A(C) = \left(A(C') \cap A_{+1}\right) \uplus \{a_i\}$, we get that $|A(C)| = |A(C') \cap A_{+1}|  + 1 \leq \ell$---a contradiction.
\end{proof}

Concatenating all reductions above and combing with \Cref{thm:NDFASisequivalenttospecialMIS}, we obtain the following corollary:
\begin{corollary}
\label{thm:PPTfromBEDCtoMIS}
  There is a PPT from {\sc Bounded Edge Directed $(s,t)$-Cut} parameterized by $k$ (in $k + \ell$) to {\sc MinFB} parameterized by $k$ (in $k + w_-$).
  This even holds for instances of {\sc MinFB} where $A$ is a system of difference constraints and $w\in\{\pm 1\}^m$.
\end{corollary}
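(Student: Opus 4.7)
The plan is to obtain the claimed PPT by simply chaining the three transformations that have already been proven in the excerpt and invoking the transitivity of polynomial parameter transformations. Concretely, I would compose in order: (i) the PPT of \Cref{thm:PPTtoDSCT} from {\sc Bounded Edge Directed $(s,t)$-Cut} on DAGs to {\sc Directed Small Cycle Transversal} restricted to instances where every cycle uses an arc of type $(t,s)$; (ii) the PPT of \Cref{thm:PPTtoNDFAS} from that restricted variant of {\sc Directed Small Cycle Transversal} into {\sc Negative DFAS} with weights $w\colon A(G)\to\{\pm 1\}$; (iii) the parameter-equivalent transformation of \Cref{thm:NDFASisequivalenttospecialMIS} from {\sc Negative DFAS} to {\sc MinFB} on systems of difference constraints.

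For the parameter bookkeeping I would argue separately for the two parameter settings. For parameter $k$ alone, \Cref{thm:PPTtoDSCT} preserves $k$ exactly, \Cref{thm:PPTtoNDFAS} preserves $k$ exactly, and \Cref{thm:NDFASisequivalenttospecialMIS} even gives parameter equivalence with $k'=k$. For parameter $k+\ell$, \Cref{thm:PPTtoDSCT} yields an instance with parameters $k$ and $\ell'=\ell+1$, and then \Cref{thm:PPTtoNDFAS} produces an instance with $w_- = \ell \cdot p \leq \ell\cdot(k+1)$ and the same $k$, which is polynomially bounded in $k+\ell$; finally \Cref{thm:NDFASisequivalenttospecialMIS} preserves $k$ and $w_-$. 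Composing gives a polynomially bounded parameter on the {\sc MinFB} side, and correctness of the chain follows since each individual link is an equivalence of yes-instances.

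Finally, I would verify that the structural guarantees on the target instance hold. The resulting matrix $A$ arises through \Cref{thm:NDFASisequivalenttospecialMIS} as the incidence matrix of the digraph produced by \Cref{thm:PPTtoNDFAS}, so by construction each row of $A$ has exactly one $+1$ and one $-1$ entry, i.e.\ $A$ is a system of difference constraints; and since the weight function coming out of \Cref{thm:PPTtoNDFAS} is $\{\pm 1\}$-valued, the right-hand side vector $b$ of the resulting {\sc MinFB} instance satisfies $b\in\{\pm 1\}^m$ as required.

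I do not expect a real obstacle here: the reduction is a pure composition, and all nontrivial combinatorial work has been absorbed into \Cref{thm:PPTtoDSCT} and \Cref{thm:PPTtoNDFAS}. The only mildly delicate point is to confirm that the intermediate {\sc Directed Small Cycle Transversal} instance lands in the restricted class demanded by \Cref{thm:PPTtoNDFAS} (every cycle consists of an $s$-$t$-path together with a $(t,s)$-arc), but this is exactly the additional guarantee stated in the second sentence of \Cref{thm:PPTtoDSCT}, so the two links fit together without any extra argument.
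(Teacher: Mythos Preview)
Your proposal is correct and mirrors the paper's own argument: the paper simply says ``concatenating all reductions above and combining with \Cref{thm:NDFASisequivalenttospecialMIS}'' to obtain the corollary, which is exactly the three-step composition you describe. The parameter bookkeeping and the check that the intermediate instance lands in the restricted class required by \Cref{thm:PPTtoNDFAS} are handled just as you outline.
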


This corollary yields the desired result.
\begin{theorem}
\label{thm:nopolycompressionkb-}
  Assuming $\mathsf{NP}\not\subseteq \mathsf{coNP}/poly$, {\sc MinFB} does not admit a polynomial compression when parameterized by $k + w_-$ even for systems $A$ of difference constraints and right-hand sides $w\in\{\pm 1\}^m$.
\end{theorem}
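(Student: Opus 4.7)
The proof is essentially an immediate composition of three results already established in the excerpt, so my plan is short.

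First, I would invoke \Cref{thm:BEDChasnoPolComp} of Fluschnik et al., which states that, under the hypothesis $\mathsf{NP}\not\subseteq \mathsf{coNP}/poly$, the {\sc Bounded Edge Directed $(s,t)$-Cut} problem admits no polynomial compression when parameterized by $k+\ell$, even on DAG inputs. This is the seed intractability assumption on which everything else in this subsection rests.

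Second, I would invoke \Cref{thm:PPTfromBEDCtoMIS}, the corollary obtained by chaining the two polynomial parameter transformations in \Cref{thm:PPTtoDSCT} and \Cref{thm:PPTtoNDFAS} with the parameter-preserving equivalence of \Cref{thm:NDFASisequivalenttospecialMIS}. This gives a PPT from {\sc Bounded Edge Directed $(s,t)$-Cut} parameterized by $k+\ell$ to {\sc MinFB} parameterized by $k+w_-$, and, crucially, the image instances are of the restricted form stated in the theorem: $A$ is a system of difference constraints and the right-hand side is in $\{\pm 1\}^m$.

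Finally, I would appeal to \Cref{thm:PPTsShowW1andPolComp} of Kratsch and Wahlstr\"om: if a parameterized problem $\Pi$ does not admit a polynomial compression and there is a PPT from $\Pi$ to $\Pi'$, then $\Pi'$ does not admit a polynomial compression either. Combining this with the two ingredients above immediately yields the claim, so the main obstacle is purely bookkeeping: verifying that the parameter bound $k'=k^{\mathcal{O}(1)}$ of a PPT is indeed preserved through the chain of transformations (which it is, since each transformation in \Cref{thm:PPTtoDSCT,thm:PPTtoNDFAS} produces at most a polynomial, and in fact linear, blow-up in $k+\ell$ to $k+w_-$). No new construction is needed.
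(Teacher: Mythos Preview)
Your proposal is correct and matches the paper's own proof essentially verbatim: the paper simply combines \Cref{thm:BEDChasnoPolComp} and \Cref{thm:PPTfromBEDCtoMIS} via \Cref{thm:PPTsShowW1andPolComp}, exactly as you outline. Your additional remark about the parameter blow-up being linear through the chain of transformations is accurate and, if anything, slightly more explicit than the paper's one-line proof.
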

\begin{proof}
  This follows by combining \Cref{thm:BEDChasnoPolComp} and \Cref{thm:PPTfromBEDCtoMIS} with the help of \Cref{thm:PPTsShowW1andPolComp}.
\end{proof}

\section{Discussion}
\label{sec:discussion}
We investigated the {\sc MinFB} problem from the perspective of parameterized complexity.
We settled the complexity of the problem for various choices of parameters, such as the size of a deletion set or the numbers of positive and negative right-hand sides.
In addition, structural parameters were contemplated: treewidth, pathwidth and treedepth. 
We also ruled out the existence of a polynomial compression for combined parameters $k + w_-$, assuming that $\mathsf{coNP} \not\subseteq \mathsf{NP}/poly$.

It remains a challenging open problem whether DFAS admits a polynomial compression for parameter $k$, and whether our perspective from the more general {\sc MinFB} problem can help in answering that.
A polynomial compression for {\sc MinFB} for parameter $k$ or $k + w_+$, even for node-arc incidence matrices, would imply a polynomial compression for {\sc DFAS} (as the reduction from {\sc DFAS} to {\sc MinFB} does not increase the parameter).

Another interesting open question is whether the problem is tractable when parameterized by pathwidth whenever the pathwidth is at most 5.
Recall that in \Cref{sec:Patwidth_NPHard}, we show that the problem is $\mathsf{NP}$-hard when the parameter pathwidth has value 6.

%%
%% The acknowledgments section is defined using the "acks" environment
%% (and NOT an unnumbered section). This ensures the proper
%% identification of the section in the article metadata, and the
%% consistent spelling of the heading.

\medskip
\noindent
\textbf{Acknowledgements.}
  K.B. and L. M. M.-C. were supported by the Lend{\"u}let Programme of the Hungarian Academy of Sciences, grant number LP2021-1/2021, and by the Hungarian National Research, Development and Innovation Oﬃce – NKFIH, grant number FK128673.
  The ``Application Domain Speciﬁc Highly Reliable IT Solutions'' project has been implemented with the support provided from the National Research, Development and Innovation Fund of Hungary, ﬁnanced under the Thematic Excellence Programme TKP2020-NKA-06 (National Challenges Sub-programme) funding scheme.
  A. G. was supported by DFG grant MN 59/1-1.
  The research was supported by the DAAD with funds from the Bundesministerium f{\"u}r Bildung und Forschung.

%%
%% The next two lines define the bibliography style to be used, and
%% the bibliography file.
\bibliographystyle{abbrvnat}
\bibliography{minFB_Bib}

%%
%% If your work has an appendix, this is the place to put it.
% \appendix

% \section{Research Methods}

% \subsection{Part One}

% Lorem ipsum dolor sit amet, consectetur adipiscing elit. Morbi
% malesuada, quam in pulvinar varius, metus nunc fermentum urna, id
% sollicitudin purus odio sit amet enim. Aliquam ullamcorper eu ipsum
% vel mollis. Curabitur quis dictum nisl. Phasellus vel semper risus, et
% lacinia dolor. Integer ultricies commodo sem nec semper.

\end{document}